\documentclass[11pt,letterpaper]{article}

\usepackage[utf8]{inputenc}
\usepackage[T1]{fontenc}
\usepackage{lmodern}
\usepackage[DIV=11]{typearea} 

\usepackage{microtype}

\usepackage{amssymb}
\usepackage{amsmath}
\usepackage{amsthm}
\usepackage{thmtools}
\usepackage{thm-restate}

\usepackage[procnumbered,ruled,vlined,linesnumbered]{algorithm2e}

\usepackage{xcolor}
\usepackage{xspace}
\usepackage{xfrac}

\usepackage{comment}

\usepackage[backend=biber, style=alphabetic, backref=true, doi=false, url=false, maxcitenames=3, mincitenames=3, maxbibnames=10, minbibnames=10, sortlocale=en_US]{biblatex}

\usepackage[ocgcolorlinks]{hyperref} 
\usepackage{cleveref}

\addbibresource{main.bib}

\DeclareUnicodeCharacter{221A}{$\sqrt{}$}

\colorlet{DarkRed}{red!50!black}
\colorlet{DarkGreen}{green!50!black}
\colorlet{DarkBlue}{blue!50!black}

\hypersetup{
	linkcolor = DarkRed,
	citecolor = DarkGreen,
	urlcolor = DarkBlue,
	bookmarksnumbered = true,
	linktocpage = true
}

\declaretheorem[numberwithin=section]{theorem}
\declaretheorem[numberlike=theorem]{lemma}

\declaretheorem[numberlike=theorem]{definition}
\declaretheorem[numberlike=theorem]{claim}

\declaretheorem[numberlike=theorem]{remark}

\DontPrintSemicolon
\SetKw{KwAnd}{and}
\SetProcNameSty{textsc}
\SetFuncSty{textsc}


\newcommand{\dist}{\mathbf{dist}}
\newcommand{\polylog}{\text{polylog}}

\newcommand{\adag}{\mathcal{A}_{DAG}}
\newcommand{\astardag}{\mathcal{A}^{*}_{DAG}}
\newcommand{\aato}{\mathcal{A}_{ATO}}

\title{Near-Optimal Decremental SSSP in Dense Weighted Digraphs}
\author{
Aaron Bernstein\\ Rutgers University New Brunswick \\ bernstei@gmail.com
\and
Maximilian Probst Gutenberg\\ BARC, University of Copenhagen\\ maximilian.probst@outlook.com 
\and
Christian Wulff-Nilsen \\ BARC, University of Copenhagen \\ koolooz@di.ku.dk
}

\date{}

\hypersetup{
	pdftitle = {DiSSSP},
	pdfauthor = {Aaron Bernstein, Maximilian Probst Gutenberg, Christian Wulff-Nilsen}
}

\begin{document}
\maketitle
\thispagestyle{empty}
\setcounter{page}{0}

\begin{abstract}
In the decremental Single-Source Shortest Path problem (SSSP), we are given a weighted directed graph $G=(V,E,w)$ undergoing edge deletions and a source vertex $r \in V$; let $n = |V|, m = |E|$ and $W$ be the aspect ratio of the graph. The goal is to obtain a data structure that maintains shortest paths from $r$ to all vertices in $V$ and can answer distance queries in $O(1)$ time, as well as return the corresponding path $P$ in $O(|P|)$ time.

This problem was first considered by Even and Shiloach [JACM'81], who provided an algorithm with total update time $O(mn)$ for unweighted undirected graphs; this was later extended to directed weighted graphs [FOCS'95, STOC'99].
There are conditional lower bounds showing that $O(mn)$ is in fact near-optimal [ESA'04, FOCS'14, STOC'15, STOC'20]. In a breakthrough result, Forster et al. showed that total update time $\min\{m^{7/6}n^{2/3+o(1)},m^{3/4}n^{5/4+o(1)}\}  \polylog (W) = mn^{0.9+o(1)}\polylog W$ is possible if the algorithm is allowed to return $(1+\epsilon)$-approximate paths, instead of exact ones [STOC’14, ICALP’15]. No further progress was made until Probst Gutenberg and Wulff-Nilsen [SODA'20] provided a new approach for the problem, which yields total time\\ $\tilde{O}(\min\{m^{2/3}n^{4/3}\log W, (mn)^{7/8} \log W\}) = \tilde{O}(\min\{n^{8/3}\log W, mn^{3/4} \log W\})$.

Our result builds on this recent approach, but overcomes its limitations by introducing a significantly more powerful abstraction, as well as a different core subroutine. Our new framework yields a decremental $(1+\epsilon)$-approximate SSSP data structure with total update time $\tilde{O}(n^2 \log^4 W /\epsilon)$. Our algorithm is thus \emph{near-optimal} for dense graphs with polynomial edge-weights. Our framework can also be applied to sparse graphs to obtain total update time $\tilde{O}(mn^{2/3} \log^3 W / \epsilon)$. Combined, these data structures dominate all previous results. Like all previous $o(mn)$ algorithms that can return a path (not just a distance estimate), our result is randomized and assumes an oblivious adversary.

Our framework effectively allows us to reduce SSSP in general graphs to the same problem in directed acyclic graphs (DAGs). We believe that our framework has significant potential to influence future work on directed SSSP, both in the dynamic model and in others.
\end{abstract}
\newpage

\section{Introduction}

In the Single-Source Shortest Paths (SSSP) problem, the input is a directed weighted graph $G=(V,E,w)$ and a dedicated source vertex $r \in V$, and the goal is to compute shortest paths from $r$ to every other vertex $v$ in $V$. Let $n = |V|$, $m = |E|$, and $W$ be the aspect ratio of the graph, which is the ratio of maximum to minimum edge weight. The problem can be solved in $O(m + n\log(n))$ time using Dijkstra's algorithm. In this article, we study the dynamic version of the problem, where the graph changes over time. The most general model is the \emph{fully dynamic} one, where the graph is subject to a sequence of edge insertions and deletions. Unfortunately, there are extremely strong conditional lower bound for this model \cite{roditty2004dynamic, abboud2014popular,henzinger2015unifying, gutenberg2020incrSSSP}.

For this reason, much of the research on this problem has focused on the \emph{decremental} setting. Formally, the algorithm is given a graph $G = (V,E,w)$ subject to a sequence of edge deletions and edge weight increases, and the goal is is to maintain shortest distances/paths from $r$ to every $v \in V$. Although decremental SSSP only applies to a more restricted model, it is an extremely common subroutine in other dynamic algorithms (including fully dynamic ones), and has recently been used to make progress on long-standing \emph{static} problems such as computing max-flow \cite{chuzhoy2019new, chuzhoy2019deterministic}, multi-commodity flow \cite{madry2010faster}, expanders \cite{chuzhoy2019deterministic}, and sparse cuts \cite{chuzhoy2019deterministic}. For this reason, decremental SSSP is one of the most well-studied problems in dynamic algorithms \cite{shiloach1981line, bernstein2011improved, henzinger2014decremental, henzinger2014sublinear, henzinger2015improved, henzinger2016dynamic, bernstein2016deterministic, bernstein2017deterministic, bernstein2017deterministicWeighted, chuzhoy2019new, chuzhoy2019deterministic, gutenberg2020decremental, gutenberg2020deterministic, detDiSSSP}.  

The first algorithm for decremental SSSP was the Even and Shiloach tree \cite{shiloach1981line}, which dates back to 1981 and has total update time $O(mn)$ over the entire sequence of deletions; it was later extended to directed weighted graphs \cite{henzinger1995fully,King99}. $O(mn)$ is conditionally optimal for the exact version \cite{roditty2004dynamic,abboud2014popular,henzinger2015unifying, gutenberg2020incrSSSP}, but one can do better with a $(1+\epsilon)$-approximation. In fact, recent research culminated in a near-optimal algorithm for \emph{undirected} graphs \cite{henzinger2014decremental} with total update time $m^{1+o(1)}\log(W)$\footnote{In the decremental setting $m$ is taken to be the number of edges in the \emph{initial} graph $G$}; there has also been more recent work improving upon $O(mn)$ with adaptive or even deterministic algorithms (see e.g. \cite{bernstein2016deterministic, bernstein2017deterministic, bernstein2017deterministicWeighted,chuzhoy2019new,chuzhoy2019deterministic,gutenberg2020deterministic, bernstein2020fully}). 

In directed graphs, however, the decremental SSSP problem remains poorly understood. The first algorithms to improve upon the classic $O(mn)$ bound (with a $(1+\epsilon)$-approximation) were by Henzinger, Forster and Nanongkai \cite{henzinger2014sublinear} in 2014; the total update time is $\min\{m^{7/6}n^{2/3+o(1)},m^{3/4}n^{5/4+o(1)}\}  \polylog (W) = mn^{0.9+o(1)} \polylog (W)$ \cite{henzinger2015improved}. Since then, the only progress on the problem is a very recent algorithm of Probst Gutenberg and Wulff-Nilsen \cite{gutenberg2020decremental} with total update time $\tilde{O}(\min \{ m^{2/3}n^{4/3}, mn^{3/4}\} \log (W) )$, or slightly better in unweighted graphs. Very recently, it was further shown in \cite{detDiSSSP} that a total update time of  $n
^{2+2/3+o(1)}$ can even be obtained deterministically. Besides this recent progress, state-of-the-art algorithms for directed graphs still lag far behind the ones for undirected ones, and only achieve small improvements beyond $O(mn)$.

For more related work on the problems of maintaining Single-Source Reachability, Strongly-Connected Components, Single-Source Shortest Paths and All-Pairs Shortest Paths in incremental, decremental and fully-dynamic graphs, we refer the reader to \Cref{sec:relatedWork}.

\paragraph{Our Contribution.} We make significant progress on this problem and present the first \emph{near-optimal} algorithm for decremental SSSP in directed dense graphs. 

\begin{theorem}
\label{thm:ContributionSSSPResult}
Given a decremental input graph $G=(V,E,w)$ with $n = |V|, m=|E|$ and aspect ratio $W$, a dedicated source $r \in V$ and $\epsilon > 0$, there is a randomized algorithm that maintains a distance estimate $\widetilde{\mathbf{dist}}(r,x)$, for every $x \in V$, such that
\[
    {\mathbf{dist}}_G(r,x) \leq \widetilde{\mathbf{dist}}(r,x) \leq (1+\epsilon) {\mathbf{dist}}_G(r,x)
\]
at any stage w.h.p. The algorithm has total expected update time $\tilde{O}(n^2 \log^4 W/\epsilon)$. Distance queries are answered in $O(1)$ time, and a corresponding path $P$ can be returned in $O(|P|)$ time.
\end{theorem}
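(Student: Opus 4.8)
The plan is to prove \Cref{thm:ContributionSSSPResult} in three layers: a distance-scaling reduction, a new decremental data structure for an \emph{approximate topological order} (ATO) of a general digraph, and a cheap SSSP routine on the DAG obtained by contracting the ATO. First I would apply the standard rounding/scaling reduction: for each distance scale $2^i$ with $i=0,1,\dots,O(\log(nW))$, round every edge weight up to the nearest multiple of $\epsilon 2^i/n$ and cap distances at $\Theta(2^i)$; it then suffices to maintain, for each scale, an estimate that is exact up to a $(1+\epsilon)$ factor for vertices at true distance $\le 2^i$, since taking for each $x$ the contribution of the smallest activating scale recovers $\widetilde{\mathbf{dist}}(r,x)$. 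After this reduction each scale instance is an integer-weighted decremental digraph in which we only need $(1+\epsilon)$-approximate distances, and — via a hop-reduction — the distances of interest \emph{inside} a low-diameter region are captured by paths with only $h$ hops for an appropriate bound $h=\tilde O(1/\epsilon)$; this is what later lets an Even--Shiloach-type tree of depth $h$ suffice locally, and it is the source of the $1/\epsilon$ factor.

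The core new object is the ATO, which I would formalize as a decrementally maintained ordered partition $(V_1,\dots,V_k)$ of $V$ together with a marking of each edge as \emph{forward} or \emph{special}, with the guarantees that (i) every part $V_j$ has weak diameter $\tilde O(h)$ in $G$, (ii) every forward edge leaves a lower-indexed part and enters a higher-indexed one, so that contracting the parts turns the forward edges into a DAG on $k\le n$ nodes, and (iii) the total recursion/update cost that can be charged to edges ever becoming special is affordable. The point of working decrementally is \emph{monotonicity}: under edge deletions the partition only refines (parts split) and edges only ever turn special, never the reverse, so the maintenance work amortizes. I would build the ATO by running the DAG data structure (in its basic form $\adag$, and a robust, ``restartable'' variant $\astardag$ that can absorb limited changes to its topological order) on the contracted multigraph, interleaved with an expander-decomposition / balanced-separator subroutine that detects when a part's diameter has grown past the $\tilde O(h)$ threshold and must be split; these decompositions are randomized and analyzed against an oblivious adversary, which is where the ``w.h.p.'' guarantee and the oblivious-adversary assumption enter.

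Given the ATO, maintaining SSSP is comparatively direct. Contract each $V_j$ to a node; the forward edges form a DAG $H$, on which $(1+\epsilon)$-approximate — indeed essentially exact — distances from the part of $r$ are obtained by a single relaxation pass in topological order, costing $\tilde O(|E(H)|)$ per rebuild and $\tilde O(n^2)$ in total over the deletion sequence (resp. $\tilde O(mn^{2/3})$ in the sparse regime, after the usual high/low-degree threshold split). I would then recurse \emph{inside} each part $V_j$ to lift the contracted distances to true vertex distances; since $V_j$ has diameter $\tilde O(h)$, a depth-$h$ Even--Shiloach tree handles the base case, and the $O(\log n)$ recursion levels add only polylogarithmic overhead, while the special edges are routed through a secondary structure whose total cost is bounded by ATO guarantee~(iii). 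Summing over the $O(\log(nW))$ scales and charging one $\log W$ factor per nested scaling step yields total update time $\tilde O(n^2\log^4 W/\epsilon)$, resp. $\tilde O(mn^{2/3}\log^3 W/\epsilon)$; a path $P$ is returned in $O(|P|)$ time by following the relaxation pointers through the recursion; and a union bound over the polynomially many updates and scales gives the high-probability correctness.

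\textbf{Main obstacle.} Essentially all the difficulty lies in the second layer: designing the ATO so that it is simultaneously cheap to maintain (guarantee~(iii) must survive the splits), coarse enough that the contracted DAG has few nodes and the parts have diameter only $\tilde O(h)$, and robust enough to support recursion — so that nesting $\adag$/$\astardag$ inside $\aato$ inside the scaling reduction composes the approximation errors and the $\log W$ / $n^{o(1)}$ factors without blowing up the final bound. Getting the split subroutine to detect diameter growth correctly and efficiently under an oblivious adversary, and charging its cost together with the special edges against a potential that can drop only $\mathrm{poly}(n)$ times, is the technical heart of the argument; by contrast the scaling reduction and the DAG relaxation pass are routine once the ATO is in hand.
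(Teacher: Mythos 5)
Your high-level architecture matches the paper's (an approximate-topological-order abstraction, contraction of low-diameter parts, recursion of DAG-like structures into general graphs, one instance per distance scale, oblivious adversary), but the proposal is missing the two ingredients that the paper's proof actually turns on, and at both points the step you sketch would fail. First, your ATO guarantee (iii) bounds the \emph{total cost chargeable to edges that become special}, and you propose to produce the decomposition with a "balanced-separator subroutine." This is precisely the framework of the prior work that the paper identifies as fundamentally stuck at $\tilde{O}(n^{2.5})$: bounding the separator's \emph{size} is not enough, because a single top-level separator vertex can contribute a backward jump of $n$ in the order, and for a grid-like graph every balanced separator has $\Omega(\sqrt n)$ vertices, so a shortest path can accumulate $\Omega(n^{1.5})$ of total backward jump. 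The paper's new idea is a separator grown as a ball of \emph{exponentially distributed} random radius, giving the per-edge guarantee $\mathbb{P}[e \in E_{Sep}] \le \frac{\zeta}{d} w(e)$; this lets one bound, for each fixed shortest path, the \emph{expected} sum of backward topological jumps $\mathcal{T}'(\pi,\tau)$ by $\tilde{O}(w(\pi)\, n/\eta_{diam})$, which is the quality bound everything else consumes. Saying the decompositions are "randomized" without this per-edge cut-probability property leaves the quality bound unproven, and the rest of the running-time analysis does not go through.

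Second, your SSSP layer is not sound as described. You cannot compute distances by "a single relaxation pass in topological order" on the DAG of forward edges: shortest paths genuinely use the special (backward) edges, and the contracted graph is only \emph{approximately} topologically ordered. The paper instead runs a modified Even--Shiloach tree on the whole contracted graph in which a node rescans its in-neighbors at topological-order distance $\approx 2^j$ only when its estimate is divisible by $\lceil 2^j \epsilon\delta/n\rceil$, charging the resulting additive error against the budget $\mathcal{T}(\pi,\tau) \le q\delta + n$; this is where the $\tilde{O}(n\delta q/\epsilon + n^2)$ bound and the $1/\epsilon$ factor come from, not from a hop bound $h=\tilde{O}(1/\epsilon)$. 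Relatedly, your plan to recurse inside each part with a depth-$h$ ES-tree does not match the diameter guarantee: parts have \emph{weighted} weak diameter $\le |X|\eta_{diam}/n$ with $\eta_{diam}\approx\epsilon\delta$ (which can be large, and is not a hop bound), and the paper simply adds $\eta_{diam}$ to every estimate to absorb the contraction error into the $(1+\epsilon)$ factor -- no per-part lifting structure is needed. Finally, the recursion across quality levels is not a free $O(\log n)$ overhead; naive layering costs $n^{o(1)}$, and the paper needs an explicit bootstrapping induction on the graph size (splitting into large and small subinstances at threshold $n/2^{\gamma}$) to avoid it.
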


\noindent We also present the currently fastest algorithm for sparse graphs. Combined, our two results significantly improve upon all previous work.

\begin{theorem}
\label{thm:ContributionSSSPSparseResult}
Given a decremental input graph $G=(V,E,w)$ with $n = |V|, m=|E|$ and aspect ratio $W$, a dedicated source $r \in V$ and $\epsilon > 0$, there is a randomized algorithm that maintains a distance estimate $\widetilde{\mathbf{dist}}(r,x)$, for every $x \in V$, such that
\[
    {\mathbf{dist}}_G(r,x) \leq \widetilde{\mathbf{dist}}(r,x) \leq (1+\epsilon) {\mathbf{dist}}_G(r,x)
\]
at any stage w.h.p. The algorithm has total expected update time $\tilde{O}(mn^{2/3} \log^3 W/\epsilon)$.  Distance queries are answered in $O(1)$ time, and a corresponding path $P$ can be returned in $O(|P|)$ time.
\end{theorem}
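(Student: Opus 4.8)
The plan is to realize the reduction advertised in the introduction --- decremental SSSP on a general digraph reduces to the same problem on DAGs --- and to tune its parameters for the sparse regime. The first step is to strip the weights: partition the edge weights into $O(\log W)$ geometric buckets and run one copy of the data structure per scale $i$, responsible for vertices $x$ with $\dist_G(r,x) \in [2^i, 2^{i+1})$; within such a copy, edges much lighter than $\epsilon 2^i/n$ are rounded up and edges heavier than $2^{i+1}$ are ignored, after which every relevant shortest path uses only $O(n/\epsilon)$ ``rounded'' hops, and taking the minimum over scales loses only a $(1+\epsilon)$ factor. This is the source of the $\log W$ factors and of the linear $1/\epsilon$ loss, and it lets us henceforth treat edge weights as small integers.

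The core object is a data structure $\aato$ maintaining an \emph{approximate topological order} of $G$ under deletions: a vertex labelling in which each strongly connected component occupies a contiguous block and every edge goes backward by at most a bounded amount. Since deletions only split SCCs, this order admits amortizable repair. Conditioned on it I would (a) contract the current SCCs and run a decremental DAG-SSSP structure $\adag$ on the condensation, and (b) keep a cheap auxiliary structure inside each SCC so that a path through a contracted node can be expanded; stitching the two together yields $(1+\epsilon)$-approximate $r$-to-$x$ distances in $G$, with the $\aato$ layer forwarding only $\tilde O(m)$ total work and $\tilde O(m)$-size instances into $\adag$.

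For $\adag$ I would exploit that on a DAG distances only increase and every shortest path is monotone in topological order. With a hop-threshold parameter $h$: maintain a depth-$h$ Even--Shiloach-type tree handling all paths of rounded length $\le h$ in $\tilde O(mh)$ total time, and handle longer paths by sampling $\tilde O(n/h)$ vertices as centers, maintaining the (small, $\tilde O(n/h)$-vertex) graph of center-to-center near-distances, and recursing on it --- a recursion of bounded depth that can be bottomed out in the dense data structure underlying \Cref{thm:ContributionSSSPResult}. Summing the contributions --- $\tilde O(mh)$ for the bounded-depth part, a recursive cost governed by $n/h$, and $\tilde O(m)$-type rebuild overhead --- and choosing $h$ to balance the dominant terms gives total update time $\tilde O(mn^{2/3})$ per scale; reinstating the $O(\log W)$ scales and the $1/\epsilon$ loss gives $\tilde O(mn^{2/3}\log^3 W/\epsilon)$. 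For path reporting, each layer stores with every distance label a pointer certifying one (super-)edge of a near-shortest path, so a query peels the path off layer by layer in $O(|P|)$ time, and the only randomization --- the center sampling --- needs merely an oblivious adversary.

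The crux, I expect, is that deletions in $G$ are not confined to one level of this hierarchy: deleting an edge can split an SCC, which changes the condensation fed to $\adag$ and can cascade through the recursion and through the center graphs. Making this work requires showing that the approximate topological order, the SCC decomposition, the sampled centers, and the recursively maintained auxiliary graphs can all be repaired with only amortized near-linear cost against an oblivious adversary, and that the recursion depth --- hence both the accumulated $(1+\epsilon)$ error and the $\polylog$ overhead --- stays bounded. A second, more technical point is keeping the hop/weight accounting consistent across scales so that combining the per-scale estimates neither under- nor over-counts distances.
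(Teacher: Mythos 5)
Your high-level architecture --- weight/distance scales, an approximate topological order maintained under deletions, contraction of low-diameter SCC-like clusters, and a hitting-set/hopset construction for the sparse regime --- matches the paper's framework. But there is a genuine gap at the heart of your sparse subroutine: you propose to ``maintain the $\tilde O(n/h)$-vertex graph of center-to-center near-distances,'' yet maintaining hop-$h$ (or depth-$h$) distances from a single center decrementally costs $\tilde O(mh)$, so over $\tilde O(n/h)$ centers the total is $\tilde O(mn)$ --- exactly the classic bound you are trying to beat. Nothing in your proposal restricts each center's exploration, so the balancing step that is supposed to yield $\tilde O(mn^{2/3})$ has no valid cost on one side of the balance. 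The paper's key idea here is to run each sampled center's (generalized) ES-tree only on the ball of nodes within topological-order difference $K=\tilde O(q\delta/n^{1/3})=\tilde O(n^{2/3})$ of the center, so that each edge participates in only $\tilde O(K/l)$ trees and the total hopset-maintenance cost is $\tilde O(mK)=\tilde O(mn^{2/3})$. Correctness then hinges on the $\mathcal{ATO}$ quality bound: a shortest path has total topological-order variation $\tilde O(n)$, hence can exit such a ball only $\tilde O(n^{1/3})$ times, and each exit is patched by at most $n^{1/3}$ original edges, giving a $\tilde O(n)$-edge, $\tilde O(n^{2/3})$-hop $(1+\epsilon)$-hopset on which a single hop-restricted ES-tree runs in $\tilde O(mn^{2/3})$. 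Without this ball restriction and the accompanying ``few exits'' argument, your construction does not achieve the claimed bound. (Your alternative of recursing on the center graph and bottoming out in the dense structure is a different design, but it presupposes that the center graph is already being maintained cheaply, which is the very step that fails.)

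Two further points. First, the condensation $G/\mathcal{V}$ with respect to an approximate topological order is \emph{not} a DAG, so one cannot literally ``run a decremental DAG-SSSP structure on the condensation''; the paper instead runs a modified ES-tree that buckets in-neighbors by topological-order difference and charges backward edges against the quality bound $\mathcal{T}(\pi,\tau)\le q\cdot w(\pi)+n$. Second, your claim that the $\aato$ layer forwards ``only $\tilde O(m)$ total work'' into the shortest-path structures understates the difficulty: maintaining the $\mathcal{ATO}$ itself requires a randomized separator procedure (balls of exponentially distributed radius, so that each edge is cut with probability proportional to its weight) together with a bootstrapping recursion in which $\mathcal{ATO}$s of coarser diameter are used to implement the depth-restricted SSSP structures that monitor SCC diameters at the next level; this layer alone costs $\tilde O(mn^{2/3})$ and is where the oblivious-adversary assumption enters. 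You correctly flag these maintenance issues as ``the crux,'' but the proposal does not supply the mechanisms that resolve them.
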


\paragraph{Adaptive Versus Oblivious Adversaries} Both our results are randomized, and assume an oblivious adversary whose update sequence does not depend on the distance estimates and query-paths returned by the algorithm. This assumption is also necessary for all previous algorithms in directed graphs that go beyond the classic $O(mn)$ bound of Even and Shiloach \cite{henzinger2014sublinear,henzinger2015improved,gutenberg2020decremental}, and achieving such a result adaptively remains a major open problem. (The result of \cite{gutenberg2020decremental} allows for adaptive \emph{distance} queries, but still assumes the adversary is oblivious to the \emph{paths} returned by the algorithm.) 
More generally, achieving truly fast adaptive algorithms seems implausible until we understand the problem well enough to at least have fast oblivious algorithms, and our result makes significant progress on this front. In fact, arguably more than any other dynamic graph algorithm, our $\tilde{O}(n^2)$ result highlights the gap between these two models: if one could design an \emph{adaptive} algorithm that matches this efficiency, then plugging it into the existing framework of \cite{chuzhoy2019new} (based in turn on the multiplicative-weight update method as described in \cite{fleischer2000approximating, garg2007faster, madry2010faster}) would yield an $\tilde{O}(n^2)$ algorithm for \emph{static} directed maximum flow, which would constitute an enormous breakthrough in the field of graph algorithms. 

\paragraph{Techniques}
The key contribution of our paper is a general technique for converting algorithms on directed acyclic graphs (DAGs) into algorithms on general graphs. Earlier techniques in \cite{bernstein2017deterministicWeighted,gutenberg2020decremental} lead to two simple algorithms for DAGs with total update times $\tilde{O}(n^2)$ and $\tilde{O}(mn^{2/3})$; our conversion then extends these bounds to general graphs. We first introduce the concept \emph{approximate topological order} ($\mathcal{ATO}$), which loosely speaking imposes a DAG-like structure on \emph{any} graph. We then show that an $\mathcal{ATO}$ always exists and can be maintained efficiently.

At a high-level, the conversion is as follows. Let $\adag$ be a decremental SSSP algorithm for DAGs and let $T(\adag)$ be the total update time. The first (easier) step is to convert $\adag$ to an algorithm $\astardag$ that works on any graph with an $\mathcal{ATO}$ and has total update time $T(\astardag) \sim T(\adag)$. The second (harder) step is 
to build an algorithm $\aato$ that maintains an $\mathcal{ATO}$ in $G$ by recursively applying $\astardag$ as a subroutine: the basic idea is that an $\mathcal{ATO}$ of better ``quality" can be built by using $\astardag$ to maintain shortest paths in an $\mathcal{ATO}$ of worse quality. Using this layered approach, we can achieve $T(\aato) \sim T(\astardag) \sim T(\adag)$, and combining $\aato$ with $\astardag$ gives an algorithm for general graphs. 

Neither the step from $\adag$ to $\astardag$ nor the step from $\astardag$ to $\aato$ are black-box, but the techniques are quite modular and flexible, as evidenced by the fact that we were able to apply this conversion to both of the state-of-the-art algorithms for DAGs. We believe our conversion has strong potential to influence future work on directed shortest paths, in both the dynamic model and in others, by allowing researchers to focus on the simpler case of DAGs.



\section{Preliminaries}
\label{sec:prelim}

We let a graph $H$ refer to a weighted, directed graph with vertex set denoted by $V(H)$, edge set $E(H)$ and weight function $w_H : E(H) \rightarrow [1,W] \cup \{\infty\}$\footnote{In this article, we restrict our attention to integers and let $[a,b]$ denote the set of integers $a, a+1, a+2, \dots, b$.}. We say that $H$ is a \emph{decremental} graph if it is undergoing a sequence of edge deletions and edge weight increases (also referred to as updates), and refer to \emph{version} $t$ of $H$, or $H$ at \emph{stage} $t$ as the graph $H$ obtained after the first $t$ updates have been applied. In this article, we denote the (decremental) input graph by $G=(V,E,w)$ with $n = |V|$ and $m = |E|$ (where $m$ refers to the number of edges of $G$ at stage $0$). In all subsequent definitions, we often use a subscript to indicate which graph we refer to, however, when we refer to $G$, we often omit the subscript.

\paragraph{Cuts, Neighborhoods and Subgraphs.} For graph $H$, and any two disjoint subsets $X,Y \subseteq V(H)$, we let $E_H(X)$ be the set of edges in $E(H)$ with an endpoint in $X$, and $E_H(X,Y)$ denote the set of edges in $E(H)$ with tail in $X$ and head in $Y$. We let $\mathcal{N}^{in}_H(v) = \{ u \;|\; (u,v) \in E(H) \}$ and $\mathcal{N}^{out}_H(v)  = \{ u \;|\; v \in \mathcal{N}^{in}_H(u)\}$ denote the in-neighborhood and out-neighborhoods of $v \in V$. We further let $H[X]$ refer to the subgraph of $H$ induced by $X$, i.e. $H[X] = (X, E_H(X,X), w_H)$. We use $H \subseteq G$ to denote that $V(H) = V(G)$ and $E(H) \subseteq E(G)$.

\paragraph{Contractions.} We define the graph $H / X$ to be the graph obtained from $H$ by contracting all vertices in $X$ into a single node (we use the word \emph{node} instead of vertex if it was obtained by contractions). Similarly, for a set of pairwise disjoint vertex sets $X_1, X_2, \dots, X_k$, we let $H / \{X_1, X_2, \dots, X_k\}$ denote the graph $((((H / X_1) / X_2) \dots ) / X_k)$. If $\mathcal{V}$ forms a partition of $V$, we use the convention to denote by $X^v$ the node in $G/ \mathcal{V}$ that contains $v \in V$, i.e. $v \in X^v$.

\paragraph{Reachability and Strong-Connectivity.} For graph $H$ and any two vertices $u,v \in V(H)$, we let $u \leadsto_H v$ denote that $u$ can reach $v$ in $H$, and $u \rightleftarrows_H v$ that $u$ can reach $v$ and vice versa (in the latter case, we also say $u$ and $v$ are strongly-connected). For any sets $X,Y \subseteq V(H)$, we say that $X \leadsto_H Y$ if there exists some $x \in X$, $y \in Y$ such that $x \leadsto_H y$; we define $X \rightleftarrows_H Y$ analogously. We say that the partition of $V(H)$ induced by the equivalence relation $\rightleftarrows_H$ is the set of \emph{strongly-connected components} (SCCs). 

\paragraph{Generalized Topological Order.} We define a \emph{generalized topological order} \\ $\textsc{GeneralizedTopOrder}(H)$ to be a tuple $(\mathcal{V}, \tau)$ where $\mathcal{V}$ is the set of SCCs of $H$ and $\tau : \mathcal{V} \rightarrow [0,n)$ is a function that maps any sets $X, Y \in \mathcal{V}$ such that $\tau(X) < \tau(Y)$, if $X \leadsto_H Y$ and such that $[\tau(X), \tau(X) + |X|) \cap [\tau(Y), \tau(Y) + |Y|) = \emptyset$. Thus, $\tau$ effectively establishes a one-to-one correspondence between $|X|$-sized intervals and SCCs $X$ in $H$. We point out that a $\textsc{GeneralizedTopOrder}(H)$ can always be computed in $O(|E(H)|)$ time \cite{tarjan1972depth}. In fact, a generalized topological order can also be maintained efficiently in a decremental graph $H$. Here, we say that $(\mathcal{V},\tau)$ is a dynamic tuple that forms a generalized topological order of $H$ if it is a topological order for all versions of $H$. Further, we say that $(\mathcal{V}, \tau)$ has the \emph{nesting} property, if for any set $X \in \mathcal{V}$ and a set $Y \supseteq X$ that was in $\mathcal{V}$ at an earlier stage, we have $\tau(X) \in [\tau(Y), \tau(Y) + |Y| - |X|]$; in other words, the interval $[\tau(X), \tau(X)+|X|)$ is entirely contained in the interval $[\tau(Y), \tau(Y) + |Y|)$. Thus, the associated interval with $X$ is contained in the interval associated with $Y$. We refer to the following result that can be obtained straight-forwardly by combining the data structure given in \cite{bernstein2019decremental} and the static procedure by Tarjan \cite{tarjan1972depth} as described in \cite{gutenberg2020decremental}.

\begin{theorem}[see \cite{tarjan1972depth,bernstein2019decremental, gutenberg2020decremental}]
\label{thm:SCCinDecrGraph}
Given a decremental digraph $H$, there exists an algorithm that can maintain the generalized topological order $(\mathcal{V}, \tau)$ of $H$ where $\tau$ has the nesting property. The algorithm runs in expected total update time $O(m \log^4 n)$, is randomized and works against an adaptive adversary. 
\end{theorem}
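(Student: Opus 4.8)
The plan is to combine a one-time static computation by Tarjan's algorithm with the decremental SCC data structure of \cite{bernstein2019decremental}, adding a thin bookkeeping layer that maintains $\tau$ as the SCCs refine. First I would run Tarjan's algorithm \cite{tarjan1972depth} on the initial graph $H$ to obtain the SCCs $\mathcal{V}$ together with a topological order $X^{(1)},\dots,X^{(\ell)}$ of the condensation $H/\mathcal{V}$, and then pack intervals consecutively: $\tau(X^{(1)})=0$ and $\tau(X^{(i+1)})=\tau(X^{(i)})+|X^{(i)}|$. Since $\sum_i |X^{(i)}|=n$ these intervals are disjoint subsets of $[0,n)$ and $\tau$ is a valid $\textsc{GeneralizedTopOrder}(H)$; the nesting property holds vacuously at stage $0$. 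This costs $O(m)$ time.

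For the updates: a finite edge-weight increase does not change reachability, so it never affects $\mathcal{V}$ or $\tau$, and I only react to deletions. Under deletions SCCs can only refine, so every set that is ever in $\mathcal{V}$ is nested inside some earlier member of $\mathcal{V}$ — exactly the regime in which the nesting property can be maintained. I run the decremental SCC structure of \cite{bernstein2019decremental} on $H$, and whenever it reports that a current SCC $X$ has broken into SCCs $X_1,\dots,X_k$ with $\bigcup_j X_j = X$, I compute a topological order of the condensation of $H[X]$ with a single static Tarjan call (relabeling so that $X_1,\dots,X_k$ is that order) and repack their intervals inside the one previously held by $X$: $\tau(X_1):=\tau(X)$ and $\tau(X_{j+1}):=\tau(X_j)+|X_j|$. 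Because $\sum_j |X_j|=|X|$, the new intervals are pairwise disjoint and contained in $[\tau(X),\tau(X)+|X|)$, so the nesting property is preserved, transitively over all time.

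The key fact for correctness is that such a split is \emph{local}. If $X$ was a (maximal) SCC just before a deletion, then for every vertex $z\notin X$ at most one of $X\leadsto_{H^{-}} z$ and $z\leadsto_{H^{-}} X$ held in that prior version $H^{-}$ — otherwise $X\cup\{z\}$ would be strongly connected, contradicting maximality — and reachability only decreases. Hence, in the current version $H$, whenever $X_i\leadsto_H X_j$ there is such a path inside $H[X]$, so ordering $X$'s pieces by the condensation of $H[X]$ is consistent with $\leadsto_H$ inside $X$; and for an SCC $B$ outside $X$ with $X_i\leadsto_H B$ we get $X\leadsto_H B$, hence $X\leadsto_{H^{-}} B$, hence $\tau(X)+|X|\le\tau(B)$ by the inductive invariant at the prior stage, which gives $\tau(X_i)<\tau(B)$ since $X_i$'s new interval lies inside $X$'s old one; the case $B\leadsto_H X_i$ is symmetric. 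A routine induction on the update sequence then shows that $(\mathcal{V},\tau)$ is a generalized topological order of every version of $H$ with the nesting property.

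It remains to bound the running time. The decremental SCC structure of \cite{bernstein2019decremental} runs in expected total time $O(m\log^4 n)$, is randomized, and works against an adaptive adversary, and initialization costs $O(m)$; all the added bookkeeping (interval arithmetic and the Tarjan calls) is deterministic and depends only on the SCC partition, so it cannot help an adaptive adversary. The delicate point — and the main obstacle — is the total cost of the Tarjan recomputations. Charging $O(|E(H[X])|)$ once per set $X$ that is ever an SCC is \emph{not} near-linear: the family of ever-SCCs is only laminar and may have depth $\Theta(n)$ (an SCC can shed one vertex at a time while keeping $\Omega(m)$ edges), for $\Theta(nm)$ in the worst case. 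The resolution, which is why \cite{gutenberg2020decremental} regards this combination as straightforward, is that \cite{bernstein2019decremental} never re-examines every transient SCC: it maintains SCCs by a randomized divide-and-conquer that splits the vertex set into ordered blocks in such a way that the recursion has depth $O(\log n)$ with high probability and each level re-scans only $O(m\,\polylog n)$ edges. I would therefore piggy-back on that recursion rather than treat it as a black box — assigning $\tau$-intervals from the block decomposition (each block's interval a sub-interval of its parent's, sized by the block's vertex count) and invoking Tarjan only at the granularity at which the structure itself consolidates a region — so that the added work stays within a constant factor of the $O(m\log^4 n)$ already spent. Verifying that this block decomposition does induce a topological order at every stage, and that its block moves respect nesting, is the part that requires real care; the rest is routine.
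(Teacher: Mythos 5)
First, note that the paper does not actually prove this theorem: it is imported from prior work, with the combination of Tarjan's static procedure \cite{tarjan1972depth} and the decremental SCC structure of \cite{bernstein2019decremental} credited to \cite{gutenberg2020decremental}. Your skeleton matches that intended combination: initialize $\tau$ by packing consecutive intervals along a static topological order, and on each split of an SCC $X$ into $X_1,\dots,X_k$ repack disjoint sub-intervals of $[\tau(X),\tau(X)+|X|)$ ordered by the condensation of $H[X]$. Your locality argument (any $X_i$-to-$X_j$ path in the current graph must stay inside $X$, since otherwise some $z\notin X$ would have been strongly connected to $X$ in the earlier version, contradicting maximality) is the right justification that this local reordering is globally consistent, and the nesting property is immediate from the repacking.

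The gap is in the running time. You correctly observe that recomputing a topological order of all of $H[X]$ at every split can cost $\Theta(nm)$, but your proposed fix --- re-deriving $\tau$ from the internal block recursion of \cite{bernstein2019decremental} --- is more invasive than necessary and, by your own admission, left unverified at its crucial step (that the internal block decomposition induces a valid topological order at every stage). The standard resolution is simpler and keeps the SCC structure a black box: have it report each split by listing the vertex sets of all pieces \emph{except} the largest, which inherits $X$'s identity (this is exactly the interface the paper itself assumes of $(\mathcal{V},\tau)$ in \Cref{rmk:ssspRemark}). Every edge of the condensation of $H[X]$ on $\{X_1,\dots,X_k\}$ joins two distinct pieces and is therefore incident to a non-largest piece, so using per-vertex pointers to current SCCs the condensation can be built and topologically sorted in time proportional to the number of vertices and edges incident to the non-largest pieces only. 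A vertex lies in a non-largest piece only when its SCC at least halves, so each vertex and edge is charged $O(\log n)$ times, giving $O((n+m)\log n)$ total bookkeeping, dominated by the $O(m\log^4 n)$ of the SCC structure. With that substitution your argument closes; without it, the step you flag as ``requiring real care'' is precisely the one that remains open.
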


\paragraph{Distances and Diameter.} We let $\mathbf{dist}_H(u,v)$ denote the distance from vertex $u$ to vertex $v$ in graph $H$ and denote by $\pi_{u,v, H}$ the corresponding shortest path (we assume uniqueness by implicitly referring to the lexicographically shortest path). We define the weak diameter of $X \subseteq V(H)$ in $H$ by $\mathbf{diam}(X, H) = \max_{x,y \in X} \mathbf{dist}_H(x,y)$.

\paragraph{Exponential Distribution.} Finally, we make use of the exponential distribution, that is we use random variables $X$ with cumulative distribution function $F_X(x,\lambda) = 1 - e^{-\lambda x}$ for all $x \geq 0, \lambda > 0$, which we denote by the shorthand $X \sim \textsc{Exp}(\lambda)$. If $X \sim \textsc{Exp}(\lambda)$ is clear, we also use $F_X(x)$ in place of $F_X(x,\lambda)$. The exponential distribution has the special property of being \emph{memoryless}, that is if $X \sim \textsc{Exp}(\lambda)$, then
\[
    \mathbb{P}[X > s + t \;|\; X > t] = \mathbb{P}[X > s].
\]

\section{Overview}
\label{sec:overview}

We now give an overview of our algorithm. In order to illustrate the main concepts, we start by giving a simple algorithm to obtain total update time $O(n^2 \log n/\epsilon)$ in directed acyclic graphs (DAGs). While this algorithm was previously not explicitly mentioned, it follows rather directly from the techniques developed in \cite{bernstein2017deterministicWeighted, gutenberg2020decremental}. We present this algorithm to provide intuition for our approach and motivates our novel notion of approximate topological orders. In the light of approximate topological orders, we then shed light on limitations of the previous approach in \cite{gutenberg2020decremental} and present techniques to surpass these limitations to obtain \Cref{thm:ContributionSSSPResult}. In this overview, we focus on obtaining an SSSP algorithm that runs in $\tilde{O}(n^2 \log^4 W / \epsilon)$ expected update time.
The final paragraph then sketches our improvement for sparse graphs (\Cref{thm:ContributionSSSPSparseResult}); this result combines our directed framework in a non-trivial way with ideas from an earlier result for sparse SSSP in \emph{undirected, unweighted} graphs \cite{bernstein2017deterministic}.


\subsection{A Fast Algorithm for DAGs} 
\label{subsec:DAGalgo}

\paragraph{The Topological Order Difference.} Let $G = (V,E,w)$ be a DAG and let $\tau$ be the function returned by  $\textsc{GeneralizedTopOrder}(G)$ computed on the initial version of $G$ (since $G$ is a DAG, this is just a standard topological order). Let us now make an almost trivial observation: for any shortest $s$-to-$t$ path $\pi_{s,t}$, in any version of $G$, the sum of topological order differences is bounded by $n$. More formally:
\begin{equation}
\label{eq:defT}
        \mathcal{T}(\pi_{s,t}, \tau) \stackrel{\text{def}}{=} \sum_{(u,v) \in \pi_{s,t}} \tau(v) - \tau(u) = \tau(t) - \tau(s) \leq n.
\end{equation}
Observe that every path in $G$ can only contain few edges $(u,v)$ with large topological order difference, i.e. with $\tau(v) - \tau(u)$ large, by the pigeonhole principle.

\paragraph{Reviewing the ES-tree.} To understand how this fact can be exploited, let us review the classic ES-tree (see \cite{shiloach1981line}): We maintain distances from a fixed source $r \in V$ to each vertex $v$ in $V$ up to distance $\delta > 0$ by storing a distance estimate $\widetilde{\mathbf{dist}}(r,v)$ that is initialized to the distance between $r$ and $v$ in $G$ along with a shortest-path tree $T$ rooted at $r$. On update $(u,v)$, we delete the edge from $G$ and possibly from $T$. Then, if possible, we extract $w_{min} \in V \setminus \{r\}$, the vertex with smallest distance estimate among vertices without incoming edge in $T$. (For the first extraction, we always have $w_{min} = v$.) For this vertex $w_{min}$, we then try to find a vertex $x \in \mathcal{N}^{in}(w_{min})$ such that adding $(x, w_{min})$ to $T$ implies $\mathbf{dist}_T(r,w_{min}) \leq \widetilde{\mathbf{dist}}(r,w_{min})$. We therefore search in $\mathcal{N}^{in}(v)$ for an $x$ that satisfies \begin{equation}
\label{eq:enforce}
    \widetilde{\mathbf{dist}}(r,x)+ w(x,w_{min}) \leq \widetilde{\mathbf{dist}}(r,w_{min}).
\end{equation}
If no such $x$ exists, then $\widetilde{\mathbf{dist}}(r,w_{min})$ has to be incremented, and we set $T$ to $T \setminus N^{out}(w_{min})$. If $\widetilde{\mathbf{dist}}(r,w_{min}) > \delta$, we set it to $\infty$ and remove $w_{min}$ from the tree. We iterate the process until $T$ is spanning for vertices with distance estimate $< \infty$. Since for each distance estimate value $\widetilde{\mathbf{dist}}(r,v)$, the in-neighourhood $\mathcal{N}^{in}(v)$ has to be scanned once (in an efficient implementation), and since estimates increase monotonically, the total update time can be bound by $O(\sum_{v \in V} |\mathcal{N}^{in}(v)| \delta) = O(m \delta)$. Further, observe that if a vertex $v$ was only allowed to scan a certain vertex $x \in \mathcal{N}^{in}(v)$ every $i$ distance estimates (for example whenever $\widetilde{\mathbf{dist}}(r,v) $ is divisible by $ i$) then this corresponds to enforcing that at all times $\widetilde{\mathbf{dist}}(r,x)+ w(x,v) + i - 1 \leq \widetilde{\mathbf{dist}}(r,v)$ since we check equation \ref{eq:enforce} every $i$ steps (in particular, for $i=1$, we get an exact algorithm). Consequently, we get at most $i-1$ additive error in the distance estimate for any $t$ whose shortest $r$-to-$t$ path $\pi_{r,t}$ contains $(x,v)$. On the other hand, we only need to scan and check the edge $(x,v)$, by the argument above, $\delta / i$ times\footnote{This trade-off was first observed in \cite{bernstein2017deterministicWeighted}.}.

\paragraph{Improving the Running Time.} Let us now exploit Inequality \ref{eq:defT}. We therefore define 
\[
B_j(v) = \{ u \in \mathcal{N}^{in}(v) \textit{ with }  2^{j} \leq \tau(v) - \tau(u) < 2^{j+1}\}
\]
for every $v \in V$ and $0 \leq j \leq \lg n$; the $B_j(v)$ partition the in-neighborhood of $v$ according to topological order difference to $v$. Observe that $|B_j(v)| \leq 2^j$. Now, consider the algorithm as above where for every vertex $v$, instead of checking all $\mathcal{N}^{in}(v)$, we only check edge $(x,v)$ for $x \in B_j(v)$ if 
$\widetilde{\mathbf{dist}}(r,v)$ is divisible by $\lceil2^j\frac{\epsilon \delta}{n}\rceil$.
By the arguments above the total running time now sums to 
\[O\left(\sum_{v \in V} \sum_{0 \leq j \leq \lg n} |B_j(v)| \frac{\delta}{2^j\frac{\epsilon \delta}{n}}\right) = O\left(\sum_{v \in V} \sum_{0 \leq j \leq \lg n} 2^{j+2} \frac{\delta}{2^j\frac{\epsilon \delta}{n}}\right) = \tilde{O}(n^2/\epsilon).
\]


\paragraph{Bounding the Error.} Fix a shortest $r$-to-$t$ path $\pi_{r,t}$, and consider any edge $(u,v) \in \pi_{r,t}$ with $u \in B_{j+1}(v)$. We observe that the edge $(u,v)$ contributes at most an additive error of $2^{j+1}\frac{\epsilon \delta}{n}$ to $\widetilde{\mathbf{dist}}(r,t)$ since it is scanned every $\lceil 2^{j}\frac{\epsilon \delta}{n} \rceil$ distance values and if $\lceil 2^{j}\frac{\epsilon \delta}{n} \rceil$ is equal to $1$ it does not induce any error.

On the other hand, since $u \in B_{j+1}(v)$ we also have $\tau(v) - \tau(u) \geq 2^j$. We can thus charge $n/(2 \epsilon \delta)$ units from $\mathcal{T}(\pi_{r,t}, \tau)$ for each additive error unit; we know from Equation \ref{eq:defT} that $\mathcal{T}(\pi_{r,t}, \tau) \leq n$, so the total additive error is at most 
 $\frac{n}{n/(2\epsilon \delta)} = 2\epsilon\delta$. Thus, for all distances $\approx \delta$ (say in $[\delta/2, \delta)$), we obtain a $(1+4\epsilon)$-multiplicative distance estimate\footnote{Technically, we run to depth $(1+4\epsilon)\delta$ to ensure that vertices' distance estimates are not set to $\infty$ too early.}.

\paragraph{Working with Multiple Distance Scales.} Observe that the data structure above has no running time dependency on $\delta$. Thus, to obtain a data structure that maintains a $(1+2\epsilon)$-approximate distance estimate from $r$ to any vertex $x$, we can simply use $\lg(nW)$ data structures in parallel where the $i^{th}$ data structure has $\delta = 2^i$. A query can then be answered by returning the smallest distance estimate from any data structure, using a min-heap data structure to obtain this smallest estimate in constant time\footnote{Here, we exploit that all distance estimates are overestimates, and at least one of them is $(1+2\epsilon)$-approximate.}. The running time for all data structures is then bounded by $\tilde{O}(n^2 \log W /\epsilon)$.

\subsection{Extending the Result to General Graphs}
\label{subsec:ExtendSSSPToGenGraphs}

We now encourage the reader to verify that in the data structure for DAGs, we used at no point that the graph was acyclic, but rather only used that $\mathcal{T}(\pi_{s,t}, \tau)$ is bounded by $n$ for any path $\pi_{s,t}$. In this light, it might be quite natural to ask whether such a function $\tau$ might exist for general decremental graphs. Surprisingly, it turns out that after carrying out some contractions in $G$ that only distort distances slightly, we can find such a function $\tau$ that comes close in terms of guarantees. We call such a function $\tau$ an \emph{approximate topological order} (this function will no longer encode guarantees about reachability, but it helps for intuition to think of $\tau$ as being similar to a topological order). 

\paragraph{The Approximate Topological Order} We start with the formal definition:

\begin{definition}
\label{def:ATO}
Given a decremental weighted digraph $G = (V,E,w)$ and parameter $\eta_{diam} \geq 0$. We call a dynamic tuple $(\mathcal{V}, \tau)$ an \emph{approximate topological order of $G$ of quality $q > 1$} (abbreviated $\mathcal{ATO}(G, \eta_{diam})$ of quality $q$), if at any stage
\begin{enumerate}
    \item $\mathcal{V} = \{ X_1, X_2, .., X_k\}$ forms a partition of $V$ and a refinement of all earlier versions of $\mathcal{V}$, \label{prop:VUpdate} and

    \item $\tau : \mathcal{V} \rightarrow [0, n)$ is a function that maps each $X \in \mathcal{V}$ to a value $\tau(X)$. If some set $X \in \mathcal{V}$ is split at some stage into disjoint subsets $X_1, X_2, .., X_k$, then we let $\tau(X_{\pi(1)}) = \tau(X)$ and $\tau(X_{\pi(j+1)}) = \tau(X_{\pi(j)}) + |X_{\pi(j)}|$ for each $j < k$ and some permutation $\pi$ of $[1, k]$, and \label{prop:tauUpdate}
    
    \item each $X \in \mathcal{V}$ has weak diameter $\mathbf{diam}(X, G) \leq \frac{|X| \eta_{diam}}{n}$, and \label{prop:ContractLittle}
    
    \item \label{prop:TauTotal} for any two vertices $s, t \in V$, the shortest path $\pi_{s,t}$ in $G$ satisfies $\mathcal{T}(\pi_{s,t}, \tau) \leq q \cdot w(\pi_{s,t}) + n$ where we define $\mathcal{T}(\pi_{s,t}, \tau) \stackrel{\text{def}}{=} \sum_{(u,v) \in \pi_{s,t}} |\tau(X^u) - \tau(X^v)|$.
\end{enumerate}
We say that $(\mathcal{V}, \tau)$ is an $\mathcal{ATO}(G, \eta_{diam})$ of \emph{expected} quality $q$, if $(\mathcal{V}, \tau)$ satisfies properties \ref{prop:VUpdate}-\ref{prop:ContractLittle}, and at any stage, for every $s,t \in V$, $\mathbb{E}[\mathcal{T}(\pi_{s,t}, \tau)] \leq q \cdot w(\pi_{s,t}) + n$.
\end{definition}

Let us expound the ideas captured by this definition. We remind the reader that such a function $\tau$ is required by a data structure that only considers distances in $[\delta/2, \delta)$. Let us consider a tuple $(\mathcal{V}, \tau)$ that forms an $\mathcal{ATO}(G, \epsilon \delta)$ of quality $q$. Then, for any $s$-to-$t$ shortest path $\pi_{s,t} = \langle s = v_1, v_2, \dots v_{\ell} = t\rangle$ in $G$, let $s_i$ and $t_i$ be the first and last vertex on the path in $X_i \in \mathcal{V}$ (see property \ref{prop:VUpdate}) if there are any. Observe that by property \ref{prop:ContractLittle}, the vertices $s_i$ and $t_i$ are at distance at most $\frac{|X_i|\epsilon \delta}{n}$ in $G$. It follows that if we contract the SCC $X_i$, the distance $\mathbf{dist}_{G / X_i}(s,t)$ is at least the distance from $s$ to $t$ in $G$ minus an additive error of at most $\frac{|X_i|\epsilon \delta}{n}$. It follows straight-forwardly, that after contracting all sets in $\mathcal{V}$, we have that distances in $G / \mathcal{V}$ correspond to distances in $G$ up to a negative additive error of at most $\sum_{X_i \in \mathcal{V}} \frac{|X_i| \epsilon\delta}{n} = \frac{n \cdot \delta\epsilon}{n} = \epsilon \delta$. Thus, maintaining the distances in $G / \mathcal{V}$ $(1+2\epsilon)$-approximately is still sufficient for getting a $(1 \pm 2\epsilon)$-approximate distance estimate.

Property \ref{prop:VUpdate} simply ensures that the vertex sets forming the elements of $\mathcal{V}$ decompose over time. Property \ref{prop:tauUpdate} states that $\tau$ assigns each node in $G / \mathcal{V}$ a distinct number in $[0,n)$. It also ensures that if a set $X \in \mathcal{V}$ receives $\tau(X)$ that every later subset of $X$ will obtain a number in the interval $[\tau(X), \tau(X) + |X|)$. Moreover, $\tau$ effectively establishes a one-to-one correspondence between nodes in a version of $G / \mathcal{V}$ and intervals in $[0, n)$ of size equal to their underlying vertex set. Once a set $X\in\mathcal{V}$ decomposes into sets $X_1, X_2, \dots$, property \ref{prop:tauUpdate} stipulates that the intervals that $\tau$ maps $X_1, X_2, \dots$ to are disjoint subintervals of $[\tau(X), \tau(X) + |X|)$. We point out that once $\mathcal{V}$ consists of singletons, each vertex is essentially assigned a single number.

Finally, property \ref{prop:TauTotal} gives an upper bound on the topological order difference. Observe that we redefine $\mathcal{T}(\pi_{s,t}, \tau)$ in a way that is consistent with Definition \ref{eq:defT}. 
In our algorithm, for $\eta_{diam} \approx \epsilon\delta$, we obtain a quality of $\tilde{O}(n/\epsilon\delta)$. Thus, any path $\pi$ of weight $\approx \delta$ has $\mathcal{T}(\pi / \mathcal{V}, \tau) \leq \tilde{O}(n / \epsilon)$ which is very close to the upper bound obtained by the topological order function in DAGs. We summarize this result in the theorem below which is one of our main technical contributions.

\begin{restatable}{theorem}{ATOResultIntro}[see \Cref{sec:ATORealImpl} and \Cref{subsec:bootstrapDense}.]
\label{thm:TopologicalOrderMaintenanceOverview}
For any $0 \leq i \leq \lg(Wn)$, given a decremental digraph $G=(V,E,w)$, we can maintain an $\mathcal{ATO}(G, 2^i)$ of expected quality $\tilde{O}(n/2^i)$. The algorithm runs in total expected update time $\tilde{O}(n^2)$ against a non-adaptive adversary with high probability. 
\end{restatable}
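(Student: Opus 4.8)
The plan is to prove \Cref{thm:TopologicalOrderMaintenanceOverview} in two stages, mirroring the split between the ``real implementation'' of an $\mathcal{ATO}$ and the ``bootstrap'' that instantiates its parameters. In the first stage I would design a procedure that, given oracle access to $\astardag$ (the $\mathcal{ATO}$-based SSSP data structure sketched in \Cref{subsec:DAGalgo}, which runs on any graph equipped with an $\mathcal{ATO}$), maintains an $\mathcal{ATO}(G, \Delta)$ with $\Delta = 2^i$ of expected quality $q' \le \max\{q/2,\ \tilde{O}(n/\Delta)\}$ whenever it is fed an $\mathcal{ATO}$ of quality $q$, in total update time $\tilde{O}(T(\astardag))$. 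In the second stage I would iterate this procedure $O(\lg(nW))$ times, starting from a crude $\mathcal{ATO}$ of expected quality $O(n)$ --- e.g.\ the one obtained by keeping $\mathcal{V}$ equal to the partition into singletons and taking $\tau$ to be a uniformly random bijection $V \to [0,n)$, for which $\mathbb{E}[|\tau(X^u)-\tau(X^v)|] = \Theta(n)$ per edge and hence $\mathbb{E}[\mathcal{T}(\pi_{s,t},\tau)] = O(n\cdot|\pi_{s,t}|) = O(n\cdot w(\pi_{s,t}))$ since weights are $\ge 1$, while \Cref{def:ATO}(\ref{prop:ContractLittle}) holds trivially for singletons. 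After $O(\lg \Delta) = O(\lg(nW))$ halvings the quality reaches the target $\tilde{O}(n/2^i)$; since $T(\astardag) = \tilde{O}(n^2)$ in the dense regime, summing over the $O(\lg(nW))$ rounds gives total update time $\tilde{O}(n^2)$.

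The heart of the argument is the single bootstrap step. Given the current $\mathcal{ATO}$ of quality $q$, I would first contract the parts of $\mathcal{V}$ --- by \Cref{def:ATO}(\ref{prop:ContractLittle}) this changes all distances by only an additive $\Delta$, which is affordable at scale $\Delta$ --- and then refine the contracted graph through $O(\lg n)$ levels of a low-diameter decomposition. At each level, run the decremental SCC / generalized-topological-order data structure of \Cref{thm:SCCinDecrGraph}: it isolates the ``DAG edges'' between distinct SCCs, whose $\tau$-differences telescope exactly as in \Cref{eq:defT} and thus contribute only $O(n)$ per level to $\mathcal{T}$, and it supplies $\tau$-values with the nesting property so that $\tau$ is only ever refined under deletions, never recomputed (matching \Cref{def:ATO}(\ref{prop:tauUpdate})). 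Every SCC $S$ that is still too large or has weak diameter exceeding $|S|\Delta/n$ is then broken: pick a random center $c$ (drawn from $S$, or via an \textsc{Exp}-clock as in the preliminaries), use $\astardag$ on the current coarser $\mathcal{ATO}$ to maintain the balls $B(c,\rho)$ for geometrically growing radii $\rho$, and cut $S$ at a random such radius, recursing on the two sides. The $\tau$-values inside $S$ are assigned consistently with distance from $c$, so that an edge $(u,v)$ of a shortest path that is \emph{not} cut contributes only $\approx w(u,v)\cdot(n/\Delta)$ to $\mathcal{T}$, while ball-growing guarantees each recursion geometrically shrinks the piece size, bounding the recursion depth by $O(\lg n)$ and eventually enforcing the weak-diameter bound of \Cref{def:ATO}(\ref{prop:ContractLittle}).

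The quality bound, \Cref{def:ATO}(\ref{prop:TauTotal}), is where the work concentrates. Fixing a shortest path $\pi_{s,t}$, one writes $\mathcal{T}(\pi_{s,t},\tau)$ as a sum over the $O(\lg n)$ decomposition levels; at each level the DAG edges contribute $\le n$ by telescoping through the generalized topological order, and the within-SCC edges contribute at most (number of edges of $\pi_{s,t}$ cut at that level) times (size of the relevant $\tau$-interval at that level). Since the random radius straddles the ``gap'' of an edge $(u,v)$ with probability $\tilde{O}(w(u,v)/\rho_{\mathrm{range}})$, the expected cut contribution telescopes to $\tilde{O}(n/\Delta)\cdot w(\pi_{s,t})$; adding the $O(n\lg n)$ from DAG edges over all levels yields $\mathbb{E}[\mathcal{T}(\pi_{s,t},\tau)] \le \tilde{O}(n/\Delta)\cdot w(\pi_{s,t}) + \tilde{O}(n)$, which after the inductive halving over bootstrap rounds is the claimed expected quality. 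The points I expect to be the main obstacles are: (i) making the analysis go through \emph{decrementally} --- as edges are deleted, SCCs only split and balls only shrink, so re-cuts must be charged without blowing up running time or recursion depth; (ii) directedness --- a ball around $c$ only controls edges ``going away from'' $c$, so one must either grow in-balls as well or exploit that within an SCC both directions are finite, which is precisely why the SCC structure is interleaved with ball cutting; and (iii) the bootstrap interplay --- feeding $\astardag$ an $\mathcal{ATO}$ of quality $q$ makes its running time scale with $q\Delta + n$ rather than $n$, so one must check that the quality improves fast enough (geometrically) for the running times of all $O(\lg(nW))$ rounds to still telescope to $\tilde{O}(n^2)$, and that the randomness used for centers and radii is exactly what forces the non-adaptive-adversary assumption, whereas \Cref{thm:SCCinDecrGraph} itself is adaptive.
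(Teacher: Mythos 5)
Your outline matches the paper's architecture in its broad strokes: random exponential-clock ball cutting to form separators with per-edge cut probability $\tilde O(w(e)/d)$, interleaving with the decremental SCC/generalized-topological-order structure of \Cref{thm:SCCinDecrGraph} so that forward edges telescope and only the randomly cut edges contribute to $\mathcal{T}'$, charging re-cuts via the fact that a vertex's SCC halves every $O(1)$ times it is separated, and using an SSSP structure built on a coarser $\mathcal{ATO}$ to monitor diameters for a finer one. However, there are two genuine gaps in the bootstrap, which is precisely where the paper's technical effort concentrates. First, your iteration ``halve the quality at fixed $\Delta=2^i$, starting from a singleton $\mathcal{ATO}$ of quality $O(n)$'' does not control the running time of the early rounds: the SSSP structure built from an $\mathcal{ATO}$ of quality $q$ at depth $\delta$ costs $\tilde O(n\delta q/\epsilon+n^2)$, so feeding it $q=O(n)$ at depth $\delta=2^i$ costs $\tilde O(n^2 2^i)$, which can be $n^3W$. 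The paper never halves quality at a fixed scale; it builds the hierarchy bottom-up over \emph{diameter scales}, so that level $i$ only ever needs a $2^{i-2}$-restricted SSSP structure implemented from the level-$(i-1)$ bundle of quality $\tilde O(n/2^{i-1})$, keeping the product $\delta q=\tilde O(n)$ at every level.

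Second, and more fundamentally, the reduction from $\mathcal{ATO}$-maintenance to SSSP (\Cref{lma:reductionATOtoSSSP}) requires SSSP instances on \emph{many vertex-induced subgraphs} of $G$ (one per randomly chosen center, reinstantiated as SCCs decompose -- roughly $2^{j}\log^2 n$ instances on subgraphs of $\le n/2^j$ vertices). Your plan of always running these on the full graph $G$ using the previous round's $\mathcal{ATO}$ (the move licensed by \Cref{rmk:LaminarFamilyOfGraphs}) makes each instance cost $\tilde O(n^2)$, and summing over the instances gives a total far exceeding $\tilde O(n^2)$. The paper's resolution is the large/small dichotomy of \Cref{thm:TopologicalOrderMaintenanceBootstrap}: only the $O(2^{\gamma}\,\mathrm{polylog})$ instances on graphs with $\ge n/2^{\gamma}$ vertices run on all of $G$ via $\mathcal{S}_{i-1}$; smaller instances recurse (induction on $n$, with boosted failure parameter) so their cost scales as $(n/2^j)^2$, and the threshold $\gamma=\Theta(\lg\log(Wn))$ is tuned so the recursive contributions form a geometric series. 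This is the ``careful bootstrapping argument'' the overview alludes to, and your proposal does not address it. A minor further point: assigning $\tau$ \emph{inside} an SCC by distance from the center is not consistent with \Cref{def:ATO}, where $\tau$ is defined on the parts of $\mathcal{V}$ and within-part edges contribute $0$ to $\mathcal{T}$; the correct accounting is that a backward edge is necessarily a separator edge, contributing at most the size of the SCC at the time it was cut, with cut probability $\tilde O(2^j w(e)/2^i)$ at size class $j$.
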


Combining the theorem above and the theorem below which is obtained by generalizing the above decremental SSSP algorithm for DAGs, we obtain our main result \Cref{thm:ContributionSSSPResult}.

\begin{restatable}{theorem}{ssspSuperSimple}[see \Cref{subsec:alphaDeltaSSSPReduction}.]
\label{thm:SSSPEfficientIntro}
Given $G=(V,E,w)$ and $(\mathcal{V}, \tau)$ an  $\mathcal{ATO}(G, \eta_{diam} \approx \epsilon\delta)$, for some depth parameter $\delta > 0$, of quality $q$, a dedicated source $r$ in $V$, and an approximation parameter $\epsilon > 0$. Then, there exists a deterministic data structure $\mathcal{E}_r$ that maintains a distance estimate $\widetilde{\mathbf{dist}}(r,v)$ for each $v \in V$, which is guaranteed to be $(1+\epsilon)$-approximate if ${\mathbf{dist}}(r,v) \in [\delta, 2\delta)$. Distance queries are answered in $O(1)$ time and a corresponding path $P$ can be returned in $O(|P|)$ time. The total update time is $\tilde{O}(n \delta q/\epsilon + n^2)$.
\end{restatable}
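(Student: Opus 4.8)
The plan is to run a lazy, bucketed Even--Shiloach tree on the contracted graph $G/\mathcal{V}$. First I would pass to $G/\mathcal{V}$: by property~\ref{prop:ContractLittle}, routing through any $X\in\mathcal{V}$ costs at most $|X|\eta_{diam}/n$, so summing over the partition, $\mathbf{dist}_G(r,v)-\eta_{diam}\le\mathbf{dist}_{G/\mathcal{V}}(X^r,X^v)\le\mathbf{dist}_G(r,v)$; thus contracting $\mathcal{V}$ loses at most an additive $\eta_{diam}\approx\epsilon\delta$. Moreover $\mathcal{V}$ only refines over time (property~\ref{prop:VUpdate}) and $G$ is decremental, so distances in $G/\mathcal{V}$ are non-decreasing across stages; hence $G/\mathcal{V}$ is an implicitly decremental graph on which an ES-tree can be maintained, a split of a node of $\mathcal{V}$ being just another distance-non-decreasing update. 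The output estimate is $\widetilde{\mathbf{dist}}(r,v)=(\text{ES-tree estimate of }X^v)+\eta_{diam}$, maintained up to depth $\Theta(\delta)$, which makes it an overestimate of $\mathbf{dist}_G(r,v)$; the accuracy claim then reduces to bounding the ES-tree's additive error on $G/\mathcal{V}$ by $O(\epsilon\delta)$.

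Next, following the DAG warm-up of \Cref{subsec:DAGalgo}, I bucket the in-edges of each node $v$ of $G/\mathcal{V}$ by $\tau$-difference: for $0\le j\le\lg n$, let $B_j(v)=\{u\in\mathcal{N}^{in}(v):2^j\le|\tau(X^u)-\tau(X^v)|<2^{j+1}\}$. Since distinct nodes receive distinct integer $\tau$-values (property~\ref{prop:tauUpdate}), $|B_j(v)|=O(2^j)$, hence $\sum_v|B_j(v)|=O(n2^j)$. In the ES-tree, edge $(x,v)$ with $x\in B_j(v)$ is rescanned only when $\widetilde{\mathbf{dist}}(r,v)$ is divisible by $\lceil 2^j\rho\rceil$, for a single global parameter $\rho=\Theta\!\big(\epsilon\delta/(q\delta+n)\big)$. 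By the lazy-ES trade-off (as in \cite{bernstein2017deterministicWeighted} and the DAG argument), such an edge contributes at most $2^j\rho$ to any distance estimate whose shortest path uses it, i.e.\ at most $\rho$ per unit of $|\tau(X^x)-\tau(X^v)|$; summing along the shortest $r$-to-$t$ path $\pi$ in $G$ for a target $t$ with $\mathbf{dist}_G(r,t)\in[\delta,2\delta)$ gives total scanning error at most $\rho\cdot\mathcal{T}(\pi,\tau)\le\rho\,(q\,w(\pi)+n)\le\rho\,(2q\delta+n)=O(\epsilon\delta)$, using property~\ref{prop:TauTotal} and $w(\pi)=\mathbf{dist}_G(r,t)<2\delta$. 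Here one uses the standard ES-tree fact that the estimate of $X^t$ is at most the weight of the image of $\pi$ in $G/\mathcal{V}$ (which is $\le w(\pi)$) plus the accumulated error along it. Combined with the $\le\eta_{diam}$ contraction loss this gives a $(1+O(\epsilon))$ estimate; rescaling $\epsilon$ and running the tree to depth $(1+O(\epsilon))\cdot 2\delta$ (so vertices on $\pi$ are not prematurely set to $\infty$) completes the accuracy argument.

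For the running time, an edge in $B_j(v)$ is scanned $O(\delta/\max\{1,2^j\rho\})$ times over the depth-$\Theta(\delta)$ tree, so the scanning work is $\tilde{O}\big(\sum_{0\le j\le\lg n}\sum_v|B_j(v)|\cdot\delta/(2^j\rho)\big)=\tilde{O}(n\delta/\rho)$, which on substituting $\rho$ is the $\tilde{O}(n^2/\epsilon)$ DAG computation of \Cref{subsec:DAGalgo} with the bucket thresholds rescaled by $\Theta((q\delta+n)/n)$; on top of this, handling node splits of $\mathcal{V}$ costs $\tilde{O}(n^2)$, since the laminar family of all nodes ever in $\mathcal{V}$ has $O(n)$ members and each re-scans an in-neighborhood of size $O(n)$ in $G/\mathcal{V}$. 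Altogether this is $\tilde{O}(n\delta q/\epsilon+n^2)$ up to the precise form of the lower-order additive term (in the regime $q=\Omega(n/\delta)$ relevant here the two forms agree). Path reporting is the standard ES-tree walk up the maintained tree in $G/\mathcal{V}$, producing a node-path $X^r=X_0,X_1,\dots,X_k=X^v$ with witnessing edges $e_i\in E(G)$; to return an honest path in $G$ one splices into each $X_i$ a path between the head of $e_i$ and the tail of $e_{i+1}$, which has weight $\le|X_i|\eta_{diam}/n$ by property~\ref{prop:ContractLittle} and is recovered in time proportional to its length from auxiliary intra-component structures maintained alongside $(\mathcal{V},\tau)$; the resulting path has weight $\le\mathbf{dist}_G(r,v)+O(\epsilon\delta)\le(1+O(\epsilon))\,\mathbf{dist}_G(r,v)$ and is produced in $O(|P|)$ time.

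I expect the main obstacle to be the interplay between the lazy, bucketed ES-tree and the dynamically refining contraction $\mathcal{V}$: one must verify that a node split preserves monotonicity of the distance estimates and --- more delicately --- that the amortized ``scan each in-neighborhood once per distance-estimate value'' accounting survives the reshuffling of a node's in-neighborhood under splits, in particular when edges migrate between the buckets $B_j$ as $\tau$-values shift, and that the intra-component path splicing stays within the $O(|P|)$ budget even when edge weights are small. By contrast, the accuracy and running-time estimates are essentially the DAG computation of \Cref{subsec:DAGalgo} with the quantity $n$ replaced by $\mathcal{T}(\pi,\tau)\le q\,w(\pi)+n$, and should be routine once the framework is in place.
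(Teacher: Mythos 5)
Your proposal is correct and follows essentially the same route as the paper's proof of \Cref{thm:SSSPEfficient} in \Cref{subsec:alphaDeltaSSSPReduction} and \Cref{sec:proofSSSP}: a lazy, bucketed ES-tree on the contracted graph, with in-neighbors of each node bucketed by $\tau$-difference, scan frequency $\lceil 2^j\cdot\Theta(\epsilon/q)\rceil$ (your $\rho$ coincides with the paper's $\epsilon/q$ since $n/q\le\delta$), error charged against $\mathcal{T}(\pi,\tau)\le q\,w(\pi)+n$ via property~\ref{prop:TauTotal}, and the $\eta_{diam}$ contraction loss added back. The implementation details you flag as delicate (node splits, bucket migration) are exactly what the paper handles by replacing the multigraph $G/\mathcal{V}$ with a simple graph backed by min-heaps, letting the largest piece of a split inherit its parent's node so that each vertex is rescanned only $O(\log n)$ times, and tolerating a one-bucket slack in the bucket invariant.
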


\subsection{The Framework by \texorpdfstring{ \cite{gutenberg2020decremental}}{Gutenberg and Wulff-Nilsen}}

Before we describe our new result, we review the framework in \cite{gutenberg2020decremental} to construct and maintain an $\mathcal{ATO}(G, \eta_{diam})$. We point out that while the abstraction of an approximate topological order is new to our paper, analyzing the technique in \cite{gutenberg2020decremental} through the $\mathcal{ATO}$-lens is straight-forward and gives a first non-trivial result. Throughout this review section, we assume that the graph $G$ is unweighted to simplify presentation. This allows us to make use of the following result which states that for vertex sets that are far apart, one can find deterministically a vertex separator that is small compared to the smaller side of the induced partition (to obtain an algorithm for weighted graphs a simple edge rounding trick is sufficient to generalize the ideas presented below).

\begin{definition}
\label{def:balancedVertexSep}
Given graph $G = (V,E,w)$, then we say a partition of $V$ into sets $A, S_{Sep}, B$ is a one-way vertex separator if $A \not\leadsto_{G \setminus S_{Sep}} B$ and $A$ and $B$ are non-empty. 
\end{definition}

\begin{lemma}[see Definition 5 and Lemma 6 in \cite{chechik2016decremental}]
\label{lma:balancedVertexSep}
Given an unweighted graph $G$ of diameter $\mathbf{diam}(G)$. Then we can find sets $A, S_{Sep}, B$ that form a one-way vertex separator such that $|S_{Sep}| \leq \tilde{O}(\frac{\min\{|A|, |B|\}}{\mathbf{diam}(G)})$ in time $O(m)$. 
\end{lemma}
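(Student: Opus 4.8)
The plan is to prove this by a BFS-layering (``ball-growing'') argument. First I would dispose of the degenerate case: if $G$ is not strongly connected then $\mathbf{diam}(G)=\infty$, and the empty separator works — take $B$ to be a source strongly-connected component of $G$ (one with no incoming edges from other components) and $A=V\setminus B$; since no edge enters $B$ from outside, $A\not\leadsto_G B$, and $A$ is nonempty as $G$ has at least two SCCs. This is found in $O(m)$ time, so from now on assume $G$ is strongly connected and write $D:=\mathbf{diam}(G)<\infty$.

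Next I would pick an arbitrary vertex $v$ and run BFS from $v$ in both directions, obtaining forward layers $L_i=\{u:\mathbf{dist}_G(v,u)=i\}$ and backward layers $L_i'=\{u:\mathbf{dist}_G(u,v)=i\}$. For the pair $x,y$ realizing $D$ we have $\mathbf{dist}(x,v)+\mathbf{dist}(v,y)\ge D$, so the larger of the forward and backward eccentricities of $v$ is at least $D/2$; I would keep the corresponding deeper layering, say the forward one, with nonempty layers $L_0,\dots,L_R$ where $R\ge D/2$ (the backward case is identical after reversing all edges). The role of unweightedness is that the forward level rises by at most one along any edge $(u,w)$, i.e.\ $\mathbf{dist}(v,w)\le\mathbf{dist}(v,u)+1$; hence any path from $\bigcup_{i<j}L_i$ to $\bigcup_{i>j}L_i$ must cross $L_j$, so for every $j$ with $1\le j\le R-1$ the triple $A=\bigcup_{i<j}L_i$, $S_{Sep}=L_j$, $B=\bigcup_{i>j}L_i$ is already a valid one-way vertex separator.

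The crux is choosing $j$ so that $|L_j|$ is small relative to $\min\{|A|,|B|\}$. Let $j_0$ be the largest index with $|L_0\cup\dots\cup L_{j_0}|\le n/2$; then for $j\le j_0$ the side $A$ is no larger than $B$, while for $j\ge j_0+2$ the side $B$ is strictly smaller. Since $j_0+(R-j_0-2)=R-2=\Omega(D)$, at least one of the ranges $[1,j_0]$ and $[j_0+2,R-1]$ has length $\Omega(D)$; assume it is $[1,j_0]$ (the other case is symmetric, counting from the far end). Writing $s_j:=|L_0\cup\dots\cup L_j|$ and fixing $\alpha=\Theta(\log n/j_0)$: if $|L_j|>\alpha\,s_{j-1}$ for every $j\in[1,j_0]$, then $s_j>(1+\alpha)s_{j-1}$ throughout, so $s_{j_0}>(1+\alpha)^{j_0}$, which grows polynomially in $n$ and hence exceeds $n$ once the constant hidden in $\alpha$ is large enough, contradicting $s_{j_0}\le n$. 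So some $j\in[1,j_0]$ has $|L_j|\le\alpha\,s_{j-1}=O(\log n/j_0)\cdot|A|=O(\log n/D)\cdot\min\{|A|,|B|\}$. The procedure is one SCC computation, two BFS runs, and $O(n)$ bookkeeping to locate $j_0$ and the thin layer, hence $O(m)$ time. (Using $R\in[D/2,D]$ in place of $D$ everywhere means the algorithm never needs to know $D$; and, as usual for such statements, the bound is only informative once $D$ exceeds the polylogarithmic slack.)

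The step I expect to be the main obstacle is reconciling ``thin relative to the smaller side'' with the $1/D$ factor. A BFS from an arbitrary root can be shallow even when $D$ is large, which would ruin the bound, so one must run BFS both ways and keep the deeper layering — valid precisely because every root has forward- or backward-eccentricity at least $D/2$. And one has to know which of $A,B$ the removed layer is being compared against, which is why the layering is split at the balance index $j_0$ before the geometric-growth count is applied on each side. The remaining ingredients — that $S_{Sep}=L_j$ genuinely separates in the directed sense, and the $O(m)$ running time — are routine.
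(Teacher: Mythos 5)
The paper does not prove this lemma at all -- it is imported verbatim from Chechik et al.\ (Definition 5 and Lemma 6 of \cite{chechik2016decremental}) -- so there is no in-paper proof to compare against. Your argument is the standard BFS-layering ("sparse level") proof that underlies the cited result, and it is correct: the reduction to a root of forward- or backward-eccentricity at least $D/2$ via the triangle inequality, the observation that in an unweighted digraph a BFS level can only rise by one along an edge (so any single level is a one-way vertex separator), the split at the balance index $j_0$ so that the deleted layer is charged to the genuinely smaller side, and the $(1+\alpha)^{j_0}>n$ growth contradiction with $\alpha=\Theta(\log n/j_0)$ are all sound, and the accounting $|L_j|\le O(\log n/D)\cdot\min\{|A|,|B|\}$ matches the $\tilde O$ in the statement. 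Two small points worth making explicit. First, in the backward case the roles of $A$ and $B$ must be swapped (the level can \emph{drop} by at most one along an edge, so the separator blocks paths from the far side to the near side); your "reverse all edges" remark covers this, but it deserves a sentence. Second, as you note, the statement cannot hold unconditionally for small diameter (a complete bidirected graph admits no one-way vertex separator at all), so the lemma carries an implicit assumption that $\mathbf{diam}(G)$ exceeds the polylogarithmic slack -- this is also how it is used in the paper, where the separator is only invoked on SCCs whose diameter has exceeded a threshold. With that caveat stated, your proof is complete and is exactly the intended argument.
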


\paragraph{High-level Framework.} The main idea of \cite{gutenberg2020decremental} is to maintain a tuple $(\mathcal{V}, \tau)$ which is an $\mathcal{ATO}(G, \eta_{diam})$ by setting $(\mathcal{V}, \tau)$ to be $\textsc{GeneralizedTopOrder}(G')$ of some decremental graph $G' \subseteq G$ (over the same vertex set, i.e. $V(G') = V(G)$). It is straight-forward to see that $(\mathcal{V}, \tau)$ satisfies property \ref{prop:VUpdate} in \Cref{def:ATO}, since SCCs in the decremental graph $G'$ decompose. Further, it is not hard to extend the existing algorithm for maintaining SCCs in a decremental graph $G'$ given in \cite{bernstein2019decremental} to also maintain function $\tau$ that obeys property \ref{prop:tauUpdate} in \Cref{def:ATO}. The algorithm to maintain $(\mathcal{V}, \tau)$ given $G'$ runs in total update time $\tilde{O}(m)$ (same as in \cite{bernstein2019decremental}).

So far, we have not given any reason why $G'$ needs to be a subgraph of $G$. To see why we cannot use the above strategy on $G$ directly, recall property \ref{prop:ContractLittle} in the $\mathcal{ATO}$-definition \ref{def:ATO}, which demands that each SCC $X$ has weak diameter at most $\frac{|X|\eta_{diam}}{n}$. This property might not hold in the main graph $G$. In order to resolve this issue, $G'$ is initialized to $G$ and then the diameter of SCCs in $G'$ is monitored. Whenever an SCC $X$ violates property \ref{prop:ContractLittle}, a vertex separator $S_{Sep}$ is found in the graph $G'[X]$ as described in \Cref{lma:balancedVertexSep} and all edges incident to $S_{Sep}$ are removed from $G'$. Letting $S$ denote the union of all such separators $S_{Sep}$, we can now write $G' = G \setminus E(S)$. 

\paragraph{Establishing the Quality Guarantee.} To establish a quality $q$ of the $\mathcal{ATO}$ as described in property \ref{prop:TauTotal} in \Cref{def:ATO}, let us first partition the set $S$ into sets $S_0, S_1, \dots, S_{\lg n}$ where each $S_i$ contains all separator vertices found on a graph of size $[n/2^i, n/2^{i+1})$, thus it was found when the procedure from \Cref{lma:balancedVertexSep} was invoked on a graph with diameter at least $\frac{(n/2
^{i+1})\eta_{diam}}{n} =  \frac{\eta_{diam}}{2
^{i+1}}$. Since separators are further balanced, i.e. there size is controlled by the smaller side of the induced cut, we can further use induction and \Cref{lma:balancedVertexSep} to establish that there are at most $\tilde{O}(\frac{2^i n}{\eta_{diam}})$ vertices in $S_i$. Next, observe that since any separator that was added to $S_i$ was found in a graph $G'[X]$ with $|X| \in [n/2^i, n/2^{i+1})$, we have by property \ref{prop:tauUpdate} that nodes $X' \subseteq X$ that are in the current version of $\mathcal{V}$ are assigned a $\tau(X')$ from the interval $[\tau(X), \tau(X) + |X|)$. Thus, any edge $(x,s)$ or $(s,x)$ with $x \in X, s \in S_i \cap X$ has topological order difference $|\tau(X^x) - \tau(X^s)| \leq |X| \leq n/2^i$.

Finally, let us define
\begin{equation}
\label{eq:redefT}
        \mathcal{T}'(\pi_{s,t}, \tau) \stackrel{\text{def}}{=} \sum_{(u,v) \in \pi_{s,t}} \min\{0, \tau(X^v) - \tau(X^u)\}
\end{equation}
the function similar to $\mathcal{T}(\pi_{s,t}, \tau)$ that only captures negative terms, i.e. sums only over edges that go "backwards" in $\tau$. Observe that $\mathcal{T}(\pi_{s,t}, \tau) \leq 2 \mathcal{T}'(\pi_{s,t}, \tau) + n$. Now, since $(\mathcal{V}, \tau)$ is a $\textsc{GeneralizedTopOrder}(G')$, we have that $(u,v)$ occurs in the sum of $\mathcal{T}'(\pi_{s,t}, \tau)$ if and only if $(u,v) \in G \setminus G'$, so one endpoint is in a set $S_i$ and therefore $\tau(X^v) -\tau(X^u) \leq n/2^i$. Since we only have two edges on any shortest path incident to the same vertex, we can establish that
\[
    \mathcal{T}'(\pi_{s,t}, \tau) \leq \sum_{i} 2|S_i| n/2^i = \tilde{O}(n^2/\eta_{diam}).
\]
We obtain that $(\mathcal{V}, \tau)$ is a $\mathcal{ATO}(G, \eta_{diam})$ of quality $\tilde{O}(\frac{n^2}{2^i\eta_{diam}})$ for all paths of weight at least $2^i$. Thus, when the distance scale $\delta \geq \sqrt{n}/\epsilon$, \Cref{thm:SSSPEfficientIntro} requires total update time $\approx n^{2.5}$ to maintain $(1+\epsilon)$-approximate SSSP. For distance scales where $\delta < \sqrt{n}/\epsilon$, a classic ES-tree has total update time $\approx m\sqrt{n} \leq n^{2.5}$.


\paragraph{Limitations of the Framework.} 
 Say that the goal is to  maintain shortest paths of length around $\sqrt{n}$.
The first step in the framework of \cite{gutenberg2020decremental} is to find separator $S$ such that all SCCs of $G' = G \setminus E(S)$ have diameter at most $\epsilon \sqrt{n}$ and then maintain $(\mathcal{V},\tau) =$ \textsc{GeneralizedTopOrder}($G'$).  Every edge $(u,v) \notin E(S)$ will only go forward in $\tau$, but each edge $(u,s)$, for $s \in S$, can go ``backwards" in $\tau$. By the nesting property of generalized topological orders, the amount that $(u,s)$ goes backwards -- i.e. the quantity $|\tau(X^u) - \tau(X^s)|$ -- is upper bounded by the size of the SCC in $G'$ from which $s$ was chosen: the original SCC has size $n$, but as we add vertices to $S$, the SCCs of $G' = G \setminus E(S)$ decompose and new vertices added to $S$ may belong to smaller SCCs. Define $S^* \subseteq S$ to contain all vertices $s \in S$ that were chosen in an SCC of size $\Omega(n)$. Intuitively, $S^*$ is the top-level separator chosen in $G$, before SCCs decompose into significantly smaller pieces. Every edge in $E(S^*)$ may go backwards by as much as $n$ in $\tau$, so for any path $\pi_{x,y}$ in $G$, the best we can guarantee is that $\mathcal{T}(\pi_{x,y}) \sim n \cdot |\pi_{x,y} \cap S^*|$. 

The framework of \cite{gutenberg2020decremental} tries to find a \emph{small} separator $S^*$ and then uses the trivial upper bound $|\pi_{x,y} \cap S^*| \leq |S^*|$. In fact, one can show that given \emph{any} deterministic separator procedure, the adversary can pick a sequence of updates where $|\pi_{x,y} \cap S^*| \sim |S^*|$. But now, say that $G$ is a $\sqrt{n} \times \sqrt{n}$-grid graph with bidirectional edges. It is not hard to check that  $|S^*| = \Omega(\sqrt{n})$, because every balanced separator of a grid has $\Omega(\sqrt{n})$ vertices. The framework of \cite{gutenberg2020decremental} can thus at best guarantee  $\mathcal{T}(\pi_{x,y}) \sim n \cdot |\pi_{x,y} \cap S^*| \sim n|S^*| = \Omega(n^{1.5})$, which is a $\sqrt{n}$ factor higher than it would be in a DAG, and thus leads to running time $\tilde{O}(n^{2.5})$ instead of $\tilde{O}(n^2)$.

Our algorithm uses an entirely different random separator procedure. We allow $S^*$ to be arbitrarily large, but use randomness to ensure that $|\pi_{x,y} \cap S^*|$ is nonetheless small.

\subsection{Our Improved Framework}

We now introduce our new separator procedure and then show how it can be used in a recursive algorithm that uses ATOs of worse quality (large $q$) to compute ATOs of better quality (small $q$). (By contrast, the framework of \cite{gutenberg2020decremental} could not benefit from a multi-layered algorithm because it would still hit upon the fundamental limitation outlined above.)


\paragraph{A New Separator Procedure.} Before we describe the separator procedure, let us formally define the guarantees that we obtain. In the lemma below, think of $\zeta = \Theta(\log(n))$.

\begin{restatable}{lemma}{randomLayerSep}
\label{lma:sepIntro}
There exists a procedure $\textsc{OutSeparator}(r, G, d, \zeta)$ where $G$ is a weighted graph, $r \in V$ a root vertex, and integers $d, \zeta > 0$. Then, with probability at least $1- e^{-\zeta}$, the procedure computes a tuple $(E_{Sep}, V_{Sep})$ where edges $E_{Sep} \subseteq E$, and vertices $V_{Sep} = \{ v \in V | r \leadsto_{G \setminus E_{Sep}} v\}$ such that
\begin{enumerate}
    \item for every vertex $v \in V_{Sep}$, $\mathbf{dist}_{G \setminus E_{Sep}}(r,v) \leq d$, and
    \item for every $e \in E$, we have $\mathbb{P}[e \in E_{Sep} | r \leadsto_{G \setminus E_{Sep}} \mathbf{tail}(e)] \leq \frac{\zeta}{d}w(e)$\label{prop:lowProb}.
\end{enumerate}
Otherwise, it reports $\mathbf{Fail}$. The running time of $\textsc{OutSeparator}(\cdot)$ can be bounded by $O(|E(V_{Sep})|\log n)$.
\end{restatable}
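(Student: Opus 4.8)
The plan is to use a single exponentially distributed random radius together with a Dijkstra-style exploration from $r$. Concretely, I would draw $R \sim \textsc{Exp}(\lambda)$ with $\lambda = \zeta/d$, and define $E_{Sep}$ to be the set of edges $(u,v)$ such that $\mathbf{dist}_G(r,u) \le R < \mathbf{dist}_G(r,u) + w(u,v)$ — i.e. the edges ``cut'' by the ball of radius $R$ around $r$ in the sense that their tail is inside the ball but the edge would carry the exploration past radius $R$. Then $V_{Sep} = \{v : r \leadsto_{G\setminus E_{Sep}} v\}$ is exactly $\{v : \mathbf{dist}_G(r,v) \le R\}$, so property~1 holds \emph{whenever} $R \le d$; we declare $\mathbf{Fail}$ precisely when $R > d$, which by the CDF of the exponential happens with probability $e^{-\lambda d} = e^{-\zeta}$, matching the claimed failure bound. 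The running time bound $O(|E(V_{Sep})|\log n)$ comes from running Dijkstra from $r$ and halting once the extracted distance exceeds $R$: every edge explored is incident to a vertex within distance $R$ of $r$, hence in $V_{Sep}$, and the $\log n$ is the heap overhead.

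The heart of the argument is property~\ref{prop:lowProb}: the conditional probability that a fixed edge $e=(u,v)$ lands in $E_{Sep}$, \emph{given} that its tail is reached in $G\setminus E_{Sep}$ (equivalently, given $\mathbf{dist}_G(r,u)\le R$), is at most $\frac{\zeta}{d}w(e)$. Here I would invoke the memorylessness of the exponential distribution, exactly as flagged in the Preliminaries. Let $a = \mathbf{dist}_G(r,u)$. The event ``$r\leadsto_{G\setminus E_{Sep}}\mathbf{tail}(e)$'' is $\{R \ge a\}$; the event ``$e\in E_{Sep}$'' is $\{a \le R < a + w(e)\}$. Hence
\[
\mathbb{P}[e\in E_{Sep}\mid R\ge a] = \mathbb{P}[R < a + w(e)\mid R\ge a] = 1 - \mathbb{P}[R \ge a + w(e)\mid R\ge a] = 1 - e^{-\lambda w(e)} \le \lambda w(e) = \tfrac{\zeta}{d}w(e),
\]
using $1-e^{-x}\le x$. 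The one subtlety is that $a=\mathbf{dist}_G(r,u)$ is a fixed (deterministic) quantity depending only on $G$, not on $R$, so the conditioning is genuinely on a single threshold event for the one random variable $R$ — which is what makes memorylessness directly applicable. I would make sure to phrase the conditioning in terms of $\mathbf{dist}_G(r,u)$ rather than $\mathbf{dist}_{G\setminus E_{Sep}}(r,u)$ and then note these coincide on the relevant event.

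The main obstacle I anticipate is the circular-looking dependence in the statement of property~\ref{prop:lowProb}: $E_{Sep}$ appears on both sides, since the conditioning event ``$r \leadsto_{G\setminus E_{Sep}} \mathbf{tail}(e)$'' refers to $E_{Sep}$ itself, which is the object whose distribution we are bounding. Resolving this cleanly requires the key structural observation that, with this construction, $r$ reaches $u$ in $G\setminus E_{Sep}$ if and only if $\mathbf{dist}_G(r,u)\le R$ — a deterministic predicate of $R$ — so the apparently self-referential event collapses to a simple threshold event. Establishing that equivalence (the ``$\Leftarrow$'' direction needs that the radius-$R$ ball, with cut edges removed, still contains a shortest-path tree to all vertices within distance $R$; the ``$\Rightarrow$'' direction needs that no path in $G\setminus E_{Sep}$ can reach beyond radius $R$, which is immediate since any such path would have to use a cut edge) is the one place that needs a careful, if short, argument. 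Everything else — the failure probability, the running time, and the probability computation — then follows routinely.
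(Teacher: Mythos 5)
Your proposal is correct and follows essentially the same route as the paper: an exponentially distributed radius $R\sim\textsc{Exp}(\zeta/d)$, failure exactly when $R> d$ (probability $e^{-\zeta}$), a truncated Dijkstra for the running time, and memorylessness plus $1-e^{-x}\le x$ for the conditional cut probability. The only cosmetic difference is that you cut every edge $(u,v)$ with $\mathbf{dist}_G(r,u)\le R<\mathbf{dist}_G(r,u)+w(u,v)$ while the paper cuts only the edges leaving the ball $B^{out}(r,R)$ (a subset of yours); both yield $V_{Sep}=B^{out}(r,R)$ and the same probability bound, and your observation that the conditioning event collapses to the deterministic threshold $\{R\ge\mathbf{dist}_G(r,u)\}$ is exactly the point the paper's proof relies on.
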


In fact, \Cref{alg:OutseparatorIntro} gives a simple implementation of procedure $\textsc{OutSeparator}(\cdot)$. Here, we pick a ball $B = B^{out}(r, X)$ in the graph $G$ from $r$ to random depth $X$, and then simply return the tuple $(E_{Sep}, V_{Sep}) = (E(B, \overline{B}), B)$ where $E(B, \overline{B})$ are the edges $(u,v)$ with $u \in B$ but $v \not\in B$. The procedure thus only differs from a standard edge separator procedure in that we choose $X$ according to the exponential distribution $\textsc{Exp}(\frac{\zeta}{d})$.

\begin{algorithm}
\caption{$\textsc{OutSeparator}(r, G, d, \zeta)$}
\label{alg:OutseparatorIntro}
Choose $X \sim \textsc{Exp}(\frac{\zeta}{d})$.\;
\lIf{$X \geq d$}{\Return \textbf{Fail}\label{lne:failSeparatorIntro}}
Compute the Ball $B = B^{out}(r, X) = \{ v \in V | \mathbf{dist}(r,v) \leq X\}$\;
\Return $(E(B, \overline{B}), B)$
\end{algorithm}

A proof of Lemma \ref{lma:sepIntro} is now straightforward. We return \textbf{Fail} in \Cref{lne:failSeparatorIntro} with probability $\mathbb{P}[X \geq d] = 1 - F_X(d) = 1 - (1 - e^{- \frac{\zeta}{d} \cdot d}) = e^{-\zeta}$ (recall from \cref{sec:prelim} that $F_X(d)$ is shorthand for $F(x, \frac{\zeta}{d})$, the cumulative distribution function of an exponential distribution with parameter $ \frac{\zeta}{d}$). Assuming no failure, we have $E_{Sep} = E(B, \overline{B})$, and it is easy to see that $V_{Sep} = \{ v \in V | r \leadsto_{G \setminus E_{Sep}} v \} = B$, so Property 1 of Lemma \ref{lma:sepIntro} holds by definition of $B$. Moreover, we can compute $B$ in the desired $O(|E(B)|\log n)$ time by using Dijkstra's algorithm by only extracting a vertex from the heap if it is at distance at most $X$. Finally, for property \ref{prop:lowProb}, note that $e \in E_{sep}$ iff $\dist(r, \mathbf{tail}(e)) \leq X < \dist(r, \mathbf{tail}(e)) + w(e)$. Thus,
\begin{align*}
    \mathbb{P}[e \in E_{Sep} | r \leadsto_{G \setminus E_{Sep}} \mathbf{tail}(e)]
    &= \mathbb{P}[X < \mathbf{dist}(r,\mathbf{tail}(e)) + w(e) | X \geq \mathbf{dist}(r,\mathbf{tail}(e))]\\
    &= \mathbb{P}[X < w(e)] = F_X(w(e)) = 1 - e^{- \frac{\zeta}{d} w(e)} \\
    &\leq 1 - \left(1 - \frac{\zeta}{d} w(e)\right) = \frac{\zeta}{d} w(e)
\end{align*}
where the second equality follows from the memory-less property of the exponential distribution, and the inequality holds because $1 + x \leq e^x$ for all $x \in \mathbb{R}$. We point out that the technique of random ball growing using the exponential distribution is not a novel contribution in itself and has be previously used in the context of low-diameter decompositions \cite{linial1993low, bartal1996probabilistic, miller2013parallel, pachocki2018approximating} which have recently also been adapted to dynamic algorithms \cite{forster2019dynamic, chechik2020dynamic}.  

\paragraph{A New Framework.} Let us now outline how to use \Cref{lma:sepIntro} to derive 
\Cref{thm:TopologicalOrderMaintenanceOverview} which is stated below again. The full details and a rigorous proof are provided in \Cref{sec:ATORealImpl}.

\ATOResultIntro*

As in \cite{gutenberg2020decremental}, we maintain a graph $G' \subseteq G$ and its generalized topological order $(\mathcal{V}, \tau)$. Whenever the diameter of an SCC $X$ in $G'$ is larger than $\frac{|X|\eta_{diam}}{n}$, we now use the separator procedure described in \Cref{lma:sepIntro} with $d = \frac{|X|\eta_{diam}}{2n}$ from some vertex $r$ in $X$ with $|B^{out}(r, d = \frac{|X|\eta_{diam}}{2n})| \leq |X|/2$. Such a vertex exists since by definition of diameter, as we can find two vertices with disjoint balls. We obtain an edge separator $E_{Sep}$ and update $G'$ by removing the edges in $E_{Sep}$. Let the union of all edge separators be denoted by $F$ and again observe that $G' = G \setminus F$. It is not hard to see that our scheme still ensures properties \ref{prop:VUpdate}-\ref{prop:ContractLittle} in \Cref{def:ATO}. We now argue that the quality improved to $\tilde{O}(n/2^i)$.

Partition $F$ into sets $F_0, F_1, \dots, F_{\lg n}$ where each $F_j$ contains all separator edges found on a graph of size $[n/2^{j+1}, n/2^j)$. We again have that every edge $(u,v) \in F_j$ has $|\tau(X^u) - \tau(X^v)| \leq n/2^{j}$. Let us establish an upper bound on the number of edges in $F_j$ on any shortest path $\pi_{s,t}$. Consider therefore any edge $e \in \pi_{s,t}$, at a stage where both endpoints of $e$ are in a SCC $X$ of size $[n/2^{j+1}, n/2^j)$ and where we compute a tuple $(V_{Sep}, E_{Sep})$ from some $r \in X$. Now, observe that if $\mathbf{tail}(e) \not\in V_{Sep}$, then $e$ can not be in $E_{Sep}$. So assume that $\mathbf{tail}(e) \in V_{Sep}$. Then, we have $e$ joining $F_j$ with probability 
\[
\mathbb{P}[e \in E_{Sep} | r \leadsto_{H \setminus E_{Sep}} \mathbf{tail}(e)] = \frac{\zeta}{2^i|X|/n} w(e) = \tilde{O}\left(\frac{2^j w(e)}{2^i}\right)
\]
according to \Cref{lma:sepIntro}, where we set $\zeta = \tilde{O}(1)$ to obtain high success probability. But if $e$ did not join $F_j$ at that stage, then it is now in a SCC of size at most $n/2^{j+1}$ (recall we chose $|B^{out}(r, d)| \leq |X|/2$, and we have $V_{Sep} \subseteq B^{out}(r, d)$). Thus, $e$ cannot join $F_j$ at any later stage. 

Now, it suffices to sum over edges on the path $\pi_{s,t}$ and indices $j$ to obtain that
\[
    \mathcal{T}'(\pi_{s,t}, \tau) \leq \sum_{e \in \pi_{s,t}} \sum_{j} \mathbb{P}[e \in F_j] \cdot n/2^{j} = \sum_{e \in \pi_{s,t}} \sum_{j} \tilde{O}\left(\frac{2^j w(e)}{2^i}\right) \cdot n/2^{j} = \tilde{O}\left(\frac{ w(\pi_{s,t}) n}{2^i}\right)
\]
giving quality $\tilde{O}\left(\frac{n}{2^i}\right)$. 

\paragraph{Efficiently Maintaining $G'$.} 
\newcommand{\asssp}{\mathcal{A}_{SSSP}}
As shown above, maintaining an $\mathcal{ATO}(G,2^i)$ requires detecting when any SCC in $G' = G \setminus E(S)$ has diameter above $2^i$. We start by showing how to do this efficiently if we are given a black-box algorithm $\asssp$ that maintains distance estimates up to depth threshold $2^i$  (i.e. if a vertex is at distance less than $2^i$ from the source vertex, there is a distance estimate with good approximation ratio). 

We use the random source scheme introduced in \cite{roditty2008improved} along with some techniques developed in \cite{chechik2016decremental, bernstein2019decremental, gutenberg2020decremental}: we choose for each SCC $X$ in $G'$ a center vertex $\textsc{Center}(X) \in X$ uniformly at random, and use $\asssp$ to maintain distances from $\textsc{Center}(X)$ to depth $2^i|X|/n \leq 2^i$. Since the largest distances between the vertex $\textsc{Center}(X)$ and any other vertex in $X$ is a $2$-approximation on the diameter of $G[X]$, this is sufficient to monitor the diameter and to trigger the separator procedure in good time.


Cast in terms of our new ATO-framework, the previous algorithm of \cite{gutenberg2020decremental} used a regular Even and Shiloach tree for the algorithm $\asssp$. We instead use a recursive structure, where $\mathcal{ATO}s$ of bad quality (large $q$) are used to build $\mathcal{ATO}s$ of better quality (small $q$). Recall that our goal is to build an $\mathcal{ATO}(G, 2^{i})$ of quality $\tilde{O}(n/2^i)$ and say that $X$ is some SCC of $G' = G \setminus E(S)$ whose diameter we are monitoring. Now, using the lower level of the recursion, we inductively assume that we can maintain an $\mathcal{ATO}(G'[X], 2^{i-1})$ of quality $\tilde{O}(n/2^{i-1})$ in time $\tilde{O}(|X|^2)$. Plugging this $\mathcal{ATO}$ into \Cref{thm:SSSPEfficientIntro} gives us an algorithm for maintaining distances up to depth $2^i$ in $X$ with total update time $\tilde{O}(|X|^2)$, Summing over all components $X$ in $G'$, we get an $\tilde{O}(n^2)$ total update time to maintain $\mathcal{ATO}(G, 2^{i})$, as desired.


We actually cheated a bit in the last calculation, because the scheme above could incur an additional logarithmic factor for computing $\mathcal{ATO}(G, 2^i)$ from all the $\mathcal{ATO}(G'[X], 2^{i-1})$, so we can only afford a sublogarithmic number of levels, which leads to an extra $n^{o(1)}$ factor in the running time. However, a careful bootstrapping argument allows us to avoid this extra term.


\subsection{A Framework for Sparse SSSP}

Similarly to \Cref{thm:TopologicalOrderMaintenanceOverview}, we can prove a similar theorem with better running time for sparse graphs.

\begin{restatable}{theorem}{ATOSparseResultIntro}[see \Cref{sec:ATORealImpl} and \Cref{subsec:bootstrapSparse}.]
\label{thm:TopologicalOrderMaintenanceOverviewSparse}
For any $0 \leq i \leq \log(Wn)$, given a decremental digraph $G=(V,E,w)$, we can maintain an $\mathcal{ATO}(G, 2^i)$ of expected quality $\tilde{O}(n/2^i)$. The algorithm runs in total expected update time $\tilde{O}(mn^{2/3})$ against a non-adaptive adversary with high probability. 
\end{restatable}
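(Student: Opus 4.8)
I would re-run the framework behind the dense statement \Cref{thm:TopologicalOrderMaintenanceOverview} essentially verbatim, changing only the density-sensitive step -- diameter monitoring -- so that its cost drops from $\tilde O(n^2)$ to $\tilde O(mn^{2/3})$ per distance scale. Concretely: maintain a decremental subgraph $G' \subseteq G$ and its generalized topological order $(\mathcal{V},\tau) = \textsc{GeneralizedTopOrder}(G')$ via \Cref{thm:SCCinDecrGraph}; whenever an SCC $X$ of $G'$ exceeds weak diameter $|X| 2^i/n$, pick $r \in X$ with $|B^{out}(r, |X|2^i/(2n))| \le |X|/2$ (which exists by definition of diameter), call $\textsc{OutSeparator}(r, G'[X], \lfloor |X|2^i/(2n)\rfloor, \Theta(\log n))$ from \Cref{lma:sepIntro} (retrying on failure, which happens only with probability $e^{-\Theta(\log n)}$), and delete the returned cut edges from $G'$. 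Properties \ref{prop:VUpdate}--\ref{prop:ContractLittle} of \Cref{def:ATO} hold exactly as in the dense case, and the expected-quality argument is the identical computation: partition $F = G \setminus G'$ into classes $F_0, \dots, F_{\lg n}$ by the size of the SCC in which each edge was cut, so every edge of $F_j$ has $|\tau(X^u)-\tau(X^v)| \le n/2^j$; by property \ref{prop:lowProb} of \Cref{lma:sepIntro} a fixed edge $e$ of a shortest path $\pi_{s,t}$ joins $F_j$ with probability $\tilde O(2^j w(e)/2^i)$ and, since $|B^{out}(r,d)| \le |X|/2$ already halved its SCC, never joins $F_j$ afterwards; hence $\mathbb{E}[\mathcal{T}'(\pi_{s,t},\tau)] \le \sum_{e \in \pi_{s,t}} \sum_j \tilde O(2^j w(e)/2^i)\cdot n/2^j = \tilde O(w(\pi_{s,t})n/2^i)$, and $\mathcal{T} \le 2\mathcal{T}'+n$ gives expected quality $\tilde O(n/2^i)$. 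This uses no sparsity.

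The only part that must change is the cost of detecting diameter violations. I would keep the random-source scheme of \cite{roditty2008improved}: sample $\textsc{Center}(X)\in X$ uniformly per SCC, maintain $(1+\epsilon)$-approximate distances from $\textsc{Center}(X)$ up to depth $2^i|X|/n$ inside $G'[X]$ (rebuilding when $X$ splits), and trigger the separator once such an estimate passes the threshold, the largest estimate $2$-approximating $\mathbf{diam}(G'[X])$. Where the dense proof feeds the recursively-maintained $\mathcal{ATO}(G'[X], 2^{i-1})$ of quality $\tilde O(n/2^{i-1})$ into \Cref{thm:SSSPEfficientIntro} at cost $\tilde O(|X|^2)$ per component, I would feed it instead into the \emph{sparse} DAG-with-$\mathcal{ATO}$ SSSP data structure -- the one promised in the sparse-SSSP overview above, obtained by layering the hop-bounded ES-tree / low-hop emulator idea of \cite{bernstein2017deterministic} on top of the $B_j$-bucketing of \Cref{subsec:DAGalgo} -- whose total update time on $G'[X]$ is $\tilde O(|E(G'[X])|\,|X|^{2/3}\log W/\epsilon)$ up to lower-order terms. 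Since the SCCs $X$ of $G'$ partition $V$ and their induced edge sets partition $E$, and $|X| \le n$, summing yields $\sum_X |E(G'[X])|\,|X|^{2/3} \le n^{2/3}\sum_X |E(G'[X])| \le mn^{2/3}$, i.e. $\tilde O(mn^{2/3}\log W/\epsilon)$ per scale and hence over all $O(\log(Wn))$ scales. The recursion on $i$ closes with the inductive hypothesis ``$\mathcal{ATO}(H, 2^{i-1})$ of expected quality $\tilde O(|V(H)|/2^{i-1})$ is maintainable in $\tilde O(|E(H)|\,|V(H)|^{2/3})$ total time'', the base case (small $i$, where depths are $O(1)$) being a direct construction via \Cref{lma:balancedVertexSep} or a plain ES-tree. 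All randomness is against an oblivious adversary, and a union bound over all separator calls keeps the quality guarantee w.h.p.

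\textbf{Main obstacle.} Two points carry the real work. First, constructing the sparse $\mathcal{ATO}$-compatible SSSP subroutine: plain $B_j$-bucketing only gives $\tilde O(n^2)$, so reaching $\tilde O(mn^{2/3})$ forces one to interleave it with hop reduction (a limited-depth ES-tree run on a decrementally maintained low-hop emulator, à la \cite{bernstein2017deterministic}) and to check that this composition survives the passage from a true DAG to a contraction under an $\mathcal{ATO}$ of quality $q$ -- in particular that the additive error from the $\mathcal{ATO}$ contractions (at most $\epsilon\delta$, from the discussion after \Cref{def:ATO}), the additive error from the $B_j$-buckets, and the multiplicative error from the emulator all compose into one $(1+O(\epsilon))$ factor, and that maintaining the emulator itself decrementally stays within $\tilde O(mn^{2/3})$. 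Second, the naive layering loses a $\Theta(\log n)$ factor per recursion level over $O(\log(Wn))$ levels, so a bootstrapping argument (the one flagged in the overview) is needed to collapse the levels and avoid an $n^{o(1)}$ overhead; this has to coexist with the per-component summation, since the recursion is applied to the shrinking subgraphs $G'[X]$ with their own sizes $|X|$ and rescaled depth thresholds $2^i|X|/n$, and one must confirm the resulting geometric sums still telescope. The weighted-graph reduction (the $O(\log W)$ edge-rounding mentioned for \Cref{lma:balancedVertexSep}) and the bookkeeping for center rebuilds on SCC splits are routine.
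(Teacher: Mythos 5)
Your architecture matches the paper's: the same exponential-distribution separator and quality calculation, the same random-source diameter monitoring, the same per-component summation with the halving argument of \Cref{lma:EStreeprob}, and the same plan of swapping the dense depth-restricted SSSP subroutine for a sparse one and bootstrapping over levels (this is exactly the route through \Cref{lma:reductionATOtoSSSP}, \Cref{thm:SSSPEfficientSparse}/\Cref{cor:SSSPSparsetoATO}, and the induction of \Cref{subsec:bootstrapSparse}). Two points in what you left open deserve correction.

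First, your closing claim that ``a union bound over all separator calls keeps the quality guarantee w.h.p.'' does not work: a single $\mathcal{ATO}$ only has \emph{expected} quality, and an expectation cannot be converted to a high-probability statement for each path by union bounding. Since the recursion feeds the inner $\mathcal{ATO}$ into an SSSP structure whose correctness (and hence the diameter monitoring one level up) needs the quality bound to hold for the specific paths queried, the paper runs $\Theta(\log n)$ independent copies and takes the best estimate, using Markov plus a Chernoff bound to get that for every path at least one copy has quality within twice the expectation; this is the $\mathcal{ATO}$-bundle of \Cref{def:ATObundle}, and some version of it is genuinely needed in your plan. Second, your sketch of the sparse subroutine as ``low-hop emulator layered on the $B_j$-bucketing'' points in the wrong direction: the paper's construction abandons the bucketing entirely. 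Instead it samples $\tilde O(n^{4/3}/2^i)$ sources, runs from each a hop-$(2^i/n^{1/3})$ generalized ES-tree restricted to the nodes within topological-order difference $K=q\delta/n^{1/3}$ of the source (this restriction, not bucketing, is what bounds each edge's participation across trees), and then runs one hop-$\tilde O(n^{2/3})$ ES-tree on $G/\mathcal{V}$ union the resulting hopset; correctness uses the quality bound to argue a shortest path escapes a topological-order ball at most $n^{1/3}$ times, so the hopset path has $\tilde O(n^{2/3})$ hops. You correctly flag this subroutine as the main obstacle, but as proposed the bucketing-based route would stall at $\tilde O(n^2)$.
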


In the remaining section, let us sketch how we obtain an efficient algorithm to compute SSSP given the result from \Cref{thm:TopologicalOrderMaintenanceOverviewSparse}. To simplify the presentation let us assume for the rest of the section that the graph $G =(V,E)$ is unweighted.

\paragraph{Reducing SSSP to Hopset Maintenance.} Let us introduce the notion of a $(1+\epsilon, h)$-hopset $H$ which is a weighted decremental graph on the same vertex set as $G$ such that at any stage, for any two vertices $s,t \in V$, we have
\begin{equation}\label{eq:hopset}
    \mathbf{dist}_{G}(s,t) \leq \mathbf{dist}^h_{G \cup H}(s,t) \leq (1+\epsilon) \mathbf{dist}_{G}(s,t)    
\end{equation}
where we use the notation $\mathbf{dist}^{\ell}_{F}(s,t)$ to denote the shortest $s$ to $t$ path in the graph $F$ consisting of at most $\ell$ edges. Using a well-known rounding technique, we can then run a slightly modified ES-tree data structure from a root vertex $r$ on the graph $G \cup H$ to depth $h$ to obtain $(1+\epsilon)$-approximate distances in $G$ from $r$. The data structure only requires time $\tilde{O}((m + |E(H)|)h)$. 

In our paper, we show how to construct for every $\lg (n^{2/3} \log n) \leq i \leq \lg n$, a $(1+\epsilon, n^{2/3})$-hopset $F_i$ with $\tilde{O}(n)$ edges that satisfies equation \Cref{eq:hopset} for all pairs at distance $[2^i, 2^{i+1})$, so the total update time of the ES-tree becomes $\tilde{O}(mn^{2/3})$, as desired. For $i < \lg (n^{2/3} \log n)$, we can maintain distances up to $2^i$ in total update time $\tilde{O}(mn^{2/3})$ by simply running a classic ES-tree, without any hop-set. We obtain final distance estimates by running the above algorithm for each $i$; then, for any pair $s,t$, we find the smallest distance estimate among these $O(\log(n))$ data structures and output that as the final distance estimate.

\paragraph{Maintaining a Hopset $F_i$.} For each $\lg (n^{2/3} \log n) \leq i \leq \lg n$, we want to maintain $F_i$ as a weighted graph with the guarantee that for every vertices $s$ and $t$ at distance at $[2^{i}, 2^{i+1})$, we have a $(1+\epsilon)$-approximate shortest path in $F_i \cup G$ of hop at most $\tilde{O}(n^{2/3})$.  

To maintain such a graph $F_i$, we use $(\mathcal{V}_i, \tau_i)$ an $\mathcal{ATO}(G, \eta_{diam} = \epsilon 2^i)$ as given in \Cref{thm:TopologicalOrderMaintenanceOverviewSparse}. We first sample each vertex $v \in V$ with probability $\tilde{\Theta}(n^{1/3} / 2^i)$. We let $S$ be the set of sampled vertices and have $|S| = \tilde{O}(n^{1+1/3} / 2^i)$ with high probability. We then run from each vertex $s \in S$, with node $X^s \in \mathcal{V}_i$, an ES-tree to depth $2^i / n^{2/3}$ on the graph $G / \mathcal{V}_i$ induced by the set of nodes $Y \in \mathcal{V}_i$ such that $|\tau_i(X^s) - \tau_i(Y)| = \tilde{O}(n^{2/3})$. We then add an edge $(s,t)$ for every vertex $t$ that is in a node in the ES-tree of the vertex $s$ where we use corresponding distance estimate as an edge weight.

\paragraph{Hopset Sparsity and Running Time.} It can be shown that every edge only participates in $\tilde{O}(|S|)$ ES-trees and therefore the total running time of all ES-trees can be bound by $\tilde{O}(m |S| \cdot 2^i) = \tilde{O}(mn^{2/3})$. Further, each ES-tree from a vertex $s \in S$ has with high probability at most $\tilde{O}(n^{1/3}/2^i)$ vertices in $S$ in its tree and therefore the set $F_i$ has at most $\tilde{O}(|S|n^{1/3}/2^i) = \tilde{O}(n)$ edges. 

\paragraph{Correctness.} Finally, to show that $G \cup F_i$ contains a path of hop at most $\tilde{O}(n^{2/3})$ between any two vertices $s,t$ at distance $2^i$ let us focus on their corresponding shortest path $\pi_{s,t}$. We can partition the path into segments of length at most $2^{i-1}/n^{2/3}$, and by a standard hitting set argument, we have with high probability a vertex in $S$ in every segment. Let $s_1, s_2, \dots, s_k$ be such that each $s_j$ is a vertex in $S$ in the $j^{th}$ segment. We observe that $k \leq \frac{2^i}{2^{i-1}/n^{2/3}} = 2n^{2/3}$. Now, for every $s_j$ we either have $s_{j+1}$ in the ES-tree, in which case we have a direct edge between the vertices in $F_i$. Otherwise, some vertex $v$ on the path segment from $s_j$ to $s_{j+1}$ is in a node $Y$ such that $|\tau_i(X^s) - \tau_i(Y)| \gg n^{2/3}$. But since the quality of the $\mathcal{ATO}$ is $\tilde{O}(n/2^i)$, the total sum of topological difference of the path $\pi_{s,t}$ is only $\tilde{O}(n)$. Thus, this can occur at most $n^{1/3}$ times. Every time, we can use the shortest path in $G$ between $s_j$ and $s_{j+1}$ consisting of at most $2^{i}/n^{2/3} \leq n^{1/3}$ edges. Adding the paths from $s$ to $s_1$ and from $s_k$ to $t$, the total number of edges is at most $\tilde{O}(n^{2/3})$ as desired. Finally, we observe that when running the ES-data structure on $G \cup F_i$, we have to add an additive term of $\eta_{diam}$ since the contractions in the graph $G / \mathcal{V}_i$ might decrease distances by this additive term. But for distances in the range $[2^i, 2^{i+1})$, this term can be subsumed in the $(1+\epsilon)$ multiplicative error (after rescaling $\epsilon$ by a constant factor).

\subsection{Organization}

We recommend the reader to carefully study \Cref{sec:overview} to gain necessary intuition for our approach. In \Cref{sec:ATORealImpl}, we give an efficient reduction from maintaining an approximate topological order to depth-restricted SSSP. This section is the centerpiece of the article and its main result, \Cref{lma:reductionATOtoSSSP}, is one of our main technical contributions. 

We then show how to use \Cref{lma:reductionATOtoSSSP} to obtain a SSSP data structure for dense graphs in \Cref{sec:SSSPDense} and for sparse graphs in \Cref{sec:SSSPsparse}. Both sections follow the same structure: we first show how to obtain a depth-restricted SSSP data structure using an $\mathcal{ATO}$ and then we bootstrap the reductions to obtain the final result. 

In \Cref{sec:conclusion}, we draw a conclusion, put our results in perspective and discuss open problems.

\section{Reducing Maintenance of an Approximate Topological Order to \texorpdfstring{ $\alpha$-approximate $\delta$-restricted $\mathcal{SSSP}$}{SSSP}}
\label{sec:ATORealImpl}

In this section, we show how to obtain an $\mathcal{ATO}$ given an $\alpha$-approximate $\delta$-restricted $\mathcal{SSSP}$ data structure. We start by defining such a data structure and then give a reduction.

\begin{definition}
\label{def:SSSPGeneric}
Let $\mathcal{A}$ be a data structure that given any decremental directed weighted graph $G$, a dedicated source $r \in V$, an approximation parameter $\alpha > 1$, maintains for each vertex $v \in V$, distance estimates $\widetilde{\mathbf{dist}}(r,v)$ and $\widetilde{\mathbf{dist}}(v,r)$ such that at any stage of $G$, for every pair $(s,t) \in (\{r\} \times V) \cup (V \times \{r\})$
\begin{itemize}
    \item we have $\mathbf{dist}(s,t) \leq \widetilde{\mathbf{dist}}(s,t)$, and
    \item if $\mathbf{dist}(s,t) \leq \delta$, then $\widetilde{\mathbf{dist}}(s,t) \leq \alpha \mathbf{dist}(s,t)$.
\end{itemize}
Then, we say $\mathcal{A}$ is an $\alpha$-approximate $\delta$-restricted $\mathcal{SSSP}$ data structure with running time $T_{SSSP}(m,n,\delta,\alpha)$. We require only that $\mathcal{A}$ runs against a non-adaptive adversary. 
\end{definition}
\begin{remark}
In the rest of the article, we implicitly assume that all $\mathcal{SSSP}$ data structures have $T_{SSSP}(m,n,\delta,\alpha)$ monotonically increasing in the first two parameters.
\end{remark}

We also need another definition that makes it more convenient to work with $\mathcal{ATO}$s that only have \emph{expected} quality (see \Cref{def:ATO}). However, we require high probability bounds in our constructions and it will further be easier to work with deterministic objects. This inspires the definition of an $\mathcal{ATO}$-bundle which is a collection of $\mathcal{ATO}$s such that for each path of interest, there is at least one $\mathcal{ATO}$ in the bundle that has good quality for the path at-hand.

\begin{definition}[$\mathcal{ATO}(G, \eta_{diam}, \ell)$-bundle]
\label{def:ATObundle}
Given a decremental weighted directed graph $G = (V,E,w)$ and parameter $\eta_{diam} \geq 0$. We call $\mathcal{S} = \{ (\mathcal{V}_i, \tau_i)\}_{i \in [1, \ell]}$ an $\mathcal{ATO}(G, \eta_{diam}, \ell)$-bundle of quality $q$ if every $(\mathcal{V}_i, \tau_i)$ is an $\mathcal{ATO}(G, \eta_{diam})$ and for any two vertices $s, t \in V$, there exists an $i \in [1, \ell]$, such that the shortest path $\pi_{s,t}$ in $G$ satisfies $\mathcal{T}(\pi_{s,t}, \tau_i) \leq q \cdot w(\pi_{s,t}) + n$.
\end{definition}

Without further due, let us state and prove the main result of this section.

\begin{restatable}[$\mathcal{ATO}$-bundle from $\mathcal{SSSP}$]{theorem}{atoBundle}
\label{lma:reductionATOtoSSSP}
Given an algorithm $\mathcal{A}$ to solve $2$-approximate $\delta$-restricted $\mathcal{SSSP}$ on any graph $H$ in time $T_{SSSP}(m,n,\delta)$, and for any $c > 0$, we can maintain an $\mathcal{ATO}(G, 2\alpha\delta, 40 c \log n)$-bundle of quality $\frac{(c+2)40000 n \log^5 n}{\delta}$ in total expected update time 
\begin{equation}\label{eq:totalRunningTime}
O\left(\sum_{j=0}^{\lceil\lg \delta \rceil} \; \sum_{k=0}^{2^{j+3} c \log^2 n} T_{SSSP}(m_{j,k}, n/2^{j}, \delta, 2) + m \log^3 n\right)
\end{equation}
where $\sum_j m_{j,k} \leq 16c \cdot m \log^2 n$ for all $k$. The algorithm runs correctly with probability $1 - n^{-c}$ for any $c > 0$. 
\end{restatable}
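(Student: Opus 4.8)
The plan is to turn the heuristic "new framework" described in the overview into a recursive construction on the size-scale $j$ of SCCs, using the black-box $\mathcal{SSSP}$ data structure $\mathcal{A}$ both to detect when an SCC has become too large in diameter and to trigger the separator procedure \textsc{OutSeparator} of Lemma \ref{lma:sepIntro}. First I would set up the skeleton: maintain a subgraph $G' \subseteq G$ together with its generalized topological order $(\mathcal{V},\tau)$ via Theorem \ref{thm:SCCinDecrGraph}; this immediately gives properties \ref{prop:VUpdate} and \ref{prop:tauUpdate} of Definition \ref{def:ATO}, and the nesting property of $\tau$. For each SCC $X$ currently in $G'$, pick a uniformly random center $\textsc{Center}(X)\in X$ and run an instance of $\mathcal{A}$ with source $\textsc{Center}(X)$ and depth $\delta$ restricted to $G'[X]$; since the max distance from the center $2$-approximates $\mathbf{diam}(G'[X])$, this monitors the weak diameter cheaply (up to the factor-$\alpha=2$ approximation, which is why the diameter bound in property \ref{prop:ContractLittle} comes out as $2\alpha\delta$ rather than $\delta$). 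When the estimate certifies that $\mathbf{diam}(G'[X])$ has exceeded $\frac{|X|\cdot 2\alpha\delta}{2n}$, find a vertex $r\in X$ whose out-ball to radius $d=\frac{|X|\delta}{2n}$ has at most $|X|/2$ vertices (two vertices realizing the diameter have disjoint such balls, so one of them works), call $\textsc{OutSeparator}(r,G'[X],d,\zeta)$ with $\zeta=\Theta(c\log n)$, and delete the returned edge separator from $G'$. Because $V_{Sep}\subseteq B^{out}(r,d)$ has size $\le|X|/2$, every separator invocation halves the SCC containing $r$, so on the path of any fixed edge $e$ there is at most one separator call at each size-scale $j$ (where scale $j$ means $|X|\in[n/2^{j+1},n/2^j)$).

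Next I would carry out the quality analysis, which is the core computation. Fix $s,t$ and their shortest path $\pi_{s,t}$ in $G$. For a fixed edge $e\in\pi_{s,t}$ and scale $j$, when the separator is invoked on the SCC $X\ni \mathbf{tail}(e),\mathbf{head}(e)$ with root $r$, either $\mathbf{tail}(e)\notin V_{Sep}$ (so $e\notin E_{Sep}$), or Property \ref{prop:lowProb} of Lemma \ref{lma:sepIntro} gives $\mathbb{P}[e\in E_{Sep}\mid r\leadsto \mathbf{tail}(e)]\le \frac{\zeta}{d}w(e)=\tilde O\!\bigl(\frac{2^j w(e)}{\delta}\bigr)$. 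By the nesting property, an edge that enters $F$ at scale $j$ has $|\tau(X^u)-\tau(X^v)|\le n/2^j$. Introducing $\mathcal{T}'$ as in \eqref{eq:redefT} (summing only the negative jumps) and using $\mathcal{T}(\pi_{s,t},\tau)\le 2\,\mathcal{T}'(\pi_{s,t},\tau)+n$, I would bound
\[
\mathbb{E}[\mathcal{T}'(\pi_{s,t},\tau)] \;\le\; \sum_{e\in\pi_{s,t}}\sum_{j} \mathbb{P}[e\in F_j]\cdot \frac{n}{2^j} \;=\; \sum_{e\in\pi_{s,t}}\sum_j \tilde O\!\Bigl(\frac{2^j w(e)}{\delta}\Bigr)\cdot\frac{n}{2^j} \;=\; \tilde O\!\Bigl(\frac{n\, w(\pi_{s,t})}{\delta}\Bigr),
\]
giving expected quality $\tilde O(n/\delta)$ — matching the claimed $O((c+2)\cdot 40000\, n\log^5 n/\delta)$ once the $\log$ factors from $\zeta$, from the number of scales, and from the $\mathcal{SSSP}$ approximation are made explicit. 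To upgrade "expected quality" to an honest deterministic $\mathcal{ATO}$-bundle (Definition \ref{def:ATObundle}) holding with probability $1-n^{-c}$, I would run $\ell=40c\log n$ independent copies of the whole construction. For each fixed pair $(s,t)$, each copy independently has $\mathcal{T}(\pi_{s,t},\tau_i)\le q\cdot w(\pi_{s,t})+n$ with probability $\ge 1/2$ by Markov (on the rescaled bound), so the probability that no copy succeeds for $(s,t)$ is $\le 2^{-\ell}\le n^{-\Omega(c)}$; union-bounding over the $\le n^2$ pairs and absorbing the failure probability of \textsc{OutSeparator} (each call fails with probability $e^{-\zeta}\le n^{-\Omega(c)}$, and there are $\tilde O(n)$ calls) gives the claimed $1-n^{-c}$.

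Finally I would account for the running time. Edges are distributed across the $\mathcal{SSSP}$ instances as follows: an instance at scale $j$ (SCCs of size $\approx n/2^j$) sees $n/2^j$ as its vertex count; the total number of edges across all SCCs at one scale is $\le m$, but an edge can reappear at $\tilde O(\log n)$ different scales as SCCs shrink, and within a scale the separator procedure can be re-invoked $\tilde O(\log n)$ times before an SCC drops to the next scale — hence the $\sum_j m_{j,k}\le 16c\,m\log^2 n$ bookkeeping and the index $k$ ranging up to $2^{j+3}c\log^2 n$ (the per-scale count of separator rounds, which scales with $2^j$ because smaller SCCs are more numerous). Summing $T_{SSSP}(m_{j,k}, n/2^j, \delta, 2)$ over $j\le\lceil\lg\delta\rceil$ and over $k$, times the $\ell=\tilde O(c\log n)$ bundle copies, plus the $O(m\log^3 n)$ cost of maintaining $(\mathcal{V},\tau)$ via Theorem \ref{thm:SCCinDecrGraph} and the center/recenter bookkeeping (a new center and a fresh $\mathcal{A}$-instance each time an SCC splits, amortized via the nesting property), yields \eqref{eq:totalRunningTime}. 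I expect the main obstacle to be the charging argument that bounds how often an edge is handed to a new $\mathcal{SSSP}$ instance — both across size-scales and across repeated separator invocations within a scale — since this is what controls the $m_{j,k}$ sums; one must carefully use that each \textsc{OutSeparator} call strictly more than halves the relevant SCC and that centers are only re-chosen on genuine SCC splits, so that the amortized edge-participation is $\tilde O(1)$ per scale. A secondary subtlety is that $\mathcal{A}$ only works against a non-adaptive adversary, so one must check that the randomness used by the separators and centers is not exposed to the adversary through the output $\mathcal{ATO}$ in a way that breaks the obliviousness required by the inner $\mathcal{SSSP}$ instances — this is handled because the adversary is oblivious to begin with and the $\mathcal{ATO}$ is an internal object.
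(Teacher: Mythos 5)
Your architecture (maintain $G'\subseteq G$ with its generalized topological order, random centers monitored by $\mathcal{A}$, exponential-radius separators, the $\mathcal{T}'$ quality computation, and $40c\log n$ independent copies for the bundle) matches the paper's, and the quality analysis is essentially correct. The genuine gap is at the separator step of the main loop. You root $\textsc{OutSeparator}$ at a vertex $r$ with $|B^{out}(r,d)|\le|X|/2$ and justify this by ``two vertices realizing the diameter have disjoint balls'' --- but existence is not an algorithm, and the disjointness argument in fact requires one \emph{out}-ball and one \emph{in}-ball of two far-apart vertices (two out-balls of far vertices need not be disjoint). To locate the small one you must grow both balls simultaneously with cost charged to incident \emph{edges}, and the vertex-smaller side can be the edge-larger side, so a fallback is needed when neither ball is cheap. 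The paper resolves this completely differently: it roots the separator at the detected far vertex $t$ itself, so the cut-off set $C$ may be much larger than $|X|/2$; it then invokes a recursive $\textsc{Partition}$ procedure (\Cref{lma:partitionFull}, \Cref{alg:split}) on $G'[C]$ to restore the diameter invariant, and recovers efficiency via the probabilistic charging lemma (\Cref{lma:EStreeprob}): since the center is uniformly random, with probability at least $1/2$ it lands in the large low-diameter core that separators rooted at far vertices never touch, so the SCC of any vertex halves every \emph{second} time it participates in $C$, in expectation. You explicitly name this charging argument as ``the main obstacle'' but do not supply it, and your deterministic claim that ``every separator invocation halves the SCC containing $r$'' does not bound how often an edge on the surviving large side is handed to a fresh $\mathcal{SSSP}$ instance, which is exactly what the $m_{j,k}$ sums in \eqref{eq:totalRunningTime} require.

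Two further omissions. First, initialization: at stage $0$ the SCCs of $G$ may already violate the diameter constraint, and a single separator call does not fix this; the paper needs the recursive $\textsc{Partition}$ procedure, whose else-branch itself consumes the $\mathcal{SSSP}$ data structure and whose $2/3$-balance guarantee (\Cref{clm:largeSCCifEStree}) is what makes the $n/2^{j}$ size scales well-defined. Second, you run the center's $\mathcal{SSSP}$ instance on $G'[X]$, whereas the paper deliberately runs it on the frozen induced subgraph $G[X]$ so that the update sequence it sees is generated by the oblivious adversary alone; feeding it the algorithm's own separator deletions (as running on $G'[X]$ does) would require $\mathcal{A}$ to work against an adaptive adversary, which \Cref{def:SSSPGeneric} does not grant. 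Finally, the union bound for the bundle must range over all ($\le n^2$) stages as well as all pairs, since \Cref{def:ATObundle} is a per-stage guarantee; with $\ell = 40c\log n$ copies this still goes through, but it needs to be said.
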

\begin{remark}
\label{rmk:LaminarFamilyOfGraphs}
The graphs that the data structure $\mathcal{A}$ runs upon during the algorithm are vertex-induced subgraphs of $G$. The data structure $\mathcal{A}$ is further allowed to maintain distances on a larger subgraph of $G$, i.e. when $\mathcal{A}$ is applied to a graph $G[X]$, it can run instead on $G[Y]$ for any set $X \subseteq Y \subseteq V$.
\end{remark}

We point out that \Cref{rmk:LaminarFamilyOfGraphs} is only of importance at a later point at which the reader will be reminded and can safely be ignored for the rest of this sections (the reader is however invited to verify its correctness which is easy to establish).

We now describe how to obtain an efficient algorithm that obtains an $\mathcal{ATO}(G,2\alpha \delta)$ henceforth denoted by $(\mathcal{V}, \tau)$. The next sections describe how to initialize $(\mathcal{V}, \tau)$, how to maintain useful data structures to maintain the diameter, give the main algorithm and then a rigorous analysis. Finally, we obtain a $\mathcal{ATO}(G, 2\alpha\delta, 40 c \log n)$-bundle by running $40 c \log n$ independent copies of the algorithm below.

\subsection{Initializing the Algorithm} 
\label{subsec:InitATO}

As described in \Cref{sec:overview}, our goal is to maintain a graph $G'$ that is a subgraph of $G$ and satisfies that no SCC $X$ in $G'$ has weak diameter $\mathbf{diam}(X,G)$ larger than $\frac{ \delta |X|}{n}$. Throughout, we maintain the generalized topological order $(\mathcal{V}, \tau)$ on $G'$ where $\tau$ has the nesting property as described in \Cref{thm:SCCinDecrGraph}. 

To ensure the diameter constraint initially, we use the following partitioning procedure whose proof is deferred to \Cref{sec:proofPartitionFull}.

\begin{restatable}[Partitioning Procedure]{lemma}{partition}
\label{lma:partitionFull}
Given an algorithm $\mathcal{A}$ to solve $2$-approximate $\delta$-restricted $\mathcal{SSSP}$. There exists a procedure $\textsc{Partition}(G, d, \zeta)$ that takes weighted digraph $G$, a depth threshold $d \leq \delta$ and a success parameter $\zeta > 0$, and returns a set $E_{Sep} \subseteq E$ such that
\begin{enumerate}
    \item for each SCC $X$ in $G \setminus E_{Sep}$, we have for any vertices $u,v \in X$ that $\mathbf{dist}_{G \setminus E_{Sep}}(u,v) \leq d$, and \label{prop:partitionDiameter}
    \item \label{prop:partitionProb}
    for $e \in E$, we have
    $\mathbb{P}[e \in E_{Sep}] \leq \frac{240 \zeta \log^2 n}{d}w(e).$
\end{enumerate}
The algorithm runs in total expected time 
\[
O\left(\sum_{j=0}^{\lceil\lg \delta \rceil} \; \sum_{k=0}^{2^{j+1}} T_{SSSP}(m_{j,k}, n/2^{j}, \delta, 2) + m \log^2 n\right)
\]
where we have that $\sum_{k=0} m_{j,k} \leq 2m$ for every $i$. The algorithm terminates correctly with probability $1-e^{-\zeta}$ for any $c > 0$. 
\end{restatable}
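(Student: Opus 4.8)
\textbf{Proof proposal for \Cref{lma:partitionFull}.}

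The plan is to implement $\textsc{Partition}(G,d,\zeta)$ as a recursive peeling procedure built on top of the randomized ball-growing primitive $\textsc{OutSeparator}$ from \Cref{lma:sepIntro}, carving $G$ into strongly-connected chunks of bounded weak diameter. First I would handle strong connectivity: compute the SCCs of $G$ and recurse into each SCC independently, so that within a recursive call we may assume $G$ is strongly connected. Inside a strongly connected piece, the core step is to repeatedly invoke $\textsc{OutSeparator}(r, G, d/2, \zeta')$ (and, symmetrically, its ``in''-variant, obtained by running the same procedure on the reverse graph) from a carefully chosen root $r$, with depth parameter $\approx d/2$ so that the resulting ball $B = B^{out}(r,X)$ has weak diameter at most $d$ once we also cut incoming edges. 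We remove the returned edge set $E_{Sep}$, add it to the accumulated separator, and recurse on both $G[B]$ and $G[\overline{B}]$ (after recomputing SCCs). To make the recursion bottom out with a geometrically-shrinking measure — which is what produces the $\lceil\lg\delta\rceil$ outer sum and the $n/2^j$ vertex bound in the running-time expression — I would, as sketched in the overview for the main framework, always pick the root $r$ so that $|B^{out}(r,d/2)| \leq |X|/2$; such an $r$ exists because the diameter bound forces two vertices with disjoint $d/2$-balls, and a standard argument (e.g. maintaining an approximate eccentricity estimate, or just trying both endpoints of a diametral-ish pair certified via $\mathcal A$) finds it. This guarantees that every vertex descends at most $\lg n$ levels before lying in a component of diameter $\leq d$, and organizing the recursion by the size-class $[n/2^{j+1}, n/2^j)$ of the current piece gives exactly the claimed index set; at size-class $j$ we invoke $\mathcal A$ to depth $\delta$ at most $2^{j+1}$ times across the whole recursion (the $\sum_k$ in the bound), and the $m_{j,k}$ are the edge counts of the disjoint pieces processed at a fixed ``slot'', so $\sum_k m_{j,k} \leq 2m$ as stated (the factor $2$ absorbing the doubling from handling both the ball and its complement, or from the in/out symmetrization).

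For Property \ref{prop:partitionDiameter}, correctness is structural: by construction every leaf component has a root $r$ with $B^{out}(r, \leq d/2) = $ the whole component and, symmetrically, $r$ reaches every vertex and is reached from every vertex within $d/2$ in the component after the cuts, so any $u,v$ in it satisfy $\mathbf{dist}(u,v) \leq \mathbf{dist}(u,r) + \mathbf{dist}(r,v) \leq d$; here I would use $\mathcal A$ (the $2$-approximate $\delta$-restricted SSSP data structure) only to \emph{detect} when a piece still has diameter $> d$, driving the recursion, while the actual cut is made by the exact ball-growing in $\textsc{OutSeparator}$ so the distance bound is exact, not approximate. For Property \ref{prop:partitionProb}, the bound $\mathbb{P}[e \in E_{Sep}] \leq \frac{240\zeta\log^2 n}{d} w(e)$ comes from a union bound over the levels of the recursion: a fixed edge $e$ can be cut at most once per level, each level has $\leq \lg n$ size-classes and we run $O(\log n)$ independent-ish calls (the $\zeta$ copies / the in/out pair), each contributing probability $\leq \frac{\zeta'}{d/2} w(e) = \frac{2\zeta'}{d}w(e)$ by \Cref{lma:sepIntro}\ref{prop:lowProb}, and crucially $e$ is only ``alive'' at a given level if its tail is still reachable from the current root inside the current piece — which is precisely the conditioning event under which \Cref{lma:sepIntro}\ref{prop:lowProb} gives its guarantee. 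Multiplying the per-call bound by the number of levels $\lg n$, the number of size-classes $\lg n$, and the constant from $\zeta$ and the in/out symmetrization yields the constant $240$ (a crude bookkeeping constant, not optimized). The success probability $1 - e^{-\zeta}$ (really $1 - O(n\log n)e^{-\zeta'}$ for the right choice of $\zeta'$, absorbed into the statement after rescaling $\zeta$ by a logarithmic factor) is a union bound over all $\textsc{OutSeparator}$ calls not reporting $\mathbf{Fail}$.

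The running time is a direct charging argument: each recursive call on a piece $H$ costs $O(|E(H)|\log n)$ for $\textsc{OutSeparator}$ (by \Cref{lma:sepIntro}) plus the cost of the calls to $\mathcal A$ to certify/refute the diameter; the pieces at a fixed level of the recursion are vertex-disjoint, so their edge sets sum to $\leq m$, and summing over $\lg n$ levels gives the $m\log^2 n$ additive term, while the invocations of $\mathcal A$ are grouped exactly as the double sum $\sum_{j}\sum_{k} T_{SSSP}(m_{j,k}, n/2^j, \delta, 2)$ dictates.

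\textbf{Main obstacle.} I expect the genuinely delicate point to be the interaction between the recursion's progress measure and the probability bound: because $\textsc{OutSeparator}$ is randomized, the decomposition of a piece is random, so ``the size-class of the piece containing $e$ at level $j$'' is a random object, and one must be careful that the conditioning in \Cref{lma:sepIntro}\ref{prop:lowProb} (tail of $e$ reachable from the current root after cutting) composes correctly across levels without the dependencies blowing up the union bound — the clean way is to observe that, conditioned on the entire history up to the current call, the call's cut probability for $e$ is bounded by \Cref{lma:sepIntro}\ref{prop:lowProb} regardless of that history, so summing the conditional expectations telescopes. The second fiddly point is certifying ``diameter $> d$'' with only a $2$-approximate, $\delta$-\emph{restricted} oracle: one must run $\mathcal A$ from the chosen center to depth $\delta$ (not just $d$) and argue that a center of a bad piece has some vertex at estimated distance $> d/2$, triggering another round — and that a piece $\mathcal A$ certifies as having all estimates $\leq d$ genuinely has true diameter $\leq d$ after symmetrization, which needs the out- and in-ball arguments to be combined with the right constants so the final bound is $d$ and not $2d$ or $4d$.
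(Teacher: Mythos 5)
Your proposal is built from the right ingredients (randomized ball-growing via $\textsc{OutSeparator}$, a recursion with $O(\log n)$ exposure of each edge for the probability bound, the SSSP oracle used only to \emph{detect} large diameter while exact ball-growing makes the cuts), and your ``main obstacle'' paragraph correctly identifies the conditioning issue and the $2$-approximation subtlety. However, there is a genuine gap at the heart of the algorithm: the step ``always pick the root $r$ so that $|B^{out}(r,d/2)|\leq|X|/2$.'' Existence of a far pair only tells you that \emph{one} of $B^{out}(u,\cdot)$, $B^{in}(v,\cdot)$ is small, and finding such a root efficiently is exactly the problem; you cannot afford to compute large balls just to discover they are large, and a $2$-approximate far pair only certifies distance $>d/4$, so the ball radius must be $d/8$, not $d/2$. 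The paper's \textsc{Partition} does something structurally different: it picks an \emph{arbitrary} root $r$, grows out- and in-balls of radius $d/8$ in parallel (aborting the slower one so the cost is charged to the smaller ball), and branches on a dichotomy. If the smaller ball has at most $2/3$ of the vertices/edges, it recurses only on that ball and peels it off. Otherwise both balls are large, and the key structural fact (Claim~\ref{clm:largeSCCifEStree}) is that $B^{out}(r,d/8)\cap B^{in}(r,d/8)$ contains at least $n/3$ vertices which form a surviving low-diameter core; the algorithm then initializes a \emph{single} data structure $\mathcal{A}_r$ and repeatedly peels far vertices $v$ (estimate $>d/4$) by growing balls from $v$ away from the core, recursing only on the peeled pieces. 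This dichotomy is what your proposal is missing, and it matters for more than elegance.

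Concretely, two parts of your accounting depend on it. First, your recursion ``on both $G[B]$ and $G[\overline B]$'' does not shrink geometrically on the $\overline B$ side, so if you re-certify the diameter of the large side by initializing a fresh SSSP structure after every cut, the number of $T_{SSSP}$ invocations at size class $j$ is not bounded by $2^{j+1}$ and $\sum_k m_{j,k}\leq 2m$ fails; the paper gets these bounds precisely because each recursion node enters the else-branch at most once, reuses one $\mathcal{A}_r$ for all peels, and every \emph{recursive} call is on a piece with at most a $2/3$ fraction of the vertices. Second, your probability bookkeeping (``once per level times $\lg n$ size classes times $O(\log n)$ calls'') does not match the actual exposure structure; the paper's argument is simply that each time $\mathbf{tail}(e)$ lies in a returned $V_{Sep}$ the cut probability is at most $\tfrac{24\zeta\log n}{d}w(e)$ and, having survived, $e$ is confined to a piece with at most a $2/3$ fraction of the vertices, so this happens at most $\log_{3/2}n$ times, giving the $240\zeta\log^2 n/d$ bound. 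Your plan would need the small-ball/large-core dichotomy (or an equivalent progress measure) to recover both counts.
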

\begin{remark}
During the execution, the graphs on which we use the $\mathcal{SSSP}$ structure upon have the properties as described in \Cref{rmk:LaminarFamilyOfGraphs}.
\end{remark}

\begin{algorithm}
\caption{$\textsc{Init}()$}
\label{alg:init}
Let $G'$ be initialized to $G$.\;
\For{$i = 0$ to $\lceil\lg \delta \rceil$}{
    Compute the SCCs $\mathcal{V}$ of $G'$\;
    \ForEach{SCC $X$ in $\mathcal{V}$, $|X| \leq n/2^i$}{
        $E_{Sep} \gets \textsc{Partition}(G[X], \delta/ 2^i, (c+2)\log n)$\;
        $G' \gets G' \setminus E_{Sep}$\;
    }
}
\Return $G'$\;
\end{algorithm}

Using this procedure, it is straight-forward to initialize our algorithm. The pseudo-code of the initialization procedure is given in \Cref{alg:init}. Here, we iteratively apply the partitioning procedure to SCCs of small size to decompose them further if their diameter is too large. It is not hard to establish that the graph $G'$ returned by the procedure, satisfies that every SCC $X$ in $G'$ has $\mathbf{diam}(X,G)\leq \frac{\delta |X|}{n}$. 

\subsection{Maintaining Information about SCC Diameters}
\label{subsec:maintainSSSPs}

Before we describe how to maintain $G'$ to satisfy the guarantees given above, we address the issue of maintaining information about the diameter of the current SCCs in $G'$. 

Therefore, we maintain a set $S$ of random sources throughout the algorithm, and from each $s \in S$, we run an $\alpha$-approximate $\delta$-restricted $\mathcal{SSSP}$ data structure $\mathcal{A}_s$. Initially $S = \emptyset$, and whenever there is an SCC $X$ in $\mathcal{V}$ (which is maintained by the data structure on $G'$), and we find $S \cap X = \emptyset$, we pick a vertex $s$ uniformly at random from $X$ and add it to $S$. Once added, we initialize and maintain an $\alpha$-approximate $\delta$-restricted $\mathcal{SSSP}$ data structure $\mathcal{A}_s$ on the current version of $G[X]$. That is, even if $X$ does not form an SCC at later stages, the data structure is run until the rest of the algorithm on the graph $G[X]$. This ensures that once the algorithm is invoked, all edge updates are determined by the adversary formulating updates to $G$. Since we assume that the adversary is non-adaptive, we have that the $\mathcal{SSSP}$ data structure only has do deal with updates from a non-adaptive adversary\footnote{If we would instead remove vertices from the data structure, we would do so based on the information gathered from the data structure. Thus, the data structure would be required to work against an adaptive adversary. A similar problem arises when running on $G'$.}.

We point out that since we maintain $G'$ to be a decremental graph, we have that $\mathcal{V}$ forms a refinement of previous versions at any stage i.e. the SCC sets only decompose over time in $G'$. Therefore, we can never have multiple center vertices in the same SCC $X \in \mathcal{V}$. For convenience, we let for each $X \in \mathcal{V}$, the vertex $\{s\} = X \cap S$ be denoted by $\textsc{Center}(X)$. By the above argument, this function is well-defined.

\subsection{Maintaining \texorpdfstring{$G'$}{G'}}
\label{subsec:maintainGPrime}

Let us now describe the main procedure of our algorithm: the part that efficiently handles violations of the diameter constraint by finding new separators. The implementation of this procedure is given by \Cref{alg:diameterVio}. Let us now provide some intuition and detail as to how the algorithm works.

\begin{algorithm}
\caption{$\textsc{ResolveDiameterViolations}()$}
\label{alg:diameterVio}
\While(\label{lne:loop}){there exists an $X \in \mathcal{V}$, where $\mathcal{A}_{\textsc{Center}(X)}$ has a distance estimate $\widetilde{\mathbf{dist}}(\textsc{Center}(X), t)$ or $\widetilde{\mathbf{dist}}(t, \textsc{Center}(X))$ exceeding $\frac{\delta|X|}{n}$ for some vertex $t \in X$}{
    \tcc{Find separator sets that decompose $X$.}
    \If(\label{lne:ifFar}){$\widetilde{\mathbf{dist}}(t,\textsc{Center}(X)) > \frac{|X|\delta}{n}$}{
        $(E_{Sep}, C) \gets \textsc{OutSeparator}(t, G'[X], \frac{|X|\delta}{2n}, (c+2)\log n)$ \label{lne:DelSep}
    }
    \Else{
        $(E_{Sep}, C) \gets \textsc{OutSeparator}(t, \overleftarrow{G'[X]}, \frac{|X|\delta}{2n}, (c+2)\log n)$ \label{lne:DelSepRev}
    }
    $E'_{Sep} \gets \textsc{Partition}(G'[C], \frac{|X|\delta}{4n}, (c+2)\log n))$\label{lne:partitionXC}\;
    \tcc{Update $G'$, $\mathcal{V}$ and $\tau$ to reflect the changes.}
    $G' \gets G' \setminus (E_{Sep} \cup E'_{Sep})$\label{lne:updateGraph}\;
    \textbf{Wait Until} the generalized topological order $(\mathcal{V}, \tau)$ of $G'$ was updated, each SCC $Z$ in $G'$ has a center $\textsc{Center}(Z)$, and all data structures $\mathcal{A}_{s}$ are updated.\label{lne:WaitUntil}
}
\end{algorithm}

The algorithm runs a while-loop starting in \Cref{lne:loop} that checks whether there exists a SCC $X \in \mathcal{V}$, such that the $\alpha$-approximate $\delta$-restricted SSSP data structure $\mathcal{A}_{\textsc{Center}(X)}$ has one of its distance estimates $\widetilde{\mathbf{dist}}(\textsc{Center}(X), t)$ (or $\widetilde{\mathbf{dist}}(t, \textsc{Center}(X))$) exceeding $\frac{\delta|X|}{n}$ for some vertex $t$ in the same SCC $X$ in $G'$. The goal of the while-loop iteration, is then to find a separator $E_{Sep}$ between $\textsc{Center}(X)$ and $t$ and to delete the edges from $G'$.

Let us describe a loop-iteration where some distance estimate $\widetilde{\mathbf{dist}}(\textsc{Center}(X), t)$ was found that exceeded $\frac{\delta|X|}{n}$ and where $t \in X$ (the case where we have a distance estimate $\widetilde{\mathbf{dist}}(\textsc{Center}(X), t)$ exceed the threshold value is analogous and therefore omitted). In this case, we find a separator $E_{Sep}$ that separates vertices in $C$ (where $t \in C$) from vertices in $X \setminus C$ (where $\textsc{Center}(X) \in X \setminus C$) in $G'$. Further, we invoke the procedure $\textsc{Partition}(G'[C], \frac{\delta|X|}{4n}, \zeta)$ on $C$ and obtain a separator $E'_{Sep}$ in $G'$ such that each SCC in $G'[C] \setminus E'_{Sep}$ has small diameter. We point out that while the first separator procedure is necessary to separate the vertices $\textsc{Center}(X)$ and $t$ in $G'$, the partitioning procedure is run for technical reasons only since we cannot ensure an efficient implementation without this step.

Finally, we wait until the data structures that maintain the generalized topological order and the distance estimates from random sources are updated before we continue with the next iteration. On termination of the while-loop, we have that all distance estimates between centers and vertices in their SCC (with regard to $G'$) are small (with regard to $G$).

\subsection{Analysis}

We establish \Cref{lma:reductionATOtoSSSP} by establishing four lemmas establishing for $(\mathcal{V}, \tau)$ correctness (\Cref{lma:correctness}), running time (\Cref{lma:runningTime}) and success probability (\Cref{lma:successRate}) and finally establishing that $c \log n$ independent copies of $(\mathcal{V}, \tau)$ form an $\mathcal{ATO}(G,2\alpha\delta)$-bundle with the guarantees given in \Cref{lma:reductionATOtoSSSP}, as required.

\begin{lemma}[Correctness]
\label{lma:correctness}
Given that no procedure returns $\textbf{Fail}$, we have that the algorithm maintains $(\mathcal{V}, \tau)$ to be an $\mathcal{ATO}(G,2\alpha \delta)$ of expected quality $\frac{(c+2)20000 n \log^5 n}{\delta}$.
\end{lemma}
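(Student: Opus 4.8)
The algorithm maintains $(\mathcal{V},\tau)$ as the generalized topological order with the nesting property (\Cref{thm:SCCinDecrGraph}) of a decremental graph $G'\subseteq G$: $G'$ starts as $G$ and, besides forwarding the adversary's deletions, additionally loses every separator edge ever returned by a call to \textsc{Partition} (inside \textsc{Init} and on \Cref{lne:partitionXC}) or to \textsc{OutSeparator} (lines~\ref{lne:DelSep} and~\ref{lne:DelSepRev}). Writing $F$ for the union of all these separator edges, at every stage $E(G')=E(G)\setminus F$. Properties~\ref{prop:VUpdate} and~\ref{prop:tauUpdate} of \Cref{def:ATO} are then immediate: $G'$ is decremental, so $\mathcal{V}$ (its SCCs) only refines, and the way $\tau$ reassigns values when an SCC splits is exactly the rule in property~\ref{prop:tauUpdate}, which is part of what \Cref{thm:SCCinDecrGraph} provides (nesting plus the interval-tiling structure of a generalized topological order).

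For property~\ref{prop:ContractLittle} I would argue about the state reached after \textsc{ResolveDiameterViolations} has run to a fixpoint, which occurs after \textsc{Init} and after every update since each while-loop iteration deletes at least one edge (the ball $C$ is a nonempty proper subset of the strongly connected $G'[X]$, so $E_{Sep}\neq\emptyset$). At a fixpoint the condition on \Cref{lne:loop} fails, so for every $X\in\mathcal{V}$ and $t\in X$ both $\widetilde{\mathbf{dist}}(\textsc{Center}(X),t)$ and $\widetilde{\mathbf{dist}}(t,\textsc{Center}(X))$ are at most $\frac{\delta|X|}{n}$. Since these estimates overestimate distances in the subgraph of $G$ on which $\mathcal{A}_{\textsc{Center}(X)}$ runs, and those are at least the distances in $G$, the triangle inequality through $\textsc{Center}(X)$ gives $\mathbf{dist}_G(x,y)\le\frac{2\delta|X|}{n}\le\frac{2\alpha\delta|X|}{n}$ for all $x,y\in X$, which is property~\ref{prop:ContractLittle} with $\eta_{diam}=2\alpha\delta$. (I would note that if a true distance exceeded $\delta\ge\frac{\delta|X|}{n}$ the estimate would too, so the fixpoint also forces these true distances below $\delta$, making the triangle step valid with no further case distinction.)

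The substance is property~\ref{prop:TauTotal}. Fix $s,t$, a stage, and the shortest path $\pi_{s,t}$ in the current $G$. Every edge of $\pi_{s,t}$ still in $G'$ contributes a nonnegative term to $\mathcal{T}(\pi_{s,t},\tau)$ because $(\mathcal{V},\tau)$ is a topological order of $G'$; so, writing $\mathcal{T}'$ for the negative part as in \Cref{eq:redefT} and using $\mathcal{T}(\pi_{s,t},\tau)\le 2|\mathcal{T}'(\pi_{s,t},\tau)|+n$ (the signed sum telescopes to $\tau(X^t)-\tau(X^s)\in(-n,n)$), it suffices to bound $\mathbb{E}[|\mathcal{T}'(\pi_{s,t},\tau)|]$, and only edges of $\pi_{s,t}\cap F$ appear there. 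I would partition $F$ into $F_0,\dots,F_{\lg n}$ with $F_j$ the edges produced by a separator call made on an SCC of size in $[n/2^{j+1},n/2^j)$; by nesting, both endpoints of such an edge forever stay inside that SCC's $\tau$-interval, so its topological-order difference is always $\le n/2^j$, whence $|\mathcal{T}'(\pi_{s,t},\tau)|\le\sum_j (n/2^j)\,|\pi_{s,t}\cap F_j|$. The crucial point is that a fixed edge $e$ can be added to $F_j$ in only $O(1)$ separator calls: the while-loop body is arranged so that after a call on an SCC $X$ the surviving component containing $e$ has size $\le|X|/2$ (this is where the \textsc{OutSeparator} root is chosen to have a small ball, with the \textsc{Partition} call on $G'[C]$ keeping this under control), so $e$ leaves bucket $j$ permanently, while in \textsc{Init} the contributions at scales $i=0,\dots,j{+}1$ form a geometric series dominated by its last term. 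Plugging $d=\Theta(\delta/2^j)$ and $\zeta=(c+2)\log n$ into \Cref{lma:sepIntro}(\ref{prop:lowProb}) and \Cref{lma:partitionFull}(\ref{prop:partitionProb}) — the conditioning on reachability only raises the probability, so the per-call bound $\mathbb{P}[e\in E_{Sep}]\le\frac{\zeta}{d}w(e)$ holds regardless of history — yields $\mathbb{P}[e\in F_j]\le O\!\big(\tfrac{(c+2)2^j\log^3 n}{\delta}\big)w(e)$, so that
\[
\mathbb{E}\big[|\mathcal{T}'(\pi_{s,t},\tau)|\big]\ \le\ \sum_{e\in\pi_{s,t}}\sum_{j}\frac{n}{2^j}\cdot O\!\Big(\frac{(c+2)2^j\log^3 n}{\delta}\Big)w(e)\ =\ O\!\Big(\frac{(c+2)\,n\log^4 n}{\delta}\Big)w(\pi_{s,t}),
\]
and tracking the logarithmic factors and constants (the $\log^2 n$ in \Cref{lma:partitionFull}, the $\zeta=\Theta(\log n)$, the $\lg n$ size-scales, and the number of \textsc{Init}-scales per edge) upgrades this to $\mathbb{E}[\mathcal{T}(\pi_{s,t},\tau)]\le\frac{(c+2)20000\,n\log^5 n}{\delta}w(\pi_{s,t})+n$, i.e.\ expected quality $\frac{(c+2)20000\,n\log^5 n}{\delta}$.

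The step I expect to be the main obstacle is the $O(1)$-calls claim inside property~\ref{prop:TauTotal}: one has to chain the component-shrinking argument across the whole update sequence and across both \textsc{Init} and \textsc{ResolveDiameterViolations}, and be careful that which calls put $e$ ``at risk'' is itself determined by earlier coin flips. The clean way is to bound, for each call and conditioned on the entire preceding history, $\mathbb{P}[e\text{ enters }F_j\text{ at this call}\mid\text{history}]\le\frac{\zeta}{d}w(e)\cdot\mathbf{1}[\text{this is an at-risk call}]$, with $\sum(\text{at-risk indicators})=O(1)$ holding deterministically given the history, and then take expectations.
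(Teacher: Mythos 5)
Your overall route is the same as the paper's: properties \ref{prop:VUpdate} and \ref{prop:tauUpdate} come from maintaining $(\mathcal{V},\tau)$ as the nested generalized topological order of the decremental graph $G'=G\setminus F$; property \ref{prop:ContractLittle} from the termination condition of the while-loop plus the triangle inequality through $\textsc{Center}(X)$; and property \ref{prop:TauTotal} by passing to the negative part $\mathcal{T}'$, bucketing separator edges by the size of the SCC in which they were cut, using the nesting property to bound $|\tau(X^u)-\tau(X^v)|$ by that size, and multiplying by the per-call cut probabilities from \Cref{lma:sepIntro} and \Cref{lma:partitionFull}. Up to that point the proposal matches the paper's Claims \ref{lma:smallDiam} and \ref{lma:ProbOfEi}.

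The genuine gap is your counting of ``at-risk'' calls. You assert that after a separator call on an SCC $X$ the surviving component containing $e$ has size at most $|X|/2$, so that a fixed edge is at risk in $O(1)$ calls per scale \emph{deterministically given the history}. That is true for the idealized description in the overview (where the root of $\textsc{OutSeparator}$ is chosen so that $|B^{out}(r,d)|\le|X|/2$), but not for \Cref{alg:diameterVio}: there the separator is grown from the \emph{far} vertex $t$, and the only thing guaranteed is that $\textsc{Center}(X)\notin C$, so $|C|$ can be as large as $|X|-1$; the subsequent $\textsc{Partition}(G'[C],\cdot)$ controls the \emph{diameter} of the resulting SCCs, not their size. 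Consequently an edge can land in the at-risk set $C$ at the same size-scale many times, and your ``$\sum(\text{at-risk indicators})=O(1)$ deterministically'' does not hold. The paper closes exactly this hole with \Cref{lma:EStreeprob}: using that $\textsc{Center}(X)$ is chosen uniformly at random, with probability at least $1/2$ the center falls in a large low-diameter core that is never pruned, which forces either $|C|\le|X|/2$ or a halving by the ensuing $\textsc{Partition}$ call; a geometric-random-variable argument then gives that each vertex participates in $C$ in at most $2\lceil\lg\delta\rceil$ iterations \emph{in expectation}. This expected-$O(\log)$ bound (not $O(1)$) is what supplies one of the $\log$ factors in the stated quality $\frac{(c+2)20000\,n\log^{5}n}{\delta}$; your tally reaches the right exponent only by informally ``tracking logs'' at the end, while the justification for the key factor is missing. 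Everything else (the unconditional use of the conditional probability in \Cref{lma:sepIntro}, the treatment of $\textsc{Init}$, the $\mathcal{T}\le 2|\mathcal{T}'|+n$ reduction) is fine.
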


Let us first prove that the diameter of SCCs in $G'$ remains small.

\begin{claim}
\label{lma:smallDiam}
After invoking \Cref{alg:diameterVio}, we have that each set $X \in \mathcal{V}$ satisfies
\[
\mathbf{diam}(X, G) \leq \frac{2\alpha\delta|X|}{n}
\] 
\end{claim}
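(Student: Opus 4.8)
The plan is to argue that the while-loop in \Cref{alg:diameterVio} terminates only when every SCC $X \in \mathcal{V}$ of $G'$ has all its center-to-vertex and vertex-to-center distance \emph{estimates} bounded by $\frac{\delta |X|}{n}$, and then to convert this bound on estimates into a bound on the true weak diameter in $G$ by paying the approximation factor $\alpha$ and the factor of $2$ coming from routing through the center. First I would record the loop invariant: at the moment \Cref{alg:diameterVio} returns, for every $X \in \mathcal{V}$ and every $t \in X$ we have $\widetilde{\mathbf{dist}}(\textsc{Center}(X), t) \leq \frac{\delta |X|}{n}$ and $\widetilde{\mathbf{dist}}(t, \textsc{Center}(X)) \leq \frac{\delta |X|}{n}$ — this is immediate, since otherwise the while-condition in \Cref{lne:loop} would still be satisfied and the loop would not have terminated.

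Next I would use the guarantee of the $\alpha$-approximate $\delta$-restricted $\mathcal{SSSP}$ data structure $\mathcal{A}_{\textsc{Center}(X)}$ (\Cref{def:SSSPGeneric}). Here one must be slightly careful: $\mathcal{A}_{\textsc{Center}(X)}$ is run on the graph $G[X']$ where $X'$ was the SCC containing $\textsc{Center}(X)$ at the time the center was added (by \Cref{subsec:maintainSSSPs}, it is run on a fixed induced subgraph forever after), and the estimates are overestimates of distances in that graph, which are themselves overestimates of distances in $G$. So from $\widetilde{\mathbf{dist}}(\textsc{Center}(X), t) \leq \frac{\delta|X|}{n}$ and the fact that the estimate is an overestimate of $\mathbf{dist}_G(\textsc{Center}(X), t)$, we get $\mathbf{dist}_G(\textsc{Center}(X), t) \leq \frac{\delta|X|}{n}$, and symmetrically $\mathbf{dist}_G(t, \textsc{Center}(X)) \leq \frac{\delta|X|}{n}$. (The approximation factor $\alpha$ is not even needed in this direction; it will matter only for arguing the estimates are finite and hence that the loop \emph{can} detect a violation, which is a separate point handled in the running-time/correctness lemmas.) Then for any two vertices $u, v \in X$, routing through the center gives
\[
\mathbf{dist}_G(u,v) \;\leq\; \mathbf{dist}_G(u, \textsc{Center}(X)) + \mathbf{dist}_G(\textsc{Center}(X), v) \;\leq\; \frac{\delta|X|}{n} + \frac{\delta|X|}{n} \;=\; \frac{2\delta|X|}{n} \;\leq\; \frac{2\alpha\delta|X|}{n},
\]
using $\alpha \geq 1$; taking the max over $u,v \in X$ gives $\mathbf{diam}(X,G) \leq \frac{2\alpha\delta|X|}{n}$, as claimed. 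Strictly, to be clean, I would note that where the claim actually needs the factor $\alpha$ is that the \emph{estimates} only certify distances up to the restriction threshold $\delta$, and $|X| \leq n$ ensures $\frac{\delta|X|}{n} \leq \delta$, so the $\delta$-restriction of $\mathcal{A}$ is in force; and that $\mathcal{A}_{\textsc{Center}(X)}$ maintains estimates to \emph{all} $t$ at distance $\leq \delta$, in particular to all $t$ in the current SCC $X$, so no relevant violation is missed.

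The main obstacle I anticipate is the bookkeeping around which graph $\mathcal{A}_{\textsc{Center}(X)}$ actually runs on and what ``$|X|$'' refers to at each moment: the center is fixed when it is first chosen in an SCC $X'$, but as $G'$ decomposes, $\textsc{Center}(X)$ is reused as the center of the current, smaller SCC $X \subseteq X'$, and the threshold $\frac{\delta|X|}{n}$ in the while-condition is the one for the \emph{current} $|X|$, which only decreases over time. One must check that the estimates maintained on the original $G[X']$ are still valid overestimates for distances restricted to the current SCC $X$ in $G'$ — which holds because $G'[X] \subseteq G[X] \subseteq G[X']$, and distances in a subgraph are at least distances in a supergraph, while the data structure's estimates overestimate $G[X']$-distances. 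I would also want to confirm that \Cref{alg:diameterVio} always terminates (so that ``after invoking'' is meaningful), but that is the content of the running-time lemma and can be cited rather than reproved here; for this claim it suffices to condition on termination, exactly as the claim's phrasing ``after invoking'' already does.
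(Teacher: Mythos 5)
Your proposal is correct and follows essentially the same route as the paper's proof: use termination of the while-loop to bound both distance estimates by $\frac{\delta|X|}{n}$, note that estimates overestimate distances in the vertex-induced subgraph on which $\mathcal{A}_{\textsc{Center}(X)}$ runs and hence overestimate distances in $G$, and conclude via the triangle inequality through the center (with the factor $\alpha\geq 1$ absorbed as slack, exactly as in the paper). Your extra remarks on the bookkeeping of which induced subgraph the data structure runs on correspond to the paper's parenthetical justification of the same point.
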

\begin{proof}
First, recall that when the while-loop in \Cref{lne:loop} terminates, we have that every $X \in \mathcal{V}$ has that no distance estimate $\widetilde{\mathbf{dist}}(\textsc{Center}(X), t)$ or $\widetilde{\mathbf{dist}}(t, \textsc{Center}(X))$ exceeds $\frac{\delta|X|}{n}$ for any $t \in X$. 

Next, observe that the algorithm maintains the following invariant on the while-loop in \Cref{lne:loop}: every $X \in \mathcal{V}$ contains exactly one center is only marked in the data structure $\mathcal{E}_{\textsc{Center}(X)}$. This follows by resampling centers in SCCs $X$ that do not have a center yet and the by \Cref{lne:WaitUntil} which ensures that at the end of each while-loop iteration, there is time to resample.

Combined, this implies that on termination of the while-loop, for every $x, y \in X$, in any $X \in \mathcal{V}$, we have
\begin{align}
\begin{split}
\label{eq:upperBound}
    \mathbf{dist}_G(x,y) &\leq \mathbf{dist}_G(x,\textsc{Center}(X)) + \mathbf{dist}_G(\textsc{Center}(X),y) \\
    &\leq \widetilde{\mathbf{dist}}(x,\textsc{Center}(X)) + \widetilde{\mathbf{dist}}(\textsc{Center}(X),y) \\
     & \leq \frac{2\alpha\delta|X|}{n}
\end{split}
\end{align}
where we used the triangle inequality, \Cref{def:SSSPGeneric} and the fact that $\mathcal{A}_{\textsc{Center}}$ maintains distances with regard to a vertex-induced subgraph of $G$ (adding edges can only decrease distances, thus distances in $G$ are smaller than in $G[Y] \subseteq G$ for any $Y$). 
\end{proof}

Let us now bound the quality of the approximate topological order $(\mathcal{V}, \tau)$, i.e. upper bound for any $s$-to-$t$ path $\pi_{s,t}$ the amount $\mathcal{T}(\pi_{s,t}, \tau)$. As in the overview section, we focus on the "negative" terms in $\mathcal{T}(\pi_{s,t}, \tau)$, which are captured by 
\begin{equation}
        \mathcal{T}'(\pi_{s,t}, \tau) \stackrel{\text{def}}{=} \sum_{(u,v) \in \pi_{s,t}} \min\{0, \tau(X^v) - \tau(X^u)\}
\end{equation}
which is the definition of $\mathcal{T}'$ already given in equation \ref{eq:redefT}. It is not hard to see that $\mathcal{T}(\pi_{s,t} , \tau) = 2\mathcal{T}'(\pi_{s,t} , \tau) + |\tau(X^s) - \tau(X^t)| \leq 2\mathcal{T}'(\pi_{s,t} , \tau) + n$. It, thus, only remains to establish the following lemma.

\begin{claim}
\label{lma:ProbOfEi}
At any stage of $G$, for any path $\pi_{s,t}$ in $G$, we have 
\[
    \mathbb{E}[\mathcal{T}'(\pi_{s,t}, \tau)] \leq \frac{(c+2)10000 n \log^5 n}{\delta} w_G(\pi) 
\]
throughout the course of the algorithm.
\end{claim}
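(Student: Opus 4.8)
The plan is to bound $\mathbb{E}[\mathcal{T}'(\pi_{s,t},\tau)]$ by summing, over each edge $e \in \pi_{s,t}$, the expected contribution of $e$ to the negative part of the topological-order sum. The key structural observation (already used in the overview) is that a negative term $\tau(X^v) - \tau(X^u)$ for an edge $(u,v)$ can only arise once $(u,v)$ is removed from $G'$, i.e. once $(u,v) \in E_{Sep} \cup E'_{Sep}$ for some separator produced in \Cref{alg:diameterVio} (or in \Cref{alg:init}). Moreover, by the nesting property of the generalized topological order (\Cref{thm:SCCinDecrGraph}) the magnitude $|\tau(X^u) - \tau(X^v)|$ at any later stage is at most $|X|$, where $X$ is the SCC in which the separator call that cut $e$ was made. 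So the first step is to partition the whole collection of separator edges $F = G \setminus G'$ into classes $F_0, F_1, \dots, F_{\lceil \lg n\rceil}$, where $F_j$ collects edges removed by a separator call invoked on an SCC $X$ with $|X| \in [n/2^{j+1}, n/2^{j})$, and to observe that every $e\in F_j$ contributes at most $n/2^j$ in absolute value to $\mathcal{T}'$, and at most twice along $\pi_{s,t}$ (two incident path-edges per vertex). Thus $|\mathcal{T}'(\pi_{s,t},\tau)| \le \sum_{j}\sum_{e \in \pi_{s,t}} 2\cdot\mathbf{1}[e \in F_j]\cdot n/2^j$, and after taking expectations it suffices to bound $\mathbb{P}[e \in F_j]$ for each fixed edge $e$ on the path and each scale $j$.

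The heart of the argument is bounding $\mathbb{P}[e \in F_j]$. For a fixed edge $e$ and scale $j$, $e$ can only join $F_j$ at a stage where both endpoints of $e$ still lie in a common SCC $X$ with $|X| \in [n/2^{j+1}, n/2^j)$ and the while-loop of \Cref{alg:diameterVio} invokes $\textsc{OutSeparator}$ (or $\textsc{Partition}$ via \Cref{lne:partitionXC}) on $G'[X]$ (or its reverse). We use the two key facts: (i) by \Cref{lma:sepIntro}, conditioned on $\mathbf{tail}(e)$ being reachable from the separator root within $G'[X]\setminus E_{Sep}$, the probability that $e \in E_{Sep}$ is at most $\frac{(c+2)\log n}{d}\,w(e)$ where $d = \frac{|X|\delta}{2n} \ge \frac{\delta}{2^{j+2}}$, so this probability is $O\!\big(\frac{(c+2)\log n\, w(e)\, 2^j}{\delta}\big)$; and similarly for the $\textsc{Partition}$ call on $G'[C]$, where \Cref{lma:partitionFull} gives $\mathbb{P}[e \in E'_{Sep}] \le \frac{240(c+2)\log^2 n}{d'} w(e)$ with $d' = \frac{|X|\delta}{4n}$, contributing $O\!\big(\frac{(c+2)\log^2 n\, w(e)\, 2^j}{\delta}\big)$; and (ii) by the choice of separator root (the algorithm insists $|B^{out}(r,d)| \le |X|/2$, and $C \subseteq B^{out}(r,d)$) — so if $e$ is \emph{not} cut at this stage, then afterwards both endpoints of $e$ lie in an SCC of size at most $|X|/2 < n/2^{j+1}$, hence $e$ can never again be cut on scale $j$. (For the side $C$ one must also argue that the $\textsc{Partition}$-SCCs inside $C$ are likewise at most half the size, which follows because $C$ itself already has size $\le |X|/2$.) Combining (i) and (ii): $e$ has at most one "chance" per scale $j$ to join $F_j$, and that chance succeeds with probability $O\!\big(\frac{(c+2)\log^2 n\, w(e)\, 2^j}{\delta}\big)$; but we also need to account for the $O(\log n)$ independent-source / $O(\log n)$ separator-attempt overhead absorbed in the $\zeta = (c+2)\log n$ parameter and in the bundle size, which pushes the polylog up to roughly $\log^5 n$.

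Putting the pieces together:
\[
\mathbb{E}[\mathcal{T}'(\pi_{s,t},\tau)] \le \sum_{e \in \pi_{s,t}} \sum_{j=0}^{\lceil\lg n\rceil} 2 \cdot \frac{n}{2^j} \cdot \mathbb{P}[e \in F_j] \le \sum_{e \in \pi_{s,t}} \sum_{j} 2\cdot\frac{n}{2^j}\cdot O\!\left(\frac{(c+2)\log^2 n\, w(e)\, 2^j}{\delta}\right),
\]
and the $2^j$ factors cancel, leaving $O\!\big(\frac{(c+2)\log^3 n\, \cdot\, n}{\delta}\big) \cdot \sum_{e} w(e) = O\!\big(\frac{(c+2)\log^3 n \cdot n}{\delta}\big) w_G(\pi)$ per scale, and summing over the $O(\log n)$ scales plus the extra $\log n$ slack from the reachability-conditioning bookkeeping yields the claimed $\frac{(c+2)10000\, n \log^5 n}{\delta} w_G(\pi)$. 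I would also need a short lemma handling the contribution of the separator edges removed during $\textsc{Init}$ (\Cref{alg:init}): the same scale-decomposition and the same probability bound from \Cref{lma:partitionFull} apply, so those edges are folded into the same sum.

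The step I expect to be the main obstacle is establishing the conditioning carefully — specifically, arguing that for each edge $e$ on $\pi_{s,t}$ and each scale $j$ there is effectively a \emph{single} relevant separator call at scale $j$ whose randomness is independent of the prior history in the right way, so that we may legitimately apply the per-edge probability bound of \Cref{lma:sepIntro}/\Cref{lma:partitionFull} and then sum. The subtlety is that the adversary is non-adaptive but the \emph{choice of which SCC to separate and with what root} is determined by the algorithm's own random distance estimates; one must check that conditioning on the event "$\mathbf{tail}(e)$ reachable from the root in $G'[X] \setminus E_{Sep}$" is exactly the conditioning supplied by \Cref{lma:sepIntro}, and that the "$e$ not cut $\Rightarrow$ endpoints now in a half-size SCC" deduction is valid even when $e$ lies on the $C$-side and is only cut later by the nested $\textsc{Partition}$ call. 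Handling the union bound over the $O(\log n)$ independent bundle copies and the $e^{-\zeta}$ failure probabilities of the separator calls (\Cref{lma:successRate}) cleanly, without inflating the polylog beyond $\log^5 n$, is the other delicate bookkeeping point.
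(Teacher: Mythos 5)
Your overall strategy --- per edge, bound the probability of being cut times the maximum contribution $|X|$ (via the nesting property of $\tau$), then sum over the edge's ``opportunities'' to be cut --- matches the paper's proof, and your per-call probability bounds from \Cref{lma:sepIntro} and \Cref{lma:partitionFull} are the right ones. The genuine gap is in your step (ii): you justify ``at most one chance per scale $j$'' by asserting that the separator is rooted at a vertex $r$ with $|B^{out}(r,d)| \le |X|/2$, so that a surviving edge drops into a half-size SCC. That is how the overview sketches it, but it is not what \Cref{alg:diameterVio} does: there the separator is rooted at the \emph{violating} vertex $t$, with no control whatsoever on $|C|$, and the subsequent $\textsc{Partition}$ call on $G'[C]$ only guarantees small \emph{diameter} of the resulting SCCs, not small size. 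An edge on the $C$-side can therefore remain in an SCC whose size is still in $[n/2^{j+1}, n/2^j)$ and be exposed to further separator calls at the same scale, so the deterministic ``one chance per scale'' accounting fails.

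The paper closes exactly this hole with \Cref{lma:EStreeprob}, which you never invoke: each vertex participates in $C$ in at most $2\lceil\lg\delta\rceil$ while-loop iterations \emph{in expectation}, because the uniformly random choice of $\textsc{Center}(X)$ lands, with probability at least $1/2$, in a persistent core of $X$ that no separator ball ever removes, which forces the SCC containing $v$ to halve (in expectation) every second time $v$ participates in $C$. The paper then charges each edge an expected $\frac{(c+2)\,O(n\log^4 n)}{\delta}\,w(e)$ per participation of an endpoint in $C$ and multiplies by the expected $O(\log n)$ participations to reach the stated $\log^5 n$. So the missing ingredient in your proposal is not primarily the conditioning bookkeeping you flag at the end, but the probabilistic halving argument itself; without \Cref{lma:EStreeprob} (or a reproof of it) the number of cut-opportunities per edge at a given scale is unbounded and your sum over scales does not close.
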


Before, we provide a proof, let us state the following lemma which has been shown in slightly different forms in various papers before and whose proof is therefore delayed to \Cref{sec:EStreeprobProof}.

\begin{restatable}[c.f. also \cite{chechik2016decremental}, Lemma 13; \cite{bernstein2019decremental}, Lemma 7.1]{lemma}{participation}
\label{lma:EStreeprob}
Each vertex $v \in V$ participates in $C$ in at most $2 \lceil\lg \delta\rceil$ while-loop iterations of \Cref{alg:diameterVio} during the entire course of the algorithm in expectation. Further, in expectation, the SCC in $G'$ that $v$ is contained in, halves every second time that $v$ participates in $C$. 
\end{restatable}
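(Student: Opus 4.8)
The plan is a potential argument on the size of the strongly-connected component of $v$ in $G'$, following the template of \cite{roditty2008improved,chechik2016decremental,bernstein2019decremental}. Write $X^v$ for the current SCC of $v$ in $G'$; since $G'$ is decremental, $X^v$ only shrinks over time. The first observation pins down the range of $|X^v|$: by \Cref{lma:smallDiam}, after every completed run of the while-loop in \Cref{alg:diameterVio} every SCC $X$ of $G'$ has $\mathbf{diam}(X,G)\le \tfrac{2\alpha\delta|X|}{n}$; but a strongly-connected subgraph of $G'\subseteq G$ on at least two vertices contains a directed path of one unit-weight edge, so $\mathbf{diam}(X,G)\ge 1$, forcing $|X|\ge \Omega(n/\delta)$. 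Hence, as long as $X^v$ is not a singleton, $|X^v|$ lies in $[\Omega(n/\delta),\,n]$, whose logarithm spans only $O(\lg\delta)$ — this is exactly the source of the $\lg\delta$ in the statement. Once $X^v$ is a singleton it never triggers a violation, so $v$ is never again placed into a set $C$; it therefore suffices to show that $|X^v|$ drops by a constant factor every two participations in expectation.

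\textbf{Effect of one participation.}
Suppose $v\in C$ in an iteration of the while-loop processing SCC $X$ (so $v\in X$ at that moment), in the ``forward'' case where $(E_{Sep},C)=\textsc{OutSeparator}(t,G'[X],\tfrac{|X|\delta}{2n},(c{+}2)\log n)$ for the witness $t$ with $\widetilde{\mathbf{dist}}(t,\textsc{Center}(X))>\tfrac{\delta|X|}{n}$ (the ``backward'' case, run on $\overleftarrow{G'[X]}$, is entirely symmetric). By \Cref{lma:sepIntro} every vertex of $C$ is within distance $\tfrac{|X|\delta}{2n}$ of $t$ in $G'[X]\setminus E_{Sep}$, and $E_{Sep}=E(C,\overline C)$, so after \Cref{lne:updateGraph} the new SCC of $v$ is contained in $C$. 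Crucially, $\textsc{Center}(X)\notin C$: by \Cref{def:SSSPGeneric} with $\alpha=2$, the estimate exceeding $\tfrac{\delta|X|}{n}$ implies $\mathbf{dist}_{G[Y]}(t,\textsc{Center}(X))>\tfrac{\delta|X|}{2n}$, where $G[Y]$ is the (super)graph on which $\mathcal{A}_{\textsc{Center}(X)}$ runs (see \Cref{rmk:LaminarFamilyOfGraphs}); since $G'[X]\setminus E_{Sep}$ is a subgraph of $G[Y]$, distances only grow, so $\textsc{Center}(X)$ is not within radius $\tfrac{|X|\delta}{2n}$ of $t$. Thus each participation strictly shrinks $X^v$ and, moreover, leaves $v$ in a component that misses $\textsc{Center}(X)$ and hence receives a \emph{freshly sampled} center.

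\textbf{Why two participations halve.}
Combine the above with the random-source technique: $\textsc{Center}(\widehat X)$ is a uniformly random vertex of the SCC $\widehat X\supseteq X$ in which it was chosen, and since the adversary is oblivious and the center does not feed back into the update sequence, this choice is (in the appropriate conditional sense) independent of the ball $C$ that the algorithm later extracts while processing $X$. A uniformly random vertex of $\widehat X$ avoids $C$ with probability at least $(|X|-|C|)/|X|$, while $\textsc{Center}(X)\notin C$ \emph{always}; a standard conditioning argument (the trigger event is itself biased towards configurations where the center lies outside a large ball) then gives $\mathbb{E}[|C|\mid X\text{ processed}]\le |X|/2$, i.e. $X^v$ halves in expectation. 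When instead $|C|>|X|/2$, the subsequent call $\textsc{Partition}(G'[C],\tfrac{|X|\delta}{4n},(c{+}2)\log n)$ in \Cref{lne:partitionXC} gives, via \Cref{lma:partitionFull}, each sub-SCC of $v$ inside $C$ weak diameter $\le \tfrac{|X|\delta}{4n}$ in $G$; for $v$'s SCC to trigger \emph{again} it must reach a $\widetilde{\mathbf{dist}}$-estimate exceeding $\tfrac{\delta|X'|}{n}$ for its then-current size $|X'|$, hence (again by $\alpha=2$, and using the triangle inequality through the center to control how far the estimate can have drifted) a genuine diameter of order $\tfrac{\delta|X'|}{n}$, which against the $\tfrac{|X|\delta}{4n}$ cap at creation forces $|X'|<|X|/2$ — so the halving happens on the very next participation. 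Either way, two consecutive participations of $v$ suffice to halve $|X^v|$ in expectation.

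\textbf{Assembling the count.}
Let $s_0\ge s_1\ge\cdots$ be the sizes of $v$'s SCC right after its $0$th, $1$st, $\dots$ participation, with $s_0\le n$. The previous paragraph gives $\mathbb{E}[s_{i+2}\mid s_i]\le s_i/2$. At the bottom of the range, once $|X^v|=O(n/\delta)$ the depth parameters $\tfrac{|X|\delta}{2n}$ and $\tfrac{|X|\delta}{4n}$ handed to $\textsc{OutSeparator}$ and $\textsc{Partition}$ fall below $1$, so either $C$ or each sub-SCC produced by $\textsc{Partition}$ is a singleton; consequently $v$'s component becomes a singleton within one further participation, after which $v$ never re-enters any $C$. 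A Wald/optional-stopping argument over the $O(\lg\delta)$ halving phases then bounds the expected number of participations by $2\lceil\lg\delta\rceil$, which is exactly the statement; the ``halves every second time'' clause is the $\mathbb{E}[s_{i+2}\mid s_i]\le s_i/2$ estimate just established. I expect the delicate point to be the conditioning step in the third paragraph — making rigorous the independence between the uniformly random center and the adversarially-plus-algorithmically determined ball $C$, and formalizing the bias of the trigger event — since the rest is the routine potential bookkeeping already present in \cite{chechik2016decremental,bernstein2019decremental}.
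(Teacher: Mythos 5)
There is a genuine gap: the third paragraph (``Why two participations halve'') is the heart of the lemma, and neither of your two mechanisms works. First, the claim $\mathbb{E}[|C| \mid X \text{ processed}]\leq |X|/2$ does not follow from $\textsc{Center}(X)\notin C$: the ball $C$ is grown from the far vertex $t$, not from the center, and it can perfectly well contain almost all of $X$ while missing the center; moreover both the trigger and $C$ depend on where the center was sampled, so the ``independence of the center and the ball'' you invoke is precisely the statement that needs proof, and the appeal to the trigger being ``biased towards configurations where the center lies outside a large ball'' is not an argument. Second, the deterministic fallback is incorrect because of a time shift: the guarantee of \Cref{lma:partitionFull} in \Cref{lne:partitionXC} bounds the diameter of the sub-SCCs of $C$ \emph{at the moment of the call}, whereas $G$ is decremental, so the weak diameter of a fixed vertex set only grows; a later trigger for $v$'s component $X'$ certifies $\mathbf{dist}$-diameter of order $\delta|X'|/n$ \emph{at that later time}, which is perfectly compatible with the creation-time cap $\frac{|X|\delta}{4n}$ even when $|X'|$ is close to $|X|$. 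So ``the halving happens on the very next participation'' does not follow.

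The paper's proof resolves exactly this point with a construction your proposal is missing: for two participations $t_1<t_2$ of $v$, it introduces the counterfactual graph $F^t=H^{t_1}\setminus(G^t\setminus G^{t_1})$ obtained from $G'$ at time $t_1$ by applying only the \emph{adversarial} deletions afterwards, and the maximal set $Y_{max,t}\ni v$ inducing a low-diameter SCC in $F^t$ (threshold $\frac{\delta|X^{v,t_1}|}{16n}$). Because the adversary is oblivious, $Y_{max,t}$ is independent of the uniformly random choice of $\textsc{Center}(X^{v,t_1})$, and at the first time $t'$ at which $|Y_{max,t'}|$ drops below $\frac12|X^{v,t_1}|$ one has $|Y_{max,t'-1}|\geq\frac12|X^{v,t_1}|$, so the center lies in $Y_{max,t'-1}$ with probability at least $1/2$. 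Conditioned on that event, no vertex of $Y_{max,t'-1}$ is ever placed in a set $C$ before iteration $t'$ (the triggering vertex is far from the center and, by \Cref{lma:sepIntro}, the separator ball has radius only $\frac{|X|\delta}{2n}$ around it), whence either $t_2<t'$ and $C\subseteq X^{v,t_1}\setminus Y_{max,t'-1}$ already has size at most $\frac12|X^{v,t_1}|$, or $t_2\geq t'$ and then $G'[C]$ contains no low-diameter SCC through $v$ of size more than $\frac12|X^{v,t_1}|$, so the $\textsc{Partition}$ call at iteration $t_2$ forces $|X^{v,t_2}|\leq\frac12|X^{v,t_1}|$. This yields halving with probability at least $1/2$ between any two participations, and the geometric-trials argument gives the ``halves every second time in expectation'' claim. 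Your size-range observation via \Cref{lma:smallDiam} (giving the $\lg\delta$ count) is fine, but without the counterfactual-core argument the central probabilistic step of the lemma is unproved.
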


\begin{proof}[Proof of \Cref{lma:ProbOfEi}]
We proof this lemma for edges $(u,v) \in E$. Then, the result follows straight-forwardly by summing over the path edges. Let us start by observing that we have $\mathcal{T}'((u,v), \tau) \neq 0$ if and only if $X^v$ strictly precedes $X^u$ in $\tau$ (where $X^z$ denotes the set in $\mathcal{V}$ that contains vertex $z \in V$). But since $(\mathcal{V}, \tau)$ forms a generalized topological order of $G'$, we have that $(u,v)$ cannot be contained in $E(G')$. 

However, we only remove edges from $E(G')$ in  \Cref{lne:updateGraph} of our algorithm, after being added to $E_{Sep}$ in  \Cref{lne:DelSep} or \ref{lne:DelSepRev}, or to $E'_{Sep}$ in  \Cref{lne:partitionXC}. Having $(u,v) \in E_{Sep}$ occurs by \Cref{lma:sepIntro} only if $(u,v)$ is contained in $G'[X]$ and if at least one of the endpoints is in $C$ (depending on whether the separator is computed on $G'[X]$ or $\overleftarrow{G'[X]}$ it is $u$ or $v$). In this case, the probability that $(u,v)$ is added to $E_{Sep}$ is at most $\frac{(c+2)\log n 2n}{|X|\delta} w_G(u,v)$, again by \Cref{lma:sepIntro}. 

However, if $(u,v)$ is not added to $E_{Sep}$ (and not already removed from $G'$) then it is completely contained in $G'[C]$. Thus, by \Cref{lma:partitionFull} it is sampled into $E'_{Sep}$ with probability at most
$\frac{(c+2)240 \log^4(n) \cdot 4n}{|X|\delta}w_G(u,v) \leq \frac{(c+2)960 n \log^4 n}{|X|\delta}w_G(u,v)$.

Observe that if $(u,v)$ is sampled into either $E_{Sep}$ or $E'_{Sep}$, then since it was contained in $X$ and by the nesting property of $\tau$ which is guaranteed by \Cref{thm:SCCinDecrGraph}, we have that during the rest of the algorithm, we have $|\tau(X^u) - (X^v)| < |X|$ where $X^u$ (resp. $X^v$) denotes the set in $\mathcal{V}$ that contains $u$ (resp. $v$).

Thus, a while-loop iteration where $u$ or $v$ participate in $C$ adds to $\mathbb{E}[\mathcal{T}'(\pi_{s,t}, \tau)]$ at most
\[
|X| \cdot \frac{(c+2)1000 n \log^4 n}{|X|\delta}w_G(u,v) = \frac{(c+2)1000 n \log^4 n}{\delta}w_G(u,v).
\]
Since by \Cref{lma:EStreeprob} each vertex only occurs during $2 \lg n$ while-loop iterations in $C$, we can establish the final bound.
\end{proof}

Combining the fact that $(\mathcal{V}, \tau)$ is a $\textsc{GeneralizedTopologicalOrder}(G')$ at all stages and $G' \subseteq G$ where $\tau$ has the nesting property, combined with \Cref{lma:smallDiam} and \cref{lma:ProbOfEi}, we derive \Cref{lma:correctness}.

\begin{lemma}[Running Time]
\label{lma:runningTime}
The algorithm to maintain $(\mathcal{V}, \tau)$ requires at most expected time
\[
O\left(\sum_{j=0}^{\lceil\lg \delta \rceil} \; \sum_{k=0}^{2^{j+3} \lceil \lg \delta\rceil} T_{SSSP}(m_{j,k}, n/2^{j}, \delta, 2) + m \log^2 n\right)
\]
where $\sum_j m_{j,k} \leq 16m \lceil \lg \delta\rceil$.
\end{lemma}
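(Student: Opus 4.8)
The plan is to account for every source of work performed by the algorithm and show that each source fits into the claimed sum. There are three such sources: (i) the one-time cost of $\textsc{Init}()$ (\Cref{alg:init}); (ii) the cost of maintaining the generalized topological order $(\mathcal{V},\tau)$ on the decremental graph $G'$ together with the bookkeeping of centers and edge deletions; and (iii) the cost of running and updating all the $\alpha$-approximate $\delta$-restricted $\mathcal{SSSP}$ data structures $\mathcal{A}_s$ for $s \in S$, plus the cost of the $\textsc{OutSeparator}$ and $\textsc{Partition}$ calls inside \Cref{alg:diameterVio}. The first item is handled directly by \Cref{lma:partitionFull}: each of the $O(\log\delta)$ outer iterations of $\textsc{Init}$ invokes $\textsc{Partition}$ on a disjoint collection of induced subgraphs $G[X]$ with $|X|\le n/2^i$, so summing the bound of \Cref{lma:partitionFull} over the $O(\log\delta)$ levels gives a term that is subsumed by the claimed expression (the $m\log^2 n$ additive term and a blow-up by a $\log\delta$ factor in the number of $\mathcal{SSSP}$ calls, which is why the inner index runs to $2^{j+3}\lceil\lg\delta\rceil$ rather than $2^{j+1}$). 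The second item is $O(m\log^4 n)$ by \Cref{thm:SCCinDecrGraph} (maintaining $(\mathcal{V},\tau)$ with the nesting property), and is absorbed into the $m\log^2 n$-type additive term up to the logarithmic slack hidden in the statement.

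The heart of the argument is item (iii), and the main obstacle is bounding the total edge-mass on which the $\mathcal{SSSP}$ data structures run. I would proceed as follows. First, fix a "level" $j$: an $\mathcal{SSSP}$ structure $\mathcal{A}_s$ counts toward level $j$ if it was initialized for a center $s$ chosen in an SCC $X$ with $|X|\in[n/2^{j+1},n/2^j)$, so $\mathcal{A}_s$ runs on $G[X]$ with at most $n/2^j$ vertices, giving the second parameter $n/2^j$ in $T_{SSSP}$. Next I need two counting facts. The number of distinct $\mathcal{SSSP}$ structures at level $j$ is controlled by how many times a new center gets sampled into an SCC of that size range; by \Cref{lma:EStreeprob}, each vertex participates in the set $C$ in only $O(\log\delta)$ while-loop iterations and its hosting SCC halves every second such participation, so the number of level-$j$ centers (equivalently, the number of $k$-indices at level $j$) is $O(2^j\log\delta)$ in expectation, matching the upper limit $2^{j+3}\lceil\lg\delta\rceil$ of the inner sum. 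Second, I need that the total number of edges handled across all level-$j$ structures, $\sum_k m_{j,k}$, is $O(m\log\delta)$: this is because the induced subgraphs $G[X]$ on which level-$j$ structures run — at the moment each such structure is created — are \emph{almost} disjoint (they are pieces of a decremental partition), and each edge can appear in such a piece at most $O(\log\delta)$ times over the course of the algorithm, again by the "halving" guarantee of \Cref{lma:EStreeprob} (once an edge's SCC has halved past a given size threshold it never returns to that threshold). Summing over the $O(\log\delta)$ levels then gives $\sum_{j}\sum_k m_{j,k} = O(m\log^2\delta)$, and more precisely $\sum_j m_{j,k}\le 16m\lceil\lg\delta\rceil$ for each fixed $k$, as claimed.

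Finally I would note that the two separator-procedure calls ($\textsc{OutSeparator}$ in \Cref{lne:DelSep}/\ref{lne:DelSepRev} and $\textsc{Partition}$ in \Cref{lne:partitionXC}) inside each while-loop iteration cost $O(|E(G'[X])|\log n)$ and $\sum_j(\dots)$ as in \Cref{lma:partitionFull} respectively; charging $\textsc{OutSeparator}$'s cost to the vertices of $C$ via \Cref{lma:EStreeprob} and invoking \Cref{lma:partitionFull} for the $\textsc{Partition}$ calls shows these are dominated by the same expression. Putting the three items together yields
\[
O\!\left(\sum_{j=0}^{\lceil\lg \delta \rceil}\;\sum_{k=0}^{2^{j+3}\lceil\lg\delta\rceil} T_{SSSP}(m_{j,k},\,n/2^j,\,\delta,\,2) \;+\; m\log^2 n\right)
\]
with $\sum_j m_{j,k}\le 16m\lceil\lg\delta\rceil$, which is exactly the stated bound. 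I expect the delicate point to be making the "each edge appears in an $\mathcal{A}_s$-graph at only $O(\log\delta)$ levels" claim fully rigorous — in particular reconciling it with \Cref{rmk:LaminarFamilyOfGraphs}, which allows an $\mathcal{A}_s$ invoked on $G[X]$ to actually run on a larger $G[Y]$; one must check that the $Y$'s chosen still form a laminar-type family so that the per-edge charge stays $O(\log\delta)$ rather than blowing up.
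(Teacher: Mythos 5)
Your proposal is correct and follows essentially the same route as the paper: decompose the cost into initialization (charged to \Cref{lma:partitionFull}), the $\mathcal{SSSP}$ structures, and the per-iteration separator/partition calls, with \Cref{lma:EStreeprob} supplying the halving/charging argument that bounds how many times each vertex and edge can appear in a structure at a given size level. Your extra worry about \Cref{rmk:LaminarFamilyOfGraphs} is not needed here, since in this algorithm each $\mathcal{A}_s$ is genuinely run on the induced subgraph $G[X]$ fixed at the time $s$ is sampled, and your explicit accounting of the $O(m\log^4 n)$ cost of \Cref{thm:SCCinDecrGraph} is a point the paper's own proof glosses over rather than a flaw in your argument.
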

\begin{proof}
Again, our proof crucially relies on the following lemma.

\participation*

We first observe that the initialization procedure described in \Cref{subsec:InitATO} initializes $G'$ in $O(m)$ time and then runs $O(\log n)$ iterations where in each iteration it invokes the procedure $\textsc{Partition}(\cdot)$ on a set of disjoint subgraphs of $G$ to update $G'$. By \Cref{lma:partitionFull}, we can implement all of these calls in time 
\[
O\left(\log n \left(\sum_{j=0}^{\lceil\lg \delta \rceil} \; \sum_{k=0}^{2^{j+1}} T_{SSSP}(m_{j,k}, n/2^{j}, \delta, 2) + m \log n\right)\right).
\]
The latter term in this expression subsumes the time spend on updating $G'$ once a separator is returned.

Next, let us bound the time spend on maintaining the $\mathcal{SSSP}$ data structures as described in \Cref{subsec:maintainSSSPs}. It is here that we use \Cref{lma:EStreeprob}: we have that initially each vertex (and edges) is in exactly one data structure. Further, every second time a vertex $v$ participates in $C$ as computed in \Cref{lne:DelSep} or \Cref{lne:DelSepRev}, the SCC it is contained in in $G'$ is halved in size (i.e. in the number of vertices). Since new $\mathcal{SSSP}$ data structures are initialized on the new SCCs that are contained in the $C$ set, we have that each vertex $v$, in expectation, only participates $2$ times in an $\mathcal{SSSP}$ structure with running time $T_{SSSP}(m_{j,k}, n/2^{j}, \delta, 2) + m \log n$ for any $j$. Since each edge is incident to only two vertices, we have a similar argument on edges and can therefore bound the total amount of time spend on $\mathcal{SSSP}$ data structures by
\[
O\left(\sum_{j=0}^{\lceil\lg \delta \rceil} \; \sum_{k=0}^{2^{j+1}} T_{SSSP}(m_{j,k}, n/2^{j}, \delta, 2)\right).
\]
where $\sum_j m_{j,k} \leq 4m$.

Finally, let us bound the time spend in calls to \Cref{alg:diameterVio}. We observe that each while-loop iteration takes time 
\[
O\left(\sum_{j=0}^{\lceil\lg \delta \rceil} \; \sum_{k=0}^{2^{j+1}} T_{SSSP}(m'_{j,k}, n'/2^{j}, \delta, 2) + m' \log n\right)
\]
where $\sum_j m'_{j,k} \leq 4m'$ for $m' = |E_G(C)|$ and $n' = |V_G(C)|$. This follows since the $\textsc{OutSeparator}(\cdot)$ procedure runs in time almost-linear in the number of edges incident to $C$ and afterwards the call of the procedure $\textsc{Partition}(\cdot)$ which dominates the costs of the procedure is only on the graph $G'$ induced by the vertices in $C$. Thus, this insight follows straight-forwardly from \Cref{lma:sepIntro} and \Cref{lma:partitionFull} and the insight that the cost of the remaining operations is subsumed in the bounds.

Finally, we again use \Cref{lma:EStreeprob} which gives that summing over all while-loop iterations is at cost at most 
\[
O\left(\sum_{j=0}^{\lceil\lg \delta \rceil} \; \sum_{k=0}^{2^{j+2} \lceil \lg \delta\rceil} T_{SSSP}(m_{j,k}, n/2^{j}, \delta, 2) + m \log^2 n\right)
\]
where $\sum_j m_{j,k} \leq 8m \lceil \lg \delta\rceil$. Combining the parts of the algorithm, we thus get the total bound.
\end{proof}

\begin{lemma}[Success Probability]
\label{lma:successRate}
The algorithm reports $\textbf{Fail}$ with probability at most $2n^{-c-1}$. 
\end{lemma}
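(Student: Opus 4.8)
The plan is to identify every source of randomness in the algorithm that can trigger a $\textbf{Fail}$ and union-bound over all of them. Recall that $\textbf{Fail}$ is reported only by the subroutine $\textsc{OutSeparator}(\cdot)$ (when the sampled exponential $X \geq d$, see \Cref{lne:failSeparatorIntro} of \Cref{alg:OutseparatorIntro}) and, indirectly, by the $\textsc{Partition}(\cdot)$ procedure, which by \Cref{lma:partitionFull} fails with probability at most $e^{-\zeta}$. In every invocation in our algorithm we call these routines with the parameter $\zeta = (c+2)\log n$, so each individual call fails with probability at most $e^{-(c+2)\log n} = n^{-(c+2)}$. Thus the whole argument reduces to bounding the number of such calls by roughly $n^2$ and applying a union bound to get failure probability $O(n^{-c})$.

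The key steps, in order, are the following. First I would count the calls made by the initialization procedure $\textsc{Init}()$ (\Cref{alg:init}): it runs $O(\log\delta)$ outer iterations, and in each iteration it calls $\textsc{Partition}(\cdot)$ on a collection of vertex-disjoint induced subgraphs, so at most $n$ times per iteration; hence $O(n\log\delta) = \tilde O(n)$ calls, each of which by \Cref{lma:partitionFull} internally makes at most $O(\sum_j 2^{j+1}) = O(\delta)$ invocations of the $\mathcal{SSSP}$-dependent machinery but only \emph{one} top-level coin flip that can report $\textbf{Fail}$ — actually I should be careful here: the failure event of $\textsc{Partition}$ is itself already a single bad event with probability $e^{-\zeta}$, so I can treat each call to $\textsc{Partition}$ as one atomic failure source. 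Second, I would count the calls made inside \Cref{alg:diameterVio} ($\textsc{ResolveDiameterViolations}$): each while-loop iteration makes exactly one call to $\textsc{OutSeparator}(\cdot)$ (\Cref{lne:DelSep} or \Cref{lne:DelSepRev}) and one call to $\textsc{Partition}(\cdot)$ (\Cref{lne:partitionXC}). By \Cref{lma:EStreeprob}, each vertex participates in the set $C$ in at most $2\lceil\lg\delta\rceil$ while-loop iterations in expectation; but for a high-probability bound I need a worst-case (or high-probability) cap on the number of iterations, not just an expectation. The cleanest route is to observe that the total number of while-loop iterations over the whole run is at most $O(n\log n)$ deterministically: each iteration removes the edge set $E_{Sep}\cup E'_{Sep}$ from $G'$ and, by the choice $|B^{out}(r,d)|\le |X|/2$, strictly decomposes the SCC $X$ containing $t$, and SCCs can only split $O(n\log n)$ times total. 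Third, multiply: the total number of calls to $\textsc{OutSeparator}$ and $\textsc{Partition}$ across $\textsc{Init}$ and all invocations of $\textsc{ResolveDiameterViolations}$ is $O(n\log^2 n) = \tilde O(n)$, in fact safely $\le n^2$ for large $n$; each fails with probability $\le n^{-(c+2)}$; so by a union bound the algorithm reports $\textbf{Fail}$ with probability at most $n^2 \cdot n^{-(c+2)} = n^{-c}$, and a slightly more careful accounting of the constant (separating the $\textsc{OutSeparator}$ failures from the $\textsc{Partition}$ failures, each contributing at most $n^{-c-1}$) yields the claimed bound $2n^{-c-1}$.

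The main obstacle I anticipate is the bookkeeping around the number of while-loop iterations of \Cref{alg:diameterVio}: \Cref{lma:EStreeprob} is phrased in expectation, so to get a high-probability statement I must instead use the deterministic ``SCCs split at most $O(n\log n)$ times'' bound and verify that every while-loop iteration indeed causes a genuine split (this relies on $t$ and $\textsc{Center}(X)$ ending up in different SCCs after the separator edges are deleted, which is exactly what $\textsc{OutSeparator}$ guarantees via Property 1 of \Cref{lma:sepIntro} together with the diameter-violation condition that put us in the loop). A secondary subtlety is that $\textsc{Partition}$ is itself a randomized procedure that recursively calls $\mathcal{SSSP}$ structures, and I must make sure I am only union-bounding over the single $e^{-\zeta}$ failure event promised by \Cref{lma:partitionFull} and not double-counting internal randomness; since $\mathcal{SSSP}$ is required only to work against a non-adaptive adversary and its own correctness is folded into the ``runs correctly w.h.p.'' guarantees elsewhere, this is legitimate but worth stating explicitly. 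Everything else is a routine union bound.
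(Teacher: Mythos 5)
Your proposal is correct and takes essentially the same approach as the paper: identify $\textsc{OutSeparator}(\cdot)$ and $\textsc{Partition}(\cdot)$ as the only sources of $\textbf{Fail}$, note that each call fails with probability at most $e^{-\zeta}=n^{-(c+2)}$ for $\zeta=(c+2)\log n$, and union-bound over roughly $n$ calls of each kind. The paper obtains the stated constant by observing that each while-loop iteration refines $\mathcal{V}$ and a partition of $V$ can be refined at most $n-1$ times, so there are at most $n-1$ calls to each procedure and the union bound gives $2(n-1)\cdot n^{-(c+2)}\le 2n^{-c-1}$ — this is exactly the deterministic split-counting argument you invoke, just with the tight count $n-1$ in place of your looser $O(n\log n)$.
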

\begin{proof}
We point out that we can only get a $\textbf{Fail}$ due to procedures $\textsc{OutSeparator}(\cdot)$ and $\textsc{Partition}(\cdot)$. 

Since each separator found in the while-loop in \Cref{lne:loop} refines $\mathcal{V}$, we can bound the number of while-loop iterations in the course of the algorithm by $n-1$. Thus, we make at most $n-1$ calls to procedures $\textsc{OutSeparator}(\cdot)$ and $\textsc{Partition}(\cdot)$. Each of the former calls returns $\textbf{Fail}$ with probability at most $n^{-(c+2)}$ and each of the latter with probability at most $n^{-(c+2)}$. 

Taking a union bound over all events, the lemma follows.
\end{proof}

Finally, let us put everything together and prove our main theorem.

\atoBundle*

\begin{proof}
We maintain a collection of $40 c \log n$ independent $\mathcal{ATO}(G, 2\alpha\delta)$ instances 
\[
(\mathcal{V}_1, \tau_1), (\mathcal{V}_2, \tau_2), \dots ,(\mathcal{V}_{40c \log n}, \tau_{40c \log n})
\]
as described earlier in this section and let $\mathcal{S}$ denote the collection of these instances.

The total running time to maintain these $\mathcal{ATO}(G, 2\alpha\delta)$'s is clearly bounded by the term given in equation \ref{eq:totalRunningTime} by \Cref{lma:runningTime}.

Now, since by \Cref{lma:correctness}, each $\mathcal{ATO}(G, 2\alpha\delta)$ has expected quality $q = \frac{(c+2)20000 n \log^5 n}{\delta}$, we have by Markov's inequality and a simple Chernoff bound, that for each shortest path $\pi_{s,t}$ in $G$ at some stage $t$, we have that there exists an $i$, such that $\mathcal{T}(\pi_{s,t}, \tau_i) \leq 2q$ with probability at least $1 - e^{- 40 c \log n/8} = 1 - n^{-5 c}$. Since $c > 1$, we have that the probability that any shortest-path at any stage fails, is at most $1-n^{-c}/2$ by union bounding over at most $n^2$ stages and at most $n^2$ shortest-paths, for $n$ large enough. Moreover, the total probability that any instance returns $\textbf{Fail}$ is at most $n^{-c}/2$ by \Cref{lma:successRate} and a union bound over the instances. Thus, we have established that with probability at least $1-n^{-c}$, $\mathcal{S}$ forms an $\mathcal{ATO}(G,2\alpha \delta)$-bundle of quality $2q$ as defined in \Cref{def:ATObundle}.
\end{proof}

\section{A SSSP Algorithm for Dense Graphs}
\label{sec:SSSPDense}

We now give a proof of  \Cref{thm:TopologicalOrderMaintenanceOverview} which implies our main result,  \Cref{thm:ContributionSSSPResult}, as a corollary. Our proof is in two steps: we first show how to implement an $\alpha$-approximate $\delta$-restricted $\mathcal{SSSP}$ as described in \Cref{def:SSSPGeneric} given access to approximate topological orders. We then show how to bootstrap the reductions to maintains different $\mathcal{SSSP}$ data structures to cover all depths.

\subsection{\texorpdfstring{$\alpha$-approximate $\delta$-restricted $\mathcal{SSSP}$}{SSSP} via Maintaining an Approximate Topological Order}
\label{subsec:alphaDeltaSSSPReduction}

The main objective of this section is to prove the following theorem which gives a reduction from $(1+\epsilon)$-approximate $\delta$-restricted $\mathcal{SSSP}$ to approximate topological orders. In this theorem, we only assume access to an $\mathcal{ATO}(G, \eta_{diam})$ denoted by $(\mathcal{V}, \tau)$ where we assess the quality individually for each path. If the quality for a certain tuple is below a threshold $q$, we show how to exploit the approximate topological order to maintain the distance estimate for the tuple efficiently, otherwise we provide no guarantees.

\begin{restatable}{theorem}{ssspSimple}
\label{thm:SSSPEfficient}
Given $G=(V,E,w)$, a decremental weighted digraph, a source $r \in V$, a depth threshold $\delta > 0$, a quality parameter $q$, an approximation parameter $\epsilon > 0$, and access to $(\mathcal{V}, \tau)$ an $\mathcal{ATO}(G, \eta_{diam})$.

Then, there exists a deterministic data structure that maintains a distance estimate $\widetilde{\mathbf{dist}}(r,v)$ for every vertex $v \in V$ such that at each stage of $G$, $\mathbf{dist}_G(r,v) \leq \widetilde{\mathbf{dist}}(r,v)$ and if $\mathbf{dist}_G(r,v) \leq \delta$ and $\mathcal{T}(\pi_{r,v},\tau) \leq q \cdot \delta + n$, then 
\[
\widetilde{\mathbf{dist}}(r,v) \leq \mathbf{dist}_G(r,v) + \eta_{diam} + \epsilon\delta.
\]
The total time required by this structure is 
\[
O(n\delta q \log n/\epsilon + n^2 \log n)
\]
\end{restatable}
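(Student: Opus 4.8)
The plan is to generalize the DAG algorithm sketched in \Cref{subsec:DAGalgo} to the setting where the ``topological order'' is the approximate one $(\mathcal{V}, \tau)$. I would run a modified Even–Shiloach tree on the contracted graph $G/\mathcal{V}$, rooted at $X^r$, up to depth roughly $(1+O(\epsilon))\delta$. For each node $Y$ in $G/\mathcal{V}$ and each scale $0 \le j \le \lg n$, partition the in-neighborhood of $Y$ into buckets $B_j(Y) = \{X \in \mathcal{N}^{in}(Y) : 2^j \le |\tau(Y) - \tau(X)| < 2^{j+1}\}$; note $|B_j(Y)| \le 2^{j+2}$ (each $\tau$-value in the window of length $2^{j+1}$ on each side corresponds to at most one node, since $\tau$ is injective on nodes up to interval length). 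The key modification is that the ES-tree is only allowed to relax an edge out of a bucket-$B_j$ vertex into $Y$ when $\widetilde{\mathbf{dist}}(r, \cdot)$ is divisible by $\lceil 2^j \epsilon \delta / (q \delta + n)\rceil$ — i.e. the granularity is set relative to $q$, not to $n$ as in the pure-DAG case. Since the ES-tree's work for a fixed distance value and fixed edge is $O(1)$ and distance estimates increase monotonically up to $O(\delta)$, the total update time is
\[
O\!\left(\sum_{Y}\sum_{j=0}^{\lg n} |B_j(Y)|\cdot \frac{\delta}{\lceil 2^j \epsilon \delta/(q\delta+n)\rceil}\cdot \log n\right)
= O\!\left(\sum_Y \sum_j 2^{j}\cdot \frac{\delta(q\delta+n)}{2^j\epsilon\delta}\cdot\log n\right)
= O\!\left(\frac{n(q\delta+n)\log^2 n}{\epsilon}\right),
\]
which, absorbing one $\log$ and splitting $q\delta + n$, matches the claimed $O(n\delta q\log n/\epsilon + n^2\log n)$ (modulo the exact logarithmic accounting, which I would tighten by using a heap-based ES-tree implementation as in \cite{shiloach1981line} so there is no extra $\log n$ for the priority queue beyond what is stated).

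For correctness, fix $v$ with $\mathbf{dist}_G(r,v)\le\delta$ and $\mathcal{T}(\pi_{r,v},\tau)\le q\delta+n$. First observe that contracting each $X\in\mathcal{V}$ changes distances by at most a negative additive error of $\sum_X |X|\eta_{diam}/n = \eta_{diam}$ by property~\ref{prop:ContractLittle} of \Cref{def:ATO} (exactly the argument given in the overview after \Cref{def:ATO}), so $\mathbf{dist}_{G/\mathcal{V}}(X^r, X^v) \ge \mathbf{dist}_G(r,v) - \eta_{diam}$, and overestimates transfer back to $G$ unchanged. It remains to bound the additive error the throttling introduces in $G/\mathcal{V}$. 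An edge from a bucket-$B_{j+1}$ node contributes at most $2^{j+1}\epsilon\delta/(q\delta+n)$ additive error (and zero when the ceiling evaluates to $1$), while it forces $|\tau(Y)-\tau(X)|\ge 2^j$ units of $\mathcal{T}$; charging $(q\delta+n)/(2\epsilon\delta)$ units of $\mathcal{T}$ per error unit and using $\mathcal{T}(\pi_{r,v},\tau)\le q\delta + n$, the total throttling error along $\pi_{r,v}$ is at most $2\epsilon\delta$. Folding this and the $\eta_{diam}$ from contraction into the distance estimate (and running the ES-tree to depth $\delta + \eta_{diam} + O(\epsilon)\delta$ so $v$'s estimate is not prematurely set to $\infty$), and rescaling $\epsilon$ by a constant, gives $\widetilde{\mathbf{dist}}(r,v)\le \mathbf{dist}_G(r,v) + \eta_{diam} + \epsilon\delta$ as required. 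The lower bound $\mathbf{dist}_G(r,v)\le\widetilde{\mathbf{dist}}(r,v)$ is immediate since the ES-tree maintains an actual path (in $G/\mathcal{V}$, hence in $G$ after un-contracting, noting contracted components have the requisite small diameter so a genuine path exists).

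There are a few points requiring care. The ES-tree must run on the \emph{dynamic} contracted graph $G/\mathcal{V}$, where $\mathcal{V}$ is itself only refining over time (property~\ref{prop:VUpdate}); when a node $Y$ splits into $Y_1,\dots,Y_k$, the nesting discipline of $\tau$ from property~\ref{prop:tauUpdate} guarantees the new $\tau$-values stay inside the old interval, so bucket memberships only ``shrink toward'' the correct scale and no vertex's bucket index ever increases — this is what keeps the amortized work bounded and is the analogue of the SCC-decomposition bookkeeping in \cite{gutenberg2020decremental}. One must also handle that edges incident to a split node need to be re-bucketed; since each vertex changes component $O(\log n)$ times and each such change touches its incident edges, this adds only an $O(m\log n)$-type term, comfortably inside $O(n^2\log n)$ after contraction (the contracted graph has $O(n^2)$ edges, but we can keep a simple-graph representation keeping one min-weight parallel edge per node pair). \textbf{The main obstacle} I anticipate is not any single inequality but getting the interaction between the throttling schedule and the dynamically-refining partition exactly right: one has to argue that relaxations that were ``skipped'' because of divisibility but then become enabled after a component split (which changes $\widetilde{\mathbf{dist}}$ values and $\tau$ values simultaneously) still do not blow up either the error budget or the running time. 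I expect this is handled by re-doing the error-charging argument against the \emph{current} $(\mathcal{V},\tau)$ at the moment each edge of $\pi_{r,v}$ is last relaxed, using that $\mathcal{T}(\pi_{r,v},\tau)$ with respect to the final (most refined) $\tau$ still obeys the hypothesized bound — which is precisely what property~\ref{prop:TauTotal}/\Cref{def:ATObundle} is set up to give.
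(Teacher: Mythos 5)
Your overall architecture is the same as the paper's (a throttled Even--Shiloach tree on the contracted simple graph, with in-neighbors bucketed by topological-order difference and relaxations gated by divisibility of the distance estimate), and your error-charging and running-time calculations are essentially right modulo minor accounting (your granularity $\lceil 2^j\epsilon\delta/(q\delta+n)\rceil$ yields an extra $1/\epsilon$ on the $n^2\log n$ term; the paper uses $\lceil 2^j\epsilon/q\rceil$ and the bound $\mathcal{T}\leq q\delta+n$ only at the end). However, the issue you flag as ``the main obstacle'' and defer is exactly where your proposal breaks, and the fix is not the one you guess. You bucket by $|\tau(Y)-\tau(X)|$, the difference of the interval \emph{start points}. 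Under property~\ref{prop:tauUpdate} of \Cref{def:ATO} this quantity is not monotone as $\mathcal{V}$ refines: if $\tau(X)<\tau(Y)$ and $X$ splits so that the relevant sub-node gets a larger start point, the difference \emph{decreases}, while if $Y$ splits it can \emph{increase}. Consequently your claim that ``no vertex's bucket index ever increases'' is false in one direction and, more importantly, the correctness argument needs the opposite guarantee: to bound the additive error of edge $(x,u)$ by the granularity of its \emph{final} bucket, you must ensure that at every earlier stage at which $X^u$'s estimate crossed a multiple of that granularity, $X^x$ was already in a bucket of index at most the final one (otherwise the edge was scanned too rarely at those stages and the estimate may already exceed the certificate by the larger, earlier granularity). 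A bucketing quantity that can be larger at earlier stages than at the final stage destroys this, and the error can then not be charged against $\mathcal{T}(\pi_{r,v},\tau)$ evaluated at the current stage.

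The paper's resolution is to bucket by $\chi(X,Y,\tau)$, the gap between the \emph{closest endpoints} of the two intervals (equation~\eqref{eq:chi}), which is monotonically non-decreasing under refinement precisely because sub-intervals nest inside their parent; combined with a lazy re-bucketing rule it guarantees (Claim~\ref{clm:almostInTheRightBucket}) that every node always sits in the correct bucket or the one below it, never above. Since $\chi\leq\mathcal{T}$ edge-wise, the error charge still closes against property~\ref{prop:TauTotal}, while $|B_j(Y)|=O(2^j)$ still holds for the running time. So the gap in your proposal is concrete: you need to replace the start-point difference by the interval gap $\chi$ (or prove an equivalent monotonicity statement), and the re-doing of the charging argument ``against the current $\tau$ at the moment each edge is last relaxed'' that you hope for does not substitute for this, because the throttling decisions made at \emph{earlier} stages are what determine the current estimate.
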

\begin{restatable}{remark}{ssspRemark}
\label{rmk:ssspRemark}
Technically, we require the approximate topological order $(\mathcal{V}, \tau)$ to encode changes efficiently and pass them the SSSP data structure. Since the SSSP data structure is updated only through delete operations, we require, that with each edge update, the data structure receives changes to $(\mathcal{V}, \tau)$ since the last stage. More precisely, we require that the user passes a set of pointers to each set $Y$ that occurred in $\mathcal{V}$ at the previous stage (denoted $\mathcal{V}^{OLD}$), but did not occur in $\mathcal{V}$ at the current stage (denoted $\mathcal{V}^{NEW}$), i.e. each $Y \in \mathcal{V}^{OLD} \setminus \mathcal{V}^{NEW}$. Additionally, we require with each such $Y$ that was split into subsets $Y_1, Y_2, \dots, Y_k \in \mathcal{V}^{NEW}$ that form a partition of $Y$, pointers to each new element $Y_i$. We further require worst-case constant query time of $\tau$, and each element $Y \in \mathcal{V}$ (for any version) can be queried for its size in constant time and returns its vertex set in time $O(|Y|)$. For the rest of the paper, this detail will be concealed in order to improve readability.
\end{restatable}

Before we show how to implement such a data structure, let us emphasize that the above theorem directly implies \Cref{thm:SSSPEfficientIntro} that we introduced in the overview. It can further also be used to derive the following corollary which is at the heart of our proof in the next section. Its proof is rather straight-forward and can be found in 
\Cref{sec:ssspfromATOfullProofCor}.

\begin{restatable}{corollary}{corSSSPrestrictedFromATO}
\label{cor:SSSPtoATO}
Given $G=(V,E,w)$, a decremental weighted digraph, a source $r \in V$, a depth threshold $\delta > 0$, an approximation parameter $\epsilon > 0$, and access to a collection $\mathcal{S} = \{ S_i \}_{1 \leq i \leq \mu}$ for $\mu = \lfloor\lg \delta\rfloor-1$ where each $\mathcal{S}_i$ forms an $\mathcal{ATO}(G, 2^i, 40 c\log n)$-bundle of quality $q_i$. Then, there exists an implementation for $(1+\epsilon)$-approximate $\delta$-restricted $\mathcal{SSSP}$ where $T_{SSSP}(n,m,\delta, \epsilon) = O(n
(\max_{1 \leq i \leq \mu} \{\frac{\delta q_i}{2^i}\} + n) \log^3 n / \epsilon
^2)$.
\end{restatable}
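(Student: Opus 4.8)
The plan is to assemble the $\delta$-restricted structure from $O(\log\delta)$ scale-restricted copies of \Cref{thm:SSSPEfficient}, one dyadic distance scale at a time, together with a plain Even--Shiloach tree that handles the smallest distances (for which no bundle in $\mathcal{S}$ is fine-grained enough), and to answer every query by taking the minimum over all of them. We may assume $\epsilon\le\tfrac12$ (otherwise run with $\epsilon$ replaced by $\tfrac12$, which is stronger). First, from $r$ I maintain an ordinary forward ES-tree in $G$ and an ordinary backward ES-tree in $\overleftarrow G$, each to depth $D_0:=\Theta(1/\epsilon)$; both are exact, hence $(1+\epsilon)$-approximate, for all pairs within distance $D_0$, and together cost $O(mD_0)=O(m/\epsilon)\le O(n^2/\epsilon)$, inside the claimed budget. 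Second, for every dyadic scale $2^j$ with $D_0\le 2^j\le\delta$ I pick $i(j):=\lfloor\lg(\epsilon\,2^j/2)\rfloor$; since $\epsilon\le\tfrac12$ and $\mu=\lfloor\lg\delta\rfloor-1$, a short check gives $1\le i(j)\le\mu$, so the bundle $\mathcal{S}_{i(j)}$ is available and $2^{i(j)}\le\tfrac{\epsilon}{2}2^j$. For this scale, for each of the $\ell=40c\log n$ approximate topological orders $(\mathcal{V}^{(j,k)},\tau^{(j,k)})\in\mathcal{S}_{i(j)}$ and for each of the two graphs $G,\overleftarrow G$, I instantiate \Cref{thm:SSSPEfficient} with source $r$, depth threshold $\delta_j:=2^{j+1}$, quality parameter $q_{i(j)}$, and approximation parameter $\epsilon/4$. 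This is legitimate because an $\mathcal{ATO}(G,2^{i(j)})$ is simultaneously an $\mathcal{ATO}(\overleftarrow G,2^{i(j)})$ — both the weak-diameter condition and the quantity $\mathcal{T}(\cdot,\tau)$ of \Cref{def:ATO} are oblivious to edge orientation, and the bundle property of \Cref{def:ATObundle} for $G$ applied to the pair $(t,s)$ yields the one for $\overleftarrow G$ applied to $(s,t)$ — so a single bundle $\mathcal{S}_{i(j)}$ serves both halves of \Cref{def:SSSPGeneric}.

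I then set $\widetilde{\mathbf{dist}}(r,v)$ to the minimum of the ES-tree estimate and all forward-instance estimates for $v$, and $\widetilde{\mathbf{dist}}(v,r)$ to the minimum over the backward structures. Each of these $\tilde O(1)$ estimates is monotone and never underestimates, so the running minimum is maintained in $O(1)$ amortized time and queried in $O(1)$ time, and a path is returned by delegating to whichever structure currently attains the minimum (which answers in $O(|P|)$ time by \Cref{thm:SSSPEfficient}, and trivially for the ES-tree). The incremental change feed for $(\mathcal{V}^{(j,k)},\tau^{(j,k)})$ demanded by \Cref{rmk:ssspRemark} is emitted directly by the (only-refining) dynamic tuples of the bundle.

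The overestimate $\mathbf{dist}_G(s,t)\le\widetilde{\mathbf{dist}}(s,t)$ is immediate since every ingredient overestimates. Fix $(s,t)\in(\{r\}\times V)\cup(V\times\{r\})$ with $\mathbf{dist}_G(s,t)=d\le\delta$. If $d\le D_0$ the ES-tree reports $d$ exactly. Otherwise $d\in[2^j,2^{j+1})$ for a scale $j$ with $D_0\le 2^j\le\delta$, and the bundle property of $\mathcal{S}_{i(j)}$ supplies an index $k$ with $\mathcal{T}(\pi_{s,t},\tau^{(j,k)})\le q_{i(j)}\,w(\pi_{s,t})+n=q_{i(j)}\,d+n<q_{i(j)}\,\delta_j+n$; so the hypotheses of \Cref{thm:SSSPEfficient} hold for the matching instance (on $G$ or on $\overleftarrow G$), and it returns an estimate at most $d+2^{i(j)}+\tfrac{\epsilon}{4}\delta_j\le d+\tfrac{\epsilon}{2}2^j+\tfrac{\epsilon}{2}2^j\le d+\epsilon d=(1+\epsilon)d$. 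Taking the minimum places $\widetilde{\mathbf{dist}}(s,t)$ in $[d,(1+\epsilon)d]$.

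For the running time, the ES-trees cost $O(m/\epsilon)$, and scale $j$ spawns $2\ell=O(c\log n)$ instances of \Cref{thm:SSSPEfficient}, each of cost $O(n\,\delta_j q_{i(j)}\log n/\epsilon+n^2\log n)=O(n\,2^{i(j)}q_{i(j)}\log n/\epsilon^2+n^2\log n)$, using $\delta_j=\Theta(2^{i(j)}/\epsilon)$. Summing over the $O(\log\delta)$ scales (which use pairwise-distinct indices $i(j)\in[1,\mu]$) and over the $O(c\log n)$ copies, and matching the result to the stated $O\big(n(\max_{1\le i\le\mu}\{\delta q_i/2^i\}+n)\log^3 n/\epsilon^2\big)$, is where care is required: one folds the scale count into the polylog and bounds the telescoped sum $\sum_i 2^i q_i$ against $\max_i\delta q_i/2^i$, which rests on the fact that every index actually used satisfies $2^{i(j)}=O(\epsilon\,2^j)=O(\epsilon\delta)$, i.e. stays well below $\delta$. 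The hard part, in short, is the coordinated choice of the bundle index $i(j)$ per scale $2^j$ so that simultaneously (i) the contraction error $2^{i(j)}$ is absorbed into the multiplicative slack at distances $\approx 2^j$, (ii) $i(j)$ stays inside the range $[1,\mu]$ provided by the hypothesis, and (iii) the per-scale cost sums to the claimed maximum-form bound; the two boundary regimes — $2^j\approx 1/\epsilon$, where the plain ES-tree must take over, and $2^j\approx\delta$, where $i(j)$ presses against $\mu$ — are where this bookkeeping is most delicate.
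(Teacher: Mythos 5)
Your construction follows the same blueprint as the paper's own proof: for each dyadic distance scale, instantiate \Cref{thm:SSSPEfficient} from $r$ on $G$ and on $\overleftarrow{G}$ for every ATO in a suitably chosen bundle, cover the distances below $\Theta(1/\epsilon)$ (where no bundle is fine enough) with a plain ES-tree, and maintain the pointwise minimum of all the monotone overestimates. Your correctness accounting — additive error $2^{i(j)}+\tfrac{\epsilon}{4}\delta_j\le\epsilon\,2^j$ absorbed multiplicatively at distances in $[2^j,2^{j+1})$, and the observation that an $\mathcal{ATO}$ of $G$ is also one of $\overleftarrow{G}$ — is sound.

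The gap is exactly the step you flag as delicate, and it does not go through as you describe. From $2^{i(j)}\le\epsilon 2^j/2\le\epsilon\delta$ you cannot conclude $2^{i(j)}q_{i(j)}=O\bigl(\delta q_{i(j)}/2^{i(j)}\bigr)$: that inequality is equivalent to $4^{i(j)}=O(\delta)$, i.e.\ $2^{i(j)}=O(\sqrt{\delta})$, which fails for your top scales whenever $\delta\gg 1/\epsilon^2$ (take $\epsilon$ constant, $\delta=n$, and $j$ near $\lg\delta$, so $i(j)\approx\lg(\epsilon\delta)$). Consequently, for an arbitrary quality sequence $q_1,\dots,q_\mu$ your per-scale cost $O\bigl(n\,2^{i(j)}q_{i(j)}\log n/\epsilon^2\bigr)$ is not dominated by $n\max_i\{\delta q_i/2^i\}\log^3 n/\epsilon^2$; a sequence with one large $q_i$ at $i\approx\lg(\epsilon\delta)$ and tiny values elsewhere produces a polynomial discrepancy. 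The paper reaches the stated form by a different accounting: the instances attached to bundle $\mathcal{S}_i$ are run with depth threshold $\delta/(\epsilon 2^i)$, so each instance costs $O\bigl(n\frac{\delta}{\epsilon 2^i}q_i\log n/\epsilon+n^2\log n\bigr)$ by construction, and summing over the $O(\log^2 n)$ instances yields the claimed expression term by term. With your (correctness-natural) pairing, the honest bound you actually prove is $O\bigl(n(\max_{i:\,2^i\le\epsilon\delta}2^i q_i+n)\log^3 n/\epsilon^2\bigr)$, which suffices for every later application of the corollary (there $q_i=\tilde{O}(n/2^i)$, so $2^iq_i=\tilde{O}(n)$), but it is not the stated bound, and the bridge you propose between the two forms is false in general. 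To match the literal statement you must either adopt the paper's depth assignment per bundle or restrict/justify the quality sequence before telescoping.
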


Let us now describe the implementation of a data structure $\mathcal{E}_r$ that stipulates the guarantees given in \Cref{thm:SSSPEfficient}. 
Since the proof that this is indeed a valid implementation of \Cref{thm:SSSPEfficient} is quite similar to the proof sketch we give in \Cref{sec:overview}, we refer the reader to \Cref{sec:proofSSSP}.

\paragraph{Initialization.} Throughout the algorithm, we define $\delta_{max} = \lceil (1+\epsilon)\delta + \epsilon n/q \rceil$ and define the complete graph $\mathcal{H} = (\mathcal{V}, \mathcal{V}^2, w)$\footnote{Here, we are again slightly abusing notation by referring to $\mathcal{V}$ as partition and node set, however, context and the fact that this implicitly refers to a one-to-one correspondence between partition sets and nodes ensures that no ambiguity arises.} with weight function 
\[
w(X,Y) = \inf\{ w(x,y) | (x,y) \in E(X,Y)\}
\]
for $X, Y \in \mathcal{V}$. We use the convention that the infimum of the empty set is $\infty$. We use $\mathcal{H}$ to avoid dealing explicitly with $G/ \mathcal{V}$ which is a multi-graph, instead in $\mathcal{H}$ we use the same node set with simple edges as the infimum over weights of the multi-edges (even if there is no such edge). 

We use a standard min-heap data structure\footnote{See for example \cite{cormen2009introduction}.} $Q_{X,Y}$ over the set $E(X,Y)$ for each ordered pair $(X,Y)$ to maintain the weight $w(X,Y)$. We henceforth denote by $Q_{X,Y}.\textsc{MinValue}$ the value $w(X,Y)$ and by $Q_{X,Y}.\textsc{MinElem}$ a corresponding edge $(x,y)$ with $x \in X, y \in Y$, and use the convention of denoting the node in $\mathcal{V}$ that contains vertex $x \in V$ by $X^x$. We initialize the data structure $\mathcal{E}_r$ by constructing $\mathcal{H}$ and by running Dijkstra's algorithm\footnote{See \cite{cormen2009introduction} for an efficient implementation.} from $X^r$ on $\mathcal{H}$. We then initialize a distance estimate $\widetilde{\mathbf{dist}}(X^r, Y)$ for each $Y \in \mathcal{V}$ to $\mathbf{dist}_{\mathcal{H}}(X^r,Y)$. If we have $\widetilde{\mathbf{dist}}(X^r,Y) > \delta_{max}$ at any point in the algorithm, we set it to $\infty$. Further, we also maintain the distance estimates $\widetilde{\mathbf{dist}}(r, u)$ for each $u \in V$ equal to $\widetilde{\mathbf{dist}}(X^r, X^u) + \eta_{diam}$, i.e. every time we increase $\widetilde{\mathbf{dist}}(X^r, X^u)$, we also increase $u$'s distance estimate\footnote{We will show that $\widetilde{\mathbf{dist}}(X^r, X^u)$ is a monotonically increasing value over time.}. This allows us to henceforth focus on the distance estimates of nodes which is easier to describe. We also store the corresponding shortest-path tree $T$ truncated at distance $\delta_{max}$ that serves as a certificate of the distance estimates.

Finally, we partition for each node $X \in \mathcal{V}$, the in-neighbors set in $\mathcal{H}$ of $X$ into different buckets based on their $\tau$-distance: for each $X$, we initialize bucket $B_{-1}(X) = \{X\}$ and for $0 \leq j \leq \lg n$ we initialize the bucket $B_j(X)$ to 
\[ 
\{ 2^{j} \leq \chi(X,Y, \tau) < 2^{j+1} | Y \in N_{\mathcal{H}}^{in}(X), X \neq Y\}
\]
where we define 
\begin{equation} \label{eq:chi}
 \chi(X ,Y, \tau) \stackrel{\text{def}}{=}          \begin{cases}
                    \tau(Y) - (\tau(X) + |X| - 1) & \text{if } \tau(X) < \tau(Y) \\
                    \chi(Y, X, \tau) & \text{otherwise}
                 \end{cases}
\end{equation}
that is $\chi(\cdot)$ is similar to $\mathcal{T}(\cdot)$ (in fact $ \chi(X ,Y, \tau) \leq \mathcal{T}(X, Y, \tau)$), however, as $\tau$ maps nodes $X$ and $Y$ to disjoint intervals, $\mathcal{T}(\cdot)$ measures the distance between the starting points of the intervals, while $\chi(\cdot)$ measures the distance between the intervals (i.e. the closest endpoints of the intervals).

At any stage, we let $B_{\leq j}(X) = \bigcup_{j' \leq j} B_{j'}(X)$. We store each set $B_j(X)$ explicitly as a linked list, store for each $Y \in B_j(X)$ a pointer to the bucket, and maintain the buckets to partition the in-neighbors of each $X$. 

\paragraph{Handling Edge Deletions.} The edge deletion procedure takes two parameters: the edge to be deleted $(u,v)$ and a collection of tuples $U$ that encode refinements of $\mathcal{V}$ during this stage. To handle the update, we initialize a min-heap $Q = \emptyset$ that keeps track of the nodes in $\mathcal{H}$, that cannot be reached from $X^r$ in the truncated shortest-path tree $T$ (i.e. whose certificate for the current distance estimate was compromised).

We start our update procedure by processing updates to $(\mathcal{V}, \tau)$ (check the remark of \Cref{thm:SSSPEfficient} for a description of these updates encoded by $U$). For any node $X \in \mathcal{V}^{OLD} \setminus \mathcal{V}^{NEW}$, that was split into subsets $X_1, X_2, \dots, X_k \in V^{NEW}$ (i.e. for every tuple $(X, X_1, X_2, \dots, X_k) \in U$), we query for each $X_i$, its size. Then, we let the largest node $X_i$ \emph{inherit} the original node $X$ (that is the nodes are equal in our data structure at this stage although the partition sets are not), and create a new node $X_{i'}$ in $\mathcal{H}$ for other $i' \neq i$, and new heap structures $Q_{X_{i'}, Y}$ for every $Y \in \mathcal{V}$. Then for each $X_{i'}$, $i' \neq i$, we scan each edge $(x,y)$ in $E(X_{i'}, V \setminus X_{i'})$, remove it from the heap $Q_{X_i, X^y}$ and add it to the new heap $Q_{X_{i'}, X^y}$. We also initialize the distance estimates for each $X_{i'}$, $\widetilde{\mathbf{dist}}(X^r, X_{i'})$ to take the value $\widetilde{\mathbf{dist}}(X^r, X)$, also for $X_i$. We then find the edge $(w,x)$ in $T$ where $X = X^x$. Clearly, we now have that $X_{i'} = X^x$ for some ${i'}$ and we connect $X_{i'}$ in the tree $T$ since this edge is now a certificate for $X_{i'}$'s distance estimate. The rest of the nodes, i.e. the nodes $X_1, X_2, \dots, X_{{i'}-1}, X_{{i'}+1}, \dots, X_k$, we add to $Q$ since we do not have a certificate for them yet.

Finally, when \emph{all} the node splits for the current stage where processed, we update the buckets $B_j(Y)$ for all $j$ and $Y$ to \emph{almost} stipulate the initialization rules. We point out that all edges that have to be assigned to a different bucket have to be incident to $X_1, X_2, \dots, X_k$ by property \ref{prop:tauUpdate}. Then, for each $X_{i'}$ (also $X_i$), we compute $j$ to be the largest integer such that some number in $|X_{i'}|, |X_{i'}|+1, \dots, |X| - 1$ is divisible by $2^j$. Then, we update all nodes in $B_{\leq j+1}(X_{i'})$ by scanning and reassigning them, and similarly reassign $X_{i'}$ to a new bucket for each $Y$ where $X_{i'} \in B_{\leq j}(Y)$. 

Finally, when all node splits are processed, the node set of $\mathcal{H}$ reflects the current $\mathcal{V}$, and we can delete the edge $(u,v)$ from $\mathcal{H}$ by deleting it from the heap it is contained in. If $(u,v)$ was equal to $Q_{X^u, X^v}.\textsc{MinElem}$, and $(X^u, X^v) \in T$, we delete it from $T$ and insert $X^v$ into $Q$.

We then rebuild our certificate $T$: we take the node $Y$ from $Q$ with the smallest distance estimate $\widetilde{\mathbf{dist}}(X^r, Y)$ until $Q$ is empty or the smallest distance estimate $\infty$. Now, let $j$ be the largest integer such that $ \widetilde{\mathbf{dist}}(X^r, Y)$ is divisible by $\lceil 2^{j} \cdot \frac{\epsilon}{q} \rceil$. We then check if there exists a node $X \in B_{\leq j}(Y)$ such that
\[
    \widetilde{\mathbf{dist}}(X^r, X) + Q_{X,Y}.\textsc{MinValue} \leq \widetilde{\mathbf{dist}}(X^r, Y).
\]
In this case, the edge $(x,y) = Q_{X,Y}.\textsc{MinElem}$ serves as a certificate that the distance from $X^r$ to $Y$ is at most $\widetilde{\mathbf{dist}}(X^r, Y)$ and therefore we add $(x,y)$ to $T$. If there exists no such vertex $X$, then we increase the value $\widetilde{\mathbf{dist}}(X^r, Y)$ by one or to $\infty$ if it is currently at least $\delta_{max}$ and reinsert $Y$ and children $Z_1, Z_2, \dots, Z_k$ of $Y$ in $T$ into $Q$ (after deleting the edges $(Y, Z_i)$ from $T$). This completes the description of the algorithm. Again, we refer the reader interested in the proof of \Cref{thm:SSSPEfficient} to \Cref{sec:proofSSSP}.

\subsection{Bootstrapping an Algorithm for Unrestricted Depth}
\label{subsec:bootstrapDense}

Next, let us prove the following theorem which show is a detailed version of \Cref{thm:TopologicalOrderMaintenanceOverview}. Combined with \Cref{cor:SSSPtoATO} (where we set the depth threshold parameter $\delta$ to $Wn$), this immediately implies our main result, \Cref{thm:ContributionSSSPResult}.

\begin{theorem}
\label{thm:TopologicalOrderMaintenanceBootstrap}
For any $0 \leq i \leq \lg(Wn)$, given a decremental digraph $G=(V,E,w)$, we can maintain a hierarchy $\mathcal{S} =\{\mathcal{S}_i\}_i$ where each $\mathcal{S}_i$ is a $\mathcal{ATO}(G, 2^i, 40 c \log n)$-bundle of expected quality $\tilde{O}(n/2^i)$. The algorithm runs in total expected update time $O(c^5 n^2 \log^{17} n \lg^{5}(Wn))$ against a non-adaptive adversary and is correct with probability at least $1-n
^{-c+2}$ for any failure probability parameter $c \geq 2$.
\end{theorem}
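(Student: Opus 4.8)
I would prove \Cref{thm:TopologicalOrderMaintenanceBootstrap} by a bottom-up induction on the scale $i$, maintaining the whole hierarchy $\mathcal{S}=\{\mathcal{S}_i\}_{0\le i\le\lg(Wn)}$ at once. The bundle $\mathcal{S}_i$ is produced by a single invocation of \Cref{lma:reductionATOtoSSSP} with depth threshold $\delta_i := 2^{i-2}$ (so that, with the built-in approximation parameter $\alpha=2$, the output is an $\mathcal{ATO}(G,2\alpha\delta_i,40c\log n)=\mathcal{ATO}(G,2^i,40c\log n)$-bundle), and the $2$-approximate $\delta_i$-restricted $\mathcal{SSSP}$ subroutine that \Cref{lma:reductionATOtoSSSP} consumes is exactly the data structure delivered by \Cref{cor:SSSPtoATO}, fed with the collection $\{\mathcal{S}_1,\dots,\mathcal{S}_{\mu}\}$ of already-built lower bundles, $\mu=\lfloor\lg\delta_i\rfloor-1<i$. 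For the base case — the scale $i=0$, and any scales below the threshold at which \Cref{cor:SSSPtoATO} becomes available — I would run \Cref{lma:reductionATOtoSSSP} with a plain Even--Shiloach tree in the role of the $2$-approximate $\delta$-restricted $\mathcal{SSSP}$ structure; at scale $0$ this degenerates entirely, since then every set of $\mathcal{V}$ is a singleton, $\tau$ is an arbitrary bijection onto $[0,n)$, and $\mathcal{T}(\pi_{s,t},\tau)\le n\cdot|\pi_{s,t}|\le n\cdot w(\pi_{s,t})$ as all weights are at least $1$, so quality $n=\tilde O(n/2^0)$ is free. The quality bound in general is then immediate from \Cref{lma:reductionATOtoSSSP}: the bundle has quality $\frac{(c+2)40000\,n\log^5 n}{\delta_i}=\tilde O(n/2^i)$, and this never degrades under the induction because the output quality of \Cref{lma:reductionATOtoSSSP} depends only on $\delta_i$ and on the subroutine being $2$-approximate, not on the \emph{quality} of the $\mathcal{SSSP}$ structure it is given.

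\textbf{Running time.} The key observation — and the reason this avoids the $n^{o(1)}$ blow-up alluded to in \Cref{sec:overview} — is that the lower-scale bundles are maintained \emph{once} and shared, so the cost charged to constructing $\mathcal{S}_i$ is precisely the running time of \Cref{lma:reductionATOtoSSSP} with $T_{SSSP}$ set to the \emph{update cost} of the structure of \Cref{cor:SSSPtoATO} built \emph{on top of} those bundles, namely $\tilde O(n^2)$ up to $\mathrm{poly}(\log n,1/\epsilon)$ factors and with no recursive multiplier; consequently the per-scale costs add instead of compounding. A second dimension of recursion appears because \Cref{lma:reductionATOtoSSSP} runs its $\mathcal{SSSP}$ subroutine on vertex-induced subgraphs $G[X]$ coming from the SCC decomposition of the monitored graph $G'$, and \Cref{cor:SSSPtoATO} in turn needs lower-scale bundles of those $G[X]$; I would therefore maintain, for each such $X$, a full lower-scale hierarchy of $G[X]$ by the same recursion. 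These $X$ form a laminar family, and by \Cref{lma:EStreeprob} each vertex lies in only $O(\log n)$ of the associated $\mathcal{SSSP}$ structures with the hosting SCC size halving every second time, so $\sum_X|X|^2=\tilde O(n^2)$; hence the total work at each fixed scale, summed over all subgraphs, remains $\tilde O(n^2)$. Summing over the $O(\lg(Wn))$ scales and compounding the polylogarithmic overheads of \Cref{lma:reductionATOtoSSSP}, \Cref{cor:SSSPtoATO}, \Cref{thm:SSSPEfficient} and \Cref{thm:SCCinDecrGraph}, together with the $\lceil\lg\delta\rceil=O(\lg(Wn))$ factors from \Cref{lma:EStreeprob} and the $\Theta(c)$ factors from bundle size and quality, gives the stated $O(c^5 n^2\log^{17}n\,\lg^5(Wn))$ total.

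\textbf{Correctness probability.} Each SCC split refines $\mathcal{V}$, so a single instance of \Cref{lma:reductionATOtoSSSP} performs at most $n-1$ while-loop iterations, whence there are at most $O(n\lg(Wn))$ invocations of that lemma across the whole hierarchy (one per scale and per subgraph in the laminar family). Each reports $\textbf{Fail}$ with probability at most $2n^{-c-1}$ (\Cref{lma:successRate}) and is otherwise correct with probability $1-n^{-c}$; a union bound over all invocations, absorbing also the failure probabilities inherited from \Cref{cor:SSSPtoATO}, yields overall correctness with probability at least $1-n^{-c+2}$ for $c\ge 2$.

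\textbf{Main obstacle.} The delicate part is entirely in the running-time accounting. I expect the real effort to lie in (i) making the two-dimensional recursion — over scales \emph{and} over the laminar family of induced subgraphs — non-compounding in its polylogarithmic overhead, i.e. the ``careful bootstrapping'' that pins the exponent of $\log n$ at a fixed constant rather than letting it grow with $\lg(Wn)$; and (ii) verifying $\sum_X|X|^2=\tilde O(n^2)$ uniformly across all recursion levels, for which \Cref{lma:EStreeprob} is the essential lever. A related subtlety is checking that \Cref{cor:SSSPtoATO}, when fed bundles of quality $\tilde O(n/2^i)$, really delivers an $\mathcal{SSSP}$ structure of update time $\tilde O(n^2)$ with no residual dependence on $\delta$, since otherwise the sum over scales would carry a spurious factor of $W$.
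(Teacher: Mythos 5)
Your high-level architecture matches the paper's (build each $\mathcal{S}_i$ via \Cref{lma:reductionATOtoSSSP}, implement its $2$-approximate restricted $\mathcal{SSSP}$ subroutine via \Cref{cor:SSSPtoATO} on already-built bundles), and you correctly locate the difficulty in the recursion over the laminar family of induced subgraphs. But you do not resolve that difficulty, and the plan you do state — ``maintain, for each such $X$, a full lower-scale hierarchy of $G[X]$ by the same recursion'' — is exactly the compounding scheme the paper warns against: each level of the laminar recursion multiplies the cost by the polylogarithmic overheads of \Cref{lma:reductionATOtoSSSP} and \Cref{cor:SSSPtoATO} (the $40c\log n$ bundle copies, the $16c\,m\log^2 n$ edge multiplicity, etc.), and since the laminar depth can be $\Theta(\log n)$, this yields $(\operatorname{polylog} n)^{\Theta(\log n)}$ overhead, not $\log^{17} n$. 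Your alternative suggestion in the first paragraph — that sharing the globally maintained lower bundles removes the recursive multiplier — also fails on its own: by \Cref{rmk:LaminarFamilyOfGraphs} one may indeed serve a subgraph call by running on all of $G$ with $\mathcal{S}_{i-1}$, but each such call then costs $\tilde O(n^2)$, and \Cref{lma:reductionATOtoSSSP} spawns up to $\sum_j 2^{j+3}c\log^2 n = \Omega(n)$ such calls, giving $\tilde O(n^3)$. The paper's actual proof is an induction on $n$ (not on the scale $i$) built around a dichotomy that combines both of your half-ideas: subgraphs with at least $n/2^\gamma$ vertices, of which there are only $O(2^\gamma\operatorname{polylog})$, are served by running \Cref{cor:SSSPtoATO} on all of $G$ using the shared $\mathcal{S}_{i-1}$; smaller subgraphs recurse via the induction hypothesis, and because each genuine recursion shrinks the vertex count by at least $2^\gamma$ with $\gamma = \Theta(\lg(\operatorname{polylog}(Wn)))$, the per-step polylog overhead is absorbed by the quadratic size decrease and the induction closes with a \emph{fixed} constant in the exponent of $\log n$. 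This balancing of $\gamma$ is the missing idea.

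A secondary gap is in your probability accounting: when \Cref{lma:reductionATOtoSSSP} is applied recursively to a small subgraph $H$, its success probability is $1-|V(H)|^{-c}$, which is useless for a union bound over $\operatorname{poly}(n)$ instances when $|V(H)|$ is tiny. The paper handles this by boosting the failure parameter of the recursive instances to $c\cdot 4\log n$ (which is also where the extra $(\log n)^5$ in the $c^5$-dependence enters the running-time recurrence); your union bound implicitly assumes every invocation fails with probability polynomially small in the global $n$, which is not what the lemma provides.
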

\begin{proof}
In order to prove our theorem formally, we need to fix the constants hidden by the big-$O$ notation in some of our statements. We therefore henceforth denote the constant hidden by \Cref{cor:SSSPtoATO} to maintain the SSSP data structure by $c_{SSSP}$, the constant hidden in \Cref{lma:reductionATOtoSSSP} to obtain an $\mathcal{ATO}$-bundle from an SSSP data structure by $c_{SSSP \to ATO}$ and finally, we denote the constant hidden in the theorem that we want to prove by $c_{Total}$ where we require that $c_{Total} \geq (c_{SSSP \to ATO} \cdot c_{SSSP})^2 \cdot 2^{49}$.

Without further due, let us prove the theorem by induction on $n$, the number of vertices in graph $G$. The base case with $n \leq 1$ is easily established since there are no paths in a graph of only one vertex thus we obtain arbitrarily good quality and the running time is a small constant (at least smaller than $c_{Total}$).

Let us now give the inductive step $n \mapsto n+1$: for each $i$, we iteratively construct an $\mathcal{ATO}(G, 2^i, 40 c \log n)$-bundle $\mathcal{S}_i$ as described in \Cref{lma:reductionATOtoSSSP}. Thus, we have to show how to implement a $2$-approximate $2^{i-2}$-restricted $\mathcal{SSSP}$ data structure required in the reduction (note that for $i \leq 2$ this task is trivial, so we omit handling it as special levels).

Note that each data structure $\mathcal{SSSP}$ that we are asked to implement for an $\mathcal{ATO}$-bundle $\mathcal{S}_i$ at level $i$ is run on a different graph $H \subseteq G$. To obtain an efficent algorithm, we will implement the data structure differently depending on the size of such $H$. If $H$ has at least $n/2^\gamma$ vertices for some $\gamma = \Theta( \lg \log Wn)$ that we fix later, we call $H$ a \emph{large} graph. Otherwise, we say $H$ is \emph{small}. Now, we implement $\mathcal{SSSP}$ as follows:
\begin{itemize}
    \item if $H$ is \emph{small}, then we use the induction hypothesis, find a $\mathcal{ATO}$-bundle $\mathcal{S}'$ for $H$ and and invoke \Cref{cor:SSSPtoATO} on $\mathcal{S}'$. We note that we need to set the parameter that controls the failure probability for $\mathcal{S}'$ to $c \cdot 4 \log n$ to ensure that it succeeds with high probability (this is since $\mathcal{S}'$ only succeeds with probability polynomial in $|V(H)|$ which might be very small).
    \item if $H$ is \emph{large}, we exploit \Cref{rmk:LaminarFamilyOfGraphs} which states that when the reduction asks to maintain approximate distances on some graph $H \subseteq G$, it is sufficient to maintain distance estimates on any graph $F$ such that $H \subseteq F \subseteq G$ and in particular, it is ok to simply run on the entire graph $G$. Therefore, we simply use $\mathcal{S}_{i-1}$ in combination with \Cref{cor:SSSPtoATO} and maintain distances in $G$. 
\end{itemize}

Let us now analyze the total running time. We start by calculating the running time required by each level $i$ separately. For some fixed $i$, we have that by \Cref{lma:reductionATOtoSSSP} we have running time 
\begin{equation} \label{eq:totalCostBoundATO}
c_{SSSP \to ATO} \left(\sum_{j=0}^{\lceil\lg 2^{i-2} \rceil} \; \sum_{k=0}^{2^{j+3} c \log^2 n} T_{SSSP}(m_{j,k}, n/2^{j}, 2^{i-2}, 2) + n^2 \log^3 n\right)
\end{equation}
where $\sum_j m_{j,k} \leq 16c \cdot m \log^2 n$ for all $k$, to maintain $\mathcal{S}_i$.

Let us analyze the terms $T_{SSSP}(m_{j,k}, n/2^{j}, 2^{i-2}, 2)$ based on whether $j < \gamma$ or not (i.e. depending on how the SSSP data structure was implemented):
\begin{itemize}
    \item if $j < \gamma$: then, by the induction hypothesis, we require time at most 
        \begin{align*}
    \begin{split}
    & c_{Total} (c 4 \log n)^5 \cdot \left(\frac{n}{2^j}\right)^2 \log^{17}(n) \lg^{5}(Wn) 2^{-2j} \\
    & < c_{Total} \cdot 2^8 \cdot c^5 n^2 \log^{22}(n) \lg^{5}(Wn)  2^{-j}2^{-\gamma}
    \end{split}
    \end{align*}
    to maintain the new $\mathcal{ATO}$-bundle $\mathcal{S}'$ on the graph $H$ and again by the induction hypothesis we have that the bundle has quality $\frac{(4\log n c +2)40000n \log^5 n 2^{-j}}{2^{i-3}} \log^3 n$. 
    
    Thus, maintaining SSSP on $H$ using $\mathcal{S}'$ as described in \Cref{cor:SSSPtoATO} can be done in time 
    \[
     c_{SSSP}((c \cdot 4\log n+2)40000 \cdot 2^3)(n^2 \log^8 n 2^{-2j}) < c_{SSSP} \cdot c \cdot 2^{22} (n^2 \log^9 n 2^{-j} 2^{-\gamma}).
    \]
    Combined, we obtain that we can implement the entire SSSP data structure with total running time at most
    \begin{align*}
    \begin{split}
    &c_{Total} \cdot 2^8 \cdot c^5 n^2 \log^{22}(n) \lg^{5}(Wn) 2^{-j}2^{-\gamma} + c_{SSSP} \cdot c \cdot 2^{22} (n^2 \log^9 n 2^{-j}2^{-\gamma})\\
    &\leq c_{Total} \cdot c_{SSSP} \cdot c^5 \cdot 2^{-j}2^{-\gamma} \cdot 2^{22} \cdot n^2 \log^{22}(n) \lg^{5}(Wn).
    \end{split}
    \end{align*}

    Combining these bounds and summing over all \emph{small} graph terms in equation \ref{eq:totalCostBoundATO}, we obtain that the total contribution is at most
    \begin{align*}
    \begin{split}
    & c_{SSSP \to ATO} \cdot \lg(nW) \cdot (2^3 c\log^2 n) \left(c_{Total} \cdot c_{SSSP} \cdot c^5 \cdot 2^{-\gamma} \cdot 2^{22} \cdot n^2 \log^{22}(n) \lg^{5}(Wn)\right) \\
    &\leq c^6 \cdot c_{SSSP \to ATO} \cdot c_{Total} \cdot c_{SSSP} \left(2^{22} \cdot n^2 \log^{24}(n) \lg^{6}(Wn) 2^{-\gamma}\right).
    \end{split}
    \end{align*}
    This completes the analysis of the small graph data structures.
    \item otherwise ($j \geq \gamma$): then, we run the SSSP structure from \Cref{cor:SSSPtoATO} on $\mathcal{S}_{i-1}$ which gives running time at most
    \[
    c_{SSSP}\left(n 2^i \cdot \frac{(c+2)40000n \log^5 n}{2^{i-3}} \log^3 n\right) \leq c_{SSSP} \cdot c ( 2^{20} \cdot n^2 \log^8 n)
    \] 
    where we used $c \geq 2$. Since there are at most $c_{SSSP \to ATO} \cdot c \cdot (2^3 \lg(nW)\log^2 n 2^{\gamma})$ terms for large graphs, where $j \geq \gamma$, we have that the total cost of all $\mathcal{SSSP}$ data structures on \emph{large} graphs is at most
    \begin{align*}
    \begin{split}
   &c_{SSSP \to ATO} \cdot c \cdot (2^3 \lg(nW)\log^2 (n) 2^{\gamma}) \cdot \left(c_{SSSP} \cdot c ( 2^{20} \cdot n^2 \log^8 n) \right)\\
   &= c_{SSSP \to ATO} \cdot c^2 \cdot c_{SSSP} \cdot \left(2^{23} \cdot n^2 \log^{10} (n)  \lg(nW) 2^{\gamma}\right)
    \end{split}
    \end{align*}
\end{itemize}

It now only remains to choose $\gamma$ and combine the two bounds. We set $\gamma = 24 \lceil\lg (c^2 \cdot c_{SSSP \to ATO} \cdot c_{SSSP} \cdot \lg^{3}(Wn) \log^{7}(n)) \rceil$, and obtain that the total running time summed over large and small graphs is at most
\begin{align*}
\begin{split}
& c_{SSSP \to ATO}^2 \cdot c_{SSSP}^2 \cdot c^4 \cdot 2^{48}(n^2 \log^{17} n \lg^{4}(Wn))) \\ &+ \frac{c^4 c_{Total} (n^2 \log^{17} n \lg^{3}(Wn))}{2}
\\ &\leq  c^4 c_{Total} (n^2 \log^{17} n \lg^{4}(Wn))
\end{split}
\end{align*}
where we used our initial assumption on the size of $c_{Total}$.

Finally, we point out that there are at most $\lg(nW)$ levels $i$ and therefore, the total update time is at most
\[
 c_{Total} (c^4 n^2 \log^{17} n \lg^{5}(Wn))  
\]
as required.

Further, we point out that every $\mathcal{ATO}$-bundle $\mathcal{S}_i$ that was constructed runs correctly with high probability at least $1-n^{-c}$, while every $\mathcal{ATO}$-bundle $\mathcal{S}'$ is maintained correctly with probability at least $1-n^{-4c}$ (recall that we set the failure parameter of these data structures to $c \cdot 4 \log n$). Noting that we only have $\lg(nW)$ instances of the former bundles, and at most $n^3$ of the latter, taking a simple union bound over the events that any bundle instance fails gives a total failure probability of at most $n^{-c+2}$. 
\end{proof}

\section{A SSSP Algorithm for Sparse Graphs}
\label{sec:SSSPsparse}

Finally, we give a construction that is efficient in sparse graphs. Again, we prove \Cref{thm:ContributionSSSPSparseResult} in two steps: we first show how to reduce $\mathcal{SSSP}$ to maintaining an approximate topological order and then we bootstrap the approach.

\subsection{\texorpdfstring{$\alpha$-approximate $\delta$-restricted $\mathcal{SSSP}$}{SSSP} via Maintaining an Approximate Topological Order}

In this section, we give the following theorem that takes an $\mathcal{ATO}(G, \eta_{diam}$ and shows how to implement $(1+\epsilon)$-approximate $\delta$-restricted $\mathcal{SSSP}$. As in the previous section, we first prove a simpler theorem and then derive the $\mathcal{SSSP}$ data structure as a corollary.

\begin{restatable}{theorem}{ssspSparse}
\label{thm:SSSPEfficientSparse}
Given $G=(V,E,w)$, a decremental weighted digraph, a source $r \in V$, a depth threshold $\delta > 0$, a quality parameter $q$ such that $\delta q \geq n$, an approximation parameter $\epsilon > 0$, and access to $(\mathcal{V}, \tau)$ an $\mathcal{ATO}(G, \eta_{diam})$. 

Then, there exists a data structure that maintains a distance estimate $\widetilde{\mathbf{dist}}(r,v)$ for every vertex $v \in V$ such that at each stage of $G$, $\mathbf{dist}_G(r,v) \leq \widetilde{\mathbf{dist}}(r,v)$ and if $\mathbf{dist}_G(r,v) \leq \delta$ and $\mathcal{T}(\pi_{r,v},\tau) \leq q \cdot \delta + n$, then 
\[
\widetilde{\mathbf{dist}}(r,v) \leq (1+\epsilon)\mathbf{dist}_G(r,v) + \eta_{diam}.
\]
The expected total time required by this structure is 
\[
O((m\delta q/n^{-1/3} + \delta q n^{2/3}) \log^3 n \log Wn/\epsilon) 
\]
and the data structure runs correctly with probability at least $1 - n^{-c}$ for any constant $c > 0$. 
\end{restatable}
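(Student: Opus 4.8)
The plan is to follow the dense construction of \Cref{thm:SSSPEfficient}, but to replace the $\tau$-bucketed Even--Shiloach tree (whose cost scales with $n^2$ because it is run on the complete graph $\mathcal{H}$) by a two-level scheme: first build and maintain a decremental $(1+\epsilon,h)$-hopset $F$ for the contracted graph $G/\mathcal{V}$, valid for exactly the pairs whose shortest path satisfies $\mathcal{T}(\pi_{r,v},\tau)\le q\delta+n$, and then run an ordinary hop-bounded ES-tree on $(G/\mathcal{V})\cup F$. As in \Cref{thm:SSSPEfficient} we work with nodes of $\mathcal{V}$ and report $\widetilde{\mathbf{dist}}(r,v):=\widetilde{\mathbf{dist}}(X^r,X^v)+\eta_{diam}$. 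Property \ref{prop:ContractLittle} of the $\mathcal{ATO}$ gives $\mathbf{dist}_G(r,v)-\eta_{diam}\le\mathbf{dist}_{G/\mathcal{V}}(X^r,X^v)\le\mathbf{dist}_G(r,v)$, so the $+\eta_{diam}$ offset ensures the required overestimate $\widetilde{\mathbf{dist}}(r,v)\ge\mathbf{dist}_G(r,v)$, while on the upper side only the hopset's multiplicative error and the rounding error remain to be controlled.

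For the hopset, sample each vertex of $V$ into a set $S$ independently with probability $\tilde\Theta(1/d_{\mathrm{tree}})$ for a depth parameter $d_{\mathrm{tree}}$ fixed later; by a Chernoff/hitting-set bound this costs a logarithmic oversampling and accounts for the claimed $n^{-c}$ failure probability. From every $s\in S$ we maintain an ES-tree inside the subgraph of $G/\mathcal{V}$ induced by the nodes $Y$ with $|\tau(X^s)-\tau(Y)|\le w$, truncated at depth $d_{\mathrm{tree}}$, where $w$ is a $\tau$-bandwidth parameter; for every node reached we insert a hopset edge from $X^s$ weighted by the current estimate. These trees are fed monotone updates only: deletions in $G$, and refinements of $\mathcal{V}$ (a node split only increases distances in $G/\mathcal{V}$), so the usual ES-tree machinery applies, with the refinement bookkeeping handled exactly as in the remark following \Cref{thm:SSSPEfficient}. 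Finally, apply the standard weight-rounding trick at scale $\delta$ (round edge weights up to multiples of $\lceil\epsilon\delta/h\rceil$ and rescale) and run one ES-tree on $(G/\mathcal{V})\cup F$ to the resulting depth $O(h/\epsilon)$.

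For correctness of the hopset, fix $v$ with $\mathbf{dist}_G(r,v)\le\delta$ and $\mathcal{T}(\pi_{r,v},\tau)\le q\delta+n$, let $\pi'$ be the image of $\pi_{r,v}$ in $G/\mathcal{V}$, and note $w(\pi')\le w(\pi_{r,v})$ and $\mathcal{T}(\pi',\tau)\le\mathcal{T}(\pi_{r,v},\tau)$ since intra-node edges contribute $0$. Cut $\pi'$ into $O(\delta/d_{\mathrm{tree}})$ maximal subpaths of weight $\le d_{\mathrm{tree}}/2$; whp each contains a vertex of $S$. For consecutive hitters $s_j,s_{j+1}$, either the subpath between them stays inside the $\tau$-band $w$ around $X^{s_j}$ and within $G/\mathcal{V}$-distance $d_{\mathrm{tree}}$ — so $X^{s_{j+1}}$ lies in $s_j$'s tree and $(X^{s_j},X^{s_{j+1}})\in F$ with weight no more than the true distance — or the subpath leaves the band, which charges $\Omega(w)$ against $\mathcal{T}(\pi',\tau)$ and hence happens at most $O((q\delta+n)/w)$ times, each replaced by a genuine $\le d_{\mathrm{tree}}$-hop piece of $\pi'$. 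Altogether $\pi_{r,v}$ has a $(1+o(1))$-approximate representative in $(G/\mathcal{V})\cup F$ of hop $h=O\!\left(\delta/d_{\mathrm{tree}}+(q\delta+n)d_{\mathrm{tree}}/w\right)$, and the rounding adds $O(\epsilon\delta)$ additive error, absorbed into the multiplicative $(1+\epsilon)$ after rescaling $\epsilon$.

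For the running time, the hopset ES-trees cost $\tilde O(m\,w)$ in expectation — each edge of $G/\mathcal{V}$ is in the band of $\tilde O(w\,|S|/n)=\tilde O(w/d_{\mathrm{tree}})$ sources and is touched $O(d_{\mathrm{tree}})$ times per tree — while $|E(F)|=\tilde O(n\,w/d_{\mathrm{tree}})$ and the final hop-bounded ES-tree costs $\tilde O((m+|E(F)|)h/\epsilon)$. Using $\delta q\ge n$ so that $q\delta+n=O(q\delta)$, setting $w\sim q\,d_{\mathrm{tree}}^2$ makes $h=\tilde O(\delta/d_{\mathrm{tree}})$, and optimizing $d_{\mathrm{tree}}$ around $n^{1/3}$-type powers balances the terms to the claimed $\tilde O\!\big((m\delta q\,n^{-1/3}+\delta q\,n^{2/3})\log^{3}n\log(Wn)/\epsilon\big)$. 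I expect the main obstacle to be not the hop count or the time accounting — those follow established hopset arguments, cf.\ \cite{bernstein2017deterministic} — but making the hopset a well-defined decremental object under $\mathcal{ATO}$ refinements: one must argue that splitting nodes of $\mathcal{V}$ only increases hopset-edge weights and never invalidates band membership of edges already inside a tree, so the monotone ES-trees stay correct without full rebuilds, and that re-pointing hopset edges after a split (and re-running a tree on the smaller component when a source is lost) is affordable, charged by a potential argument in the spirit of \Cref{lma:EStreeprob}.
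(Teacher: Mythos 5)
Your architecture is exactly the paper's: sample a set $S$ of sources, run from each an ES-tree on $G/\mathcal{V}$ restricted to a $\tau$-band around the source, turn the reached sampled nodes into weighted hopset edges, bound the number of ``band escapes'' along a shortest path by charging $\Omega(\text{bandwidth})$ against $\mathcal{T}(\pi_{r,v},\tau)\le q\delta+n$, and finish with one hop-bounded ES-tree on $(G/\mathcal{V})\cup F$; the paper's proof (Lemmas~\ref{lma:timeSpaceHopset} and~\ref{lma:hopboundHopset}) does precisely this, and you also correctly isolate the delicate point of keeping the trees monotone under $\mathcal{ATO}$ refinements, which the paper packages as ``decremental refining graphs'' in \Cref{thm:GEStreeOnCondensation}.

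There is, however, one genuine gap: your segmentation and hitting-set step are phrased in terms of \emph{weight}, and this does not survive in the weighted setting the theorem is stated for. Cutting $\pi'$ into subpaths of weight $\le d_{\mathrm{tree}}/2$ does not guarantee each subpath contains $\Omega(d_{\mathrm{tree}})$ vertices (a single heavy edge is such a subpath), so sampling at rate $\tilde\Theta(1/d_{\mathrm{tree}})$ does not hit every segment; likewise your hop count $h=O(\delta/d_{\mathrm{tree}}+\cdots)$ treats $\delta$ as if it bounded the number of edges on $\pi'$, which fails once $\delta\gg n$ (and $\delta$ ranges up to $nW$). One can check that with a single scale your balancing $w\sim q\,d_{\mathrm{tree}}^2$ then cannot simultaneously keep the local-tree cost at $m\delta q n^{-1/3}$ and the final hop at $O(n^{2/3})$. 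The paper's fix is to layer over \emph{hop} thresholds $h_i=2^i$: for each $i$ it cuts the path into segments of exactly $l_i/2=h_i/(2n^{1/3})$ \emph{edges}, samples at rate $\tilde\Theta(1/l_i)$, runs the local trees hop-restricted to $l_i$ (with weight rounding inside \Cref{thm:GEStreeOnCondensation}) and band width $K_i=q\delta/n^{1/3}$, keeps $H_i$ empty for $h_i<n^{2/3}\log n$, and only guarantees the hopset for pairs with $\mathbf{dist}_G(s,t)=\mathbf{dist}^{h_i}_G(s,t)$ — some level always applies since every shortest path has at most $n$ edges. Incorporating this hop-layering (or restating your whole construction with $d_{\mathrm{tree}}$ as an edge count per level) closes the gap; the rest of your accounting then matches the paper's, up to the minor slip that $|E(F)|$ should be $\tilde O(nw/d_{\mathrm{tree}}^2)$ rather than $\tilde O(nw/d_{\mathrm{tree}})$.
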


We point out that \Cref{rmk:ssspRemark} also applies to this theorem. As before, we use the theorem above to establish a simple corollary where we refer the reader to the proof of \Cref{cor:SSSPtoATO} which with minor changes proves the below result.

\begin{restatable}{corollary}{corSSSPSparserestrictedFromATO}
\label{cor:SSSPSparsetoATO}
Given $G=(V,E,w)$, a decremental weighted digraph, a source $r \in V$, a depth threshold $\delta > 0$, an approximation parameter $\epsilon > 0$, and access to a collection $\mathcal{S} = \{ S_i \}_{1 \leq i \leq \mu}$ for $\mu = \lfloor\lg \delta\rfloor-1$ where each $\mathcal{S}_i$ forms an $\mathcal{ATO}(G, 2^i, 40 c\log n)$-bundle of quality $q \cdot 2^i$. Then, there exists an implementation for $(1+\epsilon)$-approximate $\delta$-restricted $\mathcal{SSSP}$ where $T_{SSSP}(n,m,\delta, \epsilon) = O((m\delta q/n^{-1/3} + \delta q n^{2/3}) \log^3 n \log^2 Wn/\epsilon) $ with high probability.
\end{restatable}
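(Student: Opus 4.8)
The plan is to follow the proof of \Cref{cor:SSSPtoATO}, with the dense black box \Cref{thm:SSSPEfficient} replaced by its sparse analogue \Cref{thm:SSSPEfficientSparse}. Partition the distance range $[1,\delta]$ into the $O(\log\delta)$ dyadic buckets $[2^{a},2^{a+1})$. For a bucket $a$ we use the bundle $\mathcal{S}_{i}$ whose diameter parameter $2^{i}$ is a $\Theta(\epsilon)$-fraction of $2^{a}$ (so $i=a-\Theta(\lg(1/\epsilon))$); for the finitely many buckets so shallow that no legal $i\ge 1$ exists we instead run a plain ES-tree on $G$ to depth $2^{a+1}$, whose $\tilde{O}(m\,2^{a+1})$ cost is subsumed by the bound we are claiming. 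On each of the $40c\log n$ individual $\mathcal{ATO}$s $(\mathcal{V}_j,\tau_j)\in\mathcal{S}_{i}$ we instantiate \Cref{thm:SSSPEfficientSparse} with source $r$, depth threshold $\widehat\delta:=2^{a+1}$, approximation parameter $\Theta(\epsilon)$, and a quality parameter $\widehat q$ taken to be a large enough multiple of $q\,2^{i}$: since $\mathcal{S}_{i}$ is a bundle of quality $q\,2^{i}$ (\Cref{def:ATObundle}), for every target $v$ in bucket $a$ there is an ATO in the bundle with $\mathcal{T}(\pi_{r,v},\tau_j)\le q\,2^{i}\,w(\pi_{r,v})+n\le\widehat q\,\widehat\delta+n$, which is exactly the precondition of \Cref{thm:SSSPEfficientSparse} (the buckets are chosen so that the side condition $\widehat\delta\,\widehat q\ge n$ of that theorem holds; all remaining buckets are covered by the ES-tree fallback). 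We run each such instance both on $G$ and on the reverse graph $\overleftarrow{G}$ so as to obtain estimates for the pairs $(r,v)$ and $(v,r)$ demanded by \Cref{def:SSSPGeneric}, and we keep the pointwise minimum over all buckets and all $\mathcal{ATO}$s in a min-heap so that queries cost $O(1)$ and a path is recovered from whichever instance attains the minimum.

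Correctness is immediate from \Cref{def:ATObundle} and the guarantees of \Cref{thm:SSSPEfficientSparse}. Each individual instance only reports overestimates, so the minimum is an overestimate and in particular never underestimates $\mathbf{dist}_G$. Conversely, fix $v$ with $\mathbf{dist}_G(r,v)=\ell^{*}\in[2^{a},2^{a+1})\subseteq[1,\delta]$; the bundle $\mathcal{S}_{i}$ used for bucket $a$ contains an ATO $(\mathcal{V}_j,\tau_j)$ with $\mathcal{T}(\pi_{r,v},\tau_j)\le q\,2^{i}\,w(\pi_{r,v})+n$, so the corresponding instance returns $\widetilde{\mathbf{dist}}(r,v)\le(1+\Theta(\epsilon))\ell^{*}+\eta_{diam}$ with $\eta_{diam}=2^{i}=\Theta(\epsilon\,2^{a})=\Theta(\epsilon\,\ell^{*})$, hence $\widetilde{\mathbf{dist}}(r,v)\le(1+\Theta(\epsilon))\ell^{*}$; rescaling $\epsilon$ by the appropriate constant at the outset turns this into the claimed $(1+\epsilon)$ factor. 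For the failure probability, union bound over the $O(\log Wn)\cdot 40c\log n$ instances of \Cref{thm:SSSPEfficientSparse}, each correct with probability $1-n^{-c'}$; this is still $1-n^{-c}$ after a harmless re-parametrization of the constant.

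For the running time, sum the cost of \Cref{thm:SSSPEfficientSparse}, namely $O\big((m\,\widehat\delta\,\widehat q\,n^{1/3}+\widehat\delta\,\widehat q\,n^{2/3})\log^{3}n\log Wn/\epsilon\big)$, over the $O(\log Wn)$ dyadic buckets and the $\tilde{O}(1)$ $\mathcal{ATO}$s per bundle; the bucket-to-bundle alignment is arranged exactly so that $\widehat\delta\,\widehat q$ telescopes across buckets the same way it does in the proof of \Cref{cor:SSSPtoATO}, leaving a total of $O\big((m\,\delta q\,n^{1/3}+\delta q\,n^{2/3})\log^{3}n\log^{2}Wn/\epsilon\big)$ --- the extra $\log Wn$ over \Cref{thm:SSSPEfficientSparse} being the number of buckets --- with the ES-tree fallback and the heap maintenance subsumed. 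I expect the only real work to be this bookkeeping: reconciling the bundle's path-weight-dependent quality bound with the flat precondition $\mathcal{T}(\pi_{r,v},\tau)\le q\,\widehat\delta+n$ of \Cref{thm:SSSPEfficientSparse}, verifying $\widehat\delta\,\widehat q\ge n$ for every bucket not handled by an ES-tree, and checking that the additive term $\eta_{diam}$ is really absorbed into the multiplicative error. None of this is deep; the genuine ideas (the hopset-style construction giving the $n^{2/3}$/$n^{1/3}$ trade-off) are already packaged inside \Cref{thm:SSSPEfficientSparse}, which is precisely why the result follows ``with minor changes'' from the proof of \Cref{cor:SSSPtoATO}.
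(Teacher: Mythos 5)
Your proposal is correct and matches the paper's approach exactly: the paper does not write out a separate proof but explicitly defers to the proof of \Cref{cor:SSSPtoATO} ``with minor changes,'' and your adaptation --- dyadic distance buckets aligned so that $\eta_{diam}=\Theta(\epsilon\cdot\text{scale})$, instances of \Cref{thm:SSSPEfficientSparse} on $G$ and $\overleftarrow{G}$ per $\mathcal{ATO}$ in the relevant bundle, an ES-tree fallback for the shallow scales, a min-heap over all estimates, and a union bound over the $O(\log Wn\cdot\log n)$ instances --- is precisely that adaptation. Your bookkeeping of the extra $\log Wn$ factor and of the $\widehat{\delta}\widehat{q}\ge n$ side condition is consistent with how the paper uses the corollary in \Cref{thm:TopologicalOrderMaintenanceBootstrapSparse}.
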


As in the previous section, we defer the proof of \Cref{thm:SSSPEfficientSparse} to \Cref{sec:proofSSSPSparse} but present the implementation of a data structure $\mathcal{E}_r$ that satisfies the guarantees of \Cref{thm:SSSPEfficientSparse}. We therefore first show how to maintain a hopset $H$ of $G$ using $(\mathcal{V}, \tau)$ and then show that we can run an ES-tree on the hopset $H$ from source $s$. Before we discuss the internal parts of the data structure, let us define some useful concepts.

\paragraph{Preliminary Concepts.} In our algorithm, it is key to distinguish between the hop $h$ and the depth $\delta$ that was given. We therefore introduce the superscript $h$ to distances to denote the weight of the $h$-hop restricted shortest path in $G$ between two vertices $s,t \in V$ by $\mathbf{dist}_G^h(s,t)$. 

Further, we henceforth use $h_i = 2^i$ and we give data structures for every $i \leq \lg n$ that recovers paths from $r$ to vertices at distance at most $\delta$ and hop $h_i$ and assume for convenience that $n$ is a power of $2$ wlog. In order to describe the data structure for every hop level $i$, let us discuss some preliminary concepts.

As in equation \ref{eq:chi}, we also use the function $\chi_{close}$ defined by
\[
 \chi_{close}(X ,Y, \tau) \stackrel{\text{def}}{=}          \begin{cases}
        \tau(Y) - (\tau(X) + |X| - 1) & \text{if } \tau(X) < \tau(Y) \\
        \chi_{close}(Y, X, \tau) & \text{otherwise}
     \end{cases}
\]
Intuitively, $\chi_{close}$ takes two nodes $X,Y$ and maps to distance of the points in their intervals that are closest. We also set up a function $\chi_{far}$ that maps to the farthest distance between any such two numbers. We define
\[
 \chi_{far}(X ,Y, \tau) \stackrel{\text{def}}{=}          \begin{cases}
        \tau(Y) - \tau(X) + |Y| - 1 & \text{if } \tau(X) < \tau(Y) \\
        \chi_{far}(Y, X, \tau) & \text{otherwise}
\end{cases}
\]

Using these functions, we define the balls of vertices at small topological order distance.

\begin{definition}[Topological Order Difference Ball.]
For each $v \in V$ and integer $K \geq 0$, we define the closed topological order difference ball by
\[
    C_{closed}(v, K, ( \mathcal{V}, \tau)) \stackrel{\text{def}}{=}  \{ Y \in \mathcal{V} \; | \; \chi_{close}(X^v, Y, \tau) \leq K\}
\]
and the corresponding open ball by
\[
    C_{open}(v, K, ( \mathcal{V}, \tau)) \stackrel{\text{def}}{=}  \{ Y \in \mathcal{V} \; | \; \chi_{far}(X^v, Y, \tau) \leq K\}
\]
\end{definition}

We observe that $C_{closed}(v, K, ( \mathcal{V}, \tau))$ is a decreasing, refining set over time, i.e. new all sets are subsets of earlier sets\footnote{This is however not true for $C_{closed}(v, K, ( \mathcal{V}, \tau))$.}. Finally, let us give a definition for a \emph{decremental refining graph} which eases the description of the hopset.

\begin{definition}[Decremental Refining Graph]
\label{def:DecrRefiningGraph}
Given a refining partition $\mathcal{V}$ of the universe $V$, and a graph $H$ on the node set $\mathcal{V}$, we say that $H$ is a \emph{decremental refining graph} if at every stage, for any vertices $s,t \in V$, the distance between the nodes $X^s$ and $X^t$, the nodes containing the vertices $s$ and $t$ respectively, is monotonically increasing in $H$. In particular, any decremental graph $H'$ along with an arbitrary refining partition $\mathcal{V}'$ of its vertex set forms the decremental refining multi-graph $H' / \mathcal{V}'$.
\end{definition}

It is not hard to adapt the Generalized ES-trees presented in \cite{bernstein2019decremental}  combining it with a standard edge rounding technique for ES-trees (see for example \cite{bernstein2016maintaining}) in order to obtain the following theorem. We refer the reader to \cite{bernstein2019decremental} for a full proof.

\begin{lemma}[c.f. Lemma 4.2 and Lemma 4.3 in \cite{bernstein2019decremental}]
\label{thm:GEStreeOnCondensation}
Given a decremental refining graph $H$ with refining node set $\mathcal{V}$ part of an $\mathcal{ATO}$ $(\mathcal{V}, \tau)$, an approximation parameter $\epsilon > 0$, a depth threshold $\delta > 0$ and hop threshold $h_i$ for some $0 \leq h_i \leq \lg n$, a vertex $v \in V$, and a positive integer $K$. 

Let us define the graph $H_v$ of $v$ by $H_v \stackrel{\text{def}}{=} H\left[C_{closed}(v, K, ( \mathcal{V}, \tau))\right]$.

Then, there exists a deterministic data structure $\mathcal{GES}_v$ that can maintain the distance estimates $\widetilde{\mathbf{dist}}(X^v, Y)$ from $X^v$, the node in $\mathcal{V}$ containing $v$, to every node $Y \in C_{closed}(v, K, ( \mathcal{V}, \tau))$ such that $\mathbf{dist}_{G_v}(X^v, Y) \leq \widetilde{\mathbf{dist}}(X^v, Y)$ and
\[
\widetilde{\mathbf{dist}}(X^v, Y) \leq (1+\epsilon)\mathbf{dist}^{h_i}_{H_v}(X^v, Y).
\]
The algorithm runs in total update time \[
O(|E_H(C_{open}(v, K, (\mathcal{V}, \tau)))| h_i \log K \log Wn/ \epsilon + K \log K + \Delta)
\]
where $W$ is the largest integer weight in $H$ and the former term is the number of edges in $H$ that is incident to a node in $C_{open}(v, K, ( \mathcal{V}, \tau))$ throughout the entire course of the algorithm, and $\Delta$ is the number of edge weight increases for edges incident to $C_{open}(v, K, ( \mathcal{V}, \tau))$.
\end{lemma}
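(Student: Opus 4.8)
The plan is to derive \Cref{thm:GEStreeOnCondensation} by combining two essentially off-the-shelf ingredients: the Even--Shiloach tree with \emph{node splits} of \cite{bernstein2019decremental} (their Lemmas~4.2 and~4.3), which maintains \emph{exact} $h$-hop-bounded distances from a fixed source in a decremental graph whose vertex set is a refining partition (nodes only split, never merge), and the standard weight-rounding reduction for Even--Shiloach trees (see \cite{bernstein2016maintaining}), which turns an exact $h$-hop data structure into a $(1+\epsilon)$-approximate one at the cost of an extra $O(\log Wn)$ factor. The approximate topological order $(\mathcal{V},\tau)$ enters only through the choice of subgraph to run on, $H_v = H[C_{closed}(v,K,(\mathcal{V},\tau))]$, and through the structural fact — immediate from the nesting property of $\tau$ in \Cref{def:ATO} — that the node set $C_{closed}(v,K,(\mathcal{V},\tau))$ is monotonically shrinking and refining over time. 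Hence $H_v$, an induced subgraph of the decremental refining graph $H$ on a shrinking vertex set, is itself a decremental refining graph, so the machinery of \cite{bernstein2019decremental} applies to it unchanged.

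First I would set up the rounding. Writing $h = h_i$, for each scale $0 \le j \le \lceil\lg(Wn)\rceil$ define rounded weights $w_j(e) = \lceil w_H(e)\cdot h/(\epsilon 2^j)\rceil$ on $H_v$, maintain an exact $h$-hop Even--Shiloach tree with node splits from $X^v$ on $(H_v, w_j)$ truncated at $w_j$-weighted depth $\lceil 2h/\epsilon + h\rceil$, recover $\widetilde{\mathbf{dist}}_j(X^v,Y) = \widehat{\mathbf{dist}}_j(X^v,Y)\cdot \epsilon 2^j/h$ from its labels $\widehat{\mathbf{dist}}_j$, and set $\widetilde{\mathbf{dist}}(X^v,Y) = \min_j \widetilde{\mathbf{dist}}_j(X^v,Y)$ (with the convention that an estimate exceeding $\delta$ is rounded to $\infty$). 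Since $w_j(e) \ge w_H(e)\,h/(\epsilon 2^j)$ pointwise, each $\widetilde{\mathbf{dist}}_j$ — and hence their minimum — is at least $\mathbf{dist}_{H_v}(X^v,Y)$, giving the overestimate. Conversely, if $P$ is an at-most-$h$-hop path realizing $d = \mathbf{dist}^h_{H_v}(X^v,Y) \in [2^j,2^{j+1})$, then $w_j(P) \le d\,h/(\epsilon 2^j) + h \le 2h/\epsilon + h$, so $P$ survives truncation at scale $j$ and $\widehat{\mathbf{dist}}_j(X^v,Y) \le w_j(P)$; un-rounding gives $\widetilde{\mathbf{dist}}_j(X^v,Y) \le d + \epsilon 2^j \le (1+\epsilon)d$. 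The minimum over scales preserves both inequalities, which is exactly the claimed guarantee.

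It then remains to invoke the exact $h$-hop Even--Shiloach tree of \cite{bernstein2019decremental} on the decremental refining graph $(H_v,w_j)$. The one operation needing care is a node split $X \to X_1,\dots,X_k$ in $\mathcal{V}$: the largest part inherits $X$ (retaining its adjacency structure and its subtree in the ES-tree $T$); fresh nodes are created for the other parts, the incident edges of $X$ are redistributed among them, the part containing the old tree-child endpoint is reconnected to $T$, and the remaining new nodes together with their former subtrees are re-inserted into the priority queue and rebuilt bottom-up under the hop bound. Monotonicity of $\mathbf{dist}_{H_v}(X^s,X^t)$ for every vertex pair $s,t$ — which holds since $H_v$ is decremental refining — keeps all distance labels monotone, so the standard amortized accounting survives: each unit increase of a label of a node $Z$ pays for rescanning $Z$'s incident edges, and each label increases $O(h/\epsilon)$ times before reaching the truncation depth. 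Summing over the $O(\log Wn)$ scales, and observing that the ES-tree only scans edges incident to nodes it actually reaches from $X^v$ within $h$ hops — a set whose incident edges are, over the whole run, counted by $|E_H(C_{open}(v,K,(\mathcal{V},\tau)))|$ — yields the running time $O(|E_H(C_{open}(v,K,(\mathcal{V},\tau)))|\,h \log K \log Wn/\epsilon + K\log K + \Delta)$, where the $O(K)$ nodes of $C_{closed}$ give the $K\log K$ initialization/heap term and $\Delta$ absorbs the weight-increase updates.

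The main obstacle — the reason this is ``not hard to adapt'' rather than immediate — is the interaction of node splits with the hop bound and with the amortized running time: a split relocates the endpoint of a tree edge to a new node without any edge deletion occurring, so one must argue that the $h$-hop ES-tree invariant is correctly restored by the rebuild above without the hop counts of surviving subtrees going stale, and that the total cost of all such rebuilds over the entire execution is still dominated by the edge-scan budget. These facts are precisely what \cite{bernstein2019decremental} establishes for the generalized ES-tree; the only additions here are the verification that $C_{closed}(v,K,(\mathcal{V},\tau))$ shrinks monotonically under refinement (so that $H_v$ is a legitimate decremental refining graph), which is immediate from \Cref{def:ATO}, and the routine rounding bookkeeping sketched above.
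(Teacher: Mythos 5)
Your proposal is correct and takes essentially the same route the paper does: the paper itself gives no proof of this lemma, explicitly deferring to the generalized ES-tree with node splits of \cite{bernstein2019decremental} combined with the standard weight-rounding reduction of \cite{bernstein2016maintaining}, which is precisely the combination you flesh out (including the correct observation that $C_{closed}(v,K,(\mathcal{V},\tau))$ only shrinks under refinement, so $H_v$ remains a decremental refining graph). Your rounding bookkeeping and amortized accounting are consistent with the stated bounds, so there is nothing to compare beyond the level of detail.
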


\paragraph{Hopset.} For different hop thresholds $h_i$, we describe how to maintain the hopset $H_i$ which forms a decremental refining graph on the node set $\mathcal{V}$. 

If $h_i < n^{2/3} \log n$, we let $H_i$ be the empty graph throughout the algorithm. Otherwise, we initialize the data structure to maintain the hopset $H_i$ by calculating $l_{i} = \frac{h_i }{n^{1/3}}$,  and sampling each vertex $v$ in $V$ into the set $S$ of sampled vertices with probability $\frac{3(c+6) \cdot \log n}{l_{i}}$. We run a data structure $\mathcal{GES}_s$ for every vertex $s \in S$, with hop $l_{i}$ on the graph $G / \mathcal{V}$ where we choose $K_i = \frac{q \cdot \delta}{n^{1/3}}$ as described in \Cref{thm:GEStreeOnCondensation}. 

We maintain the hopset $H_i$ so that for any nodes $X, Y \in \mathcal{V}$, there is an edge $(X,Y)$ if
\begin{enumerate}
    \item there exists vertices $s,t \in S$, with $s \in X$ and $t \in Y$, \emph{and}
    \item $\widetilde{\mathbf{dist}}(X, Y)$ is at most $l_{i}$.
\end{enumerate} 
If such an edge is in $H_i$, we assign it weight $\widetilde{\mathbf{dist}}(X, Y)$. It is straight-forward to see that we can maintain the required distance estimates $\widetilde{\mathbf{dist}}(X, Y)$ for the required depth using the data structure $\mathcal{GES}_s$ (technically there can be multiple $s'$ in $X$ but in this case their $\mathcal{GES}_{s'}$ data structures maintain the same distance estimates so there is no ambiguity in our notation). We observe that $H_i$ is a decremental refining graph which follows since distances in $H_s = (G / \mathcal{V})\left[C_{closed}(s, K, ( \mathcal{V}, \tau))\right]$ can only increase over time and only when the distance between the nodes containing $s$ and $t$ exceeds $l_i$ in this graph, we delete $(s,t)$ from $H_i$.

\paragraph{SSSP on the Hopset.} Finally, we can run from the root vertex $r$, the data structure $\mathcal{GES}_r$ as described in \Cref{thm:GEStreeOnCondensation} for $K = n$ on the graph $(G / \mathcal{V}) \bigcup (\cup_i H_i)$ with the given weights and $(\mathcal{V}, \tau)$ to depth $\delta$ and hop $n^{2/3} \log n$. This completes the description of the algorithm.

\subsection{Bootstrapping an Algorithm for Unrestricted Depth}
\label{subsec:bootstrapSparse}

We can now prove the theorem below which is a more detailed version of \Cref{thm:TopologicalOrderMaintenanceOverviewSparse}. Using \Cref{lma:reductionATOtoSSSP}, this immediately implies \Cref{thm:ContributionSSSPSparseResult}. Since the proof is quite similar to the proof of \Cref{thm:TopologicalOrderMaintenanceBootstrap}, we defer the proof to \Cref{sec:proofOfBootstrapSparse}.

\begin{restatable}{theorem}{bootstrapSparse}
\label{thm:TopologicalOrderMaintenanceBootstrapSparse}
For any $0 \leq i \leq \lg(Wn)$, given a decremental digraph $G=(V,E,w)$, we can maintain a hierarchy $\mathcal{S} =\{\mathcal{S}_i\}_i$ where each $\mathcal{S}_i$ is a $\mathcal{ATO}(G, 2^i, 40 c \log n)$-bundle of expected quality $n/2^i$. The algorithm runs in total expected update time $O(mn^{2/3} \log^{16} n \log^{3}(Wn))$ against a non-adaptive adversary and is correct with high probability. 
\end{restatable}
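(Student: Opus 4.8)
The plan is to mirror the bootstrapping argument of Theorem~\ref{thm:TopologicalOrderMaintenanceBootstrap}, but with the dense $\mathcal{SSSP}$ building block replaced by the hopset-based one from Corollary~\ref{cor:SSSPSparsetoATO}. As before, we induct on $n = |V(G)|$, with the trivial base case $n \le 1$. For the inductive step we fix a level $i$ and invoke Lemma~\ref{lma:reductionATOtoSSSP} to build the $\mathcal{ATO}(G, 2^i, 40c\log n)$-bundle $\mathcal{S}_i$; this reduces the task to implementing a $2$-approximate $2^{i-2}$-restricted $\mathcal{SSSP}$ data structure on the vertex-induced subgraphs $H \subseteq G$ that Lemma~\ref{lma:reductionATOtoSSSP} calls it on (levels $i \le 2$ are trivial). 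Exactly as in the dense proof, we split into \emph{large} graphs (those with $\ge n/2^\gamma$ vertices, for a threshold $\gamma = \Theta(\lg\log Wn)$ fixed at the end) and \emph{small} graphs. On a \emph{large} $H$ we use Remark~\ref{rmk:LaminarFamilyOfGraphs} to simply run the $\mathcal{SSSP}$ structure of Corollary~\ref{cor:SSSPSparsetoATO} on all of $G$ using the already-constructed hierarchy level $\mathcal{S}_{i-1}$; on a \emph{small} $H$ we apply the induction hypothesis to obtain a bundle hierarchy $\mathcal{S}'$ for $H$ (setting the failure parameter to $c \cdot 4\log n$, since $|V(H)|$ may be small) and feed it to Corollary~\ref{cor:SSSPSparsetoATO}.

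The running-time accounting proceeds term by term through the sum in Equation~\eqref{eq:totalRunningTime} of Lemma~\ref{lma:reductionATOtoSSSP}. For $j \ge \gamma$ (large graphs): each $T_{SSSP}$ term on $G$ costs, via Corollary~\ref{cor:SSSPSparsetoATO} with $\delta = 2^{i-2}$ and quality parameter $q = n/2^{i}$ (so $\delta q \approx n$, and the $K$-balls have the right size), roughly $\tilde O\!\big((m n^{1/3} + n^{2/3}\cdot n)\,\mathrm{polylog}\big) = \tilde O(m n^{2/3})$ after using $m \ge n$; multiplying by the $\tilde O(2^\gamma)$-many such terms (and $\lg(Wn)$ levels) keeps this at $\tilde O(m n^{2/3})$ with the stated polylog powers. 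For $j < \gamma$ (small graphs on $n/2^j$ vertices): the induction hypothesis gives a bundle in time $\le c_{\mathrm{Total}}\,(4c\log n)^{5}(n/2^j)^{2/3}\,m_{j,k}\,\mathrm{polylog}$, and since $\sum_j m_{j,k} \le 16 c\, m\log^2 n$ and each factor picks up an extra $2^{-\gamma/3}$ or so from the $(n/2^j)^{2/3}$ decay, summing over $j < \gamma$ and over the $\tilde O(1)$-many $k$ yields a contribution bounded by $2^{-\Omega(\gamma)}\cdot c_{\mathrm{Total}}\, m n^{2/3}\,\mathrm{polylog}$, which is absorbed into $\tfrac12 c_{\mathrm{Total}} m n^{2/3}\,\mathrm{polylog}$ once $\gamma$ is chosen large enough (logarithmic in all the constants $c_{SSSP},\,c_{SSSP\to ATO}$ and the polylog/$\lg(Wn)$ factors). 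Adding the two pieces and the additive $O(m\log^3 n)$ term from Lemma~\ref{lma:reductionATOtoSSSP}, and then summing over the $\lg(Wn)$ levels $i$, gives the claimed $O(m n^{2/3}\log^{16} n \log^{3}(Wn))$ bound; correctness with high probability follows by the same union bound as in Theorem~\ref{thm:TopologicalOrderMaintenanceBootstrap} (at most $\lg(Wn)$ top-level bundles each failing with probability $n^{-c}$, and at most $\mathrm{poly}(n)$ recursively-built bundles $\mathcal{S}'$ each failing with probability $n^{-4c}$).

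I expect two points to require the most care. First, unlike the dense case, Corollary~\ref{cor:SSSPSparsetoATO} is itself randomized (it samples the hitting set $S$), so the error analysis must track these additional failure events across every small-graph invocation — this is handled by the boosted failure parameter $c\cdot 4\log n$, but one must verify that the resulting bundle quality $\frac{(4c\log n+2)\cdot 40000\, n\log^5 n\, 2^{-j}}{2^{i-3}}\log^3 n$ still satisfies the hypothesis $q_i = q\cdot 2^i$ of Corollary~\ref{cor:SSSPSparsetoATO} (it does, since the quality scales as $\tilde O(n/2^i)$ and we only need $\delta q \ge n$). Second, and more delicate, is checking that the parameter regimes line up: Corollary~\ref{cor:SSSPSparsetoATO} needs $\mu = \lfloor\lg\delta\rfloor - 1$ bundle levels of a \emph{single} graph, whereas the recursion hands us bundles for the subgraph $H$, so one must confirm (as in Remark~\ref{rmk:LaminarFamilyOfGraphs}) that running on $G$ in place of $H$ is legitimate for the large case and that the small case supplies all $\mu$ levels via one application of the induction hypothesis. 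Once the bookkeeping of these constants and level-counts is pinned down, the calculation is entirely parallel to the dense proof, which is why we defer it to Appendix~\ref{sec:proofOfBootstrapSparse}.
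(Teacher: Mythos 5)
Your proposal follows the paper's proof of this theorem essentially step for step: induction on $n$, reduction via \Cref{lma:reductionATOtoSSSP}, the large/small split at threshold $n/2^\gamma$ with large graphs handled by running \Cref{cor:SSSPSparsetoATO} on all of $G$ using the already-built lower hierarchy levels (justified by \Cref{rmk:LaminarFamilyOfGraphs}) and small graphs handled by the induction hypothesis with a boosted failure parameter, followed by balancing $\gamma$ and a union bound. The only cosmetic deviations are that the paper feeds the full lower hierarchy $\mathcal{S}_{i-3},\dots,\mathcal{S}_0$ (not just one level) into \Cref{cor:SSSPSparsetoATO} for the large case --- a point you correctly flag yourself --- and constructs the levels $i\le 2$ explicitly from singleton partitions; neither affects correctness.
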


\section{Conclusion}
\label{sec:conclusion}

In this article, we gave the first near-optimal algorithm to solve the decremental SSSP problem on dense digraphs. Combined with the recent result by Gutenberg et al \cite{gutenberg2020incrSSSP}, this establishes an $\tilde{O}(n^2 \log^4 W/ \epsilon)$ complexity for the partially-dynamic SSSP problem in directed weighted graphs. Moreover, we gave a simple new technique to derive a data structure for sparse graphs using our framework, which runs in total time $\tilde{O}(mn^{2/3}\log^3 W/\epsilon)$ and thereby vastly improves over the best existing result by Probst Gutenberg and Wulff-Nilsen as recently shown in \cite{gutenberg2020decremental}. 

This substantial progress on the problem motivates the following two open questions:
\begin{itemize}
    \item Can we obtain a near-optimal algorithm for partially-dynamic SSSP in directed graphs for any sparsity? Recent work by Bernstein et al. \cite{bernstein2019decremental} has shown that this is at least possible for the simpler problem of maintaining single-source reachability.
    \item Can we derandomize our results or make them work against an adaptive adversary? Partial progress on this question has been made in  \cite{gutenberg2020incrSSSP} where the incremental SSSP data structure presented is already deterministic which gave the first deterministic improvement in directed graphs over the ES-tree. In decremental graphs, a recent result by Bernstein et al. \cite{detDiSSSP} gives deterministic total update time $O(n^{2+2/3})$ following a result by Probst Gutenberg and Wulff-Nilsen \cite{gutenberg2020decremental} that broke the $O(mn)$ bound randomized but against an adaptive adversary. However, this is far from the near-optimal bound achieved in this paper. We point out that even in the undirected, unweighted setting, the best data structure requires total update time $\tilde{O}(\min\{mn^{0.5+o(1)}, n^2\})$ \cite{gutenberg2020deterministic, bernstein2016deterministic} as opposed to $m^{1+o(1)}$ total update time in the non-oblivious setting \cite{henzinger2014sublinear}. 
\end{itemize}

\paragraph{Acknowledgements.} Aaron Bernstein is supported by NSF CAREER Grant 1942010 and the Simons Group for Algorithms \& Geometry. Maximilian Probst Gutenberg is supported by Basic Algorithms Research Copenhagen (BARC), supported by Thorup's Investigator Grant from the Villum Foundation under Grant No. 16582. Christian Wulff-Nilsen is supported by the Starting Grant 7027-00050B from the Independent Research Fund Denmark under the Sapere Aude research career programme. 
The authors thank Thatchaphol Saranurak for insightful comments and corrections, and anonymous FOCS reviewers for their helpful feedback.

\pagebreak

\printbibliography[heading=bibintoc] 

\pagebreak

\appendix

\section{Related Work}
\label{sec:relatedWork}

We also point out that the simpler problem of Single-Source Reachability where the data structure only has to report whether there exists a path from the fixed vertex $s$ to some vertex $v \in V$ has been solved to near-optimality in \cite{bernstein2019decremental} which improved on the breakthrough results in \cite{roditty2008improved, lkacki2013improved, henzinger2014sublinear, henzinger2015improved, chechik2016decremental}. The algorithm in \cite{bernstein2019decremental} further even allows to maintain the Strongly-Connected Components of $G$. A recent result by Bernstein et al. \cite{detDiSSSP} further attempts to derandomize the algorithm in \cite{bernstein2019decremental} and deterministically achieves total update time $mn^{2/3+o(1)}$ to maintain Strongly-Connected Components, which constitutes the first deterministic improvement over the long-standing $O(mn)$ bound.

We further point out that the problem of maintain Single-Source Shortest Paths has also been considered quite recently in the incremental setting, i.e. the setting where a graph only undergoes edge insertions. In \cite{gutenberg2020incrSSSP}, the authors give a deterministic algorithm that achieves total update time $\tilde{O}(n^2 \text{polylog}(W))$. Unfortunately, the techniques proposed in this algorithm cannot be extended to the more interesting decremental setting.

There is also a wide literature on the related problems of All-Pairs Shortest Paths and All-Pairs Reachability both in the decremental \cite{baswana2007improved,bernstein2011improved, henzinger2014decremental, henzinger2016dynamic,bernstein2016maintaining, chechik2018near, karczmarz2020simple} and in the fully-dynamic setting \cite{henzinger1995fully, King99, demetrescu2001fully, demetrescu2004new, roditty2004dynamic, thorup2005worst,  roditty2012dynamic, abraham2013dynamic, roditty2016fully, henzinger2016dynamic, abraham2017fully, brand2019dynamic, probstWulffNilsenwcAPSP}.

Finally, for planar graphs, decremental algorithms are known to solve Single-Source Reachability deterministically in near-linear update time \cite{italiano2017decremental} and SSSP in directed graphs in total update time $\tilde{O}(n^{4/3})$ as shown by \cite{karczmarz2018decrementai}.

\section{Efficient Partitioning}
\label{sec:proofPartitionFull}

Let us now prove \Cref{lma:partitionFull} which is restated below for convenience.

\partition*

Let us start by claiming that \Cref{alg:split} is an efficient implementation of the procedure $\textsc{Partition}(G, d, \zeta)$ that gives the guarantees stated in \Cref{lma:partitionFull}. In the algorithm, we first initialize the separator set to the empty set and let $H$ throughout the algorithm be the graph $G$ without the vertices that are already contained in an SCC that satisfies the constraints. Thus, we run multiple iterations reducing the size of $H$ until it is empty and therefore all vertices satisfy our constraints.

\begin{algorithm}
\caption{$\textsc{Partition}(G, d, \zeta)$}
\label{alg:split}
\KwIn{A weighted digraph $G$, and integers $d$ and $\zeta$.}
\KwOut{Returns a set of edges $E_{Sep}$ such that for every SCC $X$ in $G \setminus E_{Sep}$, any two vertices $u,v \in X$, satisfy $\mathbf{dist}_{G \setminus E_{Sep}}(u,v) \leq d$.}
\BlankLine

$E_{Sep} \gets \emptyset; H \gets G;$\;

\While(\label{lne:outerwhile}){$H \neq \emptyset$}{
    Pick an arbitrary vertex $r$ in $V(H)$.\;
    Run $\textsc{OutSeparator}(r, H, d/8, 3\zeta \log n)$ and $\textsc{OutSeparator}(r, \overleftarrow{H}, d/8, 3\zeta  \log n)$, and let $(E'_{Sep}, V'_{Sep})$ be the tuple of the procedure such that $|E(V_{Sep})|$ is minimized.\label{lne:sepTwoWay}\;
    
    \If(\label{line:split-if-case}){$|V'_{Sep}| \leq \frac{2}{3}n$} {
        $E_{small} \gets \textsc{Partition}(H[V'_{Sep}], d, \zeta)$\label{lne:splitRecurseIf} \;
        $E_{Sep} \gets E_{Sep} \cup E_{Small} \cup E'_{Sep}$\label{lne:addtoS1}\;
        $H \gets H \setminus V'_{Sep}$
    }\Else(\label{line:split-else-case}){ 
        Initialize $\mathcal{A}_{r}$ to be a $2$-approximate $\delta$-restricted $\mathcal{SSSP}$ data structure on $H$\label{lne:BuildSSSP}\;
        \tcc{Find a good separator for every vertex that is far from $r$.}
         \While(\label{line:split-else-while}) {  
         there exists a vertex $v \in V(H)$ such that $\mathcal{A}_{r}$ has distance estimate $\widetilde{\mathbf{dist}}(r,v)$ or $\widetilde{\mathbf{dist}}(v,r)$ exceeding $\frac{d}{4}$}  {
            \If(\label{line:split-else-if}){$\widetilde{\mathbf{dist}}(v,r) > d/2$}{
                $(E''_{Sep}, V''_{Sep}) \gets \textsc{OutSeparator}(v, H, d/8, 3\zeta  \log n)$\;
            }\Else(\tcp*[h]{If $\widetilde{\mathbf{dist}}(r,v) > d/2$}){
                $(E''_{Sep}, V''_{Sep}) \gets \textsc{OutSeparator}(v, \overleftarrow{H}, d/8, 3\zeta  \log n)$\;
            }
            $H \gets H \setminus V''_{Sep}$\;
            
            $E'''_{Sep} \gets \textsc{Partition}(H[V''_{Sep}], d, \zeta)$ \label{lne:splitRecurse}\;
            $E_{Sep} \gets E_{Sep} \cup E''_{Sep} \cup E'''_{Sep}$\label{lne:addtoS2}\;
        }
        $H \gets \emptyset$\;
    }
}
\Return $E_{Sep}$\;
\end{algorithm}

In each iteration, we first pick an arbitrary vertex $r \in V(H)$ and then run the separator procedure described in lemma \ref{lma:sepIntro} on $H$ and on $\overleftarrow{H}$. We let $(E_{Sep}, V_{Sep})$ denote the tuple returned by the procedure where the size of $E(V_{Sep})$, that is the number of edges in $H$ incident to $V_{Sep}$, is minimized and break ties arbitrary if they are of equal size. We then check whether $E(V_{Sep})$ contains less than a $2/3$-fraction of the edges in $G$ in line \ref{line:split-if-case} (in fact, we could also compare to the number of edges in $H$, without affecting the asymptotic running time). In this case, we recurse on the graph $H[V_{Sep}]$, add the resulting edges to our edge set $E_{Sep}$ and remove the vertices $V_{Sep}$ from $H$. 

Otherwise, we know that most edges have their tail at small distance from and to $r$. We therefore initialize a $\mathcal{SSSP}$ data structure $\mathcal{A}_r$ from $r$ on $H$ in \Cref{lne:BuildSSSP}. Then, we successively extract a vertex $v$ that is far from $r$, compute a separator from/to them, and prune the set of vertices separated (after recursing on them). Since on termination of the while-loop in \Cref{line:split-else-while} each vertex is close to and from $r$, we have that all vertices are in an SCC of small diameter. Thus, we set $H = \emptyset$.

Let us now analyze the algorithm more formally. We start by showing that no induced subgraph that we recurse on in \Cref{lne:splitRecurse} can contain more than $\frac{2}{3} n$ vertices. This will be crucial to bound our running time, since it implies that every time that we recurse, we recurse on subgraphs that are significantly smaller, and therefore we make significant progress.

\begin{claim}
\label{clm:largeSCCifEStree}
If the algorithm enters \Cref{line:split-else-case}, then on termination of the while-loop starting in \Cref{line:split-else-while}, we have that the set vertices $V(H)$ in $H$ before it is set to $\emptyset$ is of size at least $\frac{1}{3}n$.
\end{claim}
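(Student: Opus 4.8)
Write $H_0$ for the graph $H$ at the instant the algorithm reaches \Cref{line:split-else-case}. We are there because the set $V'_{Sep}$ returned by the call of \Cref{lma:sepIntro} that minimizes the number of incident edges has $|V'_{Sep}|>\tfrac23 n$; by \Cref{lma:sepIntro} this set is a ball $B:=B^{out}_{H_0}(r,X)$ (or $B^{in}_{H_0}(r,X)$) with $X<d/8$ around $r$, and I would treat the forward case, the backward one being symmetric after reversing every edge. The plan is to show that the inner while loop of \Cref{line:split-else-while} never deletes a vertex of $B$ (or, if that turns out to need strengthening, never deletes a vertex of the intersection of the forward and backward $d/8$-balls of $r$); since that loop only removes vertices and is the only thing that shrinks $H$ here, we then get $|V(H)|\ge|B|>\tfrac23 n\ge\tfrac13 n$ at termination.

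The single fact I would extract about one iteration of the inner loop is this. Such an iteration picks a vertex $v$ for which $\mathcal A_r$ reports distance more than $d/2$ to $r$ (the case ``more than $d/2$ from $r$'' is the symmetric branch); since $\mathcal A_r$ is $2$-approximate (\Cref{def:SSSPGeneric}) and its distances refer to the current, smaller $H$, this gives $\mathbf{dist}_H(v,r)>d/4$. The deleted set is $V''_{Sep}=B^{out}_H(v,X'')$ with $X''<d/8$, so every $u\in V''_{Sep}$ has $\mathbf{dist}_H(v,u)<d/8$ and hence, by the triangle inequality, $\mathbf{dist}_H(u,r)\ge\mathbf{dist}_H(v,r)-\mathbf{dist}_H(v,u)>d/4-d/8=d/8$. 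In words: every vertex removed by this iteration is, at the moment of removal, at distance strictly more than $d/8$ to $r$ in $H$ (and, symmetrically, more than $d/8$ from $r$ in the other branch).

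With this in hand I would run an induction over the inner iterations, maintaining that all of $B$ still lies in $V(H)$ and that for every $u\in B$ some shortest $r\leadsto u$ path of $H_0$ survives in $H$, so that $\mathbf{dist}_H(r,u)\le X<d/8$ throughout; the mirror statement for the backward ball would be carried along in parallel, because which of the two balls is the large one determines which direction must be kept intact. The inductive step combines the fact above with the observation that a shortest $r\leadsto u$ path ($u\in B$) visits only vertices at distance $\le X<d/8$ from $r$, hence never meets a set $V''_{Sep}$ all of whose vertices are at distance $>d/8$ to (or from) $r$. The delicate point — and the main obstacle — is that the ``to $r$'' deletions are disjoint from the \emph{backward} ball but not obviously from the forward ball $B$, and symmetrically; so the clean version of the induction really protects the common core $B^{out}_{H_0}(r,d/8)\cap B^{in}_{H_0}(r,d/8)$ (which is never deleted by either branch), and one then has to show this core already has more than $\tfrac13 n$ vertices, using $|V'_{Sep}|>\tfrac23 n$ together with the fact that $V'_{Sep}$ was the incident-edge-minimizing side. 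A final routine check is that the constants $d/8,\,d/4,\,d/2$ in \Cref{alg:split} and the approximation factor $2$ of \Cref{def:SSSPGeneric} are consistent, so that the inequality $\mathbf{dist}_H(u,r)>d/8$ above is genuinely strict.
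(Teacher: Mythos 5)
Your overall strategy is the paper's: show that every vertex deleted by the inner while-loop is at distance more than $d/8$ to or from $r$ (via the $2$-approximation and the triangle inequality), and conclude that the common core $B^{out}_{H}(r,d/8)\cap B^{in}_{H}(r,d/8)$ survives. The triangle-inequality computation is correct. But the proposal has a genuine gap exactly where you flag "the main obstacle": you never establish that this core has more than $\tfrac13 n$ vertices, and the lever you point to (that $V'_{Sep}$ minimizes the number of incident \emph{edges}) is not the one that closes it. The paper's argument is purely about vertex counts: both tuples computed in \Cref{lne:sepTwoWay} have $|V^{out}_{Sep}|>\tfrac23 n$ and $|V^{in}_{Sep}|>\tfrac23 n$, and by \Cref{lma:sepIntro} these sets are contained in $B^{out}_{H}(r,d/8)$ and $B^{in}_{H}(r,d/8)$ respectively, so each ball has more than $\tfrac23 n$ vertices and inclusion--exclusion gives $|B^{out}\cap B^{in}|>\tfrac13 n$. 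Without some version of this step your proof does not terminate in the claimed bound.

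A second, smaller but substantive issue: you assert that $\mathcal{A}_r$'s estimates "refer to the current, smaller $H$". They do not; $\mathcal{A}_r$ is run on the graph it was initialized on in \Cref{lne:BuildSSSP} and is never told about the prunings. This is precisely what makes the survival argument clean and makes your shortest-path-preservation induction unnecessary: a core vertex $u$ has true distance at most $d/8$ to and from $r$ in that \emph{fixed} graph, so its estimates never exceed $d/4$, hence $u$ is never selected as the far vertex $v$ (which would delete it, since $v\in V''_{Sep}$), and the triangle inequality in the fixed graph shows $u$ is never swept into another vertex's $V''_{Sep}$. Under your reading, where estimates track the shrinking $H$, the induction you propose is hard to close: a shortest $r$-to-$u$ path for a core vertex $u$ may pass through vertices of $B^{out}\setminus B^{in}$ that the "to $r$" branch legitimately deletes, after which $u$'s forward estimate could grow past $d/4$ and $u$ itself could be selected and removed.
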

\begin{proof}
Observe first that since we did not enter the if-case in \Cref{lne:splitRecurseIf}, that $|V'_{Sep}| > \frac{2}{3}n$. Since tuple $(E'_{Sep}, V'_{Sep})$ was chosen among the tuples $(E_{Sep}^{out}, V_{Sep}^{out})$ and $(E_{Sep}^{in}, V_{Sep}^{in})$ (the first corresponding to the first separator procedure, the second is the output of the second procedure in \Cref{lne:sepTwoWay} which are computed on $H$), we have
\[
|V^{out}_{Sep}| > \frac{2}{3}n \text{ and } |V^{in}_{Sep}| > \frac{2}{3}n.
\]
Further, we have by \Cref{lma:sepIntro} that $V^{out}_{Sep} \subseteq B^{out}_{H}(r, d/8)$ and $V^{in}_{Sep} \subseteq B^{in}_{H}(r, d/8)$. 

Thus, it is straight-forward to see that
\begin{equation}
\label{eq:manyEdgesInc}
    |B^{out}_{H}(r, d/8) \cap B^{in}_{H}(r, d/8)| \geq \frac{1}{3}n
\end{equation}

Finally, observe that during the while-loop starting \Cref{line:split-else-while}, we only prune vertices at distance at least $d/8$ in at least one direction. This is since distance estimates are overestimates so each vertex $v$ from which we start the procedure $\textsc{OutSeparator}(\cdot)$ is at distance at least $d/4$ from $r$ and by \Cref{lma:sepIntro} we only remove vertices that are at most at distance $d/8$ from $v$, so a straight-forward application of the triangle inequality implies our claim (a slightly subtle issue is that $H$ evolves, however distance estimates are with regard to the graph that $\mathcal{A}_r$ is initialized upon and $H$ has monotonically increasing distances over time so this might only help us). 

Since vertices in $B^{out}_{H}(r, d/8) \cap B^{in}_{H}(r, d/8))$ are close to $r$, none of them are pruned away. This is since distances do not change because even though we remove vertices from $H$, the $\mathcal{SSSP}$ data structure is run on the initial graph $H$, i.e. the one it was initialized upon. Thus, the vertices in the intersection of these balls remain in $H$ until the end. Combined with Equation \ref{eq:manyEdgesInc}, we derive the claim. 
\end{proof}

This claim is indeed the only crucial ingredient for our proof of \Cref{lma:partitionFull}, which we can now carry out.

\paragraph{Establishing Property \ref{prop:partitionDiameter}.} Let us prove the property by induction on the size of $E$. For the base case, assume $E = \emptyset$. Then, the claim is vacuously true. 

For $|E| = i + 1 > 0$, we observe that every time, we enter the if-case in line \ref{line:split-if-case} for some set $V'_{Sep}$, we add the edges $E'_{Sep}$ to $E_{Sep}$, thus for any vertex $u \in V \setminus V_{Sep}, v \in V_{Sep}$, we have that they are not strongly-connected in $G \setminus E_{Sep}$. Thus, we do not have to establish any guarantee for these pairs. We then recurse on $H[V_{Sep}] \subseteq G[V_{Sep}]$ which has fewer edges than $G$ by the if-condition. Thus, our claim for SCCs contained in this subgraph is true by the induction hypothesis.

Otherwise, we enter the else-case in \Cref{line:split-else-case}. By \Cref{clm:largeSCCifEStree}, on termination of the while-loop starting in \Cref{line:split-else-while}, the graph $H$ is still incident to a third of the edges in $G$. Thus, whenever we prune away a subgraph due to taking a separator chosen in \Cref{line:split-else-if}, we can again invoke the induction hypothesis when we recurse in \Cref{lne:splitRecurse}.

Thus, it only remains to establish on termination of the while-loop in \Cref{line:split-else-while}, the graph $H$ satisfies the property. Let $H'$ denote the graph that the data structure $\mathcal{A}_r$ was initialized on, so $H \subseteq H' \subseteq G$. It is clear that the vertices remaining in $H$ are at distance at most $d/2$ to and from $r$ (with regard to $H'$). Thus, for any two such vertices $x, y \in V(H)$, we have
\[
\mathbf{dist}_G(x,y) \leq \mathbf{dist}_{H'}(x,y) \leq \mathbf{dist}_{H'}(x,r) + \mathbf{dist}_{H'}(r,y) \leq d/2 + d/2 \leq d
\]
as desired.

\paragraph{Establishing Property \ref{prop:partitionProb}.} Let us first observe that we only add edges to the separator set $E_{Sep}$ in lines \ref{lne:addtoS1} and \ref{lne:addtoS2}. There are further four different places that emerge where an edge $e$ could have been added to a set that is then added to the separator set. We next observe that the edge $e$ is either added due to a recursive call or due to invoking procedure $(E''''_{Sep}, V''''_{Sep}) \gets \textsc{OutSeparator}(\cdot)$ (here $(E''''_{Sep}, V''''_{Sep})$ is a placeholder for the tuple returned in either one of the procedures). However, if $E''''_{Sep}$ is really added to $E_{Sep}$ and edge $e$ has its tail in $V''''_{Sep}$, then we have 
\begin{enumerate}
    \item that by \Cref{lma:sepIntro}, we added $e$ to $E_{Sep}$ with probability at most $\frac{8 \cdot 3\zeta \log n}{d}w(e)$, and
    \item that $V''''_{Sep}$ is removed from $H$ and since separator edges are computed in $H$, $e$ can afterwards only be added to $E_{Sep}$ due to a subsequent recursive call to $\textsc{Partition}(\cdot)$, and
    \item the graph $G[V''''_{Sep}]$ contains at most a $\frac{2}{3}$-fraction of the vertices in $G$. 
\end{enumerate}
Combining these facts, it is straight-forward to argue that the total probability of $e$ being added can be bound by $\log_{3/2} n \cdot \frac{24 \zeta \log n}{d}w(e) \leq \frac{240 \zeta \log^2 n}{d}w(e)$. 

\paragraph{Bounding the Error Probability.} First, observe that the probability of failure each procedure $\textsc{OutSeparator}(\cdot)$ in our algorithm is at most $e^{-3\zeta \log n}$ by \Cref{lma:sepIntro}. Now observe that every separator computed by this procedure that is used, separates two vertex sets that have so far been part of the same graph. Thus, throughout all subcalls, there can be at most $n-1$ such separators. Further observe, that if a separator computed by the procedure is not used, then we enter the else-case in \Cref{line:split-else-case}. Thus, this can only happen for each recursive call once. But since the number of recursive calls is exactly the number of used separators, this also occurs only $n-1$ times. Finally, we point out that in order to get a separator, we might run two separator procedures to get a single separator (namely in \Cref{lne:sepTwoWay}). Thus, the total number of separator procedures used throughout all the algorithm (including all subcalls) is at most $4(n-1)$. Using a simple union bound over all bad events, we can bound the probability of failure by $4n \cdot e^{-3\zeta  \log n} \leq e^{-\zeta}$.

\paragraph{Bounding the Running Time.} Let us first bound the running time of the procedure excluding recursive calls. We have that for each iteration of the while-loop in line \ref{lne:outerwhile}, the separator procedures in line \ref{lne:sepTwoWay} can be implemented efficiently, by running in parallel (that is they are executed such that the machine interleaves operations from the separator procedures). Then, if one of them terminates with tuple $(E'_{Sep}, V'_{Sep})$, it runs in time $O(|E(V'_{Sep})| \log n)$ by lemma \Cref{lma:sepIntro}. Further observe that if the other procedure runs longer than $O(|E(V'_{Sep})| \log n)$ by some large constant, we can abort it since it will not produce a tuple that is eligible for becoming $(E'_{Sep}, V'_{Sep})$ again by the running time guarantees of \Cref{lma:sepIntro}. Thus, line \ref{lne:sepTwoWay} can be implemented in time $O(|E(V'_{Sep})| \log n)$. If we enter the if-case in line \ref{line:split-if-case}, then we prune the set $E(V'_{Sep})$ from $H$. Thus, we can amortize all while-iterations entering the if-case over the edges removed from $H$, leading to total update time $O(|E| \log n)$. 

The final while-iteration might enter the else-case in line \ref{line:split-else-case}. Then, maintaining the data structure $\mathcal{A}_r$ takes total time $T_{SSSP}(n, m, \delta, 2)$ (here we could use $|V(H)| \leq n$ and $|E(H)| \leq m$, however this will not have significant impact). 

Finally, summing over the recursive calls where we use again the argument that the number of vertices in each induced subgraph is at most $\frac{2}{3}n$ by \Cref{clm:largeSCCifEStree}. We thus obtain total running time
\[
O\left(\sum_{j=0}^{\lceil\lg \delta \rceil} \; \sum_{k=0}^{2^{j+1}} T_{SSSP}(m_{j,k}, n/2^{j}, \delta, 2) + m \log^2 n\right)
\]
with $\sum_{k=0}^{2^{j+1}} m_{j,k} \leq 2m$. To derive the last bound observe that each vertex (and therefore incident edge) is in a single instance on each recursive level and after at most two recursive levels, the number of vertices has decreased by factor at least $2$.

\section{ \texorpdfstring{Proof of \Cref{lma:EStreeprob}}{Proof that the Number of While-Iterations is Small}}
\label{sec:EStreeprobProof}

\participation*

\begin{proof}
For $v \in V$, consider any two iterations of the while-loop in \Cref{lne:loop} where $v \in C$, say iteration $t_1$ and iteration $t_2$, where $t_1 < t_2$. Let $X^{v, t}$ refer to the SCC containing $v$ in $G'$ after the while-loop iteration $t$ ended, where $t_1 \leq t \leq t_2$.

We claim that with probability at least $1/2$, we have $|X^{v, t_2}| \leq \frac{1}{2} |X^{v, t_1}| $. This implies the theorem, since every time this event occurs the size of the SCC that $v$ is contained in is halved (observe that sequence of trials until one halving event occurs constitutes a geometric random variable with expectation $2$), and the SCC of $v$ can be halved at most $\lg n$ times.

Let us define by $G^t$ the graph $G$ after the $t^{th}$ iteration of the while loop and by $H^t$ the graph $G'$ after the $t^{th}$ iteration of the while-loop. Further let us define the graph $F^t$, for $t \geq t_1$, to be the defined $F^t = H^{t_1} \setminus (G^t \setminus G^{t_1})$. That is $F^t$ is the graph $G^{t_1}$ to which only \emph{adversarial} updates where applied after iteration $t_1$. So that is the edge deletions to $G'$ due to \emph{separator} edges are ignored after iteration $t_1$ and on.

Next, let $X_{max,t} \subseteq X^{v, t_1}$ denote some maximal set of vertices such that $H^{t}[X_{max,t}]$ is a SCC containing $v$ of diameter at most $\frac{\delta|X^{v,t_1}|}{16n}$. Similarly, let $Y_{max,t} \subseteq X^{v, t_1}$ denote some maximal set of vertices such that $F^{t}[Y_{max,t}]$ is a SCC containing $v$ of diameter at most $\frac{\delta|X^{v,t_1}|}{16n}$. Observe that $|X_{max,j}| \leq |Y_{max,j}|$ since $F^j \supseteq H^j$.

Next, let $t_1 \leq t'$ be the index such that,
\[
    |Y_{max,t_1}| < \frac{1}{2}|X^{v, t'}| \leq |Y_{max,t_1}|
\]
that is $t'$ is the first iteration after which $G'[X^{v, t_1}]$ has no SCC of diameter $\frac{\delta|X^{v,t_1}|}{16n}$ containing $v$ of size more than $\frac{1}{2}|X^{v, t_1}|$.

Now consider the case where some vertex in $Y_{max, t'-1}$ was chosen as a center $\textsc{Center}(X^{v,t_1})$ in iteration $t_1$. Then, observe that in each while-loop iteration, we always have a vertex $t$ that is at least at distance  $\frac{\delta|X^{v,t_1}|}{8n}$ from $\textsc{Center}(X^{v,t_1})$. Thus, the separator is taken at distance at least $\frac{\delta|X^{v,t_1}|}{16n}$ in \Cref{lne:DelSep} or \Cref{lne:DelSepRev}. Thus, we never remove a vertex at distance smaller-equal-than $\frac{\delta |X^{v,t_1}|}{16n}$ from $\textsc{Center}(X^{v,t_1})$ due to a separator procedure (and since we reuse the distance threshold until the component decreases by factor $2$ in size). Thus, if such a center was chosen, none of the vertices in $Y_{max, t'-1}$ would have participated in $C$ before iteration $t'$. Thus, if $t_2 < t'$, then the set $C$, with $v \in C$, was of size at most $\frac{1}{2}|X^{v, t_1}|$ and therefore the new SCC containing $v$ would have at most half the size of the previous one. One the other hand, if $t_2 \geq t'$, then $G'[C]$ would not contain a single induced SCC containing $v$ of diameter $\frac{\delta|X^{v,t_1}|}{16n}$ that is of size larger than $\frac{1}{2} |X^{v, t_1}|$. Since we invoke $\textsc{Partition}(\cdot)$ in that iteration on $G'[C]$, and by the guarantees given in \Cref{lma:partitionFull}, the SCC $X^{v, t_2}$ certainly satisfies $|X^{v, t_2}| \leq \frac{1}{2} |X^{v, t_1}|$.

Finally, we observe that since $\frac{1}{2}|X^{v, t_1}| \leq |Y_{max,t'-1}|$, we would have chosen a center in $Y_{max,t'-1}$, with probability at least $1/2$ initially. This completes the proof.
\end{proof}

\section{The \texorpdfstring{$\mathcal{SSSP}$}{SSSP} Data Structure}
\label{sec:proofSSSP}

Let us now show how to prove \Cref{thm:SSSPEfficient} which is restated below for convenience.

\ssspSimple*
\ssspRemark*

For a description of the algorithm, we refer the reader to \Cref{subsec:alphaDeltaSSSPReduction}.

\subsection{Correctness of the Algorithm}

We first point out that subsequently in the correctness proof, the term $\eta_{diam}$ actually appears in the lower bound instead of the upper bound on distance estimates. This however can be rectified by simply adding $\eta_{diam}$ to every distance estimate. Further, we observe that by property \ref{prop:TauTotal}, we obtain that $\mathcal{T}(\pi_{r,u},\tau)$ is upper bounded by $q\delta + n$, thus $\frac{\epsilon \mathcal{T}(\pi_{r,u},\tau)}{q} \leq \epsilon(\delta + n/q) \leq 2\epsilon\delta$ ($n/q$ is always upper bounded by $\delta$). Thus, we get the guarantees described in  \Cref{thm:SSSPEfficient} if we invoke the algorithm with $\epsilon' = \epsilon/2$ (which only affects the running time by a constant factor).

\begin{lemma}[Proof of Correctness]
\label{lma:correctnessOfEStree}
For each $u \in V$, we have that $\widetilde{\mathbf{dist}}(r,u)$ is a monotonically increasing distance estimate such that 
\begin{itemize}
    \item ${\mathbf{dist}}_G(r,u) \leq \widetilde{\mathbf{dist}}(r,u)$, at each stage, and
    \item if $\mathbf{dist}_G(r,u) \leq \delta$ and $\mathcal{T}(\pi_{r,u},\tau) \leq q \cdot \delta + n$, then $\widetilde{\mathbf{dist}}(r,u) \leq \mathbf{dist}_G(r,u) + \frac{\epsilon \mathcal{T}(\pi_{r,u} ,\tau)}{q} + \eta_{diam}$.
\end{itemize} 
\end{lemma}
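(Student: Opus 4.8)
The plan is to prove \Cref{lma:correctnessOfEStree} by adapting the classical ES-tree analysis to the contracted graph $\mathcal{H}$, tracking the two sources of error separately: the additive $\eta_{diam}$ from contracting $\mathcal{ATO}$-sets, and the additive $\frac{\epsilon \mathcal{T}(\pi_{r,u},\tau)}{q}$ from the lazy bucketed scanning. The first step is to argue \textbf{monotonicity}: distance estimates $\widetilde{\mathbf{dist}}(X^r, Y)$ never decrease. This follows because (a) $\mathcal{H}$ is a decremental graph — edges are only deleted and $w(X,Y)$ only increases since it is an infimum over a shrinking set $E(X,Y)$ (splits can only remove edges from a heap $Q_{X,Y}$), and (b) whenever a node $Y$ loses its tree certificate it is only ever re-inserted into $Q$ and its estimate is only ever incremented by $1$ or set to $\infty$; no step ever lowers an estimate. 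The node-split bookkeeping preserves this because a new node $X_{i'}$ inherits $\widetilde{\mathbf{dist}}(X^r,X)$, and $\tau$'s nesting property guarantees the inherited value is still a valid overestimate. Consequently $\widetilde{\mathbf{dist}}(r,u) = \widetilde{\mathbf{dist}}(X^r,X^u) + \eta_{diam}$ is also monotone.

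Next I would prove the \textbf{lower bound} (overestimate) $\mathbf{dist}_G(r,u) \le \widetilde{\mathbf{dist}}(r,u)$. The invariant maintained by the algorithm is that for every node $Y$ with $\widetilde{\mathbf{dist}}(X^r,Y) < \infty$, the tree $T$ contains a path from $X^r$ to $Y$ of weight at most $\widetilde{\mathbf{dist}}(X^r,Y)$ in $\mathcal{H}$; this is exactly what the re-certification check $\widetilde{\mathbf{dist}}(X^r,X) + Q_{X,Y}.\textsc{MinValue} \le \widetilde{\mathbf{dist}}(X^r,Y)$ enforces when $(x,y)$ is added to $T$, and the increment-and-reinsert step restores it when no certificate exists. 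Hence $\mathbf{dist}_{\mathcal{H}}(X^r,X^u) \le \widetilde{\mathbf{dist}}(X^r,X^u)$. Since any $\mathcal{H}$-path projects back to a walk in $G$ between representative vertices of the nodes, and by property \ref{prop:ContractLittle} each node $X$ has $\mathbf{diam}(X,G) \le \frac{|X|\eta_{diam}}{n}$, we get $\mathbf{dist}_G(r,u) \le \mathbf{dist}_{\mathcal{H}}(X^r,X^u) + \eta_{diam} \le \widetilde{\mathbf{dist}}(X^r,X^u) + \eta_{diam} = \widetilde{\mathbf{dist}}(r,u)$ — here the single $+\eta_{diam}$ absorbs the detours inside all the contracted sets on the path, by the telescoping argument from the overview ($\sum_i |X_i|\eta_{diam}/n \le \eta_{diam}$).

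The \textbf{upper bound} is the main obstacle and the heart of the proof. Fix $u$ with $\mathbf{dist}_G(r,u)\le\delta$ and $\mathcal{T}(\pi_{r,u},\tau)\le q\delta+n$; let $\pi_{r,u}=\langle r=v_0,\dots,v_\ell=u\rangle$ be the shortest path in $G$. Contracting $\mathcal{V}$ gives a walk in $\mathcal{H}$ through the nodes $X^{v_0},X^{v_1},\dots$ whose total $\mathcal{H}$-weight is at most $\mathbf{dist}_G(r,u)$ (contraction does not increase distances, since $\mathcal{H}$-edge weights are infima of the underlying $G$-edge weights). The argument is inductive along this walk: I would show that for each node $X^{v_j}$ on it, $\widetilde{\mathbf{dist}}(X^r,X^{v_j}) \le \mathbf{dist}_G(r,v_j) + (\text{accumulated lazy error up to } v_j)$. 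The inductive step uses the certification rule: if edge $(X^{v_{j}},X^{v_{j+1}})$ (equivalently $X^{v_j}\in B_{\le J}(X^{v_{j+1}})$ for $J$ with $2^J\le\chi(X^{v_j},X^{v_{j+1}},\tau)<2^{J+1}$) is eligible to be scanned at the current estimate value, then $\widetilde{\mathbf{dist}}(X^r,X^{v_{j+1}}) \le \widetilde{\mathbf{dist}}(X^r,X^{v_j}) + w_{\mathcal{H}}(X^{v_j},X^{v_{j+1}})$; but since this edge is only scanned when $\widetilde{\mathbf{dist}}(X^r,X^{v_{j+1}})$ is divisible by $\lceil 2^J\epsilon/q\rceil$, the estimate can exceed the "correct" value by at most $\lceil 2^J\epsilon/q\rceil - 1 \le 2^J\epsilon/q$ (and exactly $0$ when $\lceil 2^J\epsilon/q\rceil=1$). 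Charging each such unit of additive error against the quantity $\chi(X^{v_j},X^{v_{j+1}},\tau)\ge 2^J$: each edge contributes error $\le 2^J\epsilon/q \le \chi(X^{v_j},X^{v_{j+1}},\tau)\cdot\epsilon/q$, so summing over all edges of the path and using $\chi(\cdot,\cdot,\tau)\le\mathcal{T}(\cdot,\cdot,\tau)$, the total additive lazy error is at most $\frac{\epsilon}{q}\sum_{(a,b)\in\pi_{r,u}}|\tau(X^a)-\tau(X^b)| = \frac{\epsilon\,\mathcal{T}(\pi_{r,u},\tau)}{q}$. Finally I must check the truncation threshold: $\delta_{max} = \lceil(1+\epsilon)\delta + \epsilon n/q\rceil$ is chosen so that the target estimate $\mathbf{dist}_G(r,u) + \frac{\epsilon\mathcal{T}(\pi_{r,u},\tau)}{q} \le \delta + \epsilon(\delta + n/q) \le \delta_{max}$ stays below the cutoff, so $X^u$'s estimate is never prematurely set to $\infty$, which closes the induction. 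The subtle points to handle carefully are: (i) node splits, where the bucket $B_j$ reassignment rules and the inheritance of estimates must preserve both invariants — this is where the nesting property of $\tau$ and the "largest $j$ with some number in $[|X_{i'}|,|X|-1]$ divisible by $2^j$" choice matter, ensuring $\chi$-values of surviving edges only change within the already-accounted-for window; and (ii) the fact that the walk through $\mathcal{H}$ may revisit nodes or that the path's projected $\mathcal{H}$-weight must genuinely be $\le \mathbf{dist}_G(r,u)$, handled by noting $w_{\mathcal{H}}(X^a,X^b)\le w_G(a,b)$ for consecutive path vertices in distinct nodes and that consecutive vertices in the same node contribute nothing to the $\mathcal{H}$-walk while their $G$-distance is absorbed into $\eta_{diam}$.
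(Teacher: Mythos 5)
Your proposal is correct and follows essentially the same route as the paper's proof: monotonicity from inheritance at node splits plus increment-only updates, the lower bound via the tree certificate in $\mathcal{H}$ together with property \ref{prop:ContractLittle} contributing the single $+\eta_{diam}$, and the upper bound by induction along the projected path, charging each edge's lazy-scanning error of at most $\lceil 2^J\epsilon/q\rceil - 1 \le \chi(\cdot)\epsilon/q \le \mathcal{T}(\cdot)\epsilon/q$ and verifying the estimate stays below $\delta_{max}$. The subtleties you flag (bucket reassignment under splits, and that contraction only decreases path weight so the $\mathcal{H}$-walk costs at most $\mathbf{dist}_G(r,u)$) are exactly the ones the paper handles.
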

\begin{proof}
In this proof, we use the superscript $t$ to refer to the version of a variable at the end of stage $t$ (it will be second if there already is a superscript indicating the node $X^x$ that some vertex $x$ is contained in, i.e. $X^{x,t}$). We recall that we maintain the distance estimate $\widetilde{\mathbf{dist}}(r,u)$ to be equal to $\widetilde{\mathbf{dist}}(X^r,X^u) + \eta_{diam}$ and observe that the distance estimates of $X^u$ and $u$ are monotonically increasing over time since $X^{u, t+1}$ is a subset of the set $X^{u, t}$, and if it is a strict subset, then its distance estimate was initialized to the distance estimate $\widetilde{\mathbf{dist}}^{t}(X^{r, t},X^{u, t})$ and afterwards, the update procedure can only have increase the estimate.

Now, let us establish that the distance estimate $\widetilde{\mathbf{dist}}^{t}(r,u) = \widetilde{\mathbf{dist}}^{t}(X^{r,t},X^{u,t}) + \eta_{diam}$, at any stage $t$, satisfies the stretch guarantee. We start by establishing the lower bound: we therefore observe that the certificate $T^t$ is formed by a subset of edges $E^t$, and the node set always corresponds to the partition $\mathcal{V}^t$. Thus, every path in $T$ must be at least of length equal to the shortest-path in $G^t / \mathcal{V}^t$. But since property \ref{prop:ContractLittle} enforces that any path has at most $\eta_{diam}$ of its length contracted, the lower bound follows.

We now prove the upper bound by induction on the distance $\mathbf{dist}^{t}(r,u)$ for every $u \in V$ where we assume that $\mathbf{dist}^t(r,u) \leq \delta + \eta_{diam}$ and $\mathcal{T}(\pi_{r,u, G^t},\tau^t) \leq q \cdot \delta + n$.
\begin{itemize}
    \item \underline{Base case $\mathbf{dist}^{t}(r,u) = \eta_{diam}$:} since edge weights are positive, this is only true if $u = r$. Since the distance estimate of $r$ is never changed by our update procedure (since $r$ never looses an incoming edge in $T$ and therefore is never added to $Q$), the distance estimate is $0$, as desired.
    \item \underline{Inductive step for $\mathbf{dist}^{t}(r,u) = d + 1 + \eta_{diam}$:} Let $x$ be the vertex in $\pi_{r,u,G^{t}}$ that precedes $u$ (i.e. the second last vertex on the shortest-path). Since $(x,u)$ is in $G$ at stage $t$, we have that at every stage $t' \leq t$, the edge $(x,u)$ is in $Q_{X^{x, (t')}, X^{u, (t')}}$ and since the function $\chi(X^{x, (t')}, X^{u, (t')})$ is monotonically increasing in $t'$, we therefore also have that $X^{x, (t')} \in B_{\leq j}(X^{u, (t')})$ for $j = \lfloor \lg \chi(X^{x, t}, X^{u, t}) \rfloor$ since our bucket update procedure always puts $X^{x, (t')}$ into the bucket $B_{j'}(X^{u, (t')})$ with the largest $j'$ and then $\chi(X^{x, t}, X^{u, t})$ only increases during the ensuing stages. 

    We retell that $X^{u, 0}$ had its distance estimate initialized to $\mathbf{dist}^{0}(X^r, X^u) \leq \mathbf{dist}^{t}(X^r, X^u) = d + 1$. Every time $t''$, it was increased ever since by at most $\lceil 2^{j} \cdot \frac{\epsilon}{q} \rceil$, we scanned $X^{x,t''}$ since it is in $B_{\leq j}(X^{u,t''})$. Thus, $X^{u,t}$'s distance estimate is at most $\frac{2^{j}\epsilon}{q} + w_{G^t}(x,u)$ larger than the distance estimate of $X^x$ (here we use that $w_{G^t}(x,u) \geq w_{G^{t'}}(x,u)$ for all $t' \leq t$). Therefore,
    \begin{align*} \widetilde{\mathbf{dist}}^{t}(X^{r, t},X^{u,t}) 
    &\leq \widetilde{\mathbf{dist}}^{t}(X^{r, t},X^{x,t}) + w_{G^t}(x,u) + \frac{2^{j}\epsilon}{q}\\
    &\leq \mathbf{dist}^{t}(r, x) + \eta_{diam} + \frac{\epsilon \mathcal{T}(\pi_{r,x,G^{t}} , \tau)}{q} + w_{G^t}(x,u) + \frac{2^{j}\epsilon}{q}\\ 
    &\leq \mathbf{dist}^{t}(r, x) + \eta_{diam} + \frac{\epsilon \mathcal{T}(\pi_{r,x,G^{t}} , \tau)}{q} + w_{G^t}(x,u) + \frac{2^{j}\epsilon}{q}\\
    &\leq (\mathbf{dist}^{t}(r, x) + \eta_{diam} + w_{G^t}(x,u)) + \frac{\epsilon \mathcal{T}(\pi_{r,x,G^{t}} , \tau)}{q} + \frac{\chi(X^{x, t}, X^{u, t}, \tau)\epsilon}{q}\\
    &\leq (\mathbf{dist}^{t}(r, x) + \eta_{diam} + w_{G^t}(x,u)) + \frac{\epsilon \mathcal{T}(\pi_{r,x,G^{t}} , \tau)}{q} + \frac{\mathcal{T}(X^{x, t}, X^{u, t}, \tau)\epsilon}{q}\\
    &= \mathbf{dist}^{t}(r, u) + \eta_{diam} + \frac{\epsilon \mathcal{T}(\pi_{r,x,G^{t}}, \tau)}{q} + \frac{\mathcal{T}((x,u), \tau)\epsilon}{q}\\
    &= \mathbf{dist}^{t}(r, u) + \eta_{diam} + \frac{\epsilon \mathcal{T}(\pi_{r,u,G^{t}}, \tau)}{q}
    \end{align*}
    where we use the induction hypothesis on $x$ in the second inequality, then plug in the value of $j$, observe that $\chi$ is dominated by $\mathcal{T}$ and finally observe that the function $\mathcal{T}$ is linear. We finally observe that by our assumptions  $\widetilde{\mathbf{dist}}^{t}(X^{r, t},X^{u,t}) \leq \mathbf{dist}^{t}(r, u) - \eta_{diam}  + \frac{\epsilon \mathcal{T}(\pi_{r,u,G^{t}}, \tau)}{q} \leq \delta + \frac{\epsilon (q\delta +n)}{q} = (1+\epsilon)\delta + \epsilon n/q \leq \delta_{max}$ and since it was monotonically increasing, there was no previous stage at which the distance estimate could have been reassigned $\infty$ due to exceeding $\delta_{max}$.
\end{itemize}
\end{proof}

\subsection{Running Time}

It remains to establish that the algorithm given above is efficient. Let us start by analyzing the bucket update procedure.

\begin{claim}
\label{clm:almostInTheRightBucket}
At any stage $t$, for any $X,Y \in \mathcal{V}$, with $2^{j} \leq \chi(X, Y, \tau) < 2^{j+1}$, the update procedure ensures that we have we have $X$ in $B_j(Y)$ or $B_{j-1}(Y)$. 
\end{claim}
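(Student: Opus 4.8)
The statement concerns the invariant maintained by the bucket-update procedure: whenever a node $X$ is an in-neighbor of $Y$ in $\mathcal{H}$ with $2^j \le \chi(X,Y,\tau) < 2^{j+1}$, the procedure keeps $X$ in either $B_j(Y)$ or $B_{j-1}(Y)$, even though $\chi$ drifts upward over time. The plan is to argue by induction on the sequence of stages, tracking the single quantity $\chi(X,Y,\tau)$ and showing that it never leaves the two-bucket window since the last time $X$'s bucket-membership for $Y$ was recomputed.

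First I would establish the base case: at initialization (or at the first moment $X$ becomes the relevant node, i.e.\ right after a split in which $X$ is created or inherits its parent), the initialization rule places $X$ into exactly $B_j(Y)$ for $j = \lfloor \lg \chi(X,Y,\tau)\rfloor$, so the claim holds with equality to $B_j(Y)$. The key structural fact to invoke is Property~\ref{prop:tauUpdate} of the $\mathcal{ATO}$ definition together with the nesting property of $\tau$ from \Cref{thm:SCCinDecrGraph}: when a node splits, the children's $\tau$-intervals are nested inside the parent's interval, so $\chi(X^x,Y,\tau)$ is monotonically non-decreasing in the stage $t$ for any fixed pair of vertices — exactly the monotonicity already used in the correctness proof (\Cref{lma:correctnessOfEStree}). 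I would state this monotonicity as a small sub-claim and justify it from the fact that $\chi$ measures the gap between two disjoint intervals, and intervals only shrink (for $X$) or move away under the split rule $\tau(X_{\pi(j+1)}) = \tau(X_{\pi(j)}) + |X_{\pi(j)}|$.

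Next I would carry out the inductive step. Suppose at the end of stage $t-1$ the claim holds, so $X \in B_{j'}(Y)$ or $B_{j'-1}(Y)$ where $j' = \lfloor\lg \chi^{t-1}(X,Y,\tau)\rfloor$. At stage $t$, two things can happen to the pair. Either $X$ or $Y$ is involved in a split, in which case the bucket-update procedure recomputes membership — I would point to the rule in the algorithm where, for each $X_{i'}$, we take $j$ to be the largest integer such that some value in $\{|X_{i'}|,\dots,|X|-1\}$ is divisible by $2^j$, rescan $B_{\le j+1}$, and reassign; this re-establishes the stronger invariant "$X \in B_{j'}(Y)$ exactly" for the new value of $\chi$, hence a fortiori the claim. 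Or neither is involved in a split, in which case $\chi(X,Y,\tau)$ is unchanged (it only changes under splits, by Property~\ref{prop:tauUpdate}), so $j'$ is unchanged and the claim is inherited verbatim. The only subtlety is the boundary case: when $\chi$ crosses a power of two \emph{because} of a split of some \emph{third} node (which cannot happen — $\chi(X,Y,\tau)$ depends only on $\tau(X),\tau(Y),|X|,|Y|$, so a split of a node disjoint from both leaves it untouched) versus a split of $X$ or $Y$ themselves (handled by the recomputation). I expect the main obstacle to be verifying carefully that the recomputation triggered by a split of $X$ (resp. $Y$) actually rescans \emph{all} buckets that could now be wrong — i.e.\ that choosing $j$ as the largest power of two dividing some element of $\{|X_{i'}|,\dots,|X|-1\}$ is large enough to catch every pair $(X_{i'},Y)$ whose $\chi$-value changed bucket, and symmetrically on the $Y$-side. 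This amounts to checking that $\chi$ changes by at most a bounded amount relative to the divisibility threshold, so that a pair can only jump from bucket $B_{j-1}$ to $B_j$ and not skip further — which is exactly why the claim allows a window of two adjacent buckets rather than pinning down one.
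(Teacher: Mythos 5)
There is a genuine gap at the heart of your inductive step. You assert that whenever $X$ or $Y$ is involved in a split, ``the bucket-update procedure recomputes membership'' and thereby ``re-establishes the stronger invariant `$X \in B_{j'}(Y)$ exactly'{}''. That is not what the procedure does: after a split producing $X_{i'}$, it only rescans $B_{\leq j+1}(X_{i'})$ (and symmetrically reassigns $X_{i'}$ only in buckets $B_{\leq j}(Y)$ where it currently sits), with $j$ the largest integer such that some number in $\{|X_{i'}|,\dots,|X|-1\}$ is divisible by $2^j$. Pairs sitting in buckets of index larger than $j+1$ are deliberately \emph{not} touched — this is exactly what makes the update cheap — yet their $\chi$-values do change at that split (both because $|X|$ shrinks and because $\tau$ may shift within the nested interval). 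So the second branch of your case analysis is false, and the invariant you carry through the induction is not maintained. You flag this as ``the main obstacle'' at the end, but the per-stage induction you set up cannot close it, because what matters is not the change of $\chi$ at a single split but the \emph{cumulative} drift over all splits since the pair was last reassigned.

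The argument the paper actually makes is the following amortized one, which you would need to supply. First, $\chi(X,Y,\tau)$ is monotone increasing and its total increase between any two stages is at most the total decrease in $|X|$ plus the total decrease in $|Y|$ (by nesting of the $\tau$-intervals). Let $t_{last}$ be the last stage at which $X$ was (re)assigned to its bucket $B_{j'}(Y)$; at that moment $2^{j'} \leq \chi < 2^{j'+1}$ by the initialization rule. Since $t_{last}$, neither $|X|$ nor $|Y|$ can have decreased by $2^{j'}$ or more: a cumulative decrease of $2^{j'}$ forces the size to cross a multiple of $2^{j'}$, so at the split where that happens the divisibility rule computes some $j \geq j'$ and rescans $B_{\leq j'+1}$, which would have reassigned the pair — contradicting the choice of $t_{last}$. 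Hence the total increase of $\chi$ since $t_{last}$ is strictly less than $2^{j'} + 2^{j'} = 2^{j'+1}$, so currently $2^{j'} \leq \chi < 2^{j'+2}$, i.e.\ the true bucket index is $j'$ or $j'+1$, which is precisely the two-bucket window in the claim. Your monotonicity sub-claim and your observation that splits of third nodes are irrelevant are both correct and used in this argument, but without the cumulative ``size cannot drop by $2^{j'}$ without triggering a rescan'' step the proof does not go through.
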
 
\begin{proof}
We observe that $X$ should be moved to a different bucket if $\chi(X,Y,\tau)$ changes by a significant amount. We observe further, that $\chi(X,Y, \tau)$ is monotonically increasing over stages and increases between two stages $t' \leq t''$ by at most $|X^{t'} \setminus X^{t''}| + |Y^{t'} \setminus Y^{t''}|$. This follows since $\chi(X,Y, \tau)$ maps to $\tau(Y) - (\tau(X) + |X| - 1)$ which increases if either $Y$ is mapped to a larger $\tau$-value due to a split or due to a lower size of $X$, again, due to a split. 

Now, let $t_{last}$ be the last stage that $X$ was reassigned to a bucket $B_{j'}(Y)$. Then, at that stage, $X$ was assigned according to the initialization rule, so $2^{j'} \leq \chi(X, Y, \tau) < 2^{j'+1}$. However, this implies that since $t_{last}$ neither $X$ nor $Y$ have decreased in size by $2^{j'}$ or more since otherwise the pair would be scanned (a simple proof by contradiction establishes this claim). Thus, we can upper bound the increase in $\chi(X, Y, \tau)$ since this stage by
\[
|X^{t} \setminus X^{t_{last}}| + |Y^{t} \setminus Y^{t_{last}}| < 2^{j'+1}
\]
and therefore we have that $\chi(X, Y, \tau) < 2^{j'+2}$.
\end{proof}

\begin{claim}
\label{clm:runtimeUpdateGraph}
The total time required to update the graph $\mathcal{H}$ and the corresponding buckets is $O(n^2 \log n)$.
\end{claim}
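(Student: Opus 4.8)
The plan is to bill the work performed when processing node splits against three buckets of cost: (i) allocating the data structures attached to each newly created node, (ii) re-homing edges among the heaps $Q_{\cdot,\cdot}$, and (iii) re-assigning nodes among the bucket lists $B_j(\cdot)$. Two structural facts drive everything. First, since $\mathcal V$ only refines, the family of all node-sets ever appearing is laminar over $V$; because the inheriting piece of a split keeps the identity of its parent, a split of a node into $k$ parts introduces $k-1$ genuinely new nodes, and a refining partition of $n$ elements therefore produces at most $2n-1=O(n)$ node identities in total. Second, every non-inheriting piece $X_{i'}$ of a split of $X$ has $|X_{i'}|\le|X|/2$ (the inheriting piece is the largest one), so a fixed vertex $v$ lies in a non-inheriting piece at most $\lceil\lg n\rceil$ times over the whole run; hence, using $|\mathrm{parent}(Z)|\le n$ for every new node $Z$, we get $\sum_{\text{new }Z}|\mathrm{parent}(Z)|=O(n^2)$.

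For (i): upon creating $X_{i'}$ we allocate the (empty) heaps $Q_{X_{i'},Y}$ and $Q_{Y,X_{i'}}$ for all $Y\in\mathcal V$, costing $O(|\mathcal V|)=O(n)$, plus $O(1)$ to set $\widetilde{\mathbf{dist}}(X^r,X_{i'})$, find the incoming tree edge, and $O(\log n)$ to insert $X_{i'}$ into $Q$; over $O(n)$ new nodes this is $O(n^2)$. For (ii): each time $v$ enters a non-inheriting piece $X_{i'}$ we scan the edges of $\mathcal H$ incident to $v$, group them by target node, build the new heaps $Q_{X_{i'},\cdot}$ and $Q_{\cdot,X_{i'}}$ by a single linear-time heapify, and delete these entries from the surviving (inherited) heaps lazily in $O(1)$ each (cleanup of lazily deleted entries is $O(\log n)$ per $\textsc{MinValue}/\textsc{MinElem}$ call, charged to the ES-tree analysis, and does not spoil the $O(1)$ find-min guarantee). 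This is $O(|\mathcal N^{in}_{\mathcal H}(v)|+|\mathcal N^{out}_{\mathcal H}(v)|)$ per event, hence $O(\log n)$ times that per vertex, summing to $O(m\log n)=O(n^2\log n)$. For (iii): by \Cref{clm:almostInTheRightBucket} (and property \ref{prop:tauUpdate}) the only memberships that can become wrong after splitting $X$ are those between $X_{i'}$ and nodes $Y$ with $\chi(X_{i'},Y,\tau)<2^{j+2}$, of which there are $O(2^{j})$, and $j$ is the largest integer dividing some element of $[|X_{i'}|,|X|-1]$, so $2^{j}\le|X|=|\mathrm{parent}(X_{i'})|$. Thus rescanning $B_{\le j+1}(X_{i'})$ and re-placing $X_{i'}$ among the lists $B_{\le j}(Y)$ costs $O(|\mathrm{parent}(X_{i'})|)$ per new node, and the total is $O(n^2)$ by the sum bound above. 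Adding the $O(m\log n)$ cost of actually deleting the edge $(u,v)$ from its heap at each stage, the three contributions combine to $O(n^2\log n)$.

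The main obstacle is the per-edge cost in step (ii): a literal implementation that removes each relocated edge from one binary heap and inserts it into another would spend $\Theta(\log n)$ per relocation, and since an edge may be relocated $\Theta(\log n)$ times this gives $O(m\log^2 n)=O(n^2\log^2 n)$, a logarithmic factor too much. Getting the bound stated in the claim requires observing that the new node's heaps can be rebuilt in one linear-time heapify pass and that deletions from the inherited heaps can be made lazy, with the $O(\log n)$ reconciliation cost absorbed into the query side of the analysis rather than the split-processing side. The remaining care is purely bookkeeping: verifying that the counting of ``times $v$ is in a small piece'' is exactly $O(\log n)$, that the window of nodes touched during a bucket update is $O(2^{j})$ and not larger, and that the lazy-deletion scheme leaves $Q_{X,Y}.\textsc{MinValue}$ correct.
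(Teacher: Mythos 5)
Your proof takes essentially the same route as the paper's: all split-processing work is amortized over the fact that a fixed vertex lands in a non-inheriting (hence at-most-half-size) piece only $O(\log n)$ times, and the bucket reorganization is bounded via \Cref{clm:almostInTheRightBucket}; your $\sum_{Z}|\mathrm{parent}(Z)|=O(n^2)$ accounting is a slightly cleaner repackaging of the paper's per-$j$ counting, and your laminarity bound of $O(n)$ node identities is the same refinement argument. The one place you genuinely diverge is the cost of re-homing edges among the heaps $Q_{\cdot,\cdot}$: the paper's proof ends with ``each scan only takes constant time,'' silently treating a delete from $Q_{X_i,X^y}$ plus an insert into $Q_{X_{i'},X^y}$ as $O(1)$, whereas a literal binary-heap implementation costs $\Theta(\log n)$ per relocation and hence $O(m\log^2 n)$ overall; your heapify-plus-lazy-deletion patch correctly keeps the split-side cost at $O(m\log n)$ and is more careful than the paper on this point. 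One caveat: the deferred cleanup is not free where you send it --- there are $O(m\log n)$ stale entries, each costing $O(\log n)$ to pop, so the query side absorbs up to $O(n^2\log^2 n)$, which is not automatically covered by the $O(n\log n\,\delta_{max}q/\epsilon)$ budget of \Cref{clm:fixingTRuntime} unless $\delta_{max}q/\epsilon=\Omega(n\log n)$. In the paper's applications the abundant spare logarithmic factors make this immaterial, but strictly speaking neither your argument nor the paper's pins down this last $\log$ without a further implementation tweak (e.g., periodically rebuilding a heap once a constant fraction of its entries are stale, charging the rebuild to those entries).
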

\begin{proof}
Every time we split a node $X$ into nodes $X_1, X_2, \dots$, the largest new node $X_i$ inherits the node of $X$, which can be implemented in constant time by reassigning pointers. Then, for each $X_{i'}$, we scan the incident edges to construct the new vertices. However, $X_{i'}$ is of size at most half the size of $X$. Thus, each vertex (and incident edge) participates only $O(\log n)$ times in such a node split. Therefore, the total time required by such scans is $O(m \log n) = O(n^2 \log n)$. Claim \ref{clm:almostInTheRightBucket} implies that \emph{after} the update procedure, each bucket $B_j(Y)$ contains at most $2^{j+2}$ vertices. Since each vertex can only be in $n/2^j$ nodes that scan $B_j(Y)$ due to the size decrease (or node versions at which a size update is triggered), and since there are only $O(\log n)$ values for $j$, we scan at most $O(n^2 \log n)$ pairs without moving a vertex to another bucket.

Further since each $X$ in $B_j(Y)$ can only be assigned to the same bucket or a bucket with larger index, each vertex $X$ can change $O(\log n)$ times the bucket at $Y$, thus we have at most $O(n^2 \log n)$ scans overall that where some node is moved to another bucket.

Since each scan only takes constant time, our claim now follows. 
\end{proof}

Finally, we establish that our procedure to reconstruct $T$ is efficient.

\begin{claim}
\label{clm:fixingTRuntime}
The total update time to construct and maintain the certificate $T$ is $O(n \log n \frac{\delta_{max} q}{\epsilon} )$.
\end{claim}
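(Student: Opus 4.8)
The plan is to show that, apart from bookkeeping that is dominated by the stated bound, all the work done while (re)building $T$ is spent scanning the buckets $B_{\le j}(Y)$ when a node $Y$ is extracted from $Q$, and then to control this scanning cost by a charging argument over (node, bucket-index) pairs.

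First I would record two facts. (i) \emph{Each scan of $B_{\le j}(Y)$ costs $O(2^{j})$.} By \Cref{clm:almostInTheRightBucket}, a node that ``belongs'' in $B_{j'}(Y)$ currently sits in $B_{j'}(Y)$ or $B_{j'-1}(Y)$, so $B_{\le j}(Y)$ consists of the special bucket $B_{-1}(Y)=\{Y\}$ together with nodes $X$ satisfying $\chi(X,Y,\tau) < 2^{j+2}$; since $\tau$ maps the nodes of $\mathcal V$ to pairwise disjoint subintervals of $[0,n)$, only $O(2^{j})$ nodes can lie within $\tau$-distance $2^{j+2}$ of $Y$, so $|B_{\le j}(Y)| = O(2^{j})$, and each value $Q_{X,Y}.\textsc{MinValue}$ is read in $O(1)$ time. (ii) \emph{The bucket $B_{\le j}(Y)$ is inspected only when $\widetilde{\mathbf{dist}}(X^r,Y)$ is divisible by $\lceil 2^{j}\epsilon/q\rceil$.} By \Cref{lma:correctnessOfEStree} this distance estimate is monotone non-decreasing and never exceeds $\delta_{max}$, so it passes through at most $\lceil\delta_{max}/\lceil 2^{j}\epsilon/q\rceil\rceil$ values divisible by $\lceil 2^{j}\epsilon/q\rceil$, and one checks that $2^{j}/\lceil 2^{j}\epsilon/q\rceil \le q/\epsilon$ whether or not $2^{j}\epsilon/q\ge 1$.

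Granting for the moment that each ``admissible'' value of $\widetilde{\mathbf{dist}}(X^r,Y)$ triggers only $O(1)$ scans of $B_{\le j}(Y)$, facts (i) and (ii) give, for fixed $Y$ and fixed $j$, scanning cost $O\!\bigl(2^{j}\cdot \delta_{max}/\lceil 2^{j}\epsilon/q\rceil\bigr) = O(\delta_{max}q/\epsilon)$. Summing over the $O(\log n)$ values of $j$, and over all nodes that ever occur in $\mathcal V$ — these form a laminar family of size at most $2n-1$, and each split only reinitialises the buckets of the $O(\log n)$ newly created parts while the inheriting part keeps its monotone estimate — yields total scanning cost $O(n\log n\,\delta_{max}q/\epsilon)$, which is the claimed bound (the $\log n$ is exactly the sum over $j$). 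The residual costs are all dominated by this: the $O(\log n)$-time heap operations on $Q$, the $O(n\delta_{max})$ estimate increments, the deletions and re-insertions of tree edges, and the initial Dijkstra computation of $T$ on $\mathcal H$ which is $O(n^{2}\log n)$ and is in any case accounted for in \Cref{clm:runtimeUpdateGraph}.

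The main obstacle is justifying the highlighted assumption that a node does not rescan the same bucket many times at the same value of its distance estimate; this is precisely the classical Even--Shiloach amortisation, which I would adapt here as follows. Within a single invocation of the deletion routine, $Q$ is processed in nondecreasing order of distance estimate, so once $Y$ is popped and certified by an edge of $B_{\le j}(Y)$, every node of smaller estimate is already settled and remains so; hence in one invocation $Y$ is popped at most $1 + \bigl(\text{increase of }\widetilde{\mathbf{dist}}(X^r,Y)\text{ during it}\bigr)$ times, and summing these increases over all invocations adds only $O(\delta_{max})$ pops per node. Re-insertions of $Y$ caused directly by deletion of $Y$'s current tree edge are charged to that edge deletion, contributing $O(m)=O(n^{2})$ pops overall; re-insertions caused by an ancestor's increment are absorbed because, by the time $Y$ is popped within an invocation, all of its candidate parents in $B_{\le j}(Y)$ already hold their final estimates, so $Y$ either certifies once or increments once. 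Carrying out this amortisation cleanly in the presence of the bucketing and of node splits is the technical heart of the claim; everything else is the calculation above.
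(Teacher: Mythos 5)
Your overall accounting — bucket size $O(2^j)$ by \Cref{clm:almostInTheRightBucket}, scan frequency $\delta_{max}/\lceil 2^j\epsilon/q\rceil$ per bucket index, product $O(\delta_{max}q/\epsilon)$ per node and index, times $O(n\log n)$ over the laminar family of nodes and the $O(\log n)$ indices — matches the paper. But the step you yourself flag as the ``main obstacle'' is exactly where your argument does not close, and the device you reach for is the wrong one. Your ES-tree-style amortisation bounds the \emph{number of times $Y$ is popped from $Q$}: the ``$+1$ per invocation'' pops charged to edge deletions give $O(m)=O(n^2)$ extra pops. That is fine for counting pops, but each such pop at an unchanged estimate value still triggers a scan of $B_{\le j}(Y)$, which costs $O(2^j)$ if the scan restarts from the beginning of the bucket; charging $O(2^j)$ to each of $O(n^2)$ edge deletions gives up to $O(n^3)$, which blows the bound. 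Counting pops does not control scanning cost, and your writeup never separates the two.

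The paper closes this gap with a different (and simpler) observation: the scan of $B_{\le j}(Y)$ at a fixed distance value $d$ is \emph{resumable}. If a candidate $X\in B_{\le j}(Y)$ is scanned and fails the test $\widetilde{\mathbf{dist}}(X^r,X)+Q_{X,Y}.\textsc{MinValue}\le d$, then — because both distance estimates and the heap minima $Q_{X,Y}.\textsc{MinValue}$ are monotonically non-decreasing — $X$ will fail that test at value $d$ forever, so the implementation keeps a pointer into the bucket and never rescans $X$ at that value, no matter how many times $Y$ re-enters $Q$ with estimate $d$. Hence the total scan work per (node, distance value) pair is $O(|B_{\le j}(Y)|)=O(2^j)$ plus the cost of elements \emph{newly inserted} into $B_{\le j}(Y)$ between fixings; the latter is a second issue your proposal omits entirely, and the paper amortises it against the bucket-update procedure of \Cref{clm:runtimeUpdateGraph}, contributing only $O(n^2\log n)$ extra scans, which is subsumed because $\delta_{max}q/\epsilon=\Omega(n)$. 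To repair your proof, replace the pop-counting amortisation with this monotonicity/resumability argument and add the accounting for bucket insertions.
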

\begin{proof}
We first observe that we can initialize $T$ by Dijkstra's algorithm in time $O(|E(\mathcal{H})| \log |E(\mathcal{H})|) = O(n^2 \log n)$. On updates, for each distance value $\widetilde{\mathbf{dist}}(X^r, Y)$, a node $Y \in \mathcal{V}$ computes $j$ to be the largest integer such that the distance estimate is divisible by $\lceil 2^j \frac{\epsilon}{q}\rceil$. Thus, for a specific $j$, there are at most $\frac{\delta_{max}}{2^j} \cdot \frac{q}{\epsilon}$ distance values where $j$ is computed. We then scan at the distance value, the current set $B_j(Y)$ which is of size $O(2^{j})$ by \Cref{clm:almostInTheRightBucket}. Note that the node $Y$ might be in $Q$ multiple times until its distance value increases. However, every node $X$ in $B_j(Y)$, once scanned and not used to repair the certificate $T$ can be ignored for this distance value since distance estimates and edge weights are monotonically increasing, thus it can never be used at a later stage to repair $T$ at the current distance value. Thus, the total cost to scan the nodes in $B_j(Y)$ for all $j$ can be bounded by $O(\delta_{max} \cdot \frac{q}{\epsilon})$ and since we have at most $n$ nodes at any point and $O(\log n)$ buckets at each node, we have total cost $O(n \log n \frac{\delta_{max} q}{\epsilon} )$ for these scans. We also point out that the set $B_j(Y)$ might be changed between two fixing procedures of $T$ while the distance value of $Y$ has not changed. However, we can amortize the scans of new items over the bucket update procedure and obtain that at most $O(n^2 \log n)$ additional scans are necessary due to changes of $B_j(Y)$ for all nodes $Y$. 

Finally, we point out that $\frac{\delta_{max}q}{\epsilon} = \Omega(n)$ by definition and therefore the latter bound is subsumed.
\end{proof}

Using $\epsilon' = \epsilon/2$ in the algorithms, we can combine, \Cref{lma:correctnessOfEStree}, Claims \ref{clm:runtimeUpdateGraph} and \ref{clm:fixingTRuntime}, and the value of $\delta_{max}$ establish \Cref{thm:SSSPEfficient}. 

\section{\texorpdfstring{$\mathcal{SSSP}$ from $\mathcal{ATO}$-bundles}{SSSP from ATO-bundles}}
\label{sec:ssspfromATOfullProofCor}

In this section, we prove \Cref{cor:SSSPtoATO}.

\corSSSPrestrictedFromATO*

\begin{proof}
We run for each $(\mathcal{V}, \tau) \in \mathcal{S}_i$ a data structure as described in \Cref{thm:SSSPEfficient} from $r$ on $G$ and $\overleftarrow{G}$, respectively, with approximation parameter $\epsilon$ and depth threshold $\delta/(\epsilon 2^i)$. By \Cref{def:ATObundle} and the guarantees given in \Cref{thm:SSSPEfficient}, we have for any pair $(s,t) \in (\{r\} \times V) \cap (V \times \{r\})$, that some data structure has a distance estimate $\widetilde{\mathbf{dist}}(s,t)$ such that if $\delta/(\epsilon 2^{i-1}) \leq \mathbf{dist}(s,t) \leq \delta/(\epsilon 2^i)$, then
\[
\widetilde{\mathbf{dist}}(s,t) \leq (1+4\epsilon)\mathbf{dist}(s,t).
\]
Further, we have that each distance estimate is an overestimate of the actual distance. We observe that the only distance range uncovered is distances that are smaller than $1/\epsilon$. We therefore run a single instance of an ES-tree as given in \cite{shiloach1981line} which runs in time $O(n^2/\epsilon)$.

Thus, letting $\widetilde{\mathbf{dist}}(s,t)$ be the minimal value of any distance estimate, in any of the data structures, we have that our data structure maintains for each relevant tuple $(s,t)$, a $(1+4\epsilon)$ distance estimate (this can be done by maintaining a heap over all such distance estimates). Rescaling $\epsilon$ slightly, gives the desired result.

The total update time can be bound straight-forwardly since we run $O(\log^2 n)$ instances of the data structure in \Cref{thm:SSSPEfficient} where each is at cost
\[
O(n \frac{\delta}{\epsilon 2^i} \cdot q_i \log n/\epsilon + n^2 \log n) = O(n (\max_{1 \leq i \leq \mu} \{\frac{\delta q_i}{2^i}\} + n)\log n/\epsilon^2)
\]
and a single ES-tree at cost $O(n^2 /\epsilon)$. The costs of maintaining a heap over all distance estimates is easily subsumed by the running time.
\end{proof}

\section{Proof of \texorpdfstring{ \Cref{thm:SSSPEfficientSparse}}{the Sparse Graph SSSP Reduction}}
\label{sec:proofSSSPSparse}

In this section we are concerned with proving \Cref{thm:SSSPEfficientSparse}.

\ssspSparse*

We start the proof by showing a lemma that bounds construction time and size of $H_i$. We then proceed to prove for each level $i$ that we can find distance estimates for vertices at distance at most $\delta$ from the root vertex $r$ using the hopset $H_i$ and finally we combine the results to prove \Cref{cor:SSSPSparsetoATO}. Before we establish the first lemma, let us state a version of the Chernoff bound that we use frequently in our analysis.

\begin{theorem}[Multiplicative Chernoff Bound]
\label{thm:Chernoff}
Suppose $X_1, X_2, \dots, X_k$ are independent $O/1$-random variables. Let $X = \sum_{i=1}^k$ and let the expected value of $X$ be denoted by $\mu = E[X]$. Then for any $\delta \geq 1$, we have that
\[
    \mathbb{P}\left[ X > (1+\delta)\mu \right] \leq e^{-\frac{\delta\mu}{3}}.
\]
\end{theorem}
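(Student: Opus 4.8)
The plan is to use the standard exponential-moment (Bernstein) method. First I would fix a parameter $t>0$, to be chosen later, and apply Markov's inequality to the nonnegative random variable $e^{tX}$:
\[
\mathbb{P}[X > (1+\delta)\mu] = \mathbb{P}\!\left[e^{tX} > e^{t(1+\delta)\mu}\right] \leq \frac{\mathbb{E}[e^{tX}]}{e^{t(1+\delta)\mu}}.
\]
Next I would bound the moment generating function. Writing $p_i = \mathbb{P}[X_i = 1]$, independence gives $\mathbb{E}[e^{tX}] = \prod_{i=1}^k \mathbb{E}[e^{tX_i}] = \prod_{i=1}^k \bigl(1 + p_i(e^t - 1)\bigr)$. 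Applying the elementary inequality $1 + x \leq e^x$ (already used elsewhere in the paper) with $x = p_i(e^t-1) \geq 0$, this is at most $\prod_i e^{p_i(e^t-1)} = e^{\mu(e^t-1)}$, since $\mu = \sum_i p_i$. Substituting into the display above yields, for every $t > 0$,
\[
\mathbb{P}[X > (1+\delta)\mu] \leq \exp\!\bigl(\mu(e^t - 1) - t(1+\delta)\mu\bigr).
\]

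I would then choose $t = \ln(1+\delta)$, which is legitimate since $\delta \geq 1 > 0$; this gives the familiar bound
\[
\mathbb{P}[X > (1+\delta)\mu] \leq \left(\frac{e^{\delta}}{(1+\delta)^{1+\delta}}\right)^{\!\mu}.
\]
It then remains to check that the base of this power is at most $e^{-\delta/3}$ whenever $\delta \geq 1$, i.e. that $\delta - (1+\delta)\ln(1+\delta) \leq -\delta/3$, equivalently $(1+\delta)\ln(1+\delta) \geq \tfrac{4}{3}\delta$. Raising both sides to the power $\mu \geq 0$ then preserves the inequality and finishes the proof.

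The one place that needs an actual argument rather than bookkeeping is this last inequality, and it is the expected (minor) obstacle. I would set $f(\delta) = (1+\delta)\ln(1+\delta) - \tfrac{4}{3}\delta$ and compute $f'(\delta) = \ln(1+\delta) - \tfrac{1}{3}$, which satisfies $f'(\delta) \geq \ln 2 - \tfrac{1}{3} > 0$ for all $\delta \geq 1$; hence $f$ is increasing on $[1,\infty)$ and $f(\delta) \geq f(1) = 2\ln 2 - \tfrac{4}{3} > 0$, as required. Everything else is the textbook chain Markov on $e^{tX}$ $\to$ factorization of the moment generating function by independence $\to$ $1+x \leq e^x$ $\to$ optimizing the free parameter $t$.
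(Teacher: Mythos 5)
Your proof is correct: the Markov/moment-generating-function chain, the bound $1+x\leq e^x$, the choice $t=\ln(1+\delta)$, and the monotonicity check that $(1+\delta)\ln(1+\delta)\geq \tfrac{4}{3}\delta$ for $\delta\geq 1$ (with $f(1)=2\ln 2-\tfrac{4}{3}>0$ and $f'(\delta)=\ln(1+\delta)-\tfrac{1}{3}>0$) all hold. The paper states this Chernoff bound as a standard fact without proof, and your argument is exactly the canonical one, so there is nothing to reconcile.
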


Equipped with this theorem, let us now start the analysis.

\begin{lemma}
\label{lma:timeSpaceHopset}
For every $0 \leq i < \lg n$, the hopset $H_i$ has at most $O(\delta q \log n)$ edges and running time $O(m \delta q\log n \log nW / (n^{1/3}\epsilon))$ to maintain the hopset $H_i$ with probability at least $1- O(n^{-(c+1)})$.
\end{lemma}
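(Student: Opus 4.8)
The plan is to prove the two bounds separately, in both cases after disposing of the case $h_i < n^{2/3}\log n$, where $H_i$ is the empty graph and there is nothing to show. So assume $h_i \ge n^{2/3}\log n$, hence $l_i = h_i/n^{1/3} \ge n^{1/3}\log n$, and every vertex joins $S$ independently with probability $p = 3(c+6)\log n/l_i$, so $\mathbb{E}|S| = pn = 3(c+6)n\log n/l_i$. I would first record two elementary facts. (i) Since $\mathbb{E}|S| \ge 3(c+6)n^{1/3}\log n$, the multiplicative Chernoff bound (\Cref{thm:Chernoff}) gives $|S| \le 2\mathbb{E}|S| = O(n\log n/l_i)$ except with probability $n^{-\omega(1)}$. (ii) By property \ref{prop:tauUpdate} of \Cref{def:ATO}, $\tau$ assigns the nodes of $\mathcal V$ pairwise disjoint subintervals of $[0,n)$; consequently, for any node $Z$ and any integer $K$, the condition $\chi_{far}(Z, Y, \tau) \le K$ forces $Y$'s interval to lie inside an interval of length $K + |Z| + 1$ anchored at $Z$, so the set of vertices lying in nodes $Y$ with $\chi_{far}(Z,Y,\tau) \le K$ has size $O(K + |Z|)$. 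In particular $C_{open}(s, K_i, (\mathcal V,\tau))$ spans $O(K_i)$ vertices (the case $|X^s| = \omega(K_i)$ being a mild exception handled separately), and symmetrically, for a fixed node $Z$ the number of vertices $v$ with $Z \in C_{open}(v, K_i,(\mathcal V,\tau))$ is $O(K_i)$.

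\textbf{Edge count.} A hopset edge of $H_i$ incident to $X^s$ leads to a node $X^t$ with $t \in S$ for which $\mathcal{GES}_s$ holds a finite estimate $\widetilde{\mathbf{dist}}(X^s, X^t) \le l_i$; since $\mathcal{GES}_s$ explores only the region $C_{open}(s, K_i,(\mathcal V,\tau))$ (cf.\ the running-time bound of \Cref{thm:GEStreeOnCondensation}), every such $X^t$ lies in $C_{open}(s, K_i,(\mathcal V,\tau))$, which by (ii) contains $O(K_i)$ vertices. Each of these vertices joins $S$ independently with probability $p$, so Chernoff gives that the number of $t \in S$ among them is $O(cK_i\log n/l_i + c\log n)$ except with probability $n^{-(c+3)}$; a union bound over the $\le n$ vertices of $S$ costs $n^{-(c+2)}$. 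Summing over $s \in S$ (each hopset edge counted at most twice, repeated sampled vertices within a node only helping) bounds $|E(H_i)|$ by $|S|\cdot O(cK_i\log n/l_i + c\log n)$. Substituting $K_i = q\delta/n^{1/3}$, $l_i = h_i/n^{1/3}$, $|S| = O(n\log n/l_i)$, and using $h_i \ge n^{2/3}\log n$ (so $h_i^2 \ge n^{4/3}\log^2 n$) and $q\delta \ge n$, the first product becomes $O(n^{4/3}q\delta\log^2 n/h_i^2) = O(q\delta)$ and the second $O(n^{4/3}\log^2 n/h_i) = O(n^{2/3}\log n) = O(q\delta\log n)$, so $|E(H_i)| = O(q\delta\log n)$ as claimed (the constant absorbing the fixed $c$); this holds at every stage because $C_{closed}$, hence the relevant reachable set, only refines over time.

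\textbf{Running time.} Summing the bound of \Cref{thm:GEStreeOnCondensation} over $s \in S$ gives
\[
\sum_{s\in S}\Bigl( |E_{G/\mathcal V}(C_{open}(s, K_i,(\mathcal V,\tau)))|\, l_i \log K_i \tfrac{\log Wn}{\epsilon} + K_i\log K_i + \Delta_s\Bigr).
\]
I would bound the main term by swapping the order of summation: for a fixed edge $e = (X,Y)$ of $G/\mathcal V$, the number of $s \in S$ with $e$ incident to $C_{open}(s,K_i,(\mathcal V,\tau))$ is, by the symmetric form of fact (ii), at most the number of sampled vertices among a fixed set of $O(K_i)$ vertices, which is $O(cK_i\log n/l_i + c\log n)$ except with probability $n^{-(c+3)}$; a union bound over the $\le n^2$ edges costs $n^{-(c+1)}$. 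Hence $\sum_{s}|E_{G/\mathcal V}(C_{open}(s,K_i,\cdot))| = O(m(K_i\log n/l_i + \log n))$, and multiplying through by $l_i\log K_i\log(Wn)/\epsilon$ with $\log K_i = O(\log Wn)$, substituting $K_i,l_i$, and using $h_i \ge n^{2/3}\log n$ and $q\delta \ge n$ yields $O(mq\delta\log n\log(Wn)/(n^{1/3}\epsilon))$. The term $|S|K_i\log K_i = O(n q\delta\log n\log(Wn)/h_i)$ and the term $\sum_s \Delta_s$ (where $\Delta_s$ counts weight increases on edges incident to $C_{open}(s,K_i,\cdot)$, of which there are $O(m\log W/\epsilon)$ in total after the standard rounding, each charged to $O(cK_i\log n/l_i + c\log n)$ sources) are checked the same way, using $m = \Omega(n)$, and are dominated by the main term. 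A final union bound over the events in (i), the per-source edge-count events, and the per-edge running-time events gives total failure probability $O(n^{-(c+1)})$, matching the statement.

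\textbf{Main obstacle.} The delicate point is fact (ii) applied to $C_{open}$: turning the $\tau$-interval window $K_i$ into a genuine $O(K_i)$ bound on the number of \emph{vertices} (not just nodes), and confirming via \Cref{thm:GEStreeOnCondensation} that every node carrying a finite $\mathcal{GES}_s$-estimate below $l_i$, and every edge relaxed, lies inside $C_{open}(s,K_i,(\mathcal V,\tau))$ rather than the potentially much larger $C_{closed}$ — a giant node would break the edge count if $C_{closed}$ were used, so this localization to $C_{open}$ is essential. The second, more bookkeeping-heavy obstacle is calibrating the sampling constant $3(c+6)$ and using the two-regime ($\mu$ above or below $\log n$) form of the Chernoff tail so that the union bounds over $\Theta(n^2)$ edges and $\Theta(n)$ sampled vertices still collapse to the prescribed $O(n^{-(c+1)})$.
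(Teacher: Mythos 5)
Your proposal is correct and follows essentially the same route as the paper's proof: bound the vertices spanned by each window $C_{open}(s,K_i,\cdot)$ by $O(K_i)$ via the disjointness of the $\tau$-intervals, apply Chernoff to count sampled vertices per window and to bound $|S|$, and swap the order of summation (charging each edge of $G/\mathcal{V}$ to the $O(K_i\log n/l_i)$ sources whose windows contain it) before substituting $K_i$, $l_i$ and $h_i\ge n^{2/3}\log n$. The only literal slip is the claim that every node receiving a finite $\mathcal{GES}_s$-estimate lies in $C_{open}(s,K_i,\cdot)$ — the structure actually maintains estimates on the larger set $C_{closed}$, which exceeds $C_{open}$ by at most the two boundary-straddling nodes; the paper handles this with an explicit $+2$ per source, and your bounds absorb the resulting $O(|S|)$ extra edges anyway.
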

\begin{proof}
The claim is trivially true for $i$ where $h_i < n^{2/3} \log n$ since then $H_i$ is empty by definition. Otherwise, consider any vertex $v \in V$, then we have that $v$ is contained in a node in $C_{open}(w, K_i, ( \mathcal{V}, \tau)$ during the entire course of the algorithm of at most $2 K_i$ vertices $w \in V$. This follows since by definition of $\chi_{far}$ all nodes $Y$ in the set $C_{open}(w, K_i, ( \mathcal{V}, \tau)$ have their interval $[\tau(Y), \tau(Y) + |Y|)$ contained in the interval $[\tau(X^w) + |X^w| - K_i, \tau(X^w) + K_i)$. But this implies that there is only $2K_i$ vertices in such nodes.

It is not hard to establish that every set $C_{open}(w, K_i, ( \mathcal{V}, \tau)$ contains at most $\frac{18(c+6) K_i \log^2 n}{ l_i}$ vertices in $S$ throughout the entire course of the algorithm with probability at least $1-n^{-(c+2)}$ by a straight-forward Chernoff bound application (see \Cref{thm:Chernoff}). Using a union bound, we can upper bound the probability that \emph{any} vertex $w$ has more than $\frac{18(c+6) K_i \log^2 n}{ l_i}$ vertices in $S$ in the set $C_{open}(w, K_i, ( \mathcal{V}, \tau)$ by $1 - n^{-(c+1)}$. Henceforth in the proof, we condition on the event that no such event occurs.

Since, for every vertex $s \in S$, we have at most $\frac{18(c+6) K_i \log^2 n}{ l_i} + 2$ nodes throughout the algorithm for which an edge is inserted into $H_i$. The additive plus $2$ term stems from the fact that the sets $C_{open}(s, K_i, ( \mathcal{V}, \tau)$ and $C_{closed}(s, K_i, ( \mathcal{V}, \tau)$ differ by at most two nodes as pointed out before (the two nodes that have overlapping $\tau$-intervals with $[\tau(X^s) + |X^s| - K_i, \tau(X^s) + K_i)$). 

By a similar argument as above, there are at most $\frac{3(c+1) n \log n}{ l_i}$ vertices in $S$ with probability at least $1-n^{-(c+1)}$. Conditioning on the event that the number of vertices in $S$ does not exceed this bound, we can upper bound the number of edges in $H_i$ by 
\[
    \sum_{s \in S} |C_{closed}(s, K_i, ( \mathcal{V}, \tau)| \leq \frac{3(c+1) n \log n}{ l_i} \cdot \left(\frac{18(c+2) K_i \log^2 n}{ l_i} + 2\right) = O\left(\frac{K_i \cdot n \log^3 n}{l_i^2}\right).
\]
We recall that $l_{i} = \frac{h_i }{n^{1/3}}$ and $K_i = \frac{q \cdot \delta}{n^{1/3}}$, and therefore the total number of edges can be bound by $O\left(\frac{\delta q n^{1+1/3} \cdot \log^3 n}{h_i^2}\right)$ and since $h_i \geq n^{2/3} \log n$, we have at most $O(\delta q  \log n)$ edges in $H_i$.

Finally, let us bound the running time to maintain $H_i$. Using an similar argument as before, we have that every vertex $v \in V$ is only in the sets $C_{open}(s, K_i, ( \mathcal{V}, \tau)$ of at most $\frac{3(c+1) K_i \log n}{l_i}$ vertices $s \in S$ with probability $1-n^{-(c+1)}$. Conditioning on this event, we have that we have every edge in $G$, is contained in at most $\frac{6(c+1) K_i \log n}{l_i}$ sets $E_{G / \mathcal{V}}(C_{open}(s, K_i, ( \mathcal{V}, \tau))$ for $s \in S$. But then, the total update time to maintain all data structures $\mathcal{GES}_s$ can be bound by
\[
    O\left(m \cdot \frac{6(c+1) K_i \log n \log Wn /\epsilon}{l_i} \cdot l_i \right) = O(m \delta q\log n \log nW / (n^{1/3}\epsilon))
\]
It is not hard to verify that the time to maintain the hopset $H_i$ is subsumed in the running time of the data structures $\mathcal{GES}_s$ since after every time a node is split in such a data structure we need to can check in constant time whether the new nodes contain a vertex in $S$ and include a new edge, and again in case the distance to a node exceeds $l_i$, we can in constant time remove the corresponding edge if present in $H_i$. Edge weight changes further coincide with increases of distance estimates. We point out that the events that we did not condition upon occur with total probability at most $O(n^{-(c+1)})$.
\end{proof}

\begin{lemma}
\label{lma:hopboundHopset}
For any $0 \leq i < \lg n$, the graph $(G/\mathcal{V}) \cup H_i$ maintains for any pair of vertices $s,t \in V$ with $\mathbf{dist}_G(s,t) = \mathbf{dist}^{h_i}_G(s,t) \leq \delta$ and $\mathcal{T}(\pi_{s,t}), \tau) \leq q \delta + n$, that
\[
\mathbf{dist}^{n^{2/3} \log n}_{(G/\mathcal{V}) \cup H_i}(X^s,X^t) \leq (1+\epsilon)\mathbf{dist}_G(s,t).
\]
where $X^s$ and $X^t$ are the nodes in $\mathcal{V}$ that contain the vertices $s$ and $t$ respectively.
\end{lemma}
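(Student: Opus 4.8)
The plan is to reuse the hopset-correctness sketch from \Cref{sec:overview}: to build an explicit walk from $X^s$ to $X^t$ in $(G/\mathcal{V})\cup H_i$ of at most $n^{2/3}\log n$ hops and weight at most $(1+\epsilon)\mathbf{dist}_G(s,t)$, by cutting $\pi_{s,t}$ into short pieces, replacing each ``clean'' piece by a single hopset edge, and handling the few ``dirty'' pieces at a finer scale. First I would fix the shortest path $\pi_{s,t}$ in $G$; by hypothesis it has at most $h_i$ edges and weight $\mathbf{dist}_G(s,t)\le\delta$, and its image $\pi_{s,t}/\mathcal{V}$ is a walk from $X^s$ to $X^t$ using a subset of its edges, hence of at most $h_i$ edges and weight at most $\mathbf{dist}_G(s,t)$.

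Next I would cut $\pi_{s,t}$ into $O(\mathbf{dist}_G(s,t)/l_i+1)$ consecutive pieces, each of weight at most $l_i/2$ (hence of at most $l_i/2$ edges, since weights are at least $1$). Using the sampling rate $\tfrac{3(c+6)\log n}{l_i}$ and the Chernoff bound \Cref{thm:Chernoff}, with probability $1-n^{-\Omega(c)}$ every piece contains a vertex of $S$; I fix one such vertex $s_j$ per piece, obtaining $s=s_0,s_1,\dots,s_k,s_{k+1}=t$ in path order with consecutive anchors $O(l_i)$ edges apart. For each consecutive pair let $P_j$ be the subwalk of $\pi_{s,t}/\mathcal{V}$ from $X^{s_j}$ to $X^{s_{j+1}}$, and call $P_j$ \emph{clean} if every node on it lies in $C_{closed}(s_j,K_i,(\mathcal{V},\tau))$ and \emph{dirty} otherwise. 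If $P_j$ is clean then it is a walk of at most $l_i$ edges inside $H_{s_j}=(G/\mathcal{V})[C_{closed}(s_j,K_i,(\mathcal{V},\tau))]$, so by \Cref{thm:GEStreeOnCondensation} the maintained estimate obeys $\widetilde{\mathbf{dist}}(X^{s_j},X^{s_{j+1}})\le(1+\epsilon)w(P_j)\le l_i$, and therefore the insertion rule places $(X^{s_j},X^{s_{j+1}})$ in $H_i$ with weight at most $(1+\epsilon)w(P_j)$ --- one hop. If $P_j$ is dirty, some node $Y$ on it has $\chi_{close}(X^{s_j},Y,\tau)>K_i$; since $\chi_{close}(X^{s_j},Y,\tau)\le|\tau(X^{s_j})-\tau(Y)|\le\mathcal{T}(\pi_{s_j,Y},\tau)$ and the sub-subpaths $\pi_{s_j,Y}$ are edge-disjoint across dirty $j$, while $\mathcal{T}(\pi_{s,t},\tau)\le q\delta+n\le 2q\delta$ (using $q\delta\ge n$), at most $2q\delta/K_i=O(n^{1/3})$ pieces are dirty.

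For a dirty piece I would recurse: $P_j$ is a subwalk of a shortest path with at most $l_i$ edges, so applying the lemma at the hop level $i'$ with $h_{i'}=\Theta(h_i/n^{1/3})$ routes it in $(G/\mathcal{V})\cup H_{i'}$ with at most $n^{2/3}\log n$ hops and weight at most $(1+\epsilon)w(P_j)$; the recursion has depth $O(1)$, since after three steps $h_{i'}<n^{2/3}\log n$, at which point $H_{i'}$ is empty and the piece --- already having fewer than $n^{2/3}\log n$ edges --- is routed directly through $G/\mathcal{V}$. Concatenating the two endpoint pieces ($s\to s_1$ and $s_k\to t$, routed directly in $G/\mathcal{V}$ in $O(l_i)$ hops each), one hopset edge per clean piece, and the recursively-built walks for the dirty pieces, and telescoping the $(1+\epsilon)$ factors (rescaling $\epsilon$ by the $O(1)$ recursion depth), gives a walk of weight at most $(1+\epsilon)\mathbf{dist}_G(s,t)$.

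I expect the hop bookkeeping to be the main obstacle: one must choose the piece weight --- relative to $l_i$, $K_i$, and the sampling rate --- so that the number of clean pieces, the number of dirty pieces, and the $O(1)$-deep recursion tree jointly contribute only $O(n^{2/3}\log n)$ hops, and this is exactly what forces the choices $l_i=h_i/n^{1/3}$, $K_i=q\delta/n^{1/3}$ and what makes the final ES-tree run over $(G/\mathcal{V})\cup\bigcup_{i'}H_{i'}$ rather than a single $H_i$. The approximation and the high-probability parts are routine: the former is a telescoping of multiplicative errors along the concatenated walk, and the latter a union bound over the $O(n^{1/3})$ Chernoff events at each of the $O(1)$ recursion levels, over all $\le n^2$ stages and $\le n^2$ pairs, in the style of \Cref{lma:reductionATOtoSSSP}.
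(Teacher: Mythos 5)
Your skeleton matches the paper's: fix the $\le h_i$-hop shortest path, cut it into short segments, anchor each segment at a sampled vertex $s_j\in S$, replace each ``clean'' segment (one staying inside $C_{closed}(s_j,K_i,(\mathcal{V},\tau))$) by the single hopset edge $(X^{s_j},X^{s_{j+1}})$, and bound the number of ``dirty'' segments by $(q\delta+n)/K_i=O(n^{1/3})$ using the hypothesis $\mathcal{T}(\pi_{s,t},\tau)\le q\delta+n$. Two of your steps diverge from the paper, and both leave genuine gaps. First, you cut by \emph{weight} ($\le l_i/2$ per piece) where the paper cuts into segments of exactly $l_i/2$ \emph{edges}. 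Weight-based cutting breaks the argument in two places: the number of pieces is $O(\mathbf{dist}_G(s,t)/l_i)=O(\delta/l_i)$, which need not be $O(n^{2/3})$ since $\delta$ can exceed $h_i$ by a factor of $W$ (each clean piece costs one hop, so this alone can blow the hop budget); and the Chernoff/hitting-set step needs each piece to contain $\Omega(l_i)$ vertices, whereas a piece of weight $l_i/2$ may consist of a single heavy edge and then contains no vertex of $S$ with constant probability. Cutting by edge count, as the paper does, repairs both.

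Second, for dirty segments the paper does \emph{not} recurse: it routes the $\le l_i$ original edges of the subpath $\pi_{s,t}[s_j,s_{j+1}]$ directly through $G/\mathcal{V}$, paying no extra approximation, and charges $O(n^{1/3})\cdot l_i$ hops in total. Your recursion to level $i'$ with $h_{i'}=\Theta(h_i/n^{1/3})$ proves a hop bound in $(G/\mathcal{V})\cup\bigcup_{i'}H_{i'}$, which is not the statement of the lemma (a single $H_i$), and --- more importantly --- the hop accounting you explicitly defer is exactly where the argument fails as written: $O(n^{1/3})$ dirty pieces each consuming the recursive bound of $n^{2/3}\log n$ hops gives $n\log n$ hops. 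Rescuing this requires the aggregate observation that the $\mathcal{T}$-budget of $q\delta+n$ is shared across all recursive calls at a given depth, so the \emph{total} number of dirty pieces per level stays $O(n^{1/3})$ while each level contributes only $O(n^{2/3})$ clean-piece hops; you name this as ``the main obstacle'' but do not supply it, so the hop bound is not established. (Your instinct that the dirty pieces are delicate is not unreasonable --- the direct-routing charge depends on the precise choice of $l_i$ --- but the paper's proof closes the lemma without any recursion, and your proposal as stated does not.)
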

\begin{proof}
Clearly, for $h_i < n^{2/3} \log n$, we have that the claim is vacuously true since this implies that the path corresponding to $\mathbf{dist}^{h_i}_G(s,t)$ has less than $n^{2/3} \log n$ edges in $G$ and therefore the path in the graph $G / \mathcal{V}$ where some vertices are contracted into nodes can have at most the same hop.

For $h_i \geq n^{2/3} \log n$, let $\pi_{s,t}$ be a shortest path in $G$ between $s$ and $t$ of hop at most $h_i$. Let us partition the path into segments $\pi_1, \pi_2, \dots, \pi_k$ of exactly $l_i/2$ edges where only the last segment is possible smaller. We observe that $k \leq \lceil 2h_i / l_i \rceil \leq n^{2/3} + 1$. We claim that with exception of the last segment, each segment contains some vertex $v$ that was sampled into $S$. We can prove that this is indeed true by using a Chernoff bound as described in \cref{thm:Chernoff} where for each vertex in a fixed segment we have the experiment $X_i$ that is $1$ if the vertex is sampled into $S$ and $0$ otherwise. It is not hard that for a fixed segment, there is at least one such vertex with probability $1-n^{-(c+6)}$. Taking a union bound over the segments, we have that with probability at least $1-n^{-(c+5)}$ every segment (except the last one) contains a vertex in $S$. 

Next, let us select such a vertex $s_j \in S$ for each segment $\pi_j$. We observe that any two such vertices $s_j$ and $s_{j+1}$ in consecutive segments have a shortest path consisting of at most $l_i$ edges between them by the optimal substructure property of shortest paths, in particular they have the shortest path $\pi^i_{s,t}[s_j, s_{j+1}]$ with at most $l_i$ edges. 

We further observe that since $s_j$ has a $\mathcal{GES}_s$ data structure, by the argument above, we have that there is an edge $(X^{s_j} , X^{s_{j+1}})$ in $H_i$ of weight at most $(1+\epsilon)\mathbf{dist}_G(s_j, s_{j+1})$ if the shortest path $\pi^i_{s,t}[s_j, s_{j+1}]$ is contained in the graph $H_s$. We remind the reader that the graph $H_s$ was defined to be the graph $(G/ \mathcal{V})\left[C_{closed}(s, K, ( \mathcal{V}, \tau))\right]$. 

It is straight-forward to observe that for every segment between two vertices $s_j$ and $s_{j+1}$ we have 
\begin{itemize}
    \item either the edge $(X^{s_j} , X^{s_{j+1}})$ of weight at most $(1+\epsilon)w(\pi^i_{s,t}[s_j, s_{j+1}])$, or
    \item some vertex $v \in \pi^i_{s,t}[s_j, s_{j+1}]$ that is not in $C_{closed}(s, K_i, ( \mathcal{V}, \tau))$. 
\end{itemize}
Now, we observe that the latter case we have by definition of $C_{closed}(s, K_i, ( \mathcal{V}, \tau))$ that $|\tau(X^v) - \tau(X^s)| \geq K_i = \frac{q \cdot \delta}{n^{1/3}}$. But since we have by the definition of quality, that $\mathcal{T}(\pi_{s,t}), \tau) \leq q \delta + n$, we have that there can be at most $\frac{q \delta + n}{q \cdot \delta/n^{1/3}} = 2n^{1/3}$ such indices $j$ by the pigeonhole principle (here we use $\delta q \geq n$).

Finally, we can put everything together: we have that the shortest path from $s$ to $s_1$ contains at most $l_i/2$ edges, analogously for the last vertex $s_{k-1}$ to $t$. In between for each segment, we either have the direct edge in the hopset or we can take at most $l_i$ edges in $G / \mathcal{V}$ whenever we do not. Since this can occur at most $n^{1/3}$ times, we have that the path constructed by this method has hop at most
\[
    2 \cdot l_i + k + 2 n^{1/3} l_i \leq 4 n^{2/3}
\]
since $l_i \leq n^{1/3}$ and $k \leq n^{2/3}$ for all choices of $i$. This proves the lemma.
\end{proof}

\begin{proof}[Proof of \Cref{thm:SSSPEfficientSparse}]
It is now rather straight-forward to observe that the data structure $\mathcal{GES}_r$ maintains $(1+\epsilon)^2$-approximate distance estimates from $X^r$ to every node $Y \in \mathcal{V}$ that contains a vertex $t$ at distance at most $\delta$ from $s$ in $G$ and with $\mathcal{T}(\pi_{r,t}), \tau) \leq q \delta + n$ by combining  \Cref{thm:GEStreeOnCondensation} and \Cref{lma:hopboundHopset}. Rescaling $\epsilon$ by a constant factor, we obtain again a $(1+\epsilon)$-approximation on the distance estimates without affecting the running time asymptotically. 

Due to contractions according to $\mathcal{V}$, we might slightly underestimate the distances. However, since $(\mathcal{V}, \tau)$ form an $\mathcal{ATO}(G, \eta_{diam})$, we underestimate by at most $\eta_{diam}$ and adding $\eta_{diam}$ to every distance estimate remedies this problem without violation our constraints on the upper bound for the approximation. 

We further have that the running time of the data structure $\mathcal{GES}_r$ is at most the number of edges in $H_i$ and $G$. But by combining \Cref{thm:GEStreeOnCondensation} and \Cref{lma:timeSpaceHopset}, we obtain that the running time is at most
\[
    O((m + \sum_i \delta q \log n) n^{2/3} \log n \log Wn/ \epsilon) = O((m+\delta q) n^{2/3} \log^3 n \log Wn/\epsilon).
\]
Since we can maintain each $H_i$ in time $O(m \delta q\log n \log nW / (n^{1/3}\epsilon))$ by \Cref{lma:timeSpaceHopset}, we can bound the total update time that the data structure spends by 
\begin{align*}
O(m \delta q \log^2 n \log nW / (n^{1/3}\epsilon) + (m+\delta q) n^{2/3} \log^3 n \log Wn/\epsilon) \\= O((m\delta q/n^{-1/3} + \delta q n^{2/3}) \log^3 n \log Wn/\epsilon)
\end{align*}
where we use again $\delta q \geq n$.

Finally, we argue that each hopset $H_i$ is correctly maintained with probability $1-O(n^{-(c+1)})$ and therefore by a simple union bound, we can deduce that the overall algorithm runs correctly with probability at least $1- n^{-c}$. The theorem follows.
\end{proof}

\section{Proof of \texorpdfstring{ \Cref{thm:TopologicalOrderMaintenanceBootstrapSparse}}{the Bootstrapping Theorem for Sparse Graphs}}
\label{sec:proofOfBootstrapSparse}

In this section we prove \Cref{thm:TopologicalOrderMaintenanceBootstrapSparse}.

\bootstrapSparse*

\begin{proof}
We prove the theorem by induction on $n$, the number of vertices in graph $G$. The base case with $n \leq 1$ is easily established since we can subsume any term using a large enough constant.

Let us now give the inductive step $n \mapsto n+1$: we maintain a hierarchy of $\mathcal{ATO}$'s, for levels $0 \leq i \leq \lg(Wn)$. At level $i \leq 2$, we let $\mathcal{S}_i$ be a $\mathcal{ATO}(G, 2^i, 40 c \log n)$-bundle where each $(\mathcal{V},\tau) \in \mathcal{S}_i$ is an $\mathcal{ATO}(G, 2^i)$ obtained from letting $\mathcal{V}$ be the trivial partition of $V$ into singletons and by letting $\tau$ be an arbitrary permutation of the elements in $\mathcal{V}$. It is straight-forward to see that because the maximal topological difference is $n$, that each $(\mathcal{V}, \tau)$ has quality $n \leq \frac{(c+2)40000n \log^5 n}{2^2}$. 

For each $i > 2$, we then iteratively construct an $\mathcal{ATO}(G, 2^i, 40 c \log n)$-bundle $\mathcal{S}_i$ as described in \Cref{lma:reductionATOtoSSSP}. Thus, we have to show how to implement a $2$-approximate $2^{i-2}$-restricted $\mathcal{SSSP}$ data structure required in the reduction. 

We implement the $\mathcal{SSSP}$ data structure depending on the size of the graph that it is run upon, where we call a graph \emph{large} if it has at least $n/2^\gamma$ vertices and \emph{small} otherwise. It is here that we exploit \Cref{rmk:LaminarFamilyOfGraphs} which states that when the reduction asks to maintain approximate distances on some graph $H \subseteq G$, it is sufficient to maintain distance estimates on any graph $F$ such that $H \subseteq F \subseteq G$ and in particular, it is ok to simply run on the entire graph $G$.

Now, if the graph is \emph{large}, we run the data structure given in \Cref{cor:SSSPSparsetoATO} on the full graph $G$ using $\mathcal{S}_{i-3}, \mathcal{S}_{i-3}, \dots, \mathcal{S}_{0}$: thus each such data structure exploits the access to the existing $\mathcal{ATO}$s. If the graph is \emph{small}, then we construct a new hierarchy $\mathcal{S}' =\{\mathcal{S}'_i\}_i$ on the graph by invoking the induction hypothesis and then run an unrestricted $2$-approximate SSSP data structure derived by using \Cref{cor:SSSPSparsetoATO} on $\mathcal{S}'$. This completes the description of level $i$. 

Let us now bound the costs to maintain a single level $i$. For $i \leq 2$, we can bound the total running time by $O(n \log n)$ since we only have to publish $\mathcal{S}_0$ initially.

For $i > 2$, we have running time 
\begin{equation}
O\left(\sum_{j=0}^{\lceil\lg \delta \rceil} \; \sum_{k=0}^{2^{j+3} c \log^2 n} T_{SSSP}(m_{j,k}, n/2^{j}, \delta, 2) + m \log^3 n\right)
\end{equation}
where $\sum_j m_{j,k} \leq 16c \cdot m \log^2 n$ for all $k$, as given by \Cref{lma:reductionATOtoSSSP}.    

For each term $T_{SSSP}(m_{j,k}, n/2^{j}, 2^j, 2)$, where the graph is \emph{large}, i.e. $j \leq \gamma$, then we have 
\[
T_{SSSP}(m_{j,k}, n/2^{j}, 2^j, 2) = O(m n^{2/3} \log^8 n \log^2 Wn/\epsilon)
\] 
by \Cref{cor:SSSPSparsetoATO}. Since we have less than $O(\log^4 n 2^{\gamma})$ terms, where $j \geq \gamma$, we have that the total cost of all $\mathcal{SSSP}$ data structures on \emph{large} graphs is at most
\[
    O(2^{\gamma} m n^{2/3} \log^{12} n \log^2 Wn/\epsilon).
\]

Further, using the induction hypothesis, we have that every term $T_{SSSP}(m_{j,k}, n/2^{j}, \delta, 2)$ where $j < \gamma$, requires time $O(m_{j,k} n^{2/3} \log^{16} n \log^3 Wn/\epsilon)$. Summing over all such \emph{small} graph terms, we obtain total time
\[
O( 2^{-2/3 \cdot \gamma} m \log^{18} n \log^{3}(Wn)).
\]
Finally, we balance both terms by setting $\gamma = \Theta( \lg(\log^2(Wn)\log^4 n))$ with a sufficiently large constant to obtain total running time $O(mn^{2/3} \log^{16} n \log^{3}(Wn))$. Thus, summing over $\log(Wn)$ values for $i$, the total running time is established (slightly better trade-off values can be achieved but in order to keep bounds simple we use this trade-off).

Further, we point out that every $\mathcal{ATO}$ that was constructed runs correctly with high probability $1-n^{-c'}$ for constant $c' > 0$, and since there are only polynomially many instances, we can set $c'$ large enough to ensure that the entire algorithm runs correctly with probability $1- n^{-c}$ for any $c > 0$ by taking a union bound over the events that an instance fails. 
\end{proof}

\end{document}